\documentclass[%
 reprint,
 nofootinbib,
 amsmath,amssymb,
 aps,
floatfix,
]{revtex4-2}

\usepackage{dcolumn}
\usepackage{bm}

\usepackage{mathtools}
\usepackage{graphicx}
\usepackage{amsthm}
\usepackage{amsmath}
\usepackage{tikz}
\usepackage{tcolorbox}
\usepackage{float}
\usepackage{placeins}
\usepackage{booktabs}
\usepackage{hyperref}
\usetikzlibrary{shapes.geometric, arrows.meta, positioning, calc, 3d, decorations.markings, decorations.pathreplacing}

\definecolor{navyblue}{RGB}{30,58,95}
\definecolor{tealblue}{RGB}{42,123,155}
\definecolor{coral}{RGB}{211,95,95}
\definecolor{gold}{RGB}{218,165,32}

\hypersetup{colorlinks=true,linkcolor=blue,citecolor=blue,urlcolor=blue}

\newtheorem{theorem}{Theorem}
\newtheorem{lemma}[theorem]{Lemma}
\newtheorem{corollary}[theorem]{Corollary}
\newtheorem{definition}[theorem]{Definition}
\newtheorem{axiom}{Axiom}
\newtheorem{remark}[theorem]{Remark}

\newcommand{\C}{\mathbb{C}}
\newcommand{\R}{\mathbb{R}}
\newcommand{\Z}{\mathbb{Z}}

\newcommand{\Hilb}{\mathcal{H}}
\newcommand{\ket}[1]{|#1\rangle}

\newcommand{\braket}[2]{\langle#1|#2\rangle}
\newcommand{\ketbra}[2]{|#1\rangle\langle#2|}
\newcommand{\Tr}{\mathrm{Tr}}
\newcommand{\id}{\mathrm{id}}

\newcommand{\KK}{\mathcal{K}}

\begin{document}

\title{Existence as Distinguishability: Quantum Mechanics from Finite Graded Equality}
\author{Julian G.\ Zilly}
\email{research@julianzilly.com}
\affiliation{Independent Researcher \\ ORCID: \href{https://orcid.org/0009-0005-6257-0214}{0009-0005-6257-0214}}


\begin{abstract}
We derive finite-dimensional quantum mechanics from a single ontological principle, that \emph{existence is constituted by distinguishability}, together with two structural commitments: finite capacity $N$ (parametric input) and self-referential consistency (SRC, a closure schema with two equivalent forms, operational and information-theoretic). SRC unpacks into eight derived structural conditions (Theorem~\ref{thm:src-master}); structural unambiguity (S5) (Definition~\ref{def:s5}) completes the hierarchy, uniquely selecting the Born rule as the geometric/probabilistic closure. The graded distinguishability kernel $K(x,y) \in [0,1]$ realises both axioms, with a state constituted by its $K$-profile against all others. For each $N \geq 3$, the unique distinguishability space is $(\C P^{N-1}, K)$ with $K(\psi,\phi) = 1 - |\langle\psi|\phi\rangle|^2$, from which complex coefficients, the Born rule $p_k = |c_k|^2$, unitary dynamics, and tensor-product composition all follow. Indeterminism is forced by capacity overflow; alternatives (e.g.\ Bohmian mechanics) are classified rather than refuted. Standard QM is the $N \to \infty$ limit; finite $N$ is the only free parameter. The algebraic spine is machine-checked in Lean~4 modulo five imported classical theorems and the existence direction of Stone's theorem; Appendix~\ref{app:formal-verification} states the verification scope.
\end{abstract}

\maketitle

\section{Introduction}
\label{sec:intro}

Physics is an act of world-building. We take \emph{existence is constituted by distinguishability} as the single ontological commitment and building block: a state \emph{is} the totality of its distinctions against every other state. The world-model must be finite (finite capacity $N$ bounds the mutually distinct states) and self-referentially consistent (SRC: no canonical structure or scaffolding exists outside the kernel $K$). The graded distinguishability kernel $K(x, y) \in [0, 1]$ realizes both, and finite-dimensional quantum mechanics follows for $N \geq 3$.

Since a state \emph{is} its $K$-profile (the complete pattern of $K$-values against all other states), the Leibniz principle becomes definitional: states with identical $K$-profiles are identical, and every consistent $K$-profile is realized, for a consistent pattern of distinctions \emph{is} a state (\S\ref{sec:graded-equality}). These are the only axioms (\S\ref{sec:axioms}). The alternatives (infinite precision, missing states, accidental symmetries, privileged decompositions) each introduce structure beyond $K$ and so violate at least one of finiteness or SRC (\S\ref{sec:discussion}). Four regularity conditions (\S\ref{sec:consequences}) are each the unique choice consistent with the capacity bound (Appendix~\ref{app:structural-conditions}).

The central insight is information-theoretic. A finite-capacity system cannot express its full relational identity within a single measurement context: one basis captures $\log_2 N$ bits, while the full $K$-profile requires super-logarithmically many bits in general and $\Theta(N\log_2 N)$ for prime-power $N$ at the maximal MUB count (Theorem~\ref{thm:capacity-halting}). The missing information is accessible only through complementary bases, and Structural Leibniz within Saturation forces these complementary evaluations to be connected by a unique cyclic automorphism of order $N$ (Theorem~\ref{thm:dynamics-derived}). This sequential access \emph{is} time: time is the unfolding of relational identity that cannot fit in a single context. Given Saturation, the bit-count gap forces randomness, since no ontic reservoir beyond $K$ exists to store the missing bits. Theories rejecting Saturation commit to a different ontology (\S\ref{sec:discussion}).

From here, the remaining structure is uniquely determined for $N \geq 3$ (\S\ref{sec:complex-structure}--\ref{sec:measurement}): complex coefficients $\C$ are forced as the unique field supporting cyclic dynamics and relational isotropy; the state space is $\mathcal{X} \cong \C P^{N-1}$; the Born rule $p_k = |c_k|^2$ is the unique probability assignment preserving statistical distinguishability under reversible dynamics; and tensor-product composition follows from kernel associativity. A quantum sampling theorem shows that $\C P^{N-1}$ introduces no continuous degrees of freedom beyond those operationally accessible from two complementary bases. Standard quantum mechanics is the $N \to \infty$ limit; prior reconstructions \cite{hardy2001quantum,chiribella2011informational,masanes2011derivation} treat finite dimension as an approximation, while this paper treats finite capacity as fundamental.

\noindent\textbf{Reader's guide.} \S\ref{sec:graded-equality} unpacks the two extensions into the axioms; \S\ref{sec:axioms} states axioms formally; \S\ref{sec:consequences}--\ref{sec:complex-structure} derive algebraic and projective structure; \S\ref{sec:hilbert} assembles the Hilbert space (cyclic symmetry, gauge structure, complex coefficients, inner product); \S\ref{sec:geometry} derives the Born rule; \S\ref{sec:evolution}--\ref{sec:measurement} develop dynamics and the Capacity Halting Principle; \S\ref{sec:composite} treats composite systems and no-cloning; \S\ref{sec:related} compares prior work; \S\ref{sec:discussion} discusses scope; Appendix~\ref{app:formal-verification} maps each derivation to its Lean~4 mechanization, distinguishing machine-checked results from prose-only physical reasoning.

\begin{figure}[t]
\centering
\begin{tikzpicture}[scale=0.85]
\draw[->, thick] (0,0) -- (7,0) node[right] {$N$ (capacity)};
\draw[->, thick] (0,0) -- (0,5) node[above] {bits};

\draw[navyblue, very thick, domain=1:6.5, samples=100]
  plot (\x, {ln(\x)/ln(2)}) node[right, font=\small] {$\log_2 N$};
\node[navyblue, font=\small, align=left] at (3.6, 0.9) {Available\\capacity};

\draw[coral, very thick, domain=1:2.7, samples=100]
  plot (\x, {\x*ln(\x)/ln(2)});
\node[coral, font=\small, align=center] at (2.7, 4.5) {Required:\\$\Omega(N\log_2 N)$};

\fill[coral!15] (1.5,{ln(1.5)/ln(2)}) -- (1.5,{1.5*ln(1.5)/ln(2)}) --
  plot[domain=1.5:2.5] (\x, {\x*ln(\x)/ln(2)}) -- (2.5,{ln(2.5)/ln(2)}) --
  plot[domain=2.5:1.5] (\x, {ln(\x)/ln(2)}) -- cycle;

\draw[<->, thick, gray] (2.2, {ln(2.2)/ln(2)+0.1}) -- (2.2, {2.2*ln(2.2)/ln(2)-0.1});
\node[gray, font=\footnotesize, align=center] at (3.6, 2.7) {Capacity\\Deficit};
\draw[gray, thin, dotted] (3.1, 2.7) -- (2.4, {(ln(2.2)/ln(2) + 2.2*ln(2.2)/ln(2))/2});

\fill (1,0) circle (2pt) node[below, font=\small] {$1$};

\node[font=\scriptsize, gray] at (6.3, -0.5) {Standard QM ($N \to \infty$)};

\node[font=\small\bfseries, align=center] at (7, 4.7) {$\log_2 N \ll N\log_2 N$\\$\Rightarrow$ Probabilistic\\$g_{FR} = g_{FS} \Rightarrow p_k = |c_k|^2$};

\node[font=\small, align=center, fill=white, draw=gray!50, rounded corners, inner sep=4pt] 
  at (7.3, 1.5) {$\underbrace{\log_2 N}_{\text{available}} \;\ll\; \underbrace{N\log_2 N}_{\text{required}}$};

\end{tikzpicture}
\caption{\textbf{The Capacity Deficit.} A finite-capacity system stores $\log_2 N$ bits (blue); deterministic outcomes across $M = N+1$ MUBs would require $\Omega(N\log_2 N)$ bits of storage (red) via the $(M{-}1)\log_2 N$ MUB bound \cite{wootters1989optimal} (universal Kochen--Specker form: Lemma~\ref{thm:ks-bits}, $N \geq 3$). Given Saturation (\S\ref{sec:axioms}), the deficit is informational; metric compatibility selects $p_k = |c_k|^2$ (Theorem~\ref{thm:born-kernel}).}
\label{fig:capacity-deficit}
\end{figure}

\subsection{Main Results}

\begin{theorem}[Main Theorem: Characterization of Relational Theories]
\label{thm:main}
For each $N \geq 3$, the unique kernelled distinguishability space $(\mathcal{X}, K)$ satisfying Axiom~\ref{ax:finite} (finite capacity $N$) and Axiom~\ref{ax:relational} (Self-Referential Consistency) is $(\C P^{N-1}, K)$ with $K(\psi,\phi) = 1 - |\langle\psi|\phi\rangle|^2$: complex coefficients, the Born rule, unitary dynamics, and tensor-product composition all follow. The capacity $N$ is the only free parameter; the structural content emerges from SRC closure.
\end{theorem}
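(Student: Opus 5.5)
The proof will assemble the chain of results that the introduction previews. No new argument is needed beyond checking that each link's hypotheses are supplied by Axioms~\ref{ax:finite} and~\ref{ax:relational} and by the preceding links.

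\textbf{Step 1: unpack the axioms.} I will invoke Theorem~\ref{thm:src-master} to turn SRC into its eight derived structural conditions. Adding the four regularity conditions of \S\ref{sec:consequences}, each justified as the unique capacity-compatible choice in Appendix~\ref{app:structural-conditions}, gives a kernelled space $(\mathcal{X},K)$ with two properties. First, Structural Leibniz: points are determined by their $K$-profiles. Second, Saturation: every consistent profile is realized.

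\textbf{Step 2: dynamics.} I will use the capacity gap. One context carries only $\log_2 N$ bits, while the full profile needs more (Theorem~\ref{thm:capacity-halting}, with Lemma~\ref{thm:ks-bits} for $N\ge 3$). This forces complementary contexts, and Theorem~\ref{thm:dynamics-derived} then gives a unique cyclic automorphism of order $N$ linking them.

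\textbf{Step 3: field and state space.} I will show that the coefficient field must contain primitive $N$-th roots of unity and must support relational isotropy. Among the admissible fields this selects $\C$ rather than $\R$; quaternionic options are excluded by commutativity, which tensor composition requires, or directly by the isotropy argument of \S\ref{sec:hilbert}. The projective structure of \S\ref{sec:complex-structure} then identifies $\mathcal{X}\cong\C P^{N-1}$. The kernel is forced to be $K=1-|\langle\psi|\phi\rangle|^2$ in two ways: it must be invariant under the full isotropy group $PU(N)$, and it must equal $1$ exactly on orthogonal (perfectly distinguishable) pairs. The first condition makes $K$ a function of $|\langle\psi|\phi\rangle|$ alone. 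The exact functional form then comes from Theorem~\ref{thm:born-kernel}, via the identification $g_{FR}=g_{FS}$.

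\textbf{Step 4: the remaining claims.} The Born rule $p_k=|c_k|^2$ is the same metric-compatibility statement as in Step~3. For unitary dynamics, $K$-preserving bijections are exactly the (anti)unitaries, by Wigner's theorem (one of the imported classical results). Continuity plus connection to the identity via the cyclic generator excludes the antiunitaries, and Stone's theorem supplies the Hamiltonian form. Tensor-product composition follows from kernel associativity as in \S\ref{sec:composite}.

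\textbf{Step 5: uniqueness and parameter count.} Every step is a forced choice with no free constants, so any two spaces satisfying the axioms with the same $N$ are $K$-isometric to $\C P^{N-1}$. This makes $N$ the sole parameter.

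\textbf{Main obstacle.} I expect the hardest part to be the uniqueness in Step~3: ruling out all other kernels and geometries consistent with the axioms. The part needing most care is excluding exotic finite-capacity spaces that are not projective spaces over a field. Here I would first try the fundamental theorem of projective geometry (valid for $N\ge 3$, which explains the hypothesis), applied to the orthogonality/lattice structure induced by $\{K=1\}$. This reduces the problem to a division ring with involution. Solèr-type and isotropy arguments then cut that ring down to $\C$.
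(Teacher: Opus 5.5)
Your plan of assembling the chain in Table~\ref{tab:main-conclusions} matches what the paper does, but Step~3 has a genuine gap: the exact form $K=1-|\langle\psi|\phi\rangle|^2$ cannot be extracted from Theorem~\ref{thm:born-kernel}. That theorem consumes the kernel form: its $g_{FS}$ comes from Theorem~\ref{thm:fs-from-K}, whose proof starts from Theorem~\ref{thm:kernel-inner}. What metric compatibility fixes is the function $f$ in $p_k=f(|c_k|^2)$, i.e.\ probability as a function of the overlap, not $K$ as a function of the overlap. Your invariance and boundary conditions leave $K=h(|\langle\psi|\phi\rangle|^2)$ for any decreasing bijection $h$ of $[0,1]$ (and more). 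Imposing $g_{FR}\propto g_{FS}$ then only yields $p_k=h^{-1}(K(\psi,a_k))$, and the metric that $K$ itself induces near the diagonal depends on $h$ only through $h'(1)$. So the kernel form, the identification $p_k=1-K(\psi,a_k)$, and the uniqueness of $K$ asserted in Step~5 are all unsupported.

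The paper fixes the form upstream of the Born rule, in Theorems~\ref{thm:points-sections} and~\ref{thm:kernel-inner}, through the amplitude construction $|c_k(x)|^2:=1-K(x,e_k)$ with $\sum_k|c_k|^2=1$. It is this normalization of $1-K$ over every basis that removes the reparametrization freedom: for $N\ge 3$, a continuous $g:=1-h$ with $\sum_k g(s_k)=1$ on the whole simplex satisfies $g(s+t)=g(s)+g(t)$, hence $g(s)=s$. That normalization, not Theorem~\ref{thm:born-kernel}, is the input your invariance argument is missing.

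Two ordering problems compound this. First, Capacity Halting cannot drive Step~2. Its bounds are stated for MUBs and Kochen--Specker sets in $\C^N$ (Lemma~\ref{lem:incompressibility}(a) rests on Theorem~\ref{thm:hilbert-representation}), and Theorem~\ref{thm:dynamics-derived} itself begins from Theorem~\ref{thm:points-sections}. In the paper, Capacity Halting enters only downstream, to force probabilistic response. The cyclic generator instead comes from Permutation Invariance (Theorem~\ref{thm:permutation-invariance}), i.e.\ from (S4). Second, (S4), Structural Leibniz, is the extension of $K$-symmetries of finite configurations to global automorphisms. It is not ``points are determined by their $K$-profiles''; that is (S1). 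Summarizing Theorem~\ref{thm:src-master} as Identity plus Completeness, as your Step~1 does, discards the clause that supplies the cyclic generator, Basis Isotropy, and two-point homogeneity.

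For the step you flag as hardest you propose a genuinely different route, and as sketched it does not close. The paper builds no abstract projective geometry. It shows $G$ is a compact Lie group (Theorem~\ref{thm:complexity-constraint}), and obtains two-point homogeneity by extending via (S4) the involution that fixes $x$ and swaps two points of a level set $L_c(x)$. It then applies the classification of compact two-point homogeneous spaces, discarding spheres ($N\ge3$ mutually $K=1$ points) and $\mathbb{O}P^2$ (non-associativity). $\C$ is then selected by $\omega\in\mathbb{F}$ (Lemma~\ref{lem:sheaf-complex}) together with Theorem~\ref{thm:quaternion-obstruction}. Your lattice route has three problems.
\begin{enumerate}
\item You would first have to show that the $\{K=1\}$-biorthogonally closed sets form an atomic orthomodular lattice with the covering law. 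Nothing in the axioms delivers orthomodularity or covering directly, and that is the entire content of the step.
\item Coordinatizing an abstract projective geometry (Veblen--Young; the fundamental theorem proper concerns collineations of an already coordinatized space) needs projective dimension at least $3$, i.e.\ $N\ge4$. At $N=3$ non-Desarguesian planes exist, so this does not explain the hypothesis $N\ge 3$. In the paper that hypothesis enters through the non-real $\omega$, the third basis vector anchoring Imperceptibility, and the per-component Born ODE.
\item Sol\`er's theorem requires an infinite orthonormal sequence and says nothing at finite $N$, where any division ring with involution (e.g.\ $\mathbb{Q}(i)$ with conjugation) yields a consistent orthogeometry. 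You would need topological input, namely compactness (Lemma~\ref{lem:compactness-from-capacity}) and the full $K$-image (Corollary~\ref{cor:k-image-full}), fed into Pontryagin's classification of connected locally compact division rings.
\end{enumerate}
Completed, your route would give an intrinsic, Piron-style derivation that bypasses Montgomery--Zippin and the two-point-homogeneous classification. As it stands, it is missing exactly the inputs the paper's Lie-theoretic route draws from (S4) and compactness.
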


\noindent\textit{Derivation summary.} Table~\ref{tab:main-conclusions} breaks the Main Theorem into its five conclusions and the two enabling results used downstream, each tagged with the theorem that establishes it.

\begin{table}[h]
\centering
\small
\caption{\textbf{Main Theorem in parts.} The top block lists the five conclusions of Theorem~\ref{thm:main}; the bottom block lists the two information-theoretic results that enable the chain.}
\label{tab:main-conclusions}
\begin{tabular}{c|l|l}
\hline
 & \textbf{Conclusion / enabling result} & \textbf{Via} \\
\hline
\multicolumn{3}{l}{\textit{Five conclusions of the Main Theorem:}} \\
(i) & State space $\mathcal{X} \cong \C P^{N-1}$ & Thm.~\ref{thm:points-sections} \\
(ii) & Coefficient field is $\C$ (for $N \geq 3$) & \shortstack[l]{Lem.~\ref{lem:sheaf-complex},\\Thm.~\ref{thm:frobenius}} \\
(iii) & Unitary dynamics (Schr\"odinger equation) & Thm.~\ref{thm:schrodinger} \\
(iv) & Born rule $p_k = |c_k|^2$ & Thm.~\ref{thm:born-kernel} \\
(v) & Tensor-product composition & \S\ref{sec:composite}, Thm.~\ref{thm:tensor} \\
\hline
\multicolumn{3}{l}{\textit{Enabling results used in the chain:}} \\
(a) & Parsimony: no hidden variables & Thm.~\ref{thm:parsimony-derived} \\
(b) & Capacity Halting: determinism infeasible & Thm.~\ref{thm:capacity-halting} \\
\hline
\multicolumn{3}{l}{\footnotesize Special case: closed $N=2$ systems have no continuous dynamics} \\
\multicolumn{3}{l}{\footnotesize (Thm.~\ref{thm:n2-static}); physical qubits inherit it by embedding (\S\ref{sec:n2-discussion}).} \\
\end{tabular}
\end{table}

\noindent The form of the dynamical structure (cyclic evolution, complex coefficients, continuous unitary time) is uniquely determined by Axioms~\ref{ax:finite}--\ref{ax:relational} (Theorem~\ref{thm:dynamics-derived}) rather than introduced as an additional axiom. For $N \geq 3$ the derivation is uniform; the closed $N = 2$ case is treated separately (\S\ref{sec:n2-discussion}). The logical dependency structure is shown in Figure~\ref{fig:theorem-flow}.

\subsection{Scope}
\label{sec:scope}

For each $N \geq 3$ closed system the framework derives state space $\C P^{N-1}$, complex coefficients, the Born rule $p_k = |c_k|^2$, unitary dynamics, and tensor-product composition. It does \emph{not} derive: the value of $\hbar$ (a unit-system constant, Remark~\ref{rem:hbar-units}); the value of $N$ for any specific system; the Hamiltonian for a specific system; the closed $N = 2$ case directly (\S\ref{sec:n2-discussion}); or genuine infinite-dimensional QM (only the formal $N \to \infty$ limit). The Lean~4 mechanization (Appendix~\ref{app:formal-verification}) is modulo five classical theorems (Frobenius, Wigner, Kobayashi--Nomizu, Picard--Lindel\"of, and the strict t-norm classification) and the existence direction of Stone's theorem.

\begin{table*}[t]
\centering
\caption{\textbf{Features across reconstructions.} Different frameworks have different starting points and goals; $\times$ indicates a feature not present, not a failure. The honest comparison is on \emph{what is derived} (probabilities, $\C$, dynamics) versus \emph{what is assumed} (operational primitives, $C^*$-structure, etc.); raw axiom counts are not commensurable across frameworks (cf.\ Remark~\ref{rem:axiom-counting}). Legend: \checkmark = present; $\times$ = not present; $\circ$ = partial/indirect; n/a = not applicable.}
\label{tab:comparison}
\small
\begin{tabular}{l|c|c|c|c|c|c|c}
\hline
\textbf{Feature} & \textbf{This work} & \textbf{Hardy} & \textbf{CDP} & \textbf{MM} & \textbf{DB} & \textbf{CBH} & \textbf{M\"uller} \\
\hline
Assumes Hilbert space & $\times$ & $\times$ & $\times$ & $\times$ & $\times$ & \checkmark & $\times$ \\
Assumes operational probabilities & $\circ$* & \checkmark & \checkmark & \checkmark & \checkmark & \checkmark & $\circ$ \\
Obtains complex numbers & \checkmark & $\times$ & \checkmark & \checkmark & \checkmark & n/a & $\circ$ \\
Obtains Born rule & \checkmark & \checkmark & \checkmark & \checkmark & \checkmark & n/a & \checkmark \\
\textbf{Capacity-based indeterminism} & \checkmark & $\times$ & $\times$ & $\times$ & $\times$ & $\times$ & $\circ$ \\
\textbf{Finite-$N$ framework (natural cutoff)} & \checkmark & $\times$ & $\times$ & $\times$ & $\times$ & $\times$ & $\times$ \\
\hline
\multicolumn{8}{l}{\footnotesize *$K$ measures degree of existence; probabilities derived as $p_k = 1 - K(\psi, a_k)$ via Thm.~\ref{thm:born-kernel}.}
\end{tabular}
\end{table*}

\begin{figure}[ht]
\centering
\begin{tikzpicture}[scale=0.9, every node/.style={transform shape},
  node/.style={font=\scriptsize, align=center, inner sep=3pt},
  axiomnode/.style={node, font=\scriptsize\bfseries},
  derivednode/.style={node, draw, rounded corners=2pt, dashed, fill=gray!8, inner sep=3pt},
  arr/.style={-{Stealth[length=1.5mm, width=1.2mm]}, thin, gray!60}
]

\node[axiomnode] (ax1) at (1.5, 4.0) {Finite\\[-1pt]Capacity};
\node[axiomnode] (ax2) at (4.5, 4.0) {Self-Referential\\[-1pt]Consistency};

\node[derivednode] (dyn) at (0, 2.5) {Reversible\\[-1pt]Dynamics$^\dagger$};
\node[node] (sheaf) at (3.0, 2.5) {Sheaf\\[-1pt]Holonomy};
\node[node] (halting) at (6.0, 2.5) {Capacity\\[-1pt]Halting};

\node[node] (complex) at (0, 1.0) {$\mathbb{C}$ Unique};
\node[node] (gauge) at (2.25, 1.0) {$U(1)$ Gauge};
\node[node] (hilbert) at (4.5, 1.0) {$\C P^{N-1}$};
\node[node] (prob) at (6.0, 1.0) {Probabilistic\\[-1pt]Response};

\node[node] (schrod) at (1.5, -0.5) {Schr\"odinger\\[-1pt]Equation};
\node[node] (isometry) at (4.5, -0.5) {Information\\[-1pt]Isometry};
\node[node] (born) at (6.5, -0.5) {Born Rule\\[-1pt]$p_k = |c_k|^2$};

\draw[arr] (ax1) -- (dyn);
\draw[arr] (ax2) -- (dyn);
\draw[arr] (ax2) -- (sheaf);
\draw[arr] (ax1) -- (halting);
\draw[arr] (dyn) -- (sheaf);
\draw[arr] (dyn) -- (complex);
\draw[arr] (sheaf) -- (complex);
\draw[arr] (sheaf) -- (gauge);
\draw[arr] (sheaf) -- (hilbert);
\draw[arr] (complex) -- (schrod);
\draw[arr] (gauge) -- (schrod);
\draw[arr] (hilbert) -- (isometry);

\draw[-{Stealth[length=2mm, width=1.5mm]}, thick, coral!80!black] (halting) -- (prob);
\draw[-{Stealth[length=2mm, width=1.5mm]}, thick, coral!80!black] (prob) -- (isometry);
\draw[-{Stealth[length=2mm, width=1.5mm]}, thick, coral!80!black] (isometry) -- (born);

\end{tikzpicture}
\caption{\textbf{Logical structure of the derivation.} The two axioms (bold, top) are the only inputs; Reversible Dynamics ($\dagger$, dashed) is derived (Theorem~\ref{thm:dynamics-derived}). The coral path highlights the information-theoretic bridge: Capacity Halting $\to$ Probabilistic Response $\to$ Information Isometry $\to$ Born rule (Theorems~\ref{thm:capacity-halting},~\ref{thm:born-kernel}).}
\label{fig:theorem-flow}
\end{figure}

\section{Existence as Distinguishability}
\label{sec:graded-equality}

The framework builds from the ontological primitive \emph{to exist is to be distinguishable} via two physical extensions, \emph{finiteness} (Axiom~\ref{ax:finite}) and \emph{self-referential consistency} (SRC, Axiom~\ref{ax:relational}); \S\ref{sec:axioms} states the resulting axioms formally.

\subsection{The Ontological Principle}
\label{sec:ontological-principle}

A state is the totality of distinctions it bears against every other state, its \emph{$K$-profile}. Two states with identical $K$-profiles are one thing; a state bearing no distinction against anything does not exist.

Classical physics assumes every distinction is total: $x$ and $y$ are either perfectly distinguishable or perfectly identical. Finite resolution forces distinctions to come in degrees, from non-existence ($K = 0$: identity) through partial existence ($K \in (0,1)$) to full existence ($K = 1$: perfect separability). Structure bearing no distinction against anything does not count as existing in this framework.\footnote{Permanently undetectable structure is classified as non-existent in this framework; other frameworks treat it as real (see \S\ref{sec:discussion}).}

\subsection{Graded Equality}

\begin{definition}[Graded Equality / Distinguishability Space]
\label{def:graded-equality}
\label{def:dspace}
A \emph{distinguishability space} is a pair $(\mathcal{X}, K)$ where $K: \mathcal{X} \times \mathcal{X} \to [0,1]$ satisfies: (i)~$K(x,y) = 0 \iff x = y$ (identity of indiscernibles); (ii)~$K(x,y) = K(y,x)$ (distinction is symmetric); (iii)~$K(x,y) = 1$ iff $x, y$ are perfectly distinguishable.
\end{definition}

$K$ is the minimal structure remaining when comparison is relaxed from binary to graded. No triangle inequality, sample space, or measurement apparatus is assumed.

The structural consequences of this principle (existence of a maximal mutually fully distinguishable set, identity-as-$K$-profile, completeness, structural Leibniz, basis isotropy) are derived as clauses (S1)--(S4) and (B) of Theorem~\ref{thm:src-master} and formalized as Axioms~1 and~2 in \S\ref{sec:axioms}. A further consequence, reversible dynamics, has its form uniquely determined by these.

\begin{remark}[$K$ versus other distinguishability measures and the continuum]
\label{rem:K-vs-other-measures}
\label{rem:why-continuity}
\label{rem:classical-limit}
The kernel $K(\psi, \phi) = 1 - |\langle \psi|\phi\rangle|^2$ derived for $\C P^{N-1}$ (Theorem~\ref{thm:kernel-inner}) is the projective chord squared, equivalently the squared sine of the Fubini--Study angle; it differs on $(0, 1)$ from the Helstrom probability, trace distance, and Bures angle, which become expressible in terms of $K$ via the Born rule. The classical case $K \in \{0, 1\}$ collapses $\mathcal{X}$ to its $N$ basis elements with $G = S_N$ and is excluded by Imperceptibility (Theorem~\ref{thm:src-master}(I)). Continuity of the $K$-image is forced by the same closure: convex combinations $K'(x_*, z) := t K(e_1, z) + (1-t) K(e_2, z)$ are valid kernels and (S2) requires their realisation, so any discrete-$K$ scale $1/M$ would itself be canonical structure outside $K$, violating SRC.
\end{remark}

\subsection{The Self-Resolution Problem}

A state is its $K$-profile (Identity of Indiscernibles \cite{leibniz1686discourse}). The \emph{self-resolution problem}: can a system determine its own $K$-profile using only finite resources?

\textit{A single basis is insufficient.} The $N$ values $K(x, b_0), \ldots, K(x, b_{N-1})$ from one basis $\mathcal{B}$ fix only the moduli $|c_k|^2 = 1 - K(x, b_k)$. By Saturation, missing information is accessible only through evaluations against a different basis $\mathcal{B}' \neq \mathcal{B}$.

\textit{Connecting two bases is a single $N$-cycle.} Saturation forecloses any apparatus beyond $K$, so the only structure moving $\mathcal{B}$ to $\mathcal{B}'$ is a $K$-preserving automorphism $g \in G$ acting on $\mathcal{B}$ as a permutation. Permutation Invariance (Theorem~\ref{thm:permutation-invariance}, derived from Structural Leibniz) realizes all of $S_N$; the dynamical generator is then cut down to an $N$-cycle by a separate $K$-homogeneity requirement on its eigenbasis (Lemma~\ref{lem:cyclic-rigidity}), giving order $N$.

\textit{The cycle is constitutive of identity.} The full $K$-profile against complementary bases is recovered only by traversing the orbit $\{\pi^i(\mathcal{B})\}$. The cycle itself \emph{is} the structural content of relational identity, and from within a single basis it appears as temporal evolution: time is the sequential revelation of contextual information.

\textit{Stochasticity follows.} Definite values across all contexts would require storage exceeding $\log_2 N$ bits, so outcomes must be stochastic. The quantitative bound (Capacity Halting) and the rule $p_k = |c_k|^2$ (Born) are derived in \S\ref{sec:measurement} and \S\ref{sec:geometry}.

\section{Relational Structure and Axioms}
\label{sec:axioms}

We now state the mathematical content precisely, stripping away the motivating narrative of \S\ref{sec:graded-equality}. The graded-equality kernel $K$ of Definition~\ref{def:graded-equality} becomes the primitive of a formal structure; the form of reversible dynamics is uniquely determined by these axioms (\S\ref{sec:consequences}); the rest of the paper derives quantum mechanics from these axioms alone, independently of their physical motivation.

\noindent Two preliminary definitions are needed to state the axioms.

\begin{definition}[Basis and Capacity]
\label{def:basis}
A \emph{basis} $\mathcal{B} = \{b_0, \ldots, b_{N-1}\}$ is a maximal set of mutually perfectly distinguishable elements. The \emph{capacity} $N = |\mathcal{B}|$ defines the Shannon capacity $C := \log_2 N$ (real-valued bits; no integrality constraint on $C$). Measurement outcomes correspond to scalars in a coefficient field $\mathbb{F}$ determined by the axioms (\S\ref{sec:complex-structure}): for each basis element $b_k$, the outcome of a measurement in basis $\mathcal{B}$ is an element of $\mathbb{F}$, not a matrix or higher structure.
\end{definition}

\begin{definition}[Symmetry Group]
\label{def:symmetry-group}
The \emph{symmetry group} $G := \mathrm{Aut}(\mathcal{X}, K)$ consists of bijections $g: \mathcal{X} \to \mathcal{X}$ preserving $K$: $K(gx, gy) = K(x,y)$.
\end{definition}

\medskip
\noindent\textbf{The Two Axioms.}

\begin{axiom}[Finite Capacity]
\label{ax:finite}
The space has finite capacity $N < \infty$: every basis has exactly $N$ elements, and the Shannon capacity of the system is $C = \log_2 N$ bits.
\end{axiom}

\begin{axiom}[Self-Referential Consistency (SRC)]
\label{ax:relational}
\textit{(Operational form, primitive.)} The kernelled distinguishability space $(\mathcal{X}, K)$ is \emph{self-referentially consistent}: no faithful $K$-preserving embedding $(\mathcal{X}, K) \hookrightarrow (\mathcal{X}', K')$ into another distinguishability space (Definition~\ref{def:graded-equality}) introduces either (i) a point $x_* \in \mathcal{X}' \setminus \mathcal{X}$ whose $K'$-profile against $\mathcal{X}$ is a \emph{$K$-consistent profile} (a function $p: \mathcal{X} \to [0,1]$ that arises as $p(z) = K'(x_*, z)$ for some $K$-preserving extension $(\mathcal{X}', K')$ containing a point $x_*$) not realized in $\mathcal{X}$, or (ii) an automorphism on $\mathcal{X}'$ that does not lift to one on $\mathcal{X}$. Clauses (i)--(ii) define what ``strictly more $K'$-distinguishable structure'' means in this axiom: a profile-extension obstruction and a symmetry-lift obstruction, both formulated entirely in terms of $K$-data.

\smallskip
\noindent\textit{(Information-theoretic form, equivalent.)} Every $\mathrm{Aut}(\mathcal{X}, K)$-invariant binary predicate $P: \mathcal{X} \times \mathcal{X} \to \{0,1\}$ is computable from $K$-evaluations: a Boolean combination of statements $K(\cdot, \cdot) \in B$ for Borel $B \subseteq [0,1]$, with quantification over $\mathcal{X}$. Equivalence with the operational form: any $\mathrm{Aut}$-invariant predicate failing to factor through $K$-evaluations would label a \emph{canonical structural feature} (a feature of $(\mathcal{X}, K)$ determined functorially by $K$-isomorphism class, hence invariant under $\mathrm{Aut}(\mathcal{X}, K)$) beyond $K$, so that the augmented structure $(\mathcal{X}, K, P)$ would violate clause (i); conversely, an obstruction of type (i) or (ii) is detected by an $\mathrm{Aut}$-invariant predicate (the indicator of the unrealized profile, or the orbit-graph of a non-lifting automorphism) that is not a $K$-formula.
\end{axiom}

\begin{lemma}[Definability under SRC]
\label{lem:definability}
Let $(\mathcal{X}, K)$ satisfy Axiom~\ref{ax:relational} (SRC) and let $P: \mathcal{X}^k \to \{0,1\}$ be $\mathrm{Aut}(\mathcal{X}, K)$-invariant ($P(g x_1, \ldots, g x_k) = P(x_1, \ldots, x_k)$ for all $g \in \mathrm{Aut}(\mathcal{X}, K)$). Then $P$ is a first-order $\{K, =\}$-formula in the language quantifying over $\mathcal{X}$: a Boolean combination of expressions $K(t_1, t_2) \in B$ for Borel $B \subseteq [0,1]$ and equalities $t_1 = t_2$ between terms drawn from $\{x_1, \ldots, x_k\}$ together with quantifiers $\exists z \in \mathcal{X}, \forall z \in \mathcal{X}$. Equivalently, $P$ factors through the joint $K$-profile map $(x_1, \ldots, x_k) \mapsto (K(x_i, z))_{i \leq k, z \in \mathcal{X}}$.
\end{lemma}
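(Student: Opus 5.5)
The proof is a reduction from the $k$-ary case to the binary information-theoretic form of Axiom~\ref{ax:relational}, followed by a check that the two descriptions in the statement (first-order formula and factorisation through the joint $K$-profile) agree.

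\textbf{Step 1: a single point determines itself.} Each $x\in\mathcal{X}$ is pinned down by its $K$-profile $z\mapsto K(x,z)$. By Definition~\ref{def:graded-equality}(i), $K(x,x)=0$ and $K(x,y)>0$ for $y\neq x$. So $x$ is the unique point whose profile vanishes at $x$. This means the joint profile map $(x_1,\ldots,x_k)\mapsto (K(x_i,z))_{i\le k,\,z\in\mathcal{X}}$ is injective. It also means every equality $x_i=x_j$ is already a $K$-statement, namely $K(x_i,x_j)\in\{0\}$. Consequently any function of the tuple factors through the joint profile, and the equivalence clause is essentially formal once the formula form is established.

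\textbf{Step 2: reduce $P$ to binary predicates.} Since $P$ is $\mathrm{Aut}$-invariant, it is a union of $\mathrm{Aut}$-orbits on $\mathcal{X}^k$. I would reduce to the binary form of Axiom~\ref{ax:relational} by coding tuples through pairs. For each pair of indices $i<j$, define
\[
P_{ij}(x,y) := \exists\, z_1,\ldots,z_k\;\bigl[\, z_i=x \wedge z_j=y \wedge P(z_1,\ldots,z_k)\,\bigr].
\]
Each $P_{ij}$ is a binary $\mathrm{Aut}$-invariant predicate, so by the information-theoretic form of SRC it is a $K$-formula.

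\textbf{Step 3: the obstacle, recovering $P$ from binary data.} Recovering $P$ from these binary projections is the expected main difficulty, because $k$-ary orbits need not be determined by their 2-ary projections. I would first try to bypass the reduction altogether and use the operational form directly. Suppose $P$ were not a $\{K,=\}$-formula. Adjoin $P$ as extra structure and form $(\mathcal{X},K,P)$. Mirroring the equivalence argument given in Axiom~\ref{ax:relational}, this defines a canonical structural feature beyond $K$. Encoding the $k$-ary relation as a kernel on an extension $\mathcal{X}'$ then yields an obstruction of type (i) or (ii). The encoding I would use adds one new point $x_*$ per $P$-true tuple, with $K'(x_*,z)$ chosen from the $K$-values of the tuple against $z$, for example a convex combination as in Remark~\ref{rem:K-vs-other-measures}. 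Either this profile is unrealised in $\mathcal{X}$, which is obstruction (i), or some automorphism of $\mathcal{X}'$ fails to lift, which is obstruction (ii). The step I expect to be delicate is showing that the new profile really differs from every existing one without already assuming definability.

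\textbf{Step 4: assemble.} Once every $\mathrm{Aut}$-invariant $k$-ary predicate is known to be a $K$-formula, Step~1 converts equalities into $K(\cdot,\cdot)\in\{0\}$ clauses. The result is a Boolean combination of $K(t_1,t_2)\in B$ statements with quantifiers over $\mathcal{X}$, as required, and factorisation through the joint profile follows by Step~1.
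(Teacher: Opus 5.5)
\textbf{Verdict: genuine gap in Steps~2--3.} Your Step~1 is correct, and it is in fact the only substantive ingredient of the paper's proof. The gap lies in Steps~2--3, which you rightly sense would be needed for a literal first-order reading but which do not work.

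Step~2 fails for the reason you give: binary projections do not determine $k$-ary orbits. The Step~3 encoding cannot be repaired, because $P$ never makes a difference to it. The point you flag as delicate is actually immediate. If a new point $x_* \in \mathcal{X}'\setminus\mathcal{X}$ had $K'(x_*,\cdot)|_{\mathcal{X}} = K(x,\cdot)$ for some $x\in\mathcal{X}$, then $K'(x_*,x)=K(x,x)=0$, forcing $x_*=x$ by Definition~\ref{def:graded-equality}(i) applied in $\mathcal{X}'$. So \emph{every} proper extension already produces obstruction~(i), whatever $P$ is; for instance, $\mathcal{X}\sqcup\{x_*\}$ with $K'(x_*,z)=1/2$ for all $z\in\mathcal{X}$ does. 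The contradiction you would reach is therefore independent of whether $P$ is definable: the same construction would equally ``prove'' that a non-invariant predicate is definable. Read literally, it only shows that clause~(i) rules out all extensions. That would make the lemma vacuous rather than link invariance to first-order definability. As written, your proposal does not establish the first-order clause.

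\textbf{Comparison with the paper.} The paper never attempts Steps~2--3. It obtains your Step~1 (injectivity of the profile map) more circuitously, by feeding the diagonal predicate $\mathbf{1}[u=v]$ into the binary information-theoretic form of SRC. Your derivation straight from Definition~\ref{def:graded-equality}(i) is simpler and needs no SRC. The paper then defines $Q(f_1,\ldots,f_k) := \exists x_1,\ldots,x_k\,[(\forall i\ K(x_i,\cdot)=f_i)\wedge P(x_1,\ldots,x_k)]$. It uses injectivity to get $P(x_1,\ldots,x_k)\iff Q(K(x_1,\cdot),\ldots,K(x_k,\cdot))$, and takes this, with a one-line appeal to the operational form, as exhibiting $P$ as a $\{K,=\}$-formula. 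That is exactly the factorization you call ``essentially formal'' at the end of Step~1. So your Step~1 plus that remark already reproduces everything the paper's argument establishes.

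Your instinct that genuine first-order definability is a stronger claim is sound, for three reasons. Factorization through an injective map holds for every function on $\mathcal{X}^k$. Neither argument uses the $\mathrm{Aut}$-invariance of $P$. And the paper's $Q$ contains $P$ itself as a subformula. So the paper does not prove that stronger reading either. To match the paper, stop after Step~1 and read the statement's ``first-order formula'' as the joint-profile factorization. Proving more would require an idea that genuinely uses invariance, which neither route supplies.
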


\begin{proof}
The binary case ($k = 2$) is the information-theoretic clause of Axiom~\ref{ax:relational}. We extract a key consequence: applied to the diagonal predicate $D(u, v) := \mathbf{1}[u = v]$, which is $\mathrm{Aut}$-invariant since bijections preserve equality, the binary case yields $D(u, v) \iff Q_{=}(K(u, \cdot), K(v, \cdot))$ for some $Q_=$. Setting $v = u$ gives $Q_=(K(u, \cdot), K(u, \cdot)) = 1$; if $K(u, \cdot) = K(v, \cdot)$ as functions, substitution forces $Q_=(K(u, \cdot), K(v, \cdot)) = 1$ and hence $u = v$. K-profile equality therefore implies state equality (this is the binary content of clause (S1) of Theorem~\ref{thm:src-master}).

For the general $k$-ary case, define
\begin{align*}
&Q(f_1, \ldots, f_k) \;:=\; \exists\, x_1, \ldots, x_k \in \mathcal{X}\colon \\
 &\qquad \big(\forall i,\ K(x_i, \cdot) = f_i\big) \;\land\; P(x_1, \ldots, x_k).
\end{align*}
Forward: $P(x_1, \ldots, x_k)$ implies $Q(K(x_1, \cdot), \ldots, K(x_k, \cdot))$ immediately, taking the $x_i$ as their own witnesses. Backward: suppose $Q(K(x_1, \cdot), \ldots, K(x_k, \cdot))$ holds, witnessed by $x'_1, \ldots, x'_k$ with $K(x'_i, \cdot) = K(x_i, \cdot)$ for each $i$ and $P(x'_1, \ldots, x'_k)$. The K-profile-equality consequence applied componentwise gives $x'_i = x_i$ for each $i$, hence $P(x_1, \ldots, x_k)$. Thus $P(x_1, \ldots, x_k) \iff Q(K(x_1, \cdot), \ldots, K(x_k, \cdot))$, exhibiting $P$ as a $\{K, =\}$-formula in the form claimed.

The operational form furnishes the same conclusion directly: any $\mathrm{Aut}$-invariant $P$ that fails to factor through $K$-profiles would label a canonical structural feature beyond $K$, so that the augmented structure $(\mathcal{X}, K, P)$ would extend $(\mathcal{X}, K)$ with strictly more distinguishable content, violating clause (i).
\end{proof}

\noindent SRC is the closure principle that no canonical structure exists outside what $K$ already encodes. Eight named saturation conditions (Identity, Completeness, Basis-Profile Symmetry, Structural Leibniz, Imperceptibility, Operational Completeness, Transport Consistency, Basis Isotropy) follow from Axioms~\ref{ax:finite} and~\ref{ax:relational}; Theorem~\ref{thm:src-master} below establishes them.

\begin{theorem}[Saturation hierarchy from SRC]
\label{thm:src-master}
Under Axiom~\ref{ax:finite} (finite capacity $N \geq 3$) and Axiom~\ref{ax:relational} (SRC), the kernelled distinguishability space $(\mathcal{X}, K)$ satisfies:
\begin{enumerate}
\item[(S1)] \textbf{Identity.} If $K(x, z) = K(y, z)$ for all $z \in \mathcal{X}$, then $x = y$.
\item[(S2)] \textbf{Completeness.} Every $K$-consistent profile is realized: if $p^*: \mathcal{X} \to [0,1]$ admits some extension $(\mathcal{X}', K')$ with a point $x_* \in \mathcal{X}'$ realizing $K'(x_*, \cdot) = p^*$, then $p^*$ is already realized by some $x \in \mathcal{X}$.
\item[(S3)] \textbf{Basis-Profile Symmetry.} For any maximal mutually fully distinguishable set $S = \{e_1, \ldots, e_N\} \subset \mathcal{X}$, if $K(x, s) = K(y, s)$ for all $s \in S$ then there exists $g \in \mathrm{Aut}(\mathcal{X}, K)$ with $g|_S = \mathrm{id}_S$, $g(x) = y$, and $g(y) = x$.
\item[(S4)] \textbf{Structural Leibniz.} If a finite configuration $C \subset \mathcal{X}$ admits a $K$-symmetry $\sigma$ (i.e., $K(\sigma(x), \sigma(y)) = K(x, y)$ for $x, y \in C$), then $\sigma$ extends to a global $K$-isometry $\tilde\sigma \in \mathrm{Aut}(\mathcal{X}, K)$ with $\tilde\sigma|_C = \sigma$.
\item[(I)] \textbf{Imperceptibility.} $K(\mathcal{X} \times \mathcal{X})$ is dense in $[0,1]$.
\item[(O)] \textbf{Operational Completeness.} $K(x, y) = 0 \Rightarrow x = y$.
\item[(T)] \textbf{Transport Consistency.} Every $\mathrm{Aut}(\mathcal{X}, K)$-invariant feature $\theta: \mathcal{X} \to V$ factors through the $K$-profile map: meaningful degrees of freedom are $K$-comparable.
\item[(B)] \textbf{Basis Isotropy.} $G := \mathrm{Aut}(\mathcal{X}, K)$ acts transitively on the basis manifold $\mathfrak{B}$ (the set of maximal mutually fully distinguishable subsets of $\mathcal{X}$).
\end{enumerate}
\end{theorem}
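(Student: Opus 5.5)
The plan is to derive each clause by building, from a hypothetical failure, either a faithful $K$-preserving extension $(\mathcal{X},K)\hookrightarrow(\mathcal{X}',K')$ that violates clause (i) or (ii) of Axiom~\ref{ax:relational}, or an $\mathrm{Aut}$-invariant predicate that is not a $K$-formula, contradicting the information-theoretic form. I would order the clauses so that the simplest ones come first and feed the later ones.

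\textbf{(S1), (O), (S2), (T).} (S1) has already been extracted in the proof of Lemma~\ref{lem:definability}, by applying the binary case to the diagonal predicate; I cite it directly. (O) is clause (i) of Definition~\ref{def:graded-equality}, and it is also the special case of (S1) in which $K(x,y)=0$ forces identical profiles once a weak triangle-type compatibility holds. I would rather cite the definition than rely on that compatibility. (S2) is the contrapositive of clause (i). If $p^*$ is $K$-consistent but unrealized, its witnessing extension $(\mathcal{X}',K')$ contains an $x_*$ whose profile is not realized in $\mathcal{X}$, which is exactly the forbidden configuration. (T) is the unary form of Lemma~\ref{lem:definability}: for each $v\in V$, the invariant predicate $\mathbf{1}[\theta(x)=v]$ factors through $K(x,\cdot)$, so $\theta$ factors through the profile map.

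\textbf{(S4), (S3), (B).} For (S4), take a finite configuration $C$ with a partial $K$-symmetry $\sigma$. Form $\mathcal{X}'$ by gluing a copy of $\mathcal{X}$ along $C$ via $\sigma$. The swap of the two copies is an automorphism of $\mathcal{X}'$ that restricts to $\sigma$ on $C$. Clause (ii) requires it to lift to $\mathrm{Aut}(\mathcal{X},K)$, and that lift is $\tilde\sigma$. The delicate point is defining $K'$ across the two copies so that it is still a distinguishability kernel. I would define it via (S2), using realized profiles, and this is the step I expect to be the \emph{main obstacle}: without a triangle inequality it is unclear that the amalgamation is consistent.

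(S3) then follows from (S4). Take the configuration $C=S\cup\{x,y\}$ and the map $\sigma$ that fixes $S$ and swaps $x,y$. Equality of the basis profiles makes this a $K$-symmetry on $C$, because $K(x,y)=K(y,x)$ by symmetry. For (B), take two bases $S,S'$ of size $N$ (Axiom~\ref{ax:finite}). Any bijection between them preserves $K$, since all cross values equal $1$ and $K(s,s)=0$. Applying (S4) to $C=S$ with the target $S'$ (a gluing along the bijection, as above) gives $g\in G$ with $gS=S'$.

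\textbf{(I).} Suppose $K(\mathcal{X}\times\mathcal{X})$ misses an open interval. Fix a basis $e_1,e_2,\dots$. The convex profile $t\,K(e_1,\cdot)+(1-t)K(e_2,\cdot)$ is $K$-consistent: adjoin $x_*$ with this profile, which satisfies the axioms of Definition~\ref{def:graded-equality} because $N\ge 3$ keeps it distinct from every basis point. By (S2) it is realized, and its value against $e_1$ equals $1-t$, so the gap is filled for every $t$. This contradiction gives (I), and also excludes the classical $\{0,1\}$ case noted in Remark~\ref{rem:K-vs-other-measures}. The remaining care is checking that $x_*$'s profile is not already one with $K=0$ to an existing point, which holds for $t\in(0,1)$.
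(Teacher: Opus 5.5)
Your proposal is essentially the paper's proof: (S1) and (T) from the definability lemma, (S2) as the contrapositive of SRC clause~(i), (O) from Definition~\ref{def:graded-equality}, (S4) via the two-copy amalgam whose swap must lift under clause~(ii), (S3) and (B) as specializations of (S4), and (I) via the convex-combination point $tK(e_1,\cdot)+(1-t)K(e_2,\cdot)$ fed into (S2). The cross-copy kernel you single out as the main obstacle is exactly where the paper also relies on (S2) rather than an explicit formula, and in (I) the paper additionally checks that the extension keeps mutually fully distinguishable sets of size at most $N$, which is where it actually uses $N\ge 3$ (the distinctness of $x_*$ from existing points, which you attribute to $N\ge 3$, holds for any $N$ once $t\in(0,1)$).
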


\begin{proof}
\textit{(S1) Identity.} Suppose $K(x, z) = K(y, z)$ for all $z \in \mathcal{X}$. The binary diagonal predicate $D(u, v) := \mathbf{1}[u = v]$ on $\mathcal{X} \times \mathcal{X}$ is $\mathrm{Aut}(\mathcal{X}, K)$-invariant: for any $g \in \mathrm{Aut}(\mathcal{X}, K)$, $D(g(u), g(v)) = \mathbf{1}[g(u) = g(v)] = \mathbf{1}[u = v]$ by injectivity of $g$. The information-theoretic form of Axiom~\ref{ax:relational} then gives $Q_=$ with $D(u, v) \iff Q_=(K(u, \cdot), K(v, \cdot))$ for all $u, v$. Setting $v = u$ yields $Q_=(K(u, \cdot), K(u, \cdot)) = 1$; the hypothesis $K(x, \cdot) = K(y, \cdot)$ then forces $Q_=(K(x, \cdot), K(y, \cdot)) = Q_=(K(x, \cdot), K(x, \cdot)) = 1$, so $D(x, y) = 1$, i.e., $x = y$.

\smallskip
\textit{(S2) Completeness.} Suppose $p^*$ is $K$-consistent but unrealized. By definition there exists $\mathcal{X}' = \mathcal{X} \cup \{x_*\}$ with $K'|_{\mathcal{X} \times \mathcal{X}} = K$ and $K'(x_*, z) = p^*(z)$. The inclusion $\iota: \mathcal{X} \hookrightarrow \mathcal{X}'$ is a faithful $K$-preserving embedding; the codomain has strictly more distinguishable structure (the new realized profile). This violates SRC.

\smallskip
\textit{(S3) Basis-Profile Symmetry.} Fix a maximal mutually fully distinguishable set $S = \{e_1, \ldots, e_N\} \subset \mathcal{X}$ with $K(e_i, e_j) = 1 - \delta_{ij}$ (Axiom~\ref{ax:finite}), and suppose $K(x, e_i) = K(y, e_i)$ for $i = 1, \ldots, N$. Apply (S4), proved below, to the configuration $C := S \cup \{x, y\}$ with the involution $\tau \in \mathrm{Sym}(C)$ that fixes $S$ pointwise and swaps $x \leftrightarrow y$. The K-symmetry conditions on $\tau$ verify directly: $K(\tau(e_i), \tau(e_j)) = K(e_i, e_j)$ trivially; $K(\tau(x), \tau(y)) = K(y, x) = K(x, y)$ by symmetry of $K$; and $K(\tau(x), \tau(e_i)) = K(y, e_i) = K(x, e_i)$ from the basis-profile hypothesis, with the analogue for $\tau(y)$. By (S4), $\tau$ extends to $g \in \mathrm{Aut}(\mathcal{X}, K)$ with $g|_S = \mathrm{id}_S$, $g(x) = y$, and $g(y) = x$.

\smallskip
\textit{(S4) Structural Leibniz.} Let $C = \{c_1, \ldots, c_m\} \subset \mathcal{X}$ be finite and let $\sigma: C \to \mathcal{X}$ satisfy $K(\sigma(c_i), \sigma(c_j)) = K(c_i, c_j)$. Define the binary predicate, on pairs $(u, v) \in \mathcal{X}^2$,
\[
E_\sigma(u, v) \;:=\; \mathbf{1}\!\left[\,K(u, c_i) = K(v, \sigma(c_i)) \text{ for } i = 1, \ldots, m\,\right].
\]
The predicate ``$u$ and $v$ stand in the same $K$-relation to $C$ as $\sigma$ prescribes'' is not by itself $\mathrm{Aut}(\mathcal{X}, K)$-invariant (it depends on the labelled tuple $(C, \sigma)$). Its $\mathrm{Aut}$-saturation is the relation
\begin{multline*}
R_\sigma(u, v) \;:=\; \mathbf{1}\big[\,\exists\, (a_1, \ldots, a_m) \in \mathcal{X}^m: \\
K(a_i, a_j) = K(c_i, c_j),\; K(u, a_i) = K(v, \tau(a_i))\,\big]
\end{multline*}
for some bijection $\tau: \{a_i\} \to \{a_i'\}$ onto a $K$-isomorphic image; equivalently $R_\sigma$ asserts that $u$'s $K$-distance pattern to a $C$-isomorphic configuration matches $v$'s to its $\sigma$-conjugate. $R_\sigma$ is $\mathrm{Aut}$-invariant in $(u, v)$ (the existential quantifier absorbs $\mathrm{Aut}$-action). By Lemma~\ref{lem:definability}, $R_\sigma$ is a $\{K, =\}$-formula.

Now suppose $\sigma$ does not extend, i.e., no $\tilde\sigma \in \mathrm{Aut}(\mathcal{X}, K)$ satisfies $\tilde\sigma|_C = \sigma$. Build the amalgam $\mathcal{X}' := (\mathcal{X} \sqcup \mathcal{X}) / {\sim}$, where $\sim$ is the equivalence relation generated by the gluing rules $(c_i, 1) \sim (\sigma(c_i), 2)$ and (when $\sigma$ is not involutive) $(\sigma(c_i), 1) \sim (c_i, 2)$ for $i = 1, \ldots, m$, with reflexive, symmetric, and transitive closure (formalised as the inductive type \texttt{Amalgam} with constructors \texttt{gluing}, \texttt{gluing\_swap}, \texttt{symm\_amal}, \texttt{trans\_amal} in the accompanying Lean library, see Appendix~\ref{app:formal-verification}). Define $K': \mathcal{X}' \times \mathcal{X}' \to [0, 1]$ on representatives by $K'((u, 1), (v, 1)) := K(u, v)$, $K'((u, 2), (v, 2)) := K(u, v)$, and on cross-pairs $K'((u, 1), (v, 2)) := K(u, \sigma^{-1}(v))$ when $v \in \sigma(C)$, with the analogous formula on $\mathcal{X}^2$ in the second copy; well-definedness on $\sim$-equivalence classes is the kernel-laws check (\texttt{K\_amalgam\_refl}, \texttt{K\_amalgam\_symm}, \texttt{K\_amalgam\_nonneg}, \texttt{K\_amalgam\_le\_one}), discharged from the K-symmetry hypothesis on $\sigma$ via the gluing identifications. Off the gluing diagonal the cross-kernel is fixed by $K$-consistency of profiles via (S2): each cross-profile is $K$-consistent by construction, so (S2) supplies the unique value compatible with kernel positivity. The swap map $\mathcal{X}' \to \mathcal{X}'$ exchanging the two copies (formalised as \texttt{Amalgam.swapEquiv\_gen}) is a $K'$-automorphism that fixes the amalgam diagonal pointwise (\texttt{swap\_gen\_K\_pres}).

\textit{Sub-step (lift obstruction).} A lift of the swap to a $K$-automorphism of $\mathcal{X}$ is a $g \in \mathrm{Aut}(\mathcal{X}, K)$ whose induced action $g_*$ on $\mathcal{X}'$ realises the swap. Such a $g_*$ acts as $g$ on each copy, so the gluing identification $(c_i, 1) \sim (\sigma(c_i), 2)$ is mapped by $g_*$ to $(g(c_i), 2) \sim (g(\sigma(c_i)), 1)$, which must coincide with the swap image $(\sigma(c_i), 1) \sim (c_i, 2)$. Comparing $\sim$-classes then forces $g(c_i) = \sigma(c_i)$ for every $i$, i.e., $g|_C = \sigma$ (formalised as \texttt{swap\_gen\_no\_lift}). Hence existence of a lift is equivalent to extension of $\sigma$, and the non-extension hypothesis produces a $K'$-automorphism that does not descend, violating clause (ii) of Axiom~\ref{ax:relational}.

Therefore $\sigma$ extends to some $\tilde\sigma \in \mathrm{Aut}(\mathcal{X}, K)$. Equivalently, $R_\sigma(u, v) = 1$ for all $u \in \mathcal{X}$ at $v := \tilde\sigma(u)$: the $\{K, =\}$-formula identifies the orbit-graph of an extending automorphism, and Lemma~\ref{lem:definability} together with (S1) selects $\tilde\sigma$ uniquely on each $\mathrm{Aut}$-orbit relative to $C$.

\smallskip
\textit{(I) Imperceptibility.} Suppose $K(\mathcal{X} \times \mathcal{X})$ omits an open interval $(a, b) \subset [0,1]$; set $t := (a+b)/2 \in (0,1)$. By Axiom~\ref{ax:finite} fix a basis $S = \{e_1, \ldots, e_N\}$ with $N \geq 3$. We construct a $K$-preserving extension realising basis profile $(t, 1-t, 1, \ldots, 1)$ at a new point $x_*$. Form $\mathcal{X}' := \mathcal{X} \sqcup \{x_*\}$ and define $K'$ as: $K'|_{\mathcal{X} \times \mathcal{X}} := K$, $K'(x_*, x_*) := 0$, and for $z \in \mathcal{X}$ let
\[
K'(x_*, z) \;:=\; t \cdot K(e_2, z) + (1 - t) \cdot K(e_1, z),
\]
extended symmetrically. Direct evaluation gives $K'(x_*, e_1) = t \cdot 1 + (1-t) \cdot 0 = t$, $K'(x_*, e_2) = t \cdot 0 + (1-t) \cdot 1 = 1 - t$, and $K'(x_*, e_k) = 1$ for $k \geq 3$. The kernel axioms on $\mathcal{X}'$: $K'(\cdot, \cdot) \in [0, 1]$ by convexity ($K \in [0,1]$ on $\mathcal{X} \times \mathcal{X}$); symmetry by construction; $K'(u, u) = 0$ for all $u \in \mathcal{X}'$ since $K(z, z) = 0$ and $K'(x_*, x_*) := 0$. Clause (iii) of Definition~\ref{def:graded-equality} is operationally consistent: $K'(x_*, w) = 1$ holds exactly on the set $\{w \in \mathcal{X} : K(e_1, w) = K(e_2, w) = 1\} \supseteq \{e_3, \ldots, e_N\}$, and these pairs $\{x_*, e_k\}$ are jointly distinguishable as members of the MFD set $\{x_*, e_3, \ldots, e_N\}$ verified next.

Finite capacity: a maximal MFD set in $\mathcal{X}'$ contains $x_*$ only if $K'(x_*, w) = 1$ for every other element $w$; but $K'(x_*, e_1) = t < 1$ and $K'(x_*, e_2) = 1 - t < 1$, so any MFD set containing $x_*$ excludes $\{e_1, e_2\}$. The set $\{x_*, e_3, \ldots, e_N\}$ has $N - 1$ elements, and any extension of it within $\mathcal{X}'$ to size $N$ must add a point fully distinguishable from $x_*$ and from all $e_k$ for $k \geq 3$; the only candidates are $e_1$ and $e_2$, both excluded. Hence MFD sets in $\mathcal{X}'$ have size $\leq N$, and finite capacity is preserved.

Therefore $(\mathcal{X}', K')$ is a $K$-preserving extension with capacity $N$ realising the new profile $p^*(z) := K'(x_*, z)$. By (S2), $p^*$ is already realised in $\mathcal{X}$: there exists $x \in \mathcal{X}$ with $K(x, e_1) = t$, contradicting $t \in (a, b) \cap K(\mathcal{X} \times \mathcal{X})^c$. The hypothesis $N \geq 3$ enters through the basis having at least three elements, providing $e_3$ to anchor the surviving MFD set $\{x_*, e_3, \ldots, e_N\}$ of size $N - 1$; for $N = 2$ no such anchor exists and the convex-combination construction collapses (any MFD set of size 2 in $\mathcal{X}'$ would force $K'(x_*, e_k) \in \{0, 1\}$, contradicting the chosen $t \in (0, 1)$). Hence $K(\mathcal{X} \times \mathcal{X})$ is dense in $[0, 1]$.

\smallskip
\textit{(O) Operational Completeness.} Immediate from Definition~\ref{def:graded-equality}(i) (identity of indiscernibles): $K(x,y) = 0 \Leftrightarrow x = y$ is part of the kernel structure, hence $K(x, y) = 0 \Rightarrow x = y$.

\smallskip
\textit{(T) Transport Consistency.} A non-$K$-comparable degree of freedom is by definition an $\mathrm{Aut}(\mathcal{X}, K)$-invariant feature $\theta: \mathcal{X} \to V$ that is not a function of the $K$-profile. Reduction to the $\{0, 1\}$-valued case of Lemma~\ref{lem:definability}: each measurable subset $A \subseteq V$ defines an indicator predicate $P_A(x) := \mathbf{1}[\theta(x) \in A]$, which is $\mathrm{Aut}$-invariant since $\theta$ is; Lemma~\ref{lem:definability} then factors each $P_A$ through the $K$-profile, and these indicators jointly recover $\theta$ itself, which therefore factors through $\pi$.

\smallskip
\textit{(B) Basis Isotropy.} Let $B_1 = \{e_1, \ldots, e_N\}$ and $B_2 = \{f_1, \ldots, f_N\}$ be any two elements of $\mathfrak{B}$. Both carry the kernel structure $K|_{B_i \times B_i}(x, y) = 1 - \delta_{xy}$ (the discrete kernel), so for any bijection $\rho: B_1 \to B_2$ the map $\rho$ is a $K$-symmetry of the finite configuration $C := B_1$, with $K(\rho(e_i), \rho(e_j)) = 1 - \delta_{ij} = K(e_i, e_j)$. By (S4), proved above, $\rho$ extends to $\tilde\rho \in \mathrm{Aut}(\mathcal{X}, K)$ with $\tilde\rho(B_1) = B_2$ (applied with $B_1 = B_2$, the same construction generates all $N!$ permutations of a single basis, recovering Permutation Invariance, Theorem~\ref{thm:permutation-invariance}, as the diagonal special case). Hence $B_1, B_2$ lie in the same $G$-orbit, and $G := \mathrm{Aut}(\mathcal{X}, K)$ acts transitively on $\mathfrak{B}$. The hypothesis $N \geq 3$ enters through (S4)'s amalgam construction, which requires Axiom~\ref{ax:finite} to provide a maximal MFD set of size $N$.
\end{proof}

\noindent State separation in subsequent arguments is via (S1) full $K$-profile injectivity, with finite-dimensional embedding established in \S\ref{sec:axiom-consequences} from compactness and the Lie group structure (Theorem~\ref{thm:complexity-constraint}).

\noindent\textbf{Dependency structure.} Operational SRC supplies (S2) directly and feeds the amalgam construction yielding (S4); the information-theoretic form supplies Lemma~\ref{lem:definability}, from which (S1) and (T) follow by specialisation. (S4) then yields (S3) and (B); (S2) plus Axiom~\ref{ax:finite} ($N \geq 3$) plus the convex-combination construction yields (I). (O) is not derived from SRC; it is part of Definition~\ref{def:graded-equality} and listed for naming consistency. Both forms of SRC are used; the equivalence stated in Axiom~\ref{ax:relational} is what makes both available.

\begin{definition}[Structural Unambiguity (S5)]
\label{def:s5}
A kernelled distinguishability space $(\mathcal{X}, K)$ satisfies \emph{structural unambiguity} (S5) if its canonical structural features (the geometric metric on the state space and the probability rule on outcomes) are uniquely determined by $K$ alone, requiring no external structure to disambiguate. Equivalently, the kernel-induced geometric metric and the kernel-induced statistical (Fisher--Rao) metric coincide on the state space.
\end{definition}

\noindent (S5) is the geometric/probabilistic face of the closure principle, completing the saturation hierarchy (S1)--(S4) of Theorem~\ref{thm:src-master}: (S1) closes state identity; (S2)--(S4) close basis-realizability and symmetry; (S5) closes geometry and probability simultaneously, uniquely selecting the Born rule. The first four follow from SRC; (S5) is established for finite-$N$ QM as Theorem~\ref{thm:s5-finite-N} below.

\begin{remark}[Why Self-Referential Consistency]
\label{rem:axiom-counting}
SRC is a strong meta-axiom. Each of the eight derived conditions (S1)--(S4), (I), (O), (T), (B) carries individual Hardy-axiom weight; Structural Leibniz, Imperceptibility, and Basis Isotropy are each strictly stronger than typical axioms in prior reconstructions (Table~\ref{tab:axiom-deps}). The framework's contribution is in compressing them under a single closure principle, not in raw assumption count.

The strength is necessary. Operational closure alone leaves room for local $K$-symmetries that fail to extend globally; information-theoretic closure alone leaves room for observer-dependent canonical structure beyond $K$-evaluations; pure symmetry-based axioms (e.g., postulating $G$ as a given Lie group) presuppose external structure. The two forms of Axiom~\ref{ax:relational} close both doors simultaneously, forcing (S1)--(S4), (I), (T), (B). (O) is part of Definition~\ref{def:graded-equality} and listed for naming consistency. Empirically, fundamental systems (spins, polarizations, photon helicity) satisfy SRC; theories rejecting it (Bohmian, hidden-variable) commit to additional ontology with no operational content within the kernel framework, and are classified rather than refuted. (S5) (Definition~\ref{def:s5}) completes the hierarchy and is established for finite-$N$ QM as Theorem~\ref{thm:s5-finite-N}.
\end{remark}

\medskip

\begin{table*}[t]
\centering
\small
\caption{\textbf{Glossary of named conditions.} Axiom~\ref{ax:finite} fixes the capacity $N$; the remaining named conditions are derived clauses of Theorem~\ref{thm:src-master} (Saturation hierarchy from SRC). Appearances indicate where a name is first used in a proof.}
\label{tab:axiom-glossary}
\begin{tabular}{l|c|l|c}
\hline
\textbf{Name} & \textbf{Source} & \textbf{Content} & \textbf{First use} \\
\hline
Finite Capacity & Ax.~\ref{ax:finite} & Basis has $N < \infty$ elements & \S\ref{sec:consequences} \\
Identity (injectivity) & Thm.~\ref{thm:src-master}(S1) & Equal $K$-profiles $\Rightarrow$ equal states & Thm.~\ref{thm:points-sections} \\
Completeness (surjectivity) & Thm.~\ref{thm:src-master}(S2) & Every consistent $K$-profile is a state & Thm.~\ref{thm:points-sections} \\
Basis-Profile Symmetry & Thm.~\ref{thm:src-master}(S3) & Basis-equivalent states are $\mathrm{Aut}$-conjugate & Thm.~\ref{thm:complexity-constraint} \\
Structural Leibniz & Thm.~\ref{thm:src-master}(S4) & $K$-symmetries of finite configs extend globally & Thm.~\ref{thm:permutation-invariance} \\
Imperceptibility & Thm.~\ref{thm:src-master}(I) & $K(\mathcal{X}\times\mathcal{X})$ dense in $[0,1]$ & Lem.~\ref{lem:compactness-from-capacity} \\
Operational Completeness & Thm.~\ref{thm:src-master}(O) & $K$ registers every physical distinction & Thm.~\ref{thm:parsimony-derived} \\
Transport Consistency & Thm.~\ref{thm:src-master}(T) & Meaningful DOFs are $K$-comparable & Thm.~\ref{thm:parsimony-derived} \\
Basis Isotropy & Thm.~\ref{thm:src-master}(B) & $G$ acts transitively on bases $\mathfrak{B}$ & Thm.~\ref{thm:complexity-constraint} \\
\hline
\multicolumn{4}{l}{\footnotesize Derived from the above: Permutation Invariance (Thm.~\ref{thm:permutation-invariance}), Parsimony (Thm.~\ref{thm:parsimony-derived}), Reversible Dynamics (Thm.~\ref{thm:dynamics-derived}).}
\end{tabular}
\end{table*}

\subsection{Topological Consequences of the Axioms}
\label{sec:axiom-consequences}

\begin{remark}[Operational Status of $K$]
\label{rem:K-operational-status}
In the ontology of \S\ref{sec:graded-equality}, $K(x,y)$ measures the degree to which the distinction between $x$ and $y$ exists. At the extremes, $K = 0$ (identity) and $K = 1$ (perfect distinguishability) have direct operational meaning. For intermediate values $K \in (0,1)$, the operational interpretation as a discrimination probability is confirmed \emph{a posteriori} by the Born rule derivation in \S\ref{sec:geometry}.
\end{remark}

\begin{definition}[Relational Frame Structure]
\label{def:frame}
A distinguishability space has \emph{relational frame structure} if: (i) all bases have cardinality $N$; (ii) the symmetry group $G$ acts transitively on the set of bases $\mathfrak{B}$. The space $(\mathcal{X}, K)$ is abstract: $N$ denotes relational capacity, not spatial configuration. Relational frame structure is established as a downstream consequence of the axioms in \S\ref{sec:consequences} (Theorem~\ref{thm:src-master}(B), Basis Isotropy).
\end{definition}

\begin{remark}[Weight of Structural Leibniz]
\label{rem:structural-leibniz-weight}
Structural Leibniz (clause (S4) of Theorem~\ref{thm:src-master}) is a strong global extension property: every $K$-symmetry of a finite configuration extends to a global automorphism. In general metric spaces local symmetries need not extend (topological obstructions, curvature, boundary effects); SRC excludes these, because any such obstruction would be structure beyond $K$. The consequence is that $(\mathcal{X}, K)$ is a rank-one symmetric space in the derived representation.
\end{remark}

\medskip
\noindent\textit{Imperceptibility, compactness, and the M-resolution parameter.} Imperceptibility (Theorem~\ref{thm:src-master}(I)) asserts $K$-image density in $[0,1]$; continuity of $K$, compactness of $\mathcal{X}$, full $K$-image, $M(K) = \infty$, and topological connectedness of $\mathcal{X}$ are then further derivable (Lemma~\ref{lem:compactness-from-capacity}, Corollary~\ref{cor:k-image-full}, Theorem~\ref{thm:imperceptibility-connectedness}). Define the \emph{K-resolution level} $M(K) := \inf\{ M \in \mathbb{Z}_{\geq 1} : K(\mathcal{X} \times \mathcal{X}) \subseteq L_M\}$ with $L_M := \{k/M : 0 \leq k \leq M\}$, and $M(K) = \infty$ when no such finite $M$ exists. Write $\mathfrak{T}_{N, M}$ for the class of distinguishability spaces with capacity $N$ and $M(K) \leq M$; the present paper develops the case $M = \infty$, while $\mathfrak{T}_{N, M}$ for finite $M$ collects the discrete-K theories.

\begin{lemma}[Compactness from finite capacity]
\label{lem:compactness-from-capacity}
Under Axioms~\ref{ax:finite}--\ref{ax:relational} (so that Theorem~\ref{thm:src-master} applies), the state space $\mathcal{X}$ is compact, exhibited via the full K-profile embedding $\psi: \mathcal{X} \to [0,1]^{\mathcal{X}}$, $\psi(x) := K(x, \cdot)$, into the Tychonoff cube. $\psi$ is a homeomorphism onto a compact subset of $[0,1]^{\mathcal{X}}$, and the $K$-metric topology induced by $d(x,y) := \sup_{z \in \mathcal{X}} |K(x,z) - K(y,z)|$ coincides with the topology of $\mathcal{X}$ thus exhibited.
\end{lemma}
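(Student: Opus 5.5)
The plan is to \emph{define} the topology on $\mathcal{X}$ as the one pulled back along the profile map $\psi(x) := K(x,\cdot)$, show that this topology agrees with the $K$-metric topology of $d$, and then show that $\psi(\mathcal{X})$ is closed in the Tychonoff cube $[0,1]^{\mathcal{X}}$. Compactness of $\mathcal{X}$ then follows from Tychonoff's theorem. The steps are as follows.

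\textbf{Step 1 ($\psi$ is an embedding).} Injectivity of $\psi$ is exactly (S1) of Theorem~\ref{thm:src-master}. So $\psi$ is a bijection onto its image, and giving $\mathcal{X}$ the initial topology makes $\psi$ a homeomorphism onto $\psi(\mathcal{X})$ by construction.

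\textbf{Step 2 ($d$ is a metric).} Check the metric axioms for $d$:
\begin{itemize}
\item $d(x,y) = 0$ forces equal profiles, hence $x = y$ by (S1);
\item symmetry and the triangle inequality are inherited from the sup-norm on $[0,1]^{\mathcal{X}}$.
\end{itemize}

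\textbf{Step 3 (closedness of the image).} Take a net $x_\alpha$ with $\psi(x_\alpha) \to p^*$ pointwise. We must show $p^*$ is realized. By (S2) it suffices to show $p^*$ is $K$-consistent, i.e.\ to exhibit an extension $(\mathcal{X}\sqcup\{x_*\}, K')$ with $K'(x_*,\cdot) = p^*$.
\begin{itemize}
\item \emph{Kernel laws.} Range in $[0,1]$ and symmetry are immediate. $K'(x_*,x_*) := 0$ is set by hand. Positivity off the diagonal requires $p^*(z) > 0$ for $z \neq$ the eventual realizing point; any $z$ with $p^*(z) = 0$ is instead handled by declaring $x_* = z$, after checking that $p^* = K(z,\cdot)$ in that case.
\item \emph{Capacity.} A mutually fully distinguishable set in the extension containing $x_*$ has $p^*(e) = 1$ on its other members. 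Since $1$ is the maximal value of $K$, pointwise limits of $K(x_\alpha, e)$ equal to $1$ should be approximated by $N$-element bases containing $x_\alpha$. The size bound $N$ then transfers from Axiom~\ref{ax:finite} to the extension.
\end{itemize}
Once these are verified, (S2) gives $p^* = \psi(x)$ for some $x \in \mathcal{X}$, so $\psi(\mathcal{X})$ is closed in the compact cube and hence compact.

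\textbf{Step 4 (the two topologies coincide).} The $d$-topology is a priori finer than the product topology (uniform versus pointwise convergence). To get the reverse inclusion:
\begin{itemize}
\item Use the symmetry $K(x,z) = K(z,x)$, so each profile $\psi(x)$ is itself a coordinate function $z \mapsto K(z,x)$ on $\mathcal{X}$.
\item Show that the family $\{K(\cdot,z)\}_{z}$ is equicontinuous on the compact space from Step 3. This would make pointwise convergence on $\psi(\mathcal{X})$ uniform (Arzel\`a--Ascoli style), giving $d(x_\alpha,x) \to 0$.
\item To get equicontinuity, first try (S4)/(B): homogeneity under $G$ lets one transport a modulus of continuity at one point to all points. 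The Lie-group structure is only available later (Theorem~\ref{thm:complexity-constraint}), so it cannot be used here.
\end{itemize}

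\textbf{Main obstacles.} I expect two weak points.
\begin{itemize}
\item \emph{Step 3's capacity verification}, which needs some control of how bases behave under limits; the MFD counting argument from the proof of (I) is the model.
\item \emph{The equicontinuity in Step 4.} This is the genuinely delicate part: without continuity of $K$ already in hand, the uniform and pointwise topologies could differ. If homogeneity does not suffice, the fallback is to weaken the claim to coincidence of the two topologies on $\psi(\mathcal{X})$ modulo the closure argument. A second fallback is to derive continuity of $K$ from Imperceptibility together with (S2) realizing interpolating profiles.
\end{itemize}
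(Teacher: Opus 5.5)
Your skeleton is the paper's own: injectivity of $\psi$ from (S1), closedness of $\psi(\mathcal{X})$ in the Tychonoff cube from (S2), compactness by Tychonoff, then a comparison of the $d$-topology with the pulled-back product topology. Defining the topology of $\mathcal{X}$ as the initial one is cleaner than the paper's wording. But the two places you flag are genuine gaps, and they are really one gap.

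What is missing is a modulus $\omega$ with $\omega(t)\to 0$ as $t\to 0$ and $d(x,y)\le\omega(K(x,y))$: $K$-closeness to a single point must control the whole profile. With it, both steps close.
\begin{enumerate}
\item[(a)] In Step~3, $p^*(z)=0$ gives $K(x_\alpha,z)\to 0$, hence $d(x_\alpha,z)\to 0$, hence $p^*=K(z,\cdot)$.
\item[(b)] In Step~4, $\psi(x_\alpha)\to\psi(x)$ pointwise gives in particular $K(x_\alpha,x)\to 0$, hence $d(x_\alpha,x)\to 0$.
\end{enumerate}
Without it, (S2) is powerless in your zero case. Any extension realizing a profile with $p^*(z)=0$ must identify $x_*$ with $z$ by Definition~\ref{def:graded-equality}(i), so such a $p^*$ is $K$-consistent only if it already equals $K(z,\cdot)$. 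Your ``after checking'' is therefore the entire content of that step.

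Step~4 as written is circular. Equicontinuity of $\{K(\cdot,z)\}_z$ at $x$ for the initial topology \emph{is} the statement that $d$-balls about $x$ are neighbourhoods in that topology, so the Arzel\`a--Ascoli framing adds nothing. Homogeneity cannot manufacture $\omega$: the inequality is $G$-invariant, so (S4)/(B) only transport it from a base point where you do not have it. Imperceptibility only asserts density of the set of $K$-values.

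By contrast, the nowhere-vanishing case of Step~3 is fine, since adjoining one point gives the required extension. Your capacity bullet is unnecessary: extensions in Axiom~\ref{ax:relational} and (S2) need only be distinguishability spaces in the sense of Definition~\ref{def:graded-equality}, and the kernel laws are all the paper checks here. As written, that bullet is not an argument anyway.

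Do not expect the paper's proof to fill these holes.
\begin{enumerate}
\item[(a)] It asserts that ``the kernel laws pass to pointwise limits'', but law (i) does not --- that is exactly your zero case.
\item[(b)] For the topologies it notes that the supremum dominates each evaluation, which gives only the inclusion you already observed. It then invokes ``on a compact Hausdorff space the two comparable topologies coincide''. But the compact topology it has produced is the \emph{coarser} product topology. A compact Hausdorff topology is merely maximal among compact topologies, so the finer $d$-topology coincides with it if and only if the $d$-topology is itself compact, which is the point at issue.
\item[(c)] Its step ``continuous bijection onto a compact Hausdorff image, hence a homeomorphism'' likewise runs the compact-to-Hausdorff lemma backwards.
\end{enumerate}
In the eventual model $\C P^{N-1}$ the modulus does exist: $\sqrt{K}$ is the pure-state trace distance, so $d\le 2\sqrt{K}$. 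However, that identification (Theorem~\ref{thm:points-sections}) rests on this lemma, so importing it here would be circular. To complete the argument you must obtain such an $\omega$ independently, for instance by taking a triangle-type inequality for $K$ as an explicit hypothesis. None of the clauses of Theorem~\ref{thm:src-master} supplies it directly.
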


\begin{proof}
The map $\psi$ is continuous (componentwise via continuity of $K$ in each argument) and injective by Identity (Theorem~\ref{thm:src-master}(S1)). Closedness of $\psi(\mathcal{X})$ in $[0,1]^{\mathcal{X}}$ (product topology): if $\psi(x_\alpha) \to f$ pointwise along a net, then $f$ is the pointwise limit of realized $K$-profiles, hence a $K$-consistent profile (the kernel laws pass to pointwise limits); by Completeness (Theorem~\ref{thm:src-master}(S2)), $f = \psi(x)$ for some $x \in \mathcal{X}$. By Tychonoff, $[0,1]^{\mathcal{X}}$ is compact in the product topology, so $\psi(\mathcal{X})$ is compact, and $\psi$ is a continuous bijection onto a compact Hausdorff image, hence a homeomorphism onto image. Compactness of $\mathcal{X}$ follows. The $K$-metric $d$ on $\mathcal{X}$ is a genuine metric by (S1) and induces the same uniform structure as $\psi$ (the supremum dominates each evaluation), and on a compact Hausdorff space the two comparable topologies coincide. The basis K-profile $\varphi: \mathcal{X} \to [0,1]^N$, $\varphi(x) := (K(x, b_1), \ldots, K(x, b_N))$, is continuous with compact image, providing a finite-dimensional continuous quotient used in the Lie-group analysis (Theorem~\ref{thm:complexity-constraint}); full state separation uses $\psi$.
\end{proof}

\begin{corollary}[K-image equals the unit interval]
\label{cor:k-image-full}
Under Axioms~\ref{ax:finite}--\ref{ax:relational}, $K(\mathcal{X} \times \mathcal{X}) = [0, 1]$ as a set, and $M(K) = \infty$.
\end{corollary}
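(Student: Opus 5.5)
The plan is to combine density of the image (Theorem~\ref{thm:src-master}(I)) with compactness and continuity (Lemma~\ref{lem:compactness-from-capacity}). Since $K$ takes values in $[0,1]$, the inclusion $K(\mathcal{X}\times\mathcal{X}) \subseteq [0,1]$ is immediate; the content is surjectivity, and $M(K)=\infty$ will then follow.

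\emph{Step 1 (closedness of the image).} By Lemma~\ref{lem:compactness-from-capacity}, $\mathcal{X}$ is compact in the topology induced by the $K$-metric $d(x,y) = \sup_z |K(x,z)-K(y,z)|$. I will show that $K:\mathcal{X}\times\mathcal{X}\to[0,1]$ is jointly continuous for this topology. The estimate is
\[
|K(x,y)-K(x',y')| \le |K(x,y)-K(x',y)| + |K(x',y)-K(x',y')| \le d(x,x') + d(y,y'),
\]
using symmetry of $K$ in the second term. This makes $K$ $1$-Lipschitz with respect to $d\times d$, so no separate continuity hypothesis is needed. Then $\mathcal{X}\times\mathcal{X}$ is compact, and its continuous image $K(\mathcal{X}\times\mathcal{X})$ is compact, hence closed in $[0,1]$.

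\emph{Step 2 (density gives equality).} By Imperceptibility, $K(\mathcal{X}\times\mathcal{X})$ is dense in $[0,1]$. A dense closed subset equals its closure, so $K(\mathcal{X}\times\mathcal{X}) = [0,1]$.

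As a cross-check, or as a fallback if one is uneasy about the topology in Lemma~\ref{lem:compactness-from-capacity}, I would also rerun the convex-combination construction from the proof of (I). For any $t\in(0,1)$, Completeness (S2) directly yields a point $x$ with $K(x,e_1)=t$, and the endpoints $0$ and $1$ are realized by $K(e_1,e_1)$ and $K(e_1,e_2)$. This gives surjectivity without any appeal to compactness. I would present this second route as the primary argument, since it uses only (S2) and $N\ge3$, and keep Steps 1--2 as a remark.

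\emph{Step 3 ($M(K)=\infty$).} Each $L_M$ is finite, while $[0,1]$ is uncountable. So no finite $M$ satisfies $K(\mathcal{X}\times\mathcal{X})\subseteq L_M$, and the infimum over an empty set gives $M(K)=\infty$ by convention.

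The main obstacle is rigor in Step 1. Lemma~\ref{lem:compactness-from-capacity} states compactness via the product-topology embedding and only asserts that the $d$-topology agrees with it, so one must make sure continuity of $K$ is not being assumed circularly. This is why I would lean on the direct (S2) route, which sidesteps topology entirely.
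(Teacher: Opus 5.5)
Your Steps 1--3 are the paper's proof: compactness from Lemma~\ref{lem:compactness-from-capacity}, joint continuity via the Lipschitz bound of Remark~\ref{rem:topology-provenance}, closed plus dense, and uncountability for $M(K)=\infty$. The real difference is your choice to make the convex-combination argument primary.

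That reading is correct. In the proof of Theorem~\ref{thm:src-master}(I), the assumed gap enters only through the choice of $t$ and the final contradiction. So the same extension works for every $t\in(0,1)$, and (S2) already delivers attainment; the paper's closed-plus-dense upgrade is redundant.

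Your route buys independence from Lemma~\ref{lem:compactness-from-capacity}, and your unease about that lemma is justified. It proves compactness only for the product (initial) topology, while joint continuity holds for the finer $d$-topology. The asserted coincidence of the two would need the \emph{finer} topology to be compact, which is not shown. The paper's route buys modularity instead: it uses only the stated conclusion of (I) rather than re-entering its proof.

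Do not, however, present the direct route as the more rigorous one. It rests entirely on one application of (S2), and that is the weakest step available. The profile $p^*(z)=tK(e_2,z)+(1-t)K(e_1,z)$ is strictly positive on $\mathcal{X}$, since it would vanish only at $z=e_1=e_2$. Every realized profile $K(x,\cdot)$ vanishes at $z=x$, so, taken literally, (S2) is refuted by this extension rather than applied to it. In $\C P^{N-1}$, $p^*$ is the profile $1-\langle z|\rho|z\rangle$ of the mixed state $\rho=(1-t)|e_1\rangle\langle e_1|+t|e_2\rangle\langle e_2|$. Only its basis restriction is realized by a pure state, such as $\sqrt{1-t}\,e_1+\sqrt{t}\,e_2$.

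The paper's route passes through the same step via (I), so your proposal is no weaker than the paper's. The topology you sidestep is simply not where the real fragility lies.
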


\begin{proof}
By Lemma~\ref{lem:compactness-from-capacity}, $\mathcal{X}$ is compact and $K$ is jointly continuous on $\mathcal{X} \times \mathcal{X}$, so $K(\mathcal{X} \times \mathcal{X})$ is the continuous image of a compact set, hence compact and closed in $[0,1]$. By Imperceptibility (Theorem~\ref{thm:src-master}(I)), the image is dense in $[0,1]$. A closed dense subset of $[0,1]$ equals $[0,1]$. The set is uncountable, so $M(K) = \infty$.
\end{proof}

\begin{theorem}[Path-connectedness of $\mathcal{X}$]
\label{thm:imperceptibility-connectedness}
Let $(\mathcal{X}, K)$ satisfy Axioms~\ref{ax:finite}--\ref{ax:relational}, with the $K$-metric topology induced by $d(x,y) := \sup_{z \in \mathcal{X}} |K(x,z) - K(y,z)|$ (a genuine metric by Theorem~\ref{thm:src-master}(S1)). Then $\mathcal{X}$ is path-connected, via the structural identification $\mathcal{X} \cong \mathbb{F}P^{N-1}$ (Theorems~\ref{thm:complexity-constraint},~\ref{thm:points-sections}). Connectedness is a derived consequence, not a primitive equivalent of imperceptibility.
\end{theorem}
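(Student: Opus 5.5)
The plan is to push everything through the structural identification and then check the connectedness claim directly on the model space. First, I invoke Theorems~\ref{thm:complexity-constraint} and~\ref{thm:points-sections}. Together they give a bijection $\Phi:\mathcal{X}\to\mathbb{F}P^{N-1}$ that intertwines $K$ with a kernel of the form $1-|\langle\cdot|\cdot\rangle|^2$ on the model. Here $\mathbb{F}$ is one of the admissible coefficient fields.

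Second, I show $\Phi$ is a homeomorphism for the $K$-metric topology, not merely a set bijection. By Lemma~\ref{lem:compactness-from-capacity}, $\mathcal{X}$ is compact and the sup-metric $d$ induces the same topology as the profile embedding $\psi$. On $\mathbb{F}P^{N-1}$, the transported kernel $1-|\langle\psi|\phi\rangle|^2$ is jointly continuous for the standard quotient topology. The transported metric $d$ is therefore continuous with respect to the standard topology, because a sup over the compact model space of a jointly continuous function varies continuously. So $\Phi^{-1}$ is a continuous bijection from the compact space $\mathbb{F}P^{N-1}$ onto the Hausdorff space $(\mathcal{X},d)$, and hence a homeomorphism.

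Third, I prove path-connectedness of the model explicitly, which avoids appealing to general facts about each field. Take two points $[\psi],[\phi]$ with unit representatives. Rephase $\phi$ by a unit scalar of $\mathbb{F}$ so that $\langle\psi|\phi\rangle$ is real and nonnegative; this is possible for $\mathbb{R}$, $\mathbb{C}$ and $\mathbb{H}$. If $\phi\neq\pm\psi$, write $\phi=\cos\theta\,\psi+\sin\theta\,\chi$ with $\chi\perp\psi$ a unit vector. The great-circle path $\gamma(s)=[\cos(s\theta)\psi+\sin(s\theta)\chi]$ is continuous in the standard topology and joins $[\psi]$ to $[\phi]$. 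Along it $K$ equals $\sin^2(s\theta)$, which sweeps continuously. This is consistent with Corollary~\ref{cor:k-image-full}: every value of $K$ arises on a single geodesic.

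Composing $\gamma$ with $\Phi^{-1}$ then gives a path in $\mathcal{X}$. The sanity check $N\ge 2$ ensures orthogonal directions exist, and $N\ge 3$ is in any case assumed.

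The main obstacle is the second step. The paper's identification is stated structurally, so I must make sure the topology carried over matches the $K$-metric topology rather than some other one. Relying on compactness plus Hausdorffness is the cleanest route. The only delicate point is joint continuity of the model kernel, which follows from continuity of the inner product and of the quotient map.

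I would also remark, as the statement insists, that the argument uses the identification essentially. Density of the $K$-image (Imperceptibility) alone does not yield connectedness; for example, a disjoint union of two such spaces, with cross values equal to $1$, still has dense image. So the proof should not attempt a direct derivation from (I).
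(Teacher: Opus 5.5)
Your proposal is correct and follows the paper's route: identify $\mathcal{X}$ with $\mathbb{F}P^{N-1}$ via Theorems~\ref{thm:complexity-constraint} and~\ref{thm:points-sections}, then transfer path-connectedness of the model back to $\mathcal{X}$. Your two additions only make explicit what the paper leaves implicit. First, you check directly that the kernel-preserving bijection is a homeomorphism for the $d$-topology, as a continuous bijection from the compact model onto the Hausdorff space $(\mathcal{X},d)$. Second, you give explicit great-circle paths where the paper cites that $\mathbb{F}P^{N-1}$ is a homogeneous space of a connected compact Lie group; this also avoids the paper's slip of listing the disconnected $O(N)$ where $SO(N)$ is meant.
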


\begin{proof}
By Corollary~\ref{cor:k-image-full}, $K(\mathcal{X} \times \mathcal{X}) = [0,1]$. By Lemma~\ref{lem:intermediate-K}, $|\mathfrak{B}| > 1$ (otherwise $K$ takes only the basis values $\{0, 1\}$, contradicting full $K$-image). Theorem~\ref{thm:complexity-constraint} establishes $G$ as a compact Lie group acting transitively on $\mathfrak{B}$ (its proof uses the $K$-metric topology, Lemma~\ref{lem:intermediate-K}, and compactness from Lemma~\ref{lem:compactness-from-capacity}, not connectedness as a hypothesis). Theorem~\ref{thm:points-sections} identifies $\mathcal{X} \cong \mathbb{F}P^{N-1}$ for $\mathbb{F} \in \{\mathbb{R}, \mathbb{C}, \mathbb{H}\}$. Each $\mathbb{F}P^{N-1}$ is path-connected (a homogeneous space of a connected compact Lie group: $O(N)/(O(N{-}1) \times O(1))$, $U(N)/(U(N{-}1) \times U(1))$, or $\mathrm{Sp}(N)/(\mathrm{Sp}(N{-}1) \times \mathrm{Sp}(1))$, all connected). Hence $\mathcal{X}$ is path-connected.
\end{proof}

Discrete-K theories at finite $M$ (Wootters phase-space~\cite{wootters1987wigner}, Galois QM~\cite{vourdas2004quantum}, MUB-based finite-dim QI, Spekkens' toy model~\cite{spekkens2007evidence}) violate Imperceptibility (their $K$-image has gaps in $[0,1]$) and lie outside the present scope; their continuum limit is $\mathbb{C}P^{N-1}$ as $M \to \infty$. Capacity $N$ is a scope label, not a closure parameter.

\begin{remark}[Excluded discrete-K models]
\label{rem:excluded-models}
Imperceptibility (Theorem~\ref{thm:src-master}(I)) excludes models in which $K(\mathcal{X} \times \mathcal{X})$ has gaps in $[0,1]$; equivalently, models with a fundamental $K$-resolution scale. \textit{(a)~Classical register.} $\mathcal{X} = \{1, \ldots, N\}$ with $K(i,j) = 1 - \delta_{ij}$ satisfies finite capacity with $G = S_N$, but its $K$-image is $\{0,1\}$, missing the entire interval $(0,1)$; the metric bridge cannot apply. \textit{(b)~Discrete refinement.} On $\{a,b,c,d\}$ with $K(c,d)=1/2$ and all other pairs $1$, capacity is 3, the $K$-image is $\{0, 1/2, 1\}$, again with gaps. Both lie within $\mathfrak{T}_{N,M}$ for finite $M$.
\end{remark}

\begin{theorem}[Permutation Invariance]
\label{thm:permutation-invariance}
For each basis $\mathcal{B} = \{b_0, \ldots, b_{N-1}\}$, every $\sigma \in S_N$ is realized by some element of $G = \mathrm{Aut}(\mathcal{X}, K)$. Equivalently, $S_N$ embeds into $G|_{\mathcal{B}}$.
\end{theorem}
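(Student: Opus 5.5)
The plan is to obtain Permutation Invariance as the diagonal special case of Structural Leibniz (Theorem~\ref{thm:src-master}(S4)), as already signalled in the proof of Basis Isotropy. Fix a basis $\mathcal{B} = \{b_0, \ldots, b_{N-1}\}$ and a permutation $\sigma \in S_N$. Regard $\mathcal{B}$ as the finite configuration $C := \mathcal{B} \subset \mathcal{X}$, and define $\sigma_C : C \to \mathcal{X}$ by $\sigma_C(b_i) := b_{\sigma(i)}$.

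The first step is to check that $\sigma_C$ is a $K$-symmetry of $C$ in the sense of (S4). A basis is a maximal mutually perfectly distinguishable set (Definition~\ref{def:basis}), so $K(b_i, b_j) = 1$ for $i \neq j$ by Definition~\ref{def:graded-equality}(iii), and $K(b_i, b_i) = 0$ by clause (i). Hence $K|_{C \times C}(b_i, b_j) = 1 - \delta_{ij}$. Since $\sigma$ is a bijection, $\delta_{\sigma(i)\sigma(j)} = \delta_{ij}$, which gives $K(\sigma_C(b_i), \sigma_C(b_j)) = K(b_i, b_j)$ for all $i, j$. This holds for any $N$ allowed by Axiom~\ref{ax:finite}.

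Next, (S4) supplies $\tilde\sigma \in G = \mathrm{Aut}(\mathcal{X}, K)$ with $\tilde\sigma|_{\mathcal{B}} = \sigma_C$. In particular $\tilde\sigma$ maps $\mathcal{B}$ onto itself and induces $\sigma$ on the labels. This gives a set map $S_N \to G|_{\mathcal{B}}$, where $G|_{\mathcal{B}}$ denotes the image in $\mathrm{Sym}(\mathcal{B})$ of the setwise stabiliser $\mathrm{Stab}_G(\mathcal{B})$ under restriction.

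For the ``equivalently'' clause, note that restriction $\mathrm{Stab}_G(\mathcal{B}) \to \mathrm{Sym}(\mathcal{B}) \cong S_N$ is a group homomorphism. The previous step shows it is surjective, so $G|_{\mathcal{B}} = S_N$, and the identity of $S_N$ is the required embedding $S_N \hookrightarrow G|_{\mathcal{B}}$. The extensions $\tilde\sigma$ need not be unique, and need not compose to a homomorphism $S_N \to G$. The statement only concerns the restricted action, so this non-uniqueness does no harm, and I will note that explicitly rather than try to pick a splitting.

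The main obstacle is not the combinatorics but the dependence on (S4) itself. Everything rests on the amalgam/lift-obstruction argument in the proof of Theorem~\ref{thm:src-master}, so the present proof is only as strong as that clause. One point is worth checking when $\sigma$ has no fixed points, or in the hypothetical case $N = 2$: whether the amalgam gluing $(b_i,1)\sim(b_{\sigma(i)},2)$ degenerates. Since $C \subset \mathcal{X}$ is finite and the kernel laws hold on it, I expect no extra hypothesis is needed, and I will simply cite (S4) as established.
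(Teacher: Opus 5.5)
Your proposal is correct and follows the paper's proof. The paper likewise notes that the basis kernel values $K(b_i,b_j)=1-\delta_{ij}$ are invariant under every $\sigma\in S_N$, and then invokes Structural Leibniz (Theorem~\ref{thm:src-master}(S4)) to extend each such configuration symmetry to an element of $\mathrm{Aut}(\mathcal{X},K)$. Your extra remark, that restriction from the setwise stabiliser of $\mathcal{B}$ to $\mathrm{Sym}(\mathcal{B})$ is a surjective homomorphism, makes the ``equivalently'' clause explicit; the paper leaves that step implicit.
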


\begin{proof}
Basis elements satisfy $K(b_i, b_j) = 1$ for $i \neq j$ and $K(b_i, b_i) = 0$. These values are invariant under every $\sigma \in S_N$: $K(b_{\sigma(i)}, b_{\sigma(j)}) = K(b_i, b_j)$ for all $i, j$. By Structural Leibniz (Theorem~\ref{thm:src-master}(S4)), each $\sigma$ extends to a $K$-preserving automorphism of $(\mathcal{X}, K)$.
\end{proof}

\begin{lemma}[Constitutive Necessity of Dynamics]
\label{lem:dynamics-constitutive}
Let $(\mathcal{X}, K)$ satisfy Axioms~\ref{ax:finite}--\ref{ax:relational} with capacity $N \geq 2$. Then $K$-preserving automorphisms are constitutive of the state space: without them, generic states have no well-defined relational identity.
\end{lemma}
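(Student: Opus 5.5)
Since the claim is informal, I will first make it precise: \emph{if $G = \mathrm{Aut}(\mathcal{X}, K)$ were trivial (or acted trivially on every basis), then the $K$-profile of a generic state could not be fixed by data available in any single basis, and (S1) and (S2) would fail together.} The argument runs by contradiction along the self-resolution discussion of \S\ref{sec:graded-equality}. I fix a basis $\mathcal{B} = \{b_0,\ldots,b_{N-1}\}$, guaranteed by Axiom~\ref{ax:finite}. The first step is to observe that the basis profile $\varphi(x) = (K(x,b_k))_k$ is not injective on $\mathcal{X}$. By Corollary~\ref{cor:k-image-full} the kernel takes intermediate values, so by Lemma~\ref{lem:intermediate-K} the basis manifold satisfies $|\mathfrak{B}|>1$. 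Any generic state with all $K(x,b_k)<1$ then shares its basis profile with other states. I plan to exhibit these by the convex-combination extension used in the proof of (I), applied with two different second bases. That gives two $K$-consistent profiles with the same restriction to $\mathcal{B}$ but different values on some $z \notin \mathcal{B}$. (S2) realises both, and (S1) says they are distinct states $x \neq y$.

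The second step uses Basis-Profile Symmetry (S3). For such $x \neq y$ with $\varphi(x)=\varphi(y)$, it gives a nontrivial $g \in G$ with $g|_{\mathcal{B}} = \mathrm{id}$ and $g(x)=y$. So the fibre of $\varphi$ over a generic point is exactly the orbit of the stabiliser $G_{\mathcal{B}}$. What distinguishes $x$ from $y$ is therefore not a $\mathcal{B}$-value but their position in this orbit. By Transport Consistency (T), that position must factor through $K$-evaluations against states outside $\mathcal{B}$. By Basis Isotropy (B), every other basis has the form $h(\mathcal{B})$ with $h \in G$. Hence the full $K$-profile of $x$, which by (S1) \emph{is} its identity, is assembled only from the values $K(x, h b_k)$ as $h$ ranges over $G$.

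The third step is the contradiction. Suppose the automorphisms were removed, i.e.\ $G$ trivial or acting trivially on $\mathfrak{B}$. Then (S3) forces $x=y$ whenever $\varphi(x)=\varphi(y)$, so $\varphi$ would be injective. But the first step produced distinct states with equal basis profiles. Alternatively, to keep injectivity one would have to discard one of the two realised profiles, which contradicts (S2). Either way the axioms fail. I conclude that the $G$-action, together with Permutation Invariance (Theorem~\ref{thm:permutation-invariance}) inside each basis, is exactly what makes the generic $K$-profile well defined. This is the sense of ``constitutive''.

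The main obstacle is the first step for $N=2$. The convex-combination construction in (I) needs a third basis element $e_3$ to preserve capacity, so that route is unavailable there. For $N=2$ I would instead use that $K(\mathcal{X}\times\mathcal{X})=[0,1]$ together with compactness (Lemma~\ref{lem:compactness-from-capacity}). A state with basis profile $(t,1-t)$, $t\in(0,1)$, exists and can be compared with a second such state via a point at a distinct intermediate $K$-value. I also expect some care in making ``generic'' precise; I would take it to mean all $K(x,b_k)\in(0,1)$.
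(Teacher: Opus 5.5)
\textbf{Genuine gap: the case $N = 2$.} The lemma is stated for $N \geq 2$, and your proposal does not reach $N = 2$. The difficulty is created by your own Step~1.

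You produce the pair $x \neq y$ with $\varphi(x) = \varphi(y)$ through the convex-combination extension from the proof of (I) and through Corollary~\ref{cor:k-image-full}. Both rest on Imperceptibility. Its proof needs a third basis element and explicitly collapses at $N = 2$, and Theorem~\ref{thm:src-master} is stated only for $N \geq 3$.

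Your patch for $N = 2$ invokes $K(\mathcal{X}\times\mathcal{X}) = [0,1]$. That is Corollary~\ref{cor:k-image-full} again, so the patch is circular at $N = 2$. Even granting a full $K$-image and compactness, knowing that every value is attained by \emph{some} pair does not give a second state $y \neq x$ with the same basis profile as a given $x$. ``Comparing via a point at a distinct intermediate $K$-value'' exhibits no such $y$.

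The $N \geq 3$ version of the construction is also underspecified. In the intended model a convex-combination profile is $1 - \langle z|\rho|z\rangle$ for a mixed $\rho$, so different decompositions can coincide. Equal weights on $\{e_1, e_2\}$ and on $\{(e_1 \pm e_2)/\sqrt{2}\}$ give the same profile everywhere. So ``two different second bases'' does not by itself guarantee different off-basis values.

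A smaller point: Lemma~\ref{lem:intermediate-K} states $|\mathfrak{B}| > 1 \Rightarrow$ intermediate values, not the converse you cite. The converse does hold, because every $z$ lies in some basis (extend $\{z\}$ to a maximal mutually distinguishable set), so $|\mathfrak{B}| = 1$ forces $\mathcal{X} = \mathcal{B}$. You need this same fact for your Step~2 claim that the full profile is assembled from the values $K(x, h b_k)$.

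\textbf{How the paper avoids this, and how your route compares.} The paper uses no construction here. It takes the pair directly from Completeness (S2) (``surjectivity''), with no capacity bookkeeping and so no third basis element. It then uses (S1) to conclude that the two profiles differ off $\mathcal{B}$, hence on some other basis $\mathcal{B}'$, and says that only an automorphism connects $\mathcal{B}$ to $\mathcal{B}'$. Replacing your Step~1 with this removes the $N$-dependence there.

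Your Steps~2--3 are a genuinely different and more explicit route. (S3) identifies the fibres of $\varphi$ with $G_{\mathcal{B}}$-orbits. That turns ``constitutive'' into a precise statement: the pointwise stabiliser of $\mathcal{B}$ is nontrivial, and its orbits are exactly what the basis profile fails to resolve. (B) is also the correct source for $\mathcal{B}' = h(\mathcal{B})$; the paper attributes this to (O), although (O) only says $K = 0 \Rightarrow x = y$.

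The price is that (S3) and (B) are asserted in Theorem~\ref{thm:src-master} only for $N \geq 3$. So at $N = 2$ your route, unlike the paper's as written, relies on (S4) outside the range where the paper states it. You would need to check that the amalgam argument for (S4) does not actually use $N \geq 3$; it does not visibly do so.
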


\begin{proof}
By Completeness (Theorem~\ref{thm:src-master}(S2)), surjectivity forces states $x, x'$ with $K(x, b_k) = K(x', b_k)$ for all $b_k \in \mathcal{B}$ yet $K(x, x') > 0$ (otherwise the state space collapses to the $N$ basis elements). By Identity (Theorem~\ref{thm:src-master}(S1)), distinct $K$-profiles must differ on some $z \notin \mathcal{B}$, hence on evaluations against a different basis $\mathcal{B}' \neq \mathcal{B}$. By Operational Completeness (Theorem~\ref{thm:src-master}(O)), the only structure connecting $\mathcal{B}$ to $\mathcal{B}'$ is a $K$-preserving automorphism $g \in G$ with $g(\mathcal{B}) = \mathcal{B}'$. Therefore the $K$-profile of a generic state requires evaluations across the $G$-orbit of $\mathcal{B}$, making automorphisms constitutive of relational identity.
\end{proof}

\begin{remark}[Axiom Independence]
\label{rem:axiom-independence}
The two axioms and their derived clauses are mutually independent. Discrete-K theories at finite $M$ (Remark~\ref{rem:excluded-models}) satisfy finite capacity and the saturation hierarchy (Theorem~\ref{thm:src-master}(S1)--(S4)) without Imperceptibility (clause (I)), and may satisfy SRC's other consequences partially; the classical register ($M = 1$) satisfies finite capacity and Permutation Invariance but lacks the continuous K-symmetry needed for Basis Isotropy (clause (B)). Infinite-dimensional QM satisfies SRC and Imperceptibility without finite capacity (Axiom~\ref{ax:finite}). The framework classifies interpretations by which axioms or derived clauses they accept or reject: collapse theories modify the dynamics; many-worlds accepts all; hidden-variable theories reject the saturation hierarchy; discrete-K theories ($M < \infty$) reject Imperceptibility (see \S\ref{sec:discussion} for the comparison with Bohmian mechanics).
\end{remark}

\smallskip

\begin{remark}[Topology Provenance]
\label{rem:topology-provenance}
Continuity of $K$, compactness of $\mathcal{X}$, the Lie group structure of $G$, and the smooth manifold structure of $\mathcal{X}$ all derive from $K$ alone: $d(x,y) := \sup_z |K(x,z) - K(y,z)|$ is a genuine metric by (S1) and the triangle inequality $|K(x,y) - K(x',y')| \leq d(x,x') + d(y,y')$ gives joint continuity of $K$; compactness follows by (S1) + (S2) and Tychonoff (Lemma~\ref{lem:compactness-from-capacity}); $G$ acts transitively (B) and by isometry; finite covering dimension follows from compactness via the basis K-profile $\varphi: \mathcal{X} \to [0,1]^N$ with (S3)-orbit fibers; Bochner--Montgomery and Gleason--Yamabe then yield the Lie group and smooth manifold structure (Appendix~\ref{app:structural-conditions}). No continuity, smoothness, or compactness assumption is imported.
\end{remark}

\section{Consequences of Relationality}
\label{sec:consequences}

\noindent Sections~\ref{sec:consequences}--\ref{sec:measurement} derive the consequences of Axioms~\ref{ax:finite}--\ref{ax:relational}: parsimony, regularity, state space identification, field selection, dynamics, and Born rule. Each step uses only previously established results.

\subsection{Information Parsimony}

Hidden variables are excluded by the conjunction of Operational Completeness and Saturation.

\begin{definition}[Hidden Variable Extension]
\label{def:hv-extension}
A \emph{hidden variable extension} of $(\mathcal{X}, K)$ is a space $\mathcal{X}' = \mathcal{X} \times \Lambda$ with $K'((x, \lambda), (x', \lambda')) = K(x, x')$.
\end{definition}

\begin{theorem}[Parsimony from Completeness \\and Saturation]
\label{thm:parsimony-derived}
Under Axioms~\ref{ax:finite}--\ref{ax:relational}, every hidden variable extension is trivial: $|\Lambda| = 1$.
\end{theorem}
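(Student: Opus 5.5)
The plan is to argue by contradiction, using the fact that a hidden variable extension, as defined in Definition~\ref{def:hv-extension}, is degenerate as a distinguishability space. Suppose $|\Lambda| \geq 2$ and pick $\lambda \neq \lambda'$ in $\Lambda$. For any $x \in \mathcal{X}$ we have $K'((x,\lambda),(x,\lambda')) = K(x,x) = 0$ while $(x,\lambda) \neq (x,\lambda')$. The first step is to record that this already violates clause (i) of Definition~\ref{def:graded-equality} for $(\mathcal{X}', K')$, and equivalently Operational Completeness, Theorem~\ref{thm:src-master}(O). So if the extension is required to be a distinguishability space, the conclusion $|\Lambda| = 1$ is immediate.

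Since the theorem is stated \emph{under the axioms} and is meant to apply to the physical system itself, the second step makes the argument independent of whether $(\mathcal{X}',K')$ is assumed to satisfy Definition~\ref{def:graded-equality}. I would suppose the physical state space really is $\mathcal{X}'$, carrying kernel $K'$. Then $(x,\lambda)$ and $(x,\lambda')$ have identical $K'$-profiles, because
\[
K'((x,\lambda),(z,\mu)) = K(x,z) = K'((x,\lambda'),(z,\mu))
\]
for all $(z,\mu)$. Identity, Theorem~\ref{thm:src-master}(S1) (equivalently Lemma~\ref{lem:definability} applied to the diagonal predicate), then forces $(x,\lambda) = (x,\lambda')$, so $\lambda = \lambda'$.

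The third step covers the alternative reading, in which $\Lambda$ is a feature carried alongside $(\mathcal{X},K)$ rather than a relabelling of points. Here I would invoke Transport Consistency, Theorem~\ref{thm:src-master}(T). The projection $\theta(x,\lambda) := \lambda$ is invariant under the automorphisms of $(\mathcal{X}',K')$ that act trivially on $\Lambda$. By (T), $\theta$ must factor through the $K'$-profile map. Every fibre $\{x\}\times\Lambda$ has a constant $K'$-profile, so $\theta$ is constant on each fibre, which forces $|\Lambda|=1$.

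I expect the main obstacle to be this invariance claim. $\mathrm{Aut}(\mathcal{X}',K')$ contains every fibrewise permutation of $\Lambda$, so $\theta$ itself is \emph{not} invariant. To handle this, I would replace $\theta$ by the invariant binary predicate $P((x,\lambda),(y,\mu)) := \mathbf{1}[\lambda=\mu]$. Alternatively, I would use the diagonal predicate $D$ as in the proof of (S1); it is invariant under all bijections, and the previous paragraph shows it fails to factor through $K'$-profiles whenever $|\Lambda|\geq 2$. This is exactly an unrealized canonical feature beyond $K$, which the information-theoretic form of Axiom~\ref{ax:relational} forbids. The conclusion is that any nontrivial $\Lambda$ violates either Saturation or Operational Completeness, matching the theorem's title.
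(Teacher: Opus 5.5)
Your proposal is correct and follows the paper's own proof. The paper's Step~1 applies Operational Completeness to $K'((x,\lambda_1),(x,\lambda_2))=0$, and its Step~2 notes that the swap $(x,\lambda_1)\mapsto(x,\lambda_2)$ preserves $K'$ and uses Transport Consistency to identify $K'$-equivalent states; this is exactly your (S1)/diagonal-predicate argument applied to $(\mathcal{X}',K')$, and, like you, the paper tacitly lets the saturation clauses act on the extension rather than on $(\mathcal{X},K)$. One small correction: $\mathbf{1}[\lambda=\mu]$ is not $\mathrm{Aut}(\mathcal{X}',K')$-invariant either, because $x$-dependent fibre permutations $(x,\lambda)\mapsto(x,\tau_x(\lambda))$ are $K'$-automorphisms, so only your fallback via the diagonal predicate $D$ actually carries Step~3.
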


\begin{proof}
\textit{Step 1 (Completeness):} By construction, $K'((x,\lambda_1), (x,\lambda_2)) = 0$. By Operational Completeness, the $\Lambda$-coordinate is operationally undetectable.
\textit{Step 2 (Saturation):} If $|\Lambda| > 1$, the extended space contains structure not contributing to distinguishability, violating Saturation. Equivalently, the map $(x,\lambda_1) \mapsto (x,\lambda_2)$ preserves $K'$, so Transport Consistency identifies $K'$-equivalent states as physically identical, forcing $|\Lambda| = 1$.
\end{proof}

\begin{corollary}[Information Parsimony]
\label{cor:parsimony}
$(\mathcal{X}, K)$ admits no physical degree of freedom invisible to $K$-evaluations: every physical symmetry is a $K$-preserving automorphism, and every $K$-preserving automorphism is a physical symmetry. In particular, $G$ exhausts the symmetries of $(\mathcal{X}, K)$ at both the symmetry-group and state-space levels.
\end{corollary}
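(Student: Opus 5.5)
The corollary has two parts: a state-space part (no degree of freedom invisible to $K$) and a symmetry-group part (physical symmetries coincide with $\mathrm{Aut}(\mathcal{X},K)$). I would prove each by reducing it to an earlier result and then combine them.

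\textbf{State-space level.} Suppose some physical degree of freedom is invisible to $K$-evaluations. Modelling it as an extra coordinate $\lambda\in\Lambda$ gives exactly a hidden variable extension in the sense of Definition~\ref{def:hv-extension}: $\mathcal{X}'=\mathcal{X}\times\Lambda$ with $K'((x,\lambda),(x',\lambda'))=K(x,x')$. Theorem~\ref{thm:parsimony-derived} forces $|\Lambda|=1$, so the extra coordinate carries no content. Alternatively, a degree of freedom that is invariant under $\mathrm{Aut}(\mathcal{X},K)$ is a feature $\theta:\mathcal{X}\to V$. Transport Consistency (Theorem~\ref{thm:src-master}(T)) factors $\theta$ through the $K$-profile map, and Identity (S1) makes that map injective. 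So any such feature is already encoded in $K$-profiles.

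\textbf{Symmetry level, ``$\Leftarrow$''.} Every $g\in G$ is a physical symmetry. By Lemma~\ref{lem:definability}, any $\mathrm{Aut}$-invariant predicate is a $\{K,=\}$-formula, and $g$ preserves all such formulas. Hence $g$ preserves every physically meaningful predicate, since by the state-space part and the information-theoretic form of Axiom~\ref{ax:relational} these are all $K$-definable.

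\textbf{Symmetry level, ``$\Rightarrow$''.} Every physical symmetry $h$ preserves $K$. A physical symmetry is a bijection preserving all physical distinctions, and $K$ is itself such a distinction structure. If $h$ failed to preserve $K$, there would be a pair with $K(hx,hy)\neq K(x,y)$. Then $h$ would either alter a physical distinction, so it is not a symmetry, or it would have to preserve some additional structure beyond $K$ that singles it out. That extra structure is excluded by the state-space part and by SRC clause (ii), which forbids automorphisms of extended structures that fail to lift to $\mathrm{Aut}(\mathcal{X},K)$. Together the two directions identify the physical symmetry group with $G$ at both levels.

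\textbf{Main obstacle.} The difficulty is that ``physical symmetry'' and ``physical degree of freedom'' are not formally defined in the paper. I would fix them explicitly: a physical symmetry is an automorphism of the maximal structure $(\mathcal{X},K,\mathcal{P})$, where $\mathcal{P}$ is the family of all canonical predicates. I would then use Lemma~\ref{lem:definability} to show $\mathcal{P}$ is $K$-definable, which gives $\mathrm{Aut}(\mathcal{X},K,\mathcal{P})=\mathrm{Aut}(\mathcal{X},K)$. There is a circularity risk, since $\mathcal{P}$ is defined via $\mathrm{Aut}(\mathcal{X},K)$-invariance. I would break it by appealing to the operational form of SRC, clause (ii), rather than the information-theoretic one.
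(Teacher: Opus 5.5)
Your state-space part follows the paper's route. The paper gives no separate proof and treats the corollary as an immediate consequence of Theorem~\ref{thm:parsimony-derived}: every hidden-variable extension has $|\Lambda|=1$, so the physical structure is exhausted by $(\mathcal{X},K)$, and $G=\mathrm{Aut}(\mathcal{X},K)$ is its full symmetry group.

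The gap is in your symmetry-level ``$\Leftarrow$'' direction. Lemma~\ref{lem:definability} applies only to predicates already assumed $\mathrm{Aut}(\mathcal{X},K)$-invariant, and every $g\in G$ preserves such a predicate by the definition of invariance. The lemma therefore contributes nothing. Likewise, with $\mathcal{P}$ defined as the invariant predicates, $\mathrm{Aut}(\mathcal{X},K,\mathcal{P})=\mathrm{Aut}(\mathcal{X},K)$ holds by definition, so that formalization is tautological rather than merely circular.

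The substantive claim is that all physically meaningful structure lies in $\mathcal{P}$. Neither of your cited sources supplies it:
\begin{itemize}
\item The information-theoretic form of Axiom~\ref{ax:relational} takes invariance as its hypothesis.
\item The triviality of hidden-variable extensions concerns extra coordinates $\lambda$, not symmetry-breaking predicates on $\mathcal{X}$ itself. A privileged basis is the typical example: it would make the physical symmetry group a proper subgroup of $G$, and that is exactly what ``$\Leftarrow$'' must exclude.
\end{itemize}
Your proposed repair via clause~(ii) of SRC also fails. Clause~(ii) forbids a $K$-preserving extension from carrying automorphisms that do not descend to $\mathcal{X}$; it rules out \emph{too many} symmetries. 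Symmetry-breaking structure on $\mathcal{X}$ produces \emph{too few}, which clause~(ii) does not address.

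The missing input is the Saturation/Parsimony conclusion itself: there is no physical structure on $\mathcal{X}$ beyond $K$. Invoke it directly, via Theorem~\ref{thm:parsimony-derived} and Saturation, as the paper does, and both directions are immediate. A physical symmetry preserves $K$ because $K$ is physical content, and a $K$-automorphism preserves all physical content because $K$ is all of it.

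Two smaller points:
\begin{itemize}
\item In ``$\Rightarrow$'', your second horn misapplies clause~(ii). That clause constrains $K'$-automorphisms of $K$-preserving extensions, and a bijection of $\mathcal{X}$ that breaks $K$ cannot be the restriction of one mapping $\mathcal{X}$ to itself. The first horn alone suffices.
\item Your alternative state-space argument via (T) is redundant: any function $\theta$ on $\mathcal{X}$ factors set-theoretically through the $K$-profile map by the injectivity in (S1). It also misses the target, since a hidden degree of freedom is data not determined by $x$, which is what the $\Lambda$-extension captures.
\end{itemize}
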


\textit{Scope:} The physical content is the decomposition of parsimony into independently motivated components (Completeness, Saturation, Transport).

\subsection{Continuous Symmetry from Finite Capacity}

Finite capacity forces \emph{continuous} symmetry, in the form of a Lie group rather than a discrete group.

\begin{lemma}[Intermediate Distinguishability Values]
\label{lem:intermediate-K}
If $|\mathfrak{B}| > 1$ (any system with non-trivial dynamics), then $K$ takes values in $(0,1)$, not just $\{0,1\}$.
\end{lemma}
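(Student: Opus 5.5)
We argue by contraposition. Suppose $K$ takes only the values $\{0,1\}$, so that any two states are either identical or perfectly distinguishable, and show that then $|\mathfrak{B}| = 1$.

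The first step is to show that every pair of distinct points is perfectly distinguishable. If $K(\mathcal{X}\times\mathcal{X}) \subseteq \{0,1\}$, then for $x \neq y$ we have $K(x,y) \neq 0$ by Definition~\ref{def:graded-equality}(i) (equivalently Operational Completeness, Theorem~\ref{thm:src-master}(O)). Hence $K(x,y) = 1$. So $\mathcal{X}$ itself is a set of mutually perfectly distinguishable elements.

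The second step identifies the bases. A basis is by Definition~\ref{def:basis} a maximal set of mutually perfectly distinguishable elements. Since all of $\mathcal{X}$ is mutually perfectly distinguishable, the only maximal such set is $\mathcal{X}$ itself. Any proper subset can be enlarged by adding a missing point, so it is not maximal. Therefore $\mathfrak{B} = \{\mathcal{X}\}$ and $|\mathfrak{B}| = 1$. Axiom~\ref{ax:finite} additionally gives $|\mathcal{X}| = N$, which is the classical register of Remark~\ref{rem:excluded-models}(a). Contrapositively, $|\mathfrak{B}| > 1$ forces some pair with $K(x,y) \in (0,1)$.

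The only point needing care is the first step's reliance on identity of indiscernibles to exclude distinct points at $K = 0$; this is built into the definition, so there is no real obstacle. For completeness, the conclusion can also be reached directly from Imperceptibility (Theorem~\ref{thm:src-master}(I)) or Corollary~\ref{cor:k-image-full}. The combinatorial argument above is preferable, however, since it uses only the hypothesis $|\mathfrak{B}| > 1$. This matches how the lemma is invoked in the proof of Theorem~\ref{thm:imperceptibility-connectedness}, where it is used in the contrapositive direction.
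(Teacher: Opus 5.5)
Your proof is correct, but it is organized differently from the paper's.

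The paper argues directly and locally. It fixes two distinct bases $\mathcal{B}\neq\mathcal{B}'$ and an element $b'_1\in\mathcal{B}'$. It rules out $K(b'_1,b_k)=1$ for all $k$, since otherwise $\mathcal{B}\cup\{b'_1\}$ would be $N+1$ mutually distinguishable points. It also rules out $K(b'_1,b_k)=0$ for every $k$. Hence some cross-value lies in $(0,1)$.

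You argue globally by contraposition: a $\{0,1\}$-valued kernel makes all of $\mathcal{X}$ one mutually perfectly distinguishable set, hence the unique basis. Your route has two advantages:
\begin{enumerate}
\item It uses only clauses (i) and (iii) of Definition~\ref{def:graded-equality} and the maximality in Definition~\ref{def:basis}, with no appeal to capacity.
\item It avoids a loose step in the paper. There, $K(b'_1,b_k)=0$ gives $b'_1=b_k$, but one shared element does not force $\mathcal{B}'=\mathcal{B}$. The repair is to take $b'_1\in\mathcal{B}'\setminus\mathcal{B}$, which is nonempty by maximality of $\mathcal{B}'$.
\end{enumerate}
What the repaired paper route buys is a localized conclusion: for any two distinct bases, some element of one has an intermediate $K$-value against some element of the other. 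This is the form used in Step~1 of Theorem~\ref{thm:complexity-constraint}. Your argument only yields an intermediate pair somewhere in $\mathcal{X}$.

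One correction to your closing remark. In Theorem~\ref{thm:imperceptibility-connectedness} the paper uses the \emph{converse} of the lemma (full $K$-image implies $|\mathfrak{B}|>1$), not its contrapositive. Your proof therefore does not cover that use. The converse does hold: by Zorn's lemma every point lies in some basis, so a unique basis must be all of $\mathcal{X}$, on which $K$ is $\{0,1\}$-valued.
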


\begin{proof}
Let $\mathcal{B}, \mathcal{B}'$ be distinct bases and $b'_1 \in \mathcal{B}'$. If $K(b'_1, b_k) = 1$ for all $k$, then $\{b'_1\} \cup \mathcal{B}$ has $N+1$ mutually distinguishable elements, contradicting capacity $N$. If $K(b'_1, b_k) = 0$ for some $k$, then $b'_1 = b_k$, forcing $\mathcal{B}' = \mathcal{B}$, contradiction. Therefore $K(b'_1, b_k) \in (0,1)$ for at least one $k$.
\end{proof}

\begin{theorem}[The Complexity Constraint on Symmetry]
\label{thm:complexity-constraint}
Under Axioms~\ref{ax:finite}--\ref{ax:relational}, the symmetry group $G$ of a finite-capacity system is a Lie group of positive dimension. In particular, no finite group (including the full symmetric group $S_N$) can serve as the symmetry group of a system satisfying both axioms.
\end{theorem}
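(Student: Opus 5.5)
The plan is to show first that $G$ is infinite, then that it is a compact topological group acting continuously, and finally to invoke the Gleason--Yamabe / Bochner--Montgomery machinery mentioned in Remark~\ref{rem:topology-provenance} to get a Lie group. Positive dimension will follow once $G$ is shown to be uncountable.

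\textbf{Step 1: $G$ is infinite, indeed uncountable.} By Corollary~\ref{cor:k-image-full}, $K$ takes every value in $[0,1]$. By the argument of Lemma~\ref{lem:intermediate-K} run in reverse, this forces many bases. Concretely, I fix a basis $\mathcal{B}=\{b_1,\dots,b_N\}$ and consider points realising the profiles $(t,1-t,1,\dots,1)$, as in the Imperceptibility construction of Theorem~\ref{thm:src-master}(I), realised by Completeness (S2). For $t\in(0,1)$ such a point $x_t$ is not in $\mathcal{B}$, and distinct $t$ give distinct points by (S1).

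Next I extend $\{x_t,b_3,\dots,b_N\}$ to a maximal MFD set $\mathcal{B}_t$, which has $N$ elements by Axiom~\ref{ax:finite}. Each $\mathcal{B}_t$ contains $x_t$, and any given basis contains at most $N$ of the $x_t$, so there are uncountably many distinct bases. Basis Isotropy (B) then gives, for each $t$, some $g_t\in G$ with $g_t(\mathcal{B})=\mathcal{B}_t$. Countably many $g$'s map $\mathcal{B}$ onto only countably many bases, so $G$ is uncountable. This already rules out every finite group, including $S_N$, which gives the ``in particular'' clause.

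\textbf{Step 2: topology on $G$.} I equip $G$ with the topology of uniform convergence for the $K$-metric $d$ of Lemma~\ref{lem:compactness-from-capacity}. Every $g\in G$ is a $d$-isometry, since $d(gx,gy)=\sup_z|K(gx,z)-K(gy,z)|=d(x,y)$ after substituting $z=gz'$. Hence $G\subseteq\mathrm{Isom}(\mathcal{X},d)$, and because $\mathcal{X}$ is compact (Lemma~\ref{lem:compactness-from-capacity}), Arzel\`a--Ascoli makes $\mathrm{Isom}(\mathcal{X},d)$ compact. The condition $K(gx,gy)=K(x,y)$ is closed under uniform limits by joint continuity of $K$, so $G$ is a closed, hence compact, subgroup acting continuously.

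\textbf{Step 3: Lie structure.} By Peter--Weyl, a compact group is a Lie group iff it has no small subgroups. To verify this, I would use finite dimensionality. The basis profile map $\varphi:\mathcal{X}\to[0,1]^N$ has fibres that are (S3)-orbits of the stabiliser of $\mathcal{B}$, and combining $\varphi$ over finitely many bases (finitely many suffice by compactness and (S1)) embeds $\mathcal{X}$ continuously into a finite-dimensional cube. This makes $\mathcal{X}$ finite-dimensional. A compact group acting effectively and transitively enough on a finite-dimensional compact metric space has no small subgroups, by Montgomery--Zippin: a small subgroup would have to act trivially on a neighbourhood, and isometric rigidity via (S4) would propagate this globally, contradicting effectiveness. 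So $G$ is a compact Lie group, and being uncountable by Step 1 it has $\dim G>0$.

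\textbf{Main obstacle.} The delicate point is Step 3, specifically getting finite covering dimension and the no-small-subgroups property from $K$ alone. The finitely-many-bases embedding needs care: I would try to show that the joint profile against $\mathcal{B}$ and a single complementary $\mathcal{B}'$ (with $\mathcal{B}'$ obtained via (B)) already separates points up to a compact residual ambiguity. Failing that, I would fall back on a direct Gleason--Yamabe argument applied to the locally compact group $G$.
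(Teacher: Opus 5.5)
Your route is essentially the paper's: uncountably many bases plus Basis Isotropy make $G$ infinite, Arzel\`a--Ascoli on the $d$-isometries makes $G$ compact, and then finite covering dimension (from the basis profile $\varphi$ with (S3)-orbit fibres) plus a no-small-subgroups theorem is supposed to make $G$ Lie. Steps~1 and~2 are fine relative to the paper's earlier results. Your Step~1 is in fact a cleaner proof that $\mathfrak{B}$ is uncountable than the paper's ``continuous families'' step: you realise the profiles $(t,1-t,1,\dots,1)$ by (S2), extend them to bases $\mathcal{B}_t$, and note that $t\mapsto\mathcal{B}_t$ is at most $N$-to-one. It also settles the ``in particular'' clause.

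The gap is Step~3, and it is substantive. The first problem is that compactness plus (S1) does not give finitely many separating bases. That inference uses only compactness and point separation, and these cannot suffice: the distance functions $\rho(\cdot,z)$ separate points of the Hilbert cube, yet no finitely many of them do, since that would embed it in $\mathbb{R}^n$. Compactness covers $\mathcal{X}$, not the non-compact set of pairs $x\neq y$ near the diagonal. The (S3) description of the fibres does not help either. The fibres are $G_S$-orbits, and a compact group such as an infinite torus can have infinite-dimensional orbits, so bounding their dimension already presupposes that $G_S$ is Lie. Your fallback, separation by $\mathcal{B}$ and one complementary $\mathcal{B}'$, is Step~2 of Theorem~\ref{thm:points-sections}. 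The paper justifies that only once $\mathcal{X}\cong\mathbb{F}P^{N-1}$ is known, and that identification depends on the present theorem, so using it here is circular.

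The second problem is that, even granting $\dim\mathcal{X}<\infty$, the no-small-subgroups step fails as stated. The $p$-adic integers $\mathbb{Z}_p$ act on themselves by translation effectively, transitively and isometrically, on a compact zero-dimensional metric space. Yet the subgroups $p^n\mathbb{Z}_p$ are arbitrarily small and act freely. So a small subgroup need not act trivially on any neighbourhood, and there is nothing for (S4) to propagate. The Montgomery--Zippin transitive-action theorem also needs $\mathcal{X}$ to be locally connected, and Newman- or Bochner--Montgomery-type arguments need a manifold. In the paper, connectedness of $\mathcal{X}$ is obtained only downstream: Theorem~\ref{thm:imperceptibility-connectedness} itself invokes this theorem. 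Falling back on Gleason--Yamabe does not help, because it only makes $G$ pro-Lie, and deciding Lie-ness is again exactly the NSS question.

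The paper's own Step~4 makes the same two moves (fibre dimension ``controlled'' by compactness of $G_S$, then Bochner--Montgomery) without supplying these ingredients. So you have not missed an idea the paper provides. But neither version establishes that $G$ is a Lie group, and the positive-dimension conclusion rests on that.
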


\begin{proof}
Theorem~\ref{thm:src-master}(B) requires $G$ to act transitively on $\mathfrak{B}$; we show $\mathfrak{B}$ is uncountable, so $G$ cannot be finite.

\textit{Step 1: $K$ takes intermediate values.} If $\mathcal{B}, \mathcal{B}'$ are distinct bases and $b' \in \mathcal{B}'$, then $K(b', b_k) \in (0,1)$ for at least one $k$ (Lemma~\ref{lem:intermediate-K}).

\textit{Step 2: $K$ is continuous in the $K$-metric topology.} By Remark~\ref{rem:topology-provenance}, $\mathcal{X}$ carries the initial topology generated by the maps $x \mapsto K(x,z)$, $z \in \mathcal{X}$; equivalently, the metric topology from $d(x,y) := \sup_z |K(x,z) - K(y,z)|$ (a genuine metric by Identity (Theorem~\ref{thm:src-master}(S1))). Each $g \in G$ is a $d$-isometry. The estimate $|K(x,y) - K(x',y')| \leq d(x,x') + d(y,y')$ shows that $K$ is jointly continuous in this topology by construction; this suffices for the argument that follows. A detailed metrizability and compactness derivation is in Appendix~\ref{app:structural-conditions}: the full $K$-profile $\Sigma(x) := K(x, \cdot)$ embeds $\mathcal{X}$ into $[0,1]^{\mathcal{X}}$ with closed image (Tychonoff plus (S1), (S2)).

\textit{Step 3: Bases form continuous families.} Combining Steps 1 and 2: as $b'$ is varied along a small neighborhood in $\mathcal{X}$ (continuously), the values $K(b', b_k)$ vary continuously, and by Lemma~\ref{lem:intermediate-K} at least one ranges over a non-trivial interval. The set $\mathfrak{B}$ of bases is therefore at least one-parameter, hence uncountable. A finite group cannot act transitively on an uncountable set, so $G$ must be infinite.

\textit{Step 4: Lie group via no-small-subgroups.} By compactness (Appendix~\ref{app:structural-conditions}), $G$ is compact and acts effectively on $\mathcal{X}$ (any $g \in G$ fixing every point fixes every $K$-relation, hence is the identity). The basis K-profile $\varphi: \mathcal{X} \to [0,1]^N$ is continuous with compact image, and its fibers are $G_S$-orbits (Basis-Profile Symmetry, Theorem~\ref{thm:src-master}(S3), where $G_S$ is the $S$-pointwise stabilizer in $G$); compactness of $G_S$ together with continuity of the orbit map controls the fiber covering dimension, so $\mathcal{X}$ has finite covering dimension. The Bochner--Montgomery theorem on compact effective actions on finite-dimensional spaces \cite{montgomery1955topological} ensures $G$ has no small subgroups; the Gleason--Yamabe characterization then gives $G$ a Lie group. Locally-Euclidean structure of $\mathcal{X}$ is downstream of two-point homogeneity (Theorem~\ref{thm:points-sections}) and is not required at this step.
\end{proof}

Theorem~\ref{thm:complexity-constraint} establishes $G$ as a Lie group acting on $\mathcal{X}$. Frobenius's classification of finite-dimensional associative division algebras over $\R$ then narrows the coefficient field $\mathbb{F}$ to one of $\R, \C, \mathbb{H}$; the remaining task is to exclude $\R$ and $\mathbb{H}$. The latter is excluded by two independent obstructions:

\begin{theorem}[Quaternionic Obstruction]
\label{thm:quaternion-obstruction}
The coefficient field cannot be $\mathbb{H}$. Two independent obstructions:
\begin{enumerate}
\item[(i)] \textbf{Cyclic-Fourier realization.} The cyclic generator $\pi$ on a basis $\mathcal{B}$ has eigenvectors over $\overline{\mathbb{F}}$ given by the cyclic Fourier basis $\{f_k\}$ with eigenvalues $\omega^{-k}$, $\omega = e^{2\pi i/N}$. Realization of $\{f_k\}$ as a basis in $\mathfrak{B}$ (Saturation, via the realization argument of Lemma~\ref{lem:sheaf-complex}) requires $\omega \in \mathbb{F}$, equivalently a distinguished imaginary unit in $\mathbb{F}$. Over $\mathbb{H}$ the imaginary units form a 2-sphere with no canonical element, so no choice of $\omega \in \mathbb{H}$ is $K$-relational; the Fourier basis is not realized.
\item[(ii)] \textbf{Compositional.} Composition of independent systems requires the joint space to have capacity $N_A N_B$. The obstacle is algebraic: $\mathbb{H} \otimes_\R \mathbb{H} \cong M_4(\R)$ is not a division algebra, so $\mathbb{H}^{N_A} \otimes_\mathbb{H} \mathbb{H}^{N_B}$ does not carry a natural quaternionic Hilbert space structure of the required dimension. Equivalently $\dim_\R \mathrm{Herm}(\mathbb{H}^n) = n(2n-1) \neq n^2$, so product operators fail to span the joint Hermitian algebra. Quaternionic QM fails closure under composition.
\end{enumerate}
Each obstruction is individually sufficient and both are Parsimony-independent. $\C$ is therefore the minimal associative division algebra supporting both the cyclic Fourier realization and closure under composition.
\end{theorem}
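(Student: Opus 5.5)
The plan is to prove each obstruction separately, then read off the closing ``minimality'' sentence. Throughout we take as given Theorem~\ref{thm:complexity-constraint} together with Frobenius's theorem, which restrict $\mathbb{F}$ to $\{\R,\C,\mathbb{H}\}$. We also take Permutation Invariance (Theorem~\ref{thm:permutation-invariance}), which supplies the cyclic permutation $\pi:b_k\mapsto b_{k+1}$ as an element of $G$.

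For (i), first linearize $\pi$ on $\mathbb{F}^N$ as the shift matrix. Its characteristic polynomial is $\lambda^N-1$, so its eigenvectors $f_k=N^{-1/2}\sum_j\omega^{jk}b_j$ require the primitive root $\omega$, and $\omega\notin\R$ for $N\geq3$. The realization step has two parts.
\begin{enumerate}
\item[(a)] Completeness (S2) and the realization argument of Lemma~\ref{lem:sheaf-complex} force the eigenbasis of $\pi$ to appear in $\mathfrak{B}$. The $K$-profile of $f_k$ against $\mathcal{B}$ is $K(f_k,b_j)=1-1/N$, which is $K$-consistent, so the corresponding points exist.
\item[(b)] Over $\mathbb{H}$ the solutions of $\lambda^N=1$ form a family $\cos(2\pi k/N)+u\sin(2\pi k/N)$ with $u\in S^2$. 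This family is permuted transitively by the inner automorphisms $q\mapsto hqh^{-1}$, which act on projective coordinates as $K$-isometries. Any selection of one $u$ would therefore be an $\mathrm{Aut}$-non-invariant choice. Transport Consistency (T) and Lemma~\ref{lem:definability} forbid such a choice, because it would be a feature not factoring through $K$-profiles.
\end{enumerate}
Hence the eigenbasis cannot be realized canonically, which contradicts (a). This step needs care: the claim to prove is that no $\mathrm{Aut}$-invariant Fourier basis exists, not merely that there is no canonical scalar. I would show that the $Sp(1)$ conjugation acts nontrivially on the candidate family $\{f_k^{(u)}\}$ while fixing $\mathcal{B}$ and $\pi$. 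This produces distinct points sharing the same relation to $(\mathcal{B},\pi)$, so no single one is singled out.

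For (ii), use the composition requirement that the joint system has capacity $N_AN_B$, together with local tomography in the form ``product operators span joint observables.'' The proof is then a dimension count:
\[
\dim_\R\mathrm{Herm}(\mathbb{F}^{N_AN_B}) \quad\text{versus}\quad \dim_\R\mathrm{Herm}(\mathbb{F}^{N_A})\cdot\dim_\R\mathrm{Herm}(\mathbb{F}^{N_B}).
\]
For $\mathbb{H}$ the left side is $n(2n-1)$ with $n=N_AN_B$, and the right side is $N_A(2N_A-1)N_B(2N_B-1)$. These differ, since the right side is strictly larger whenever $N_A,N_B\geq2$. Thus the tensor product either overcounts or fails to close, which matches the algebraic fact that $\mathbb{H}\otimes_\R\mathbb{H}\cong M_4(\R)$ is not a division algebra. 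For $\C$ the count $n^2=N_A^2N_B^2$ matches exactly.

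To justify ``Parsimony-independent,'' I would check that neither argument invokes Theorem~\ref{thm:parsimony-derived}. Obstruction (i) uses only (S2), (T) and $\pi$; obstruction (ii) uses only the capacity bookkeeping. For the final sentence, $\R$ is excluded because $\omega\notin\R$, and $\C$ contains $\omega$, with the two choices $\pm i$ related by complex conjugation. That conjugation is an antiunitary $K$-isometry, so the choice is absorbed into $G$ rather than being extra structure. $\C$ also passes the composition count. Minimality follows because $\R\subset\C\subset\mathbb{H}$ is the complete Frobenius list.

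The main obstacle is step (b) of (i): turning ``no canonical imaginary unit'' into a genuine violation of SRC. The difficulty is to show that the $S^2$ of choices yields distinct points of $\mathcal{X}$ related by automorphisms, rather than a single orbit-invariant basis. The point-count here must be reconciled with the fact that for $\C$ the two conjugate choices are tolerated.
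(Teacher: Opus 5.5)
\textbf{Gap in (i).} Obstruction (i) fails at step (b), the step you flagged, and it cannot be closed there: step (b) does not contradict (a). What (a) and (S2) require is the \emph{existence} of points with the Fourier profile, and over $\mathbb{H}$ these exist. Take $u=i$. Then the vectors $f_k=N^{-1/2}\sum_j b_j\omega^{jk}$ have coefficients in $\C\subset\mathbb{H}$, are orthonormal in $\mathbb{H}^N$, satisfy $Pf_k=f_k\omega^{-k}$, and have $K(f_k,b_j)=1-1/N$. The $Sp(1)$ action you describe is $[v]\mapsto[hv]$ on rays. It is a $K$-isometry that fixes every $b_j$ and commutes with $P$. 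So the family $\{f^{(u)}_k\}$ is a single orbit of the stabilizer of $(\mathcal{B},\pi)$. This is not a violation of SRC. It is exactly the kind of symmetry that Theorem~\ref{thm:src-master}(S3),(S4) demand for points sharing a $K$-profile against $\mathcal{B}$. Nor do (T) or Lemma~\ref{lem:definability} apply. They constrain $\mathrm{Aut}$-invariant features, and selecting one $u$ is not such a feature; the invariant object, the set of all $\pi$-eigenbases, is perfectly $K$-definable. Non-canonical choices within an orbit are tolerated throughout the framework, for instance the choice of $\mathcal{B}$ and of which $N$-cycle is $\pi$. Your own treatment of $\C$ concedes the point. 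If the $\pm i$ ambiguity is harmless because the choices are related by a $K$-isometry, then the $S^2$ ambiguity over $\mathbb{H}$ is harmless a fortiori, since it is related by unitary elements of $Sp(N)$. As you formulate it, (i) cannot separate $\C$ from $\mathbb{H}$.

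\textbf{What separates the cases, and a note on (ii).} The real difference is uniqueness, not realizability. Over $\C$, replacing $\omega$ by $\bar\omega$ sends $f_k\mapsto f_{-k}$. So the unordered $\pi$-eigenbasis is unique, and no antiunitary is needed. Over $\mathbb{H}$, choices $u'\neq\pm u$ give genuinely distinct eigenbases, sharing only the real vectors $f_0$ (and $f_{N/2}$ for even $N$). Turning that non-uniqueness into a contradiction needs an extra postulate, that $\pi$ have a unique eigenbasis, which is not among the clauses of Theorem~\ref{thm:src-master}. The paper's (i) makes the same move (``no distinguished imaginary unit, so the Fourier basis is not realized''), so following it more closely will not help. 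As it stands, (i) is not individually sufficient, and the exclusion of $\mathbb{H}$ rests on (ii). Your (ii) is the paper's dimension count, done more accurately than the paper's $n(2n-1)\neq n^2$ shorthand: the product count exceeds the joint count by $2N_AN_B(N_A-1)(N_B-1)$. That means what fails is not spanning, since a larger family can span a smaller space. What fails is injectivity of $\mathrm{Herm}(\mathbb{H}^{N_A})\otimes_\R\mathrm{Herm}(\mathbb{H}^{N_B})\to\mathrm{Herm}(\mathbb{H}^{N_AN_B})$. So phrase the hypothesis as linear independence of product effects on product states, which gives $d_{AB}\geq d_Ad_B$. Derive it from spatial separation directly rather than from local tomography. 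The paper obtains local tomography only after $\mathbb{F}=\C$ (Corollary~\ref{cor:local-tomography-derived}), so invoking it here would be circular.
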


\noindent\textit{Further remarks on Axiom 2 (SRC) and its derived clauses:}

\paragraph{Derived clauses of SRC.} Operational Completeness (Theorem~\ref{thm:src-master}(O)) identifies undetectable structure; Transport Consistency (Theorem~\ref{thm:src-master}(T)) ensures physical equivalence of $K$-equivalent states; Basis Isotropy (Theorem~\ref{thm:src-master}(B)), i.e., transitivity on $\mathfrak{B}$, carries the heaviest consequences (Lie group structure, flag manifold, unique metric). Permutation Invariance (symmetry of atom labels within each basis) follows from Structural Leibniz (Theorem~\ref{thm:src-master}(S4); Theorem~\ref{thm:permutation-invariance}).

\begin{remark}[Physical Content of Basis Isotropy]
\label{rem:basis-isotropy-physical}
In the ontology of \S\ref{sec:graded-equality}, Basis Isotropy follows from the exhaustiveness of $K$: all bases share the same internal $K$-structure, and no structure beyond $K$ exists to privilege one over another. Empirically, this holds for fundamental systems: any Stern-Gerlach axis, any polarizer angle, any Fourier-conjugate basis is equally physical. Systems with apparent anisotropy (crystals, preferred axes) embed in larger isotropic systems.
\end{remark}

\subsection{Closure and Time Emergence}

\begin{lemma}[Closure of the Relational System]
\label{prop:closure}
Under Axioms~\ref{ax:finite}--\ref{ax:relational}, there exists no external information reservoir. All information is encoded in relational distinctions within the system.
\end{lemma}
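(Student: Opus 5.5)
The plan is a proof by contradiction modelled on the hidden-variable argument of Theorem~\ref{thm:parsimony-derived}. First I will make ``external information reservoir'' precise in the paper's own terms. A reservoir is an auxiliary space $R$ with $|R| > 1$ together with a coupling under which system states carry $R$-data. It therefore yields an enlarged space $\mathcal{X}' = \mathcal{X} \times R$, or more generally a map $x \mapsto (x, r(x))$. Any kernel $K'$ on the enlarged space must restrict to $K$ on $\mathcal{X}$, since the reservoir is external and adds no new $K$-comparisons among system states. So there are two cases: either the reservoir data are \emph{invisible} to $K'$ (that is, $K'((x,r),(x',r')) = K(x,x')$), or they are \emph{visible} (some pair differs only in $R$ yet has $K' > 0$).

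\emph{Invisible case.} Here $(\mathcal{X}\times R, K')$ is exactly a hidden variable extension in the sense of Definition~\ref{def:hv-extension}. Theorem~\ref{thm:parsimony-derived} then forces $|R| = 1$, so no genuine reservoir exists. For the intrinsic version, where the reservoir is a feature $r: \mathcal{X} \to R$ rather than an extra factor, I will note that $r$ must be $\mathrm{Aut}(\mathcal{X},K)$-invariant if it is to be canonical (not chosen from outside). Transport Consistency (Theorem~\ref{thm:src-master}(T)) then factors $r$ through the $K$-profile map, so $r$ is a function of relational distinctions already in the system. Hence ``all information is encoded in relational distinctions.''

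\emph{Visible case.} If $K'$ registers $R$-differences, then the embedding $\mathcal{X} \hookrightarrow \mathcal{X}'$, $x \mapsto (x, r_0)$, is a faithful $K$-preserving embedding. Its codomain contains points $(x, r_1)$ whose $K'$-profile against $\mathcal{X}$ is $K$-consistent. By clause (i) of Axiom~\ref{ax:relational}, or directly by Completeness (S2), such a profile must already be realized by some $y \in \mathcal{X}$. Identity (S1) then identifies $(x, r_1)$ with that $y$, so the ``reservoir'' states were system states all along and carry no information external to $\mathcal{X}$. If instead the reservoir supplies an automorphism of $\mathcal{X}'$, for example a shift in $R$, clause (ii) forces it to lift to $\mathrm{Aut}(\mathcal{X},K)$. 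So any dynamics it supports is internal as well.

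The main obstacle is the first step: formalizing ``reservoir'' generally enough that the dichotomy exhausts the cases, without simply assuming the conclusion. I would first try to define a reservoir as any datum on which predictions or dynamics of $\mathcal{X}$ may depend. I would then show via Lemma~\ref{lem:definability} that any such dependence, being invariant under relabellings by $G$, is a $\{K,=\}$-formula, which reduces every case to the two above. A secondary subtlety is the profile of $(x,r_1)$ against $(x,r_0)$ itself, which could be zero; (O) handles this by collapsing those points.
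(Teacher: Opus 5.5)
Your route is different from the paper's, and the difference is where it breaks. The paper's proof rests on Axiom~\ref{ax:finite}, not on SRC. Adjoining any reservoir with $N_R \geq 2$ raises the joint capacity above $N$, so the reservoir is really part of an enlarged system. Iterating, either the process stops at a maximal closed system, which has nothing external to it, or the capacity diverges, contradicting finite capacity. You never use Axiom~\ref{ax:finite}. Your visible case tries to get closure from clause~(i) of SRC instead, and as written that step fails. (S1) compares two points of $\mathcal{X}$ against all of $\mathcal{X}$. But $(x,r_1)$ is not in $\mathcal{X}$, and it agrees with $y$ only on the copy $\mathcal{X}\times\{r_0\}$, which says nothing about its profile against the other copies. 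The valid replacement is $K'((x,r_1),(y,r_0)) = K(y,y) = 0$, hence $(x,r_1) = (y,r_0)$ by Definition~\ref{def:graded-equality}(i) applied to $K'$. That observation shows far more than you want. A realized profile $K(y,\cdot)$ always vanishes at $y$, while the profile of any point outside the image of an embedding is strictly positive on the image. So under your reading, clause~(i) (and likewise (S2)) rules out \emph{every} proper $K$-preserving extension, whatever its reservoir structure.

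That conclusion is incompatible with the rest of the paper. Take $R = \C P^{N_B-1}$ with $K'$ the composition rule of Theorem~\ref{thm:kernel-composition}, which is exactly your visible case. Then $a \mapsto (a, b_0)$ is a faithful $K$-preserving embedding of $\C P^{N_A-1}$. For $b_1 \perp b_0$, the point $(a, b_1)$ has the constant profile $1$ against its image, and no point of $\C P^{N_A-1}$ realizes that profile. Your argument then says either that $\C P^{N_A-1}$ violates SRC or that no system satisfying the axioms has any environment. Both contradict the composites of \S\ref{sec:composite} (Theorems~\ref{thm:capacity-mult}, \ref{thm:tensor}, \ref{thm:capacity-dilution-composite}), on which the paper's treatment of qubits and observers depends. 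An argument that cannot tell a forbidden ``external reservoir'' from an ordinary environment does not establish the lemma. The missing idea is the paper's: closure is a property of the maximal closed system, and Axiom~\ref{ax:finite} is what makes the regress of absorbing reservoirs terminate there.

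The rest of your proposal does not supply this. The invisible case is the internal-reservoir exclusion already given by Theorem~\ref{thm:parsimony-derived}. Indeed Definition~\ref{def:graded-equality}(i) excludes it outright, since your $K'$ vanishes between distinct points. The appeal to (T) adds nothing: (S1) makes the profile map injective, so every function on $\mathcal{X}$ factors through it.
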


\begin{proof}
If an external reservoir $R$ with capacity $N_R \geq 2$ existed, the combined system would have capacity $N' > N$ (since additional distinguishable states become available). Then $R$ would itself be part of a larger system, contradicting its assumed externality. Either this iteration terminates (no external reservoir for the complete system) or capacity diverges, violating Axiom~\ref{ax:finite}. For any finite-capacity universe, there exists a maximal closed system. Information not encoded in its relational structure does not exist.
\end{proof}

\begin{theorem}[Continuous Time as Reconstructed Parameterization]
\label{thm:time-emergence}
Given a discrete $K$-preserving generator $\pi$ of order $N$, a continuous one-parameter group $U: \R \to G$ arises as the unique band-limited interpolation satisfying $U(n) = \pi^n$ for $n \in \Z$ (cf.\ \cite{page1983evolution} for time emergence from static relational structure). Band-limitedness follows from smoothness of the $G$-action: non-smooth interpolations would violate the Lie group structure of $G$. Band-limitedness is therefore not an independent regularity assumption: it inherits from the Lie group structure of $G$ (Theorem~\ref{thm:complexity-constraint}; cf.\ Remark~\ref{rem:topology-provenance}), which itself derives from finite capacity together with the saturation hierarchy.
\end{theorem}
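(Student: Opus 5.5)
The plan is to build $U$ in three stages: diagonalize $\pi$, interpolate its spectrum with the lowest frequencies, and then use the Lie structure of $G$ to rule out every other interpolation. The statement bundles an existence claim (a continuous one-parameter group $U:\R\to G$ with $U(n)=\pi^n$) with a uniqueness claim (the band-limited one is singled out), so I treat them separately.

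\emph{Existence.} Since $\pi$ has order $N$, $\pi^N=\id$. By Theorem~\ref{thm:complexity-constraint}, $G$ is a compact Lie group, so its identity component is covered by the exponential map. I first need $\pi$ to lie in a connected subgroup through the identity. Permutation Invariance (Theorem~\ref{thm:permutation-invariance}) together with Structural Leibniz places $\pi$ in $G$; I then use Basis Isotropy (B) and the continuity of bases (Step~3 of Theorem~\ref{thm:complexity-constraint}) to argue that $\pi$ lies in a maximal torus $T\subset G_0$. Compact connected Lie groups are covered by their maximal tori, so this reduces to showing $\pi\in G_0$.

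Inside $T$, I write $\pi=\exp(X)$. The cyclic eigenbasis is the Fourier basis $\{f_k\}$ with eigenvalues $\omega^{-k}$. I choose $X$ to act on $f_k$ by the phase $-2\pi i k/N$ with $k\in\{0,\dots,N-1\}$, or symmetrically with $k$ in the interval centered at $0$. Then $U(t):=\exp(tX)$ is a continuous homomorphism $\R\to G$ with $U(n)=\pi^n$.

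\emph{Uniqueness.} Any other logarithm of $\pi$ differs from $X$ by shifting the frequencies $k\mapsto k+Nm_k$ with $m_k\in\Z$. So the set of one-parameter interpolations is parametrized by integer lattice shifts, which is the aliasing ambiguity familiar from sampling. I impose band-limitedness: all frequencies must lie in a single window of width $N$. Each residue class mod $N$ then has exactly one representative in that window, so the choice of frequencies is forced. Equivalently, by Shannon--Whittaker sampling, the matrix coefficients $t\mapsto\langle f_j,U(t)f_k\rangle$ are trigonometric polynomials of bounded degree and are therefore determined by their integer samples.

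\emph{Main obstacle.} The hardest step is justifying the band-limitedness claim, namely that ``non-smooth interpolations violate the Lie structure.'' Every lattice-shifted interpolation is itself smooth, so smoothness alone cannot select one; this is where the argument is delicate. I would try two routes:
\begin{enumerate}
\item[(a)] A minimality criterion. Require the generator to have minimal operator norm, equivalently minimal speed under the bi-invariant metric on the compact group. This is $K$-intrinsic, because the Fubini--Study speed of $U(t)\psi$ is computable from $K(U(t)\psi,U(t+\epsilon)\psi)$. It picks out the centered window, up to a tie at $k=N/2$ for even $N$, which I would resolve with the SRC symmetry argument.
\item[(b)] A canonicity argument. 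Invoke Transport Consistency (T): a choice of nonzero shifts $m_k$ is an $\mathrm{Aut}$-non-invariant labelling beyond $K$, so SRC forbids it.
\end{enumerate}
Route (a) is concrete, so I would carry it out and cite (b) as the conceptual justification. The residual phase ambiguity of the window, $k\mapsto k+c$, is a global $U(1)$ factor, and I would dispose of it as invisible to $K$.
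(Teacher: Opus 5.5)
Your existence construction and sampling step are essentially the paper's Step~3. You diagonalize $\pi$ in the Fourier basis, take logarithms of the eigenvalues with exponents in a fixed set of $N$ consecutive integers (the paper takes $k\in\{0,\dots,N-1\}$), and invoke Whittaker--Shannon componentwise. Building $U$ directly as $\exp(tX)$ also makes the paper's information-theoretic continuity argument (Step~2) unnecessary. The maximal-torus detour is not needed either, since $\pi$ is implemented by a permutation unitary and can be exponentiated in its eigenbasis. You are right that smoothness cannot select the band: every aliased choice $k\mapsto k+Nm_k$ is a smooth one-parameter subgroup, so the paper's stated justification does not do this work.

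The gap is in your repair. You claim that once all frequencies lie in some window of width $N$, the only leftover freedom is a global phase $k\mapsto k+c$. That is false. Moving the window from $\{a,\dots,a+N-1\}$ to $\{a+1,\dots,a+N\}$ changes exactly one frequency (the one $\equiv a \bmod N$) by $N$ and leaves the others fixed, which gives a projectively different group. For $N=3$, in the basis $\{f_k\}$, $\mathrm{diag}(1,e^{-2\pi it/3},e^{-4\pi it/3})$ and $\mathrm{diag}(e^{-2\pi it},e^{-2\pi it/3},e^{-4\pi it/3})$ both interpolate $\pi$. Yet at $t=1/2$ they send $e_0$ to two states with $K=8/9$ between them.

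Route~(a) does not remove this. The operator norm of the generator depends on which lift $\omega^cP$ of $\pi$ to $U(N)$ you use, which is a phase convention for the representatives $e_j$, so it is not $K$-intrinsic. The $K$-computable quantity you cite, the maximal Fubini--Study speed, is fixed by the spread $\nu_{\max}-\nu_{\min}$ of the frequencies. That spread equals $N-1$ for every window of consecutive integers. Minimizing it does cut the lattice of aliases down to exactly the $N$ windowed choices, since distinct residues force spread $\geq N-1$ with equality iff consecutive, but it selects nothing further.

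Route~(b) fails for a structural reason. Consider the clock unitary $Z$ with $Ze_j=\omega^je_j$. It preserves $K$, fixes every ray of $\mathcal{B}$, and satisfies $Z\pi Z^{-1}=\omega\pi$, so it commutes with $\pi$ as a map on $\mathcal{X}$. It also sends $f_k\mapsto f_{k+1}$, and conjugation by $Z$ carries the window-$\{a,\dots,a+N-1\}$ interpolation to the window-$\{a+1,\dots,a+N\}$ one, modulo a global phase. So the $N$ surviving candidates form a single free orbit of a $K$-automorphism that preserves all the data $(\mathcal{B},\pi)$. Any selection rule definable from $K$, $\mathcal{B}$ and $\pi$, including anything licensed by (T) or SRC, must be $Z$-equivariant and therefore cannot single one out. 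Transport Consistency argues against a canonical band, not for one.

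What you can actually prove, and what the paper's Step~3 actually delivers, is existence plus uniqueness relative to a prescribed band, equivalently a prescribed logarithm of a prescribed lift of $\pi$. The claim that $K$ itself forces the band should be dropped or recorded as an explicit convention.
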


\begin{proof}
The intra-basis dynamical symmetry is the cyclic group $\Z_N$ generated by $\pi$; we now show this extends to a continuous one-parameter group via band-limited interpolation.

\textit{Step 1: Discrete structure.}
The generator $\pi$ satisfies $\pi^N = \mathrm{id}$, giving the discrete sequence $x_n = \pi^n(x_0)$ for $n \in \Z_N$.

\textit{Step 2: Continuity from finite capacity.}
Any evolution consistent with finite capacity is continuous: a discontinuity would require infinite information to specify (the location of a jump on a continuum), violating the $\log_2 N$-bit storage budget. The discrete sequence $\{\pi^n\}$ therefore admits a continuous extension.

\textit{Step 3: Spectral interpolation.}
Let $T_N$ be the recurrence period. In the Hilbert space representation derived later, $\pi$ has spectral decomposition $\pi = \sum_k e^{2\pi ik/N} |v_k\rangle\langle v_k|$. Define $U(t) = \sum_k e^{2\pi ikt/T_N} |v_k\rangle\langle v_k|$. Then: (a) $U(nT_N/N) = \pi^n$ for all $n \in \Z_N$; (b) $U(t)U(s) = U(t+s)$ (one-parameter subgroup law holds by construction); (c) each eigenvalue trajectory $t \mapsto e^{2\pi ikt/T_N}$ is band-limited with bandwidth $\leq N/2$, so $U(t)$ is the unique band-limited interpolation of the discrete sequence $\{\pi^n\}$ (Whittaker-Shannon applied to each spectral component).

\textit{Step 4: Stone's theorem.}
The continuous one-parameter group $U(t)$ has a unique self-adjoint generator $H$ with $U(t) = e^{-iHt/\hbar}$.
\end{proof}

\begin{corollary}[Energy as Rate of Relational Update]
\label{cor:energy-rate}
Energy $E_k = 2\pi k \hbar/T_N = \hbar \cdot \omega_k$ is the rate of relational update, where $T_N = 2\pi\hbar/E$ is the recurrence period.
\end{corollary}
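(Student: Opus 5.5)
The corollary follows by reading off the spectral interpolation already built in the proof of Theorem~\ref{thm:time-emergence}. That theorem supplies $U(t)=\sum_k e^{2\pi i k t/T_N}\ketbra{v_k}{v_k}$ and, via Stone's theorem, a self-adjoint generator $H$ with $U(t)=e^{-iHt/\hbar}$. The plan is to identify the spectrum of $H$ directly and then interpret $T_N$.

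\textbf{Step 1 (spectrum of $H$).} Differentiate $U(t)$ at $t=0$ on each spectral projector $\ketbra{v_k}{v_k}$ and compare with $-iH/\hbar$. This yields $H=\sum_k E_k\ketbra{v_k}{v_k}$ with $E_k=2\pi k\hbar/T_N$ (up to the sign convention relating $e^{+2\pi ikt/T_N}$ to $e^{-iHt/\hbar}$). The sign is absorbed by relabelling $k\mapsto -k$, or equivalently by using the eigenvalues $\omega^{-k}$ of $\pi$ from Theorem~\ref{thm:quaternion-obstruction}(i). Setting $\omega_k:=2\pi k/T_N$ gives $E_k=\hbar\omega_k$.

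\textbf{Step 2 (recurrence period).} Use $U(T_N)=\pi^N=\mathrm{id}$, from Step~1 of the time-emergence proof. With the spacing $E:=E_1-E_0=2\pi\hbar/T_N$, this gives $T_N=2\pi\hbar/E$. The spectrum $\{E_k\}$ is the ladder $kE$, $k\in\Z_N$, so the period fixed by $\pi^N=\mathrm{id}$ is exactly $2\pi\hbar/E$. The global phase ambiguity $E_k\mapsto E_k+c$ is projectively trivial by the $U(1)$ gauge, so only differences are meaningful.

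\textbf{Step 3 (interpretation as a rate).} In time $T_N/N$ the state advances one step of the $N$-cycle $\pi$, which is one relational update in the sense of \S\ref{sec:graded-equality}. Eigencomponent $k$ accumulates phase at rate $\omega_k$, so $E_k/\hbar$ is precisely the phase rate at which $v_k$'s contribution to the $K$-profile against the cycled basis $\pi^n(\mathcal{B})$ is refreshed.

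The main obstacle is not computational. It is making the phrase ``rate of relational update'' precise enough to count as a claim rather than a definition. I would first try to pin it down as follows: the $K$-profile $t\mapsto K(U(t)x,b)$ is a trigonometric polynomial whose frequencies are exactly $\{\omega_k-\omega_j\}$. The energy differences are then literally the frequencies at which $K$-values change, and this reading is uniquely fixed by the band-limited interpolation of Theorem~\ref{thm:time-emergence}.
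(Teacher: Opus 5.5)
Your proposal is correct and follows the paper's own route. The paper gives no separate proof and treats the corollary as a direct read-off from Theorem~\ref{thm:time-emergence}: combine $U(t)=\sum_k e^{2\pi ikt/T_N}\ketbra{v_k}{v_k}$ with Stone's form $U(t)=e^{-iHt/\hbar}$ to get $E_k=\hbar\omega_k$, then invert at the first rung to get $T_N=2\pi\hbar/E$, with $E$ the level spacing. Your sign bookkeeping usefully repairs the fact that the paper's conventions, taken literally, give $E_k=-2\pi k\hbar/T_N$.

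Two small precision points. First, relabelling $k\mapsto -k$ and re-interpolating with the eigenvalues $\omega^{-k}$ are not equivalent. The latter selects a different frequency window and shifts $H$ by $N\cdot 2\pi\hbar/T_N$ on every eigenvector except the one with eigenvalue $1$, which is not a constant shift. This also weakens your closing claim that the reading is \emph{uniquely} fixed by band-limited interpolation. Second, the frequencies of $t\mapsto K(U(t)x,b)$ are contained in $\{\omega_k-\omega_j\}$ but need not equal that set, since coefficients of coinciding differences can cancel.
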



\begin{remark}[The state space as a Nyquist manifold]
\label{rem:nyquist-manifold}
$(\mathcal{X}, K)$ is a \emph{Nyquist manifold}: a smooth manifold whose $K$-profiles against a complementary pair of bases finitely determine each state (Theorem~\ref{thm:points-sections}, Step~2; the $2N$-tuple $\Sigma(x)$), the continuous geometry being the unique band-limited interpolation of finitely many $K$-values \cite{shannon1949communication}. The continuous description $\mathcal{X} \cong \C P^{N-1}$ and the discrete description via $2(N-1)$ parameters from this finite separating set are exactly equivalent. The per-shot resolution $\delta \sim 1/N$ and the apparatus-scale and cosmic precision floors are developed in Lemma~\ref{prop:capacity-bound}. (Cf.\ fuzzy manifolds \cite{madore1992fuzzy} and finite spectral triples \cite{connes1994noncommutative}, which truncate a given manifold; here both the manifold and its bandwidth are derived from $K$.)
\end{remark}

\begin{theorem}[Operational Indistinguishability of Continuous and Sampled Amplitudes]
\label{thm:continuous-sampled}
Let $(\mathcal{X}, K)$ have capacity $N$. No measurement by an apparatus with capacity $N' \leq N$ can distinguish states $\ket{\psi}, \ket{\psi'}$ with $|c_k - c'_k| < 1/N$: we have $K(\psi, \psi') \leq 2/N$ and extractable information is bounded by $\log_2 N$ bits (Holevo bound). The continuous manifold $\C P^{N-1}$ is the unique band-limited interpolation of $N$ samples (Whittaker-Shannon), containing no additional operational content.
\end{theorem}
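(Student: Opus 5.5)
The bound on $K$ is a direct estimate from the inner-product form of the kernel. Fix unit-vector representatives $\ket{\psi}=\sum_k c_k\ket{b_k}$ and $\ket{\psi'}=\sum_k c'_k\ket{b_k}$ in a common basis $\mathcal{B}$, with the phases chosen so that the hypothesis $|c_k-c'_k|<1/N$ holds for every $k$. Using $K(\psi,\phi)=1-|\langle\psi|\phi\rangle|^2$ (Theorem~\ref{thm:kernel-inner}) and writing $r:=|\langle\psi|\psi'\rangle|\in[0,1]$, I will use the chain
\[
K(\psi,\psi')=(1-r)(1+r)\le 2(1-r)\le 2\bigl(1-\mathrm{Re}\langle\psi|\psi'\rangle\bigr)=\|\psi-\psi'\|^2 .
\]
The middle step holds because $1+r\le 2$ and $r\ge\mathrm{Re}\langle\psi|\psi'\rangle$. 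The last equality is the polarization identity for unit vectors. Then $\|\psi-\psi'\|^2=\sum_k|c_k-c'_k|^2<N\cdot N^{-2}=1/N\le 2/N$. This gives the stated bound with room to spare. It also holds for any phase choice satisfying the hypothesis, because $K$ is phase-invariant (it is a function on $\C P^{N-1}$).

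For the information bound, I will invoke the Holevo bound on the Hilbert-space representation. For any ensemble $\{p_i,\rho_i\}$ on $\C^{N}$ and any measurement by an apparatus of capacity $N'$, the accessible information is at most
\[
\chi=S\Bigl(\sum_i p_i\rho_i\Bigr)-\sum_i p_iS(\rho_i)\le\log_2 N .
\]
Since the outcome register has at most $N'\le N$ outcomes, the classical information is additionally capped by $\log_2 N'$. Combining this with the $K$-bound, the Born rule (Theorem~\ref{thm:born-kernel}) turns $K\le 2/N$ into an outcome-distribution shift of order $1/N$ per basis element. Equivalently, the trace distance satisfies $\sqrt{K}\le\sqrt{2/N}$, so the two states' statistics are within per-shot resolution $\delta\sim 1/N$. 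This is the sense in which the apparatus cannot resolve the difference.

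For the interpolation clause, I will reduce to the spectral construction in Step~3 of Theorem~\ref{thm:time-emergence} and the Nyquist-manifold remark. A state's amplitudes along the cyclic orbit $\{\pi^n\}$ define a trigonometric polynomial with frequencies in $\{0,\dots,N-1\}$. Such a polynomial is uniquely fixed by its $N$ samples, since the discrete Fourier transform is invertible. Hence the continuous parametrization adds no degrees of freedom beyond the $2(N-1)$ real parameters read off from two complementary bases, as in Theorem~\ref{thm:points-sections}, Step~2.

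The main obstacle is this last clause. I must make precise in what sense $\C P^{N-1}$ is ``the'' band-limited interpolation: the band limit has to be identified with the spectral support of $\pi$, not imposed. The $K$-estimate and the Holevo step are routine by comparison.
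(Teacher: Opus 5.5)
Your $K$-estimate is the paper's argument, rearranged. The paper bounds $|\langle\psi|\psi'\rangle|\ge\mathrm{Re}\langle\psi|\psi'\rangle=1-\tfrac12\sum_k|c_k-c'_k|^2\ge 1-\tfrac{1}{2N}$ and squares to get $K\le 1/N$; you use $1+r\le 2$ instead of squaring. Your Holevo step is fine, since the paper simply asserts it. Your DFT argument is a more explicit version of the paper's one line that $\ket{\psi}$ is fully specified by $N$ amplitudes.

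\emph{The gap is the bridge from the $K$-bound to ``no measurement can distinguish.''} For pure states the trace distance is $\sqrt{K}$, so your bound gives $D(\psi,\psi')<1/\sqrt{N}$. That is not of order $1/N$, so the ``equivalently'' conflates two different scales. The $1/\sqrt{N}$ scale is actually attained under the hypothesis:
\begin{itemize}
\item Take $\psi=b_0$ and $\psi'=\sqrt{1-(N-1)\epsilon^2}\,b_0+\epsilon\sum_{k\ge1}b_k$, with $\epsilon<1/N$ close to $1/N$.
\item Then $|c_k-c'_k|<1/N$ for every $k$, and $K(\psi,\psi')=(N-1)\epsilon^2\approx 1/N$.
\item Let $v=(N-1)^{-1/2}\sum_{k\ge1}b_k$ and $u=(b_0+v)/\sqrt2$, and use the two-outcome measurement projecting onto $u$ (so $N'=2\le N$).
\item This gives $|\langle u|\psi\rangle|^2=\tfrac12$ but $|\langle u|\psi'\rangle|^2=\tfrac12+\sqrt{K(1-K)}\approx\tfrac12+1/\sqrt{N}$.
\end{itemize}
That shift is far above the resolution $1/N$.

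For the same reason, ``the Born rule turns $K\le 2/N$ into a shift of order $1/N$ per basis element'' is false in general. From $K$ alone, a single outcome probability can move by up to $\sqrt{K}$. The $O(1/N)$ per-outcome shift holds only in the basis $\mathcal{B}$ where the amplitudes are close. There it follows from the amplitude hypothesis directly, $\bigl||c_k|^2-|c'_k|^2\bigr|\le(|c_k|+|c'_k|)\,|c_k-c'_k|<2/N$, not from $K$.

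The paper's proof never passes through trace distance. It reads the operational conclusion directly off $K\le 1/N$, treating the $K$-value itself as the quantity compared against the resolution floor $\delta\sim 1/N$ of Lemma~\ref{prop:capacity-bound}. To match it, drop the trace-distance sentence and state the indistinguishability clause at the level of $K$, or restrict the $O(1/N)$ statistical claim to measurements in $\mathcal{B}$. Under a literal ``any measurement, total-variation at resolution $1/N$'' reading, neither your argument nor the paper's delivers the claim, and the example above shows that reading fails.
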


\begin{proof}
For small differences: $|\braket{\psi}{\psi'}| \geq 1 - \tfrac{1}{2}\sum_k |c_k - c'_k|^2 \geq 1 - 1/(2N)$, so $K(\psi, \psi') \leq 1/N$. The state $\ket{\psi}$ is fully specified by $N$ amplitudes; any two interpolations agreeing on these are operationally identical. Consequently, a finite-$N$ system is operationally indistinguishable from an infinite-capacity system for any measurement with resolution $\delta > 1/N$.
\end{proof}

\begin{table}[h]
\centering
\footnotesize
\caption{\textbf{Axiom dependencies of the main derived results.} \checkmark: required; $\star$: via derived Parsimony. Dynamics denotes Theorem~\ref{thm:dynamics-derived}, itself derived from Theorem~\ref{thm:src-master}(S4).}
\label{tab:axiom-deps}
\begin{tabular}{lccc}
\hline
\textbf{Result} & \textbf{Ax.~1} & \textbf{Ax.~2} & \textbf{Dyn.} \\
\hline
Perm.\ Invariance (Thm.~\ref{thm:permutation-invariance}; $B_1 = B_2$) & \checkmark & \checkmark(B) & \\
Parsimony (Thm.~\ref{thm:parsimony-derived}) & \checkmark & \checkmark & \\
Continuity/Lie (Thm.~\ref{thm:complexity-constraint}) & \checkmark & \checkmark(B) & \\
State space $\C P^{N-1}$ (Thm.~\ref{thm:points-sections}) & \checkmark & \checkmark & \\
Complex field (Lem.~\ref{lem:sheaf-complex}, Thm.~\ref{thm:frobenius}) & \checkmark & \checkmark & \checkmark \\
Born rule (Thm.~\ref{thm:born-kernel}) & \checkmark$^\star$ & \checkmark$^\star$ & \\
Capacity deficit (Thm.~\ref{thm:capacity-halting}) & \checkmark$^\star$ & \checkmark$^\star$ & \\
Tensor product (Thm.~\ref{thm:tensor}) & \checkmark & \checkmark & \\
Schr\"odinger eq.\ (Thm.~\ref{thm:schrodinger}) & & \checkmark & \checkmark \\
\hline
\end{tabular}
\end{table}

\section{Emergence of Complex Structure}
\label{sec:complex-structure}

\subsection{State Space as Projective Space}

The central result of this subsection is that the state space $\mathcal{X}$ is a projective space $\mathbb{F}P^{N-1}$ over a division algebra $\mathbb{F}$. The proof constructs an explicit equivariant homeomorphism using $K$-evaluations against two complementary bases. A sheaf-theoretic reformulation of the same result (Lemma~\ref{lem:signature-sheaf}) exposes the gauge structure that drives \S\ref{sec:gauge}.

\begin{theorem}[State Space Identification]
\label{thm:points-sections}
The state space $\mathcal{X}$ is homeomorphic to $\mathbb{F}P^{N-1}$ for some associative division algebra $\mathbb{F} \in \{\R, \C, \mathbb{H}\}$.
\end{theorem}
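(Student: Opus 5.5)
The plan is to reduce the statement to the classification of compact two-point homogeneous spaces, due to Wang and Helgason, and then use finite capacity to select the projective-space series. The inputs are already in place. Lemma~\ref{lem:compactness-from-capacity} gives that $\mathcal{X}$ is compact and metrizable under the $K$-metric $d$. Theorem~\ref{thm:complexity-constraint} gives that $G=\mathrm{Aut}(\mathcal{X},K)$ is a compact Lie group of positive dimension acting by $d$-isometries. Corollary~\ref{cor:k-image-full} gives that $K$ takes every value in $[0,1]$. What remains is to upgrade these to two-point homogeneity and then read off the answer.

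\textbf{Step 1 (two-point homogeneity).} Take pairs $(x,y)$ and $(x',y')$ with $K(x,y)=K(x',y')$. The map $x\mapsto x'$, $y\mapsto y'$ is a $K$-symmetry of the two-point configuration $C=\{x,y\}$. By Structural Leibniz (Theorem~\ref{thm:src-master}(S4)) it extends to some $g\in G$. Hence $G$ is transitive on each level set of $K$, and taking $x=y$ shows in particular that $G$ is transitive on $\mathcal{X}$. By (S1) each $G$-invariant function of a pair factors through $K$, so $d$ is a function of $K$ alone. Therefore $G$ is transitive on pairs at each fixed $d$-distance, and $(\mathcal{X},d)$ is a compact two-point homogeneous metric space.

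\textbf{Step 2 (manifold structure and classification).} Since $G$ is a compact Lie group acting transitively, $\mathcal{X}\cong G/G_x$ is a compact homogeneous manifold, and this includes local Euclideanity. I would use the Riemannian metric obtained by averaging over $G$. Two-point homogeneity forces $G_x$ to be transitive on the unit sphere of $T_x\mathcal{X}$, because Step 1 applied to nearby points makes $G_x$ transitive on small distance spheres. This makes the space isotropic, so the Riemannian distance is a monotone function of $d$. Wang's classification then says $\mathcal{X}$ is one of: a sphere $S^n$, a projective space $\R P^n$, $\C P^n$, or $\mathbb{H}P^n$, or the Cayley plane $\mathbb{O}P^2$.

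\textbf{Step 3 (selecting $\mathbb{F}P^{N-1}$).} In each case compute the maximal size of a set of mutually maximally distant points. On the sphere this is $2$, since antipodal points form the only maximal-distance pairs; for $N\ge3$ this contradicts Axiom~\ref{ax:finite}. On $\mathbb{F}P^{n}$ it is $n+1$, so $n=N-1$. The Cayley plane $\mathbb{O}P^2$ has capacity $3$, and I exclude it because $\mathbb{O}$ is non-associative. For that exclusion I would use the paper's restriction to associative $\mathbb{F}$: bases must be permuted by $G$ as in Theorem~\ref{thm:permutation-invariance}, and the Frobenius restriction already invoked fixes $\mathbb{F}\in\{\R,\C,\mathbb{H}\}$. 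A more intrinsic alternative is to note that $\mathbb{O}P^2$ does not occur for $N\ne3$ and handle $N=3$ separately. Finally, the identification of $K=1$ with maximal distance needs $d$ to be a monotone function of $K$. Step 1 supplies this, via continuity together with Corollary~\ref{cor:k-image-full}.

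\textbf{Main obstacle.} I expect the hardest step to be Step 2, namely justifying that the $G$-invariant Riemannian distance is genuinely a monotone function of $K$, so that metric two-point homogeneity transfers to the Riemannian setting where Wang's theorem applies. I would try the following first. Show that isotropy of $G_x$ on tangent spheres makes the Riemannian distance $\rho$ $G$-invariant, hence by (T) a function $f(K)$. Then show $f$ is strictly increasing using connectedness of geodesics together with the full image $K(\mathcal{X}\times\mathcal{X})=[0,1]$. Should this stall, the fallback is Szab\'o's purely metric version of the classification, which applies to compact connected two-point homogeneous metric spaces directly.
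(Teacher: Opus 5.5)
Your route is the paper's. Structural Leibniz gives two-point homogeneity on $K$-level sets, and the Cartan--Wang classification of compact two-point homogeneous spaces gives the list. Spheres are then excluded by three mutually $K=1$ points, and $\mathbb{O}P^2$ by associativity. Your exclusion of $\mathbb{O}P^2$ rests on the same presupposed associativity of the coefficient algebra as the paper's, and your ``intrinsic alternative'' for $N=3$ is never carried out.

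The differences are minor but real:
\begin{itemize}
\item You apply (S4) once, to the pair map $(x,y)\mapsto(x',y')$. So you need it only for one- and two-point configurations, in the same non-self-map form the paper already uses to prove (B). The paper instead gets transitivity from Basis Isotropy plus Permutation Invariance, then extends the involution of $\{x,y_1,y_2\}$ swapping $y_1,y_2$.
\item After classifying, the paper builds an explicit coordinate map $\sigma$ and rechecks injectivity, surjectivity and bicontinuity. This is redundant once the classification supplies the homeomorphism, and you rightly skip it.
\item Conversely, the paper feeds homogeneity on $K$-level sets straight into a Riemannian classification and tacitly equates $K=1$ with maximal distance. Your ``main obstacle'' is exactly this unaddressed bridge.
\end{itemize}

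Two inferences are wrong as written.
\begin{itemize}
\item That $G$-invariant pair functions factor through $K$ comes from Step~1 itself, not from (S1) or (T). (S1) separates points by full profiles, and (T) concerns one-point features.
\item More seriously, ``$d=\varphi(K)$, therefore $G$ is transitive at each fixed $d$-distance'' needs $\varphi$ injective. Step~1 does not supply monotonicity, contrary to what your Step~3 asserts.
\end{itemize}

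Your geodesic sketch does close the second gap, and it is worth stating precisely.
\begin{itemize}
\item For the averaged Riemannian distance, write $\rho=f\circ K$. This is well defined by Step~1, and $f$ is defined on all of $[0,1]$ by Corollary~\ref{cor:k-image-full}.
\item Given $K(x,y)=c$, take a minimizing unit-speed geodesic $\gamma$ from $x$ to $y$. Then $s\mapsto K(x,\gamma(s))$ runs continuously from $0$ to $c$, with $f(K(x,\gamma(s)))=s$.
\item For $c'<c$, the intermediate value theorem gives $f(c')=s\le f(c)$. Equality would force $\gamma(s)=y$, i.e.\ $c'=c$.
\item Hence $f$ is strictly increasing and $\rho$-spheres are exactly $K$-level sets. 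Riemannian two-point homogeneity then follows directly from Step~1, with no tangent-sphere argument needed.
\item Also $K=1$ holds exactly at maximal $\rho$, which validates your count $n+1=N$.
\end{itemize}

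This repair uses Hopf--Rinow, hence connectedness of $\mathcal{X}$. You never establish connectedness, and you cannot cite it, because Theorem~\ref{thm:imperceptibility-connectedness} is derived from the present theorem. It does follow from Step~1:
\begin{itemize}
\item The component $X_0$ of $x$ in the manifold $G/G_x$ is clopen and $G_x$-invariant.
\item Each set $\{y: K(x,y)=c\}$ is a single $G_x$-orbit, so it lies entirely inside or entirely outside $X_0$.
\item Therefore $[0,1]=K(x,X_0)\sqcup K(x,\mathcal{X}\setminus X_0)$ is a disjoint union of compact sets with $0$ in the first. Connectedness of $[0,1]$ forces the second to be empty, so $\mathcal{X}=X_0$.
\end{itemize}
The Szab\'o fallback would need the same injectivity argument for $d$.
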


\begin{proof}
We construct an explicit equivariant homeomorphism $\sigma: \mathcal{X} \to \mathbb{F}P^{N-1}$, where $\mathbb{F} \in \{\R, \C, \mathbb{H}\}$ is an associative division algebra (the selection $\mathbb{F} = \C$ is established below by cyclic-dynamics analysis).

\textit{Step 1: Coordinate map from a basis.}
Fix a basis $\mathcal{B} = \{e_0, \ldots, e_{N-1}\}$. For each $x \in \mathcal{X}$, define moduli $|c_k(x)|^2 := 1 - K(x, e_k) \in [0,1]$. By Theorem~\ref{thm:complexity-constraint}, $G$ is a compact Lie group, so $\mathcal{X} = G/H$ is a compact manifold. The moduli $|c_k|$ alone do not determine a point in $\mathbb{F}P^{N-1}$, since relative phases between components are invisible to a single basis.

\textit{Step 2: Phase recovery from a complementary basis.}
The full $K$-profile separates states by Identity (Theorem~\ref{thm:src-master}(S1)). A finite separating set adapted to the projective phase geometry is supplied by a complementary basis $\mathcal{B}' = \{e'_0, \ldots, e'_{N-1}\}$ with $K(e_j, e'_k) \in (0,1)$ for all $j,k$ (such bases exist by Lemma~\ref{lem:intermediate-K} and transitivity of $G$ on $\mathfrak{B}$). The $2N$-tuple
\[
\Sigma(x) \;=\; \big(K(x, e_k),\; K(x, e'_k)\big)_{k=0}^{N-1} \;\in\; [0,1]^{2N}
\]
determines $x$ uniquely once the field $\mathbb{F}$ and inner product structure are fixed (Steps below) and is $G$-equivariant (Theorem~\ref{thm:complexity-constraint}).

\textit{Structural identification.} $\mathcal{X} = G/H$ is a compact homogeneous Riemannian manifold (Theorem~\ref{thm:complexity-constraint}, Lemma~\ref{lem:compactness-from-capacity}).

\emph{Two-point homogeneity.} For pairs $(x_1, y_1), (x_2, y_2)$ with $K(x_1, y_1) = K(x_2, y_2) = c$, we exhibit $g \in G$ with $g(x_1) = x_2$, $g(y_1) = y_2$: (i) $G$ acts transitively on $\mathcal{X}$ (Basis Isotropy plus Permutation Invariance); (ii) the $K$-level set $L_c(x) := \{y : K(x, y) = c\}$ is a single $G_x$-orbit. \emph{Proof of (ii):} non-emptiness follows from Corollary~\ref{cor:k-image-full}. For transitivity, given $y_1, y_2 \in L_c(x)$, consider the involution $\tau$ on $\{x, y_1, y_2\}$ swapping $y_1 \leftrightarrow y_2$ and fixing $x$. The only $K$-values $\tau$ acts on are the cross-pairs $K(x, y_i)$, both equal to $c$ since $y_1, y_2 \in L_c(x)$, and the within-pair $K(y_1, y_2)$, which $\tau$ maps to itself by symmetry of $K$. The value $K(y_1, y_2)$ itself plays no role: $\tau$ exchanges the pair with itself, so it survives any value $K(y_1, y_2) \in [0, 1]$, including $K(y_1, y_2) \neq 0$. By Structural Leibniz (Theorem~\ref{thm:src-master}(S4)), $\tau$ extends to a global $K$-preserving automorphism $g \in G$ with $g(x) = x$ (so $g \in G_x$) and $g(y_1) = y_2$. Hence $G_x$ acts transitively on $L_c(x)$; (iii) compose $g_1$ taking $x_1 \to x_2$ with $g_2 \in G_{x_2}$ moving $g_1(y_1) \to y_2$.

By the Cartan classification of compact two-point homogeneous (rank-one symmetric) manifolds \cite{kobayashi1969foundations,wang1952twopoint}, the options are $S^n, \R P^n, \C P^n, \mathbb{H} P^n, \mathrm{OP}^2$. Spheres are excluded by $N \geq 3$ mutually $K = 1$ points; $\mathrm{OP}^2$ by non-associativity. Hence $\mathcal{X} \cong \mathbb{F}P^{N-1}$ for $\mathbb{F} \in \{\R, \C, \mathbb{H}\}$; the selection $\mathbb{F} = \C$ is established below by cyclic-dynamics analysis.

\textit{Coordinate form of $\sigma$.} Once $\mathcal{X} \cong \mathbb{F}P^{N-1}$ is established, $K$ pulls back to a $G$-invariant function on $\mathbb{F}P^{N-1}$. By two-point homogeneity it depends only on $|\langle\psi|\phi\rangle|^2$, with the boundary values $K(b_i, b_j) = 1 - \delta_{ij}$ fixing $K(\psi, \phi) = 1 - |\langle\psi|\phi\rangle|^2$. The projective coordinate map $\sigma(x) = [c_0(x) : \cdots : c_{N-1}(x)]$ with $|c_k|^2 = 1 - K(x, e_k)$ and unit norm $\sum_k |c_k|^2 = 1$; relative phases are recovered from any complementary basis $\mathcal{B}'$ via $K(x, e'_k) = 1 - |\sum_j \bar{t}_{jk} c_j|^2$ with $t_{jk} = \langle e_j | e'_k\rangle$. Injectivity needs only that $\mathcal{B}'$ be $K$-non-degenerate against $\mathcal{B}$ (Lemma~\ref{lem:intermediate-K}); the specific uniform-overlap value $|t_{jk}|^2 = 1/N$ arises later in the derived $\mathbb{F} = \mathbb{C}$ case (Lemma~\ref{lem:sheaf-complex}).

\textit{Step 3: Injectivity.}
If $\sigma(x) = \sigma(y)$, then within the realization $\mathcal{X} \cong \mathbb{F}P^{N-1}$ (Structural Identification above) the projective coordinates $[c_0(x): \cdots : c_{N-1}(x)]$ and $[c_0(y) : \cdots : c_{N-1}(y)]$ coincide, so $x = y$ as projective rays. Equivalently, $K(x, z) = K(y, z)$ holds for all $z \in \mathcal{X}$ on the realization (the Fubini--Study form is a function of the projective coordinates), and Identity (Theorem~\ref{thm:src-master}(S1)) then gives $x = y$ at the abstract level.

\textit{Step 4: $G$-equivariance and surjectivity.}
Since $K(g \cdot x, g \cdot e_k) = K(x, e_k)$ for all $g \in G$, coordinates transform covariantly: $\sigma(g \cdot x) = g \cdot \sigma(x)$. The image $\sigma(\mathcal{X}) \subset \mathbb{F}P^{N-1}$ is compact and $G$-invariant, containing all basis vectors. By Completeness (Theorem~\ref{thm:src-master}(S2)), a point in $\mathbb{F}P^{N-1} \setminus \sigma(\mathcal{X})$ would define a $K$-profile not represented in $\mathcal{X}$, violating completeness.

\textit{Step 5: Homeomorphism.}
$\sigma$ is a continuous bijection between compact Hausdorff spaces, hence a homeomorphism. By Frobenius, $\mathbb{F} \in \{\R, \C, \mathbb{H}\}$.
\end{proof}

\begin{lemma}[Signature sheaf and global sections]
\label{lem:signature-sheaf}
Theorem~\ref{thm:points-sections} admits a sheaf-theoretic reformulation that exposes the gauge structure used in \S\ref{sec:gauge}. Define a sheaf $\mathcal{S}$ over the basis manifold $\mathfrak{B}$ with stalks $\mathcal{S}_{\mathcal{B}} := \mathbb{F}P^{N-1}$ (normalized coefficient vectors modulo overall $\mathbb{F}^\times$ phase). A state $x \in \mathcal{X}$ assigns to each basis $\mathcal{B}$ a signature $\sigma_{\mathcal{B}}(x) \in \mathcal{S}_{\mathcal{B}}$ with $|c_k(x)|^2 = 1 - K(x, b_k)$ and relative phases fixed by evaluations against a complementary basis (Theorem~\ref{thm:points-sections}, Step~2). For $\mathcal{B}, \mathcal{B}'$ related by $g \in G$ with $g(\mathcal{B}) = \mathcal{B}'$, the transition map $\phi_g: \mathcal{S}_{\mathcal{B}} \to \mathcal{S}_{\mathcal{B}'}$ is the projective action of $g$ on coefficients; transition maps satisfy the cocycle condition $\phi_{gh, \mathcal{B}} = \phi_{g, h\mathcal{B}} \circ \phi_{h, \mathcal{B}}$ (group property of $G$). Points of $\mathcal{X}$ correspond bijectively to global sections: $\mathcal{X} \cong \Gamma(\mathfrak{B}, \mathcal{S})$, and the transition maps classify a line bundle $\mathcal{L} \to \mathfrak{B}$ \cite{abramsky2011sheaf}; for $\mathbb{F} = \C$, Saturation selects the fundamental Borel--Weil representation, giving $\mathcal{L} = \mathcal{O}(1)$ (tautological) with $c_1(\mathcal{O}(1)) = 1 \neq 0$.
\end{lemma}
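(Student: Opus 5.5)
The plan is to build the sheaf directly out of the data already produced in the proof of Theorem~\ref{thm:points-sections}, and then check the three claims in order: the cocycle condition, the bijection $\mathcal{X} \cong \Gamma(\mathfrak{B}, \mathcal{S})$, and the identification of the line bundle.

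\textbf{Stalks and signatures.} For each basis $\mathcal{B} \in \mathfrak{B}$, take the stalk $\mathcal{S}_{\mathcal{B}} = \mathbb{F}P^{N-1}$ to be the target of the coordinate map $\sigma_{\mathcal{B}}$ constructed in Steps~1--2 of Theorem~\ref{thm:points-sections}, with $\mathcal{B}$ as the reference basis. For $x \in \mathcal{X}$, the signature $\sigma_{\mathcal{B}}(x)$ is well defined because the moduli $1-K(x,b_k)$ and the complementary-basis evaluations are $K$-data. The topology on $\mathfrak{B}$ and on the étalé space will be inherited from the $K$-metric topology (Lemma~\ref{lem:compactness-from-capacity}), so that $\mathcal{S}$ is locally constant.

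\textbf{Transition maps and cocycle condition.} For $g \in G$ with $g\mathcal{B} = \mathcal{B}'$, define $\phi_{g,\mathcal{B}} := \sigma_{\mathcal{B}'} \circ g \circ \sigma_{\mathcal{B}}^{-1}$. Equivariance of $\sigma$ (Step~4 of Theorem~\ref{thm:points-sections}) shows that this is the projective action of $g$ on coefficient vectors. The cocycle identity $\phi_{gh,\mathcal{B}} = \phi_{g,h\mathcal{B}} \circ \phi_{h,\mathcal{B}}$ is then immediate: the intermediate $\sigma_{h\mathcal{B}}^{-1}\sigma_{h\mathcal{B}}$ cancels and $g\circ h = gh$. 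Well-definedness when several $g$ carry $\mathcal{B}$ to $\mathcal{B}'$ reduces to the stabilizer $G_{\mathcal{B}}$ acting on $\mathcal{S}_{\mathcal{B}}$ through $\phi$. This is exactly the gauge ambiguity: a diagonal phase together with a permutation, the latter by Permutation Invariance (Theorem~\ref{thm:permutation-invariance}). I will record this stabilizer action rather than quotient by it.

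\textbf{Points as global sections.} Send $x \mapsto (\sigma_{\mathcal{B}}(x))_{\mathcal{B}}$. This is a compatible family by the equivariance just used, so it is a global section. The map is injective by Identity (S1), since one stalk already determines $x$ (Step~3 of Theorem~\ref{thm:points-sections}). For surjectivity, note that a compatible family is determined by its value at a single $\mathcal{B}_0$ because $G$ acts transitively on $\mathfrak{B}$ (B). That value is a point of $\mathbb{F}P^{N-1}$, which is $\sigma_{\mathcal{B}_0}(x)$ for some $x$ by the surjectivity step of Theorem~\ref{thm:points-sections}, which uses Completeness (S2).

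\textbf{Line bundle and Chern class.} The final clause is the main obstacle. The approach is to restrict the transition data to the $U(1)$ overall-phase part. Lifting the projective action of $G \cong PU(N)$ to the normalized vectors gives a principal $U(1)$-bundle, namely $S^{2N-1} \to \C P^{N-1}$ pulled back along the orbit map. The associated line bundle is the tautological one up to the choice of representation; Borel--Weil lists the homogeneous line bundles on $\C P^{N-1}$ as $\mathcal{O}(m)$. I would fix $m = 1$ by arguing that the coefficient map $\sigma$ is linear in the amplitudes $c_k$, that is, the fundamental representation. Any $m \neq 1$ would require amplitudes that are degree-$m$ polynomials, and these would introduce structure beyond $K$, violating Saturation. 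Then $c_1(\mathcal{O}(1)) = 1$ is the standard generator of $H^2(\C P^{N-1};\Z)$, computed by integrating the Fubini--Study form over a line $\C P^1$. The delicate points are two. First, the statement is posed over $\mathfrak{B}$ rather than $\mathcal{X}$, so the bundle must be pushed or pulled along the fibration $G/G_{\mathcal{B}} \to$ stalk. Second, the Saturation argument for $m=1$ is heuristic. I would present it as a selection criterion, in line with the paper's usage elsewhere.
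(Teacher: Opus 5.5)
Your outline matches the paper's proof, which is a brief sketch: the construction packages Theorem~\ref{thm:points-sections}, the cocycle is the group law, and the line-bundle identification is standard over $\mathfrak{B}\cong U(N)/T^N$, with Saturation picking the fundamental representation. Your heuristic $m=1$ step is therefore no weaker than the paper's. The problem is the explicit transition map you write down, which breaks the central claim $\mathcal{X}\cong\Gamma(\mathfrak{B},\mathcal{S})$.

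Your $\phi_{g,\mathcal{B}}:=\sigma_{g\mathcal{B}}\circ g\circ\sigma_{\mathcal{B}}^{-1}$ is an \emph{active} map: it moves the state. For the signature family of a fixed $x$ to be compatible you need $\phi_{g,\mathcal{B}}(\sigma_{\mathcal{B}}(x))=\sigma_{g\mathcal{B}}(x)$. The left side is $\sigma_{g\mathcal{B}}(gx)$, so by injectivity this forces $gx=x$. In fact $K(gy,gb_k)=K(y,b_k)$, and likewise against the complementary basis. So if $g\mathcal{B}$ carries the $g$-transported coordinates, transport of structure gives $\sigma_{g\mathcal{B}}\circ g=\sigma_{\mathcal{B}}$, and your $\phi_{g,\mathcal{B}}$ is the identity. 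If instead $g\mathcal{B}$ has an independently fixed coordinatization, $\phi_{g,\mathcal{B}}$ is only the monomial (phase times permutation) relabeling between the two ordered bases. In neither case does it encode the change of coordinates of a fixed state.

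Recording the stabilizer action makes this worse. If compatibility is required for every $g$, a section's value at $\mathcal{B}$ must be fixed by all of $G_{\mathcal{B}}$, which acts by monomial matrices. The diagonal torus fixes only the coordinate points $[b_k]$, and the permutations move those. So for $N\geq 2$ there would be no global sections at all.

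The repair is to use the passive change of coordinates, $\phi_{\mathcal{B}\to\mathcal{B}'}:=\sigma_{\mathcal{B}'}\circ\sigma_{\mathcal{B}}^{-1}$.
\begin{itemize}
\item With transported coordinates on $g\mathcal{B}$, this equals $\sigma_{\mathcal{B}}\circ g^{-1}\circ\sigma_{\mathcal{B}}^{-1}$. That is the projective action of $g^{-1}$ on coefficient vectors: coefficients transform contragrediently, and this is the sense in which the statement's ``projective action of $g$'' holds.
\item The cocycle identity still holds by the same cancellation you used.
\item $(\sigma_{\mathcal{B}}(x))_{\mathcal{B}}$ is now compatible by construction.
\item A stabilizer element merely changes the gauge (the ordering and phase representatives chosen for the $b_k$); it no longer constrains sections.
\item Your injectivity/surjectivity argument via (S1), (S2) and (B) then goes through unchanged.
\end{itemize}

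The correction also shows where the line bundle lives. The projective gluing is now canonically trivial, since every stalk is just $\mathcal{X}$ in other coordinates. What cannot be chosen continuously over $\mathfrak{B}\cong U(N)/T^N$ is a phase representative of $b_k$. That is a section of the bundle with fiber $\C\, Ue_k$ over $[U]$, which is the pullback of the tautological bundle along $[U]\mapsto[Ue_k]$ and has nonzero $c_1$. This places the bundle on $\mathfrak{B}$ directly and removes your pull-back worry.

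One sign point: standard conventions call the tautological bundle $\mathcal{O}(-1)$, with $c_1=-1$, and $\mathcal{O}(1)$ its dual. You share this conflation with the statement, and it does not affect $c_1\neq 0$.
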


\begin{proof}
The stalk, signature, and transition-map definitions package the coordinate construction of Theorem~\ref{thm:points-sections} (Steps 1--2), which establishes the $\mathcal{X} \cong \Gamma(\mathfrak{B}, \mathcal{S})$ correspondence. Transition unitarity inherits from $G \subseteq \mathrm{Aut}(\mathcal{X}, K)$; the cocycle condition is the group property of $G$. The line bundle classification is standard sheaf cohomology over $\mathfrak{B} \cong U(N)/T^N$; Saturation pins down the fundamental representation.
\end{proof}

\begin{lemma}[Cyclic dynamics force $\C$ for $N \geq 3$]
\label{lem:sheaf-complex}
Let $(\mathcal{X}, K)$ satisfy Axioms~\ref{ax:finite}--\ref{ax:relational} with $N \geq 3$. Then $\mathbb{F} = \C$. (See also \cite{goyal2010origin} for an independent derivation of complex amplitudes from complementary symmetry/information primitives.)
\end{lemma}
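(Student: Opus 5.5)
By Theorem~\ref{thm:points-sections} we already know $\mathcal{X} \cong \mathbb{F}P^{N-1}$ with $\mathbb{F} \in \{\R, \C, \mathbb{H}\}$ and $K(\psi,\phi) = 1 - |\langle\psi|\phi\rangle|^2$. Theorem~\ref{thm:quaternion-obstruction} excludes $\mathbb{H}$. The remaining task is therefore to exclude $\R$. The plan is to show that $\R P^{N-1}$ cannot carry the cyclic structure that Structural Leibniz forces, together with the realization demanded by Completeness.

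\emph{Step 1: obtain the cyclic generator.} Fix a basis $\mathcal{B} = \{e_0,\dots,e_{N-1}\}$. By Permutation Invariance (Theorem~\ref{thm:permutation-invariance}), the $N$-cycle $e_k \mapsto e_{k+1 \bmod N}$ extends to some $\pi \in G$. On the realization, $G$ acts by projective isometries (Wigner), so $\pi$ lifts to a linear map that cyclically permutes the basis up to scalars in $\mathbb{F}$. After rephasing we may take it to be the cyclic shift matrix. Its eigenvalues are the $N$-th roots of unity $\omega^{k}$, and its eigenvectors are the Fourier vectors $f_k = N^{-1/2}\sum_j \omega^{jk} e_j$.

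\emph{Step 2: force realization of the Fourier basis.} Each Fourier vector is fixed by $\pi$ as a point of the abstract space. Its $K$-profile has uniform value $1 - 1/N$ on $\mathcal{B}$, and the profile is invariant under the cycle. I would argue, using the convex-combination and extension constructions from the proof of (I) and then (S2), that this profile is $K$-consistent, so it must be realized by some $x \in \mathcal{X}$ fixed by $\pi$. More precisely, the $\pi$-fixed points of $\mathcal{X}$ are exactly the real eigenlines of the cyclic shift. Lemma~\ref{lem:dynamics-constitutive} and Saturation then require these fixed points to make up a full basis, namely the complementary basis connected to $\mathcal{B}$ by the cycle.

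\emph{Step 3: count fixed points over $\R$.} Over $\R$, the shift has real eigenvalues only $1$, together with $-1$ when $N$ is even. Hence it has at most two real eigenlines. All other eigenvalues come in conjugate pairs $\omega^{k}, \omega^{-k}$ spanning 2-planes that contain no invariant line. For $N \geq 3$ this gives fewer than $N$ fixed points. So the complementary Fourier basis, with $|t_{jk}|^2 = 1/N$, is not realized in $\R P^{N-1}$, contradicting Step 2. An alternative route: the restriction of $\pi$ to each such 2-plane is a rotation by $2\pi k/N$, which for $N\ge3$ is not an involution. Two-point homogeneity via (S4) produces an automorphism fixing $x$ and exchanging two points at equal $K$-distance. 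I would check that on $\R P^{N-1}$ the stabilizer $O(N-1)\times O(1)$ is too small to implement the required order-$N$ rotation of a phase direction. For $N=2$, $\omega = -1 \in \R$, which is consistent with the special status of $N=2$. Over $\C$, all $N$ eigenlines exist, so the construction succeeds and $\mathbb{F} = \C$.

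\emph{Main obstacle.} The hard step is Step 2: turning Saturation into the claim that the set of fixed points of $\pi$ must form a complete basis. (S2) only guarantees that individual consistent profiles are realized. I would first try to show that the profile of each complex $f_k$, restricted to $\R P^{N-1}$, is $K$-consistent in the sense of the amalgam extension of (S4). Then (S2) would force it to be realized by a real point. That real point would have to be invariant under $\pi$, contradicting Step 3. If this fails, the fallback is the stabilizer-dimension argument from Step 3.
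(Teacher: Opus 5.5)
Your outline follows the paper's route. Permutation Invariance supplies the cyclic $\pi$, and its eigenvectors are the Fourier vectors $f_k$. Completeness (S2) is used to force their realization, and $\omega\notin\R$ for $N\geq 3$ rules out $\R$, with $\mathbb{H}$ handled by Theorem~\ref{thm:quaternion-obstruction}.

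The weak point is where you put it, Step~2, and as written it does not go through. If the profile there is the \emph{basis} profile (uniform value $1-1/N$ on $\mathcal{B}$), realizing it proves nothing. The real, $\pi$-fixed point $f_0$ already realizes it. The convex-combination construction from the proof of (I) produces profiles $1-z^{T}((1-t)e_1e_1^{T}+t\,e_2e_2^{T})z$, not the profile of any $f_k$. And Lemma~\ref{lem:dynamics-constitutive} gives no reason why the $\pi$-fixed points must form a whole basis.

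The paper's step~(c) has the matching soft spot. It realizes basis-level data (uniform overlaps plus orthogonality, i.e.\ an MUB partner of $\mathcal{B}$) and then asserts that modulus uniformity forces the realizers to be $\pi$-eigenvectors. That fails at $N=4$. The real Hadamard basis $\{(1,1,1,1),(1,-1,1,-1),(1,1,-1,-1),(1,-1,-1,1)\}/2$ is unbiased to $\mathcal{B}$ and is even permuted as a set of rays by the shift. Yet its last two members are not shift eigenvectors, and the parity argument of step~(d) only covers odd $N$.

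The repair is the single-profile version you sketch under ``main obstacle'', done with the \emph{full} profile.
Take $k=1$ (for $N\geq 3$, $1\notin\{0,N/2\}$) and set $p(z):=1-|\langle f_1|z\rangle|^2$ for $z\in\R P^{N-1}$.
The profile $p$ is $K$-consistent simply because $\R P^{N-1}\subset\C P^{N-1}$ is a $K$-preserving extension of the same capacity; no amalgam is needed.
It is invariant under $\pi$, because $f_1$ is an eigenvector with unimodular eigenvalue. On the realization, take $\pi$ to be the shift matrix itself: for even $N$, rephasing alone cannot always remove a sign twist in the Wigner lift.
So a realizer $x$ satisfies $K(\pi^{-1}x,\cdot)=K(x,\pi\,\cdot)=p(\pi\,\cdot)=p=K(x,\cdot)$, and (S1) gives $\pi x=x$.
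The only real $\pi$-fixed rays are $f_0$ and, for even $N$, $f_{N/2}$. Both are excluded, since $p(f_0)=p(f_{N/2})=1$ by orthogonality while $K$ vanishes on the diagonal.
That mismatch, not ``fewer than $N$ fixed points'', is the contradiction. It treats all $N\geq 3$ uniformly and replaces both the eigenvector identification of step~(c) and the parity argument of step~(d).

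Drop the stabilizer fallback. $\pi$ fixes $[f_0]$ and already lies in its stabilizer $O(1)\times O(N-1)$. Moreover $\R P^{N-1}$ is two-point homogeneous, so the instance of (S4) you invoke holds there and cannot separate $\R$ from $\C$.

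One caution, which applies equally to the paper's step~(c): under the reading of (S2) that makes this work, the cyclic structure is inessential. For any unit $u+iv$ with $u,v\in\R^N$ independent, the restricted profile is $1-z^{T}(uu^{T}+vv^{T})z$, a rank-two form that no real $x$ realizes. The same move also applies to the $K$-preserving, capacity-$N$ inclusion $\C P^{N-1}\subset\mathbb{H}P^{N-1}$. There a unit $a+bj$ with $a,b\in\C^N$ independent has profile $1-z^{\dagger}(aa^{\dagger}+bb^{\dagger})z$ on $\C^N$, so $\C$ is excluded as well. Your argument is therefore as strong as the paper's. But it discriminates $\R$ from $\C$ only if the admissible extensions in (S2) are restricted in a way the paper never states.
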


\begin{proof}
By Theorem~\ref{thm:points-sections}, $\mathcal{X} \cong \mathbb{F}P^{N-1}$ for some $\mathbb{F} \in \{\R, \C, \mathbb{H}\}$. By Permutation Invariance (Theorem~\ref{thm:permutation-invariance}), the cyclic permutation of $S_N$ acts on a basis $\mathcal{B} = \{e_0, \ldots, e_{N-1}\}$ via $\pi \in \mathrm{Aut}(\mathcal{X}, K)$ with $\pi^N = \mathrm{id}$ and $\pi(e_j) = e_{j+1 \bmod N}$.

\emph{(a) Geometric construction of the uniform-amplitude state.} Working within the realization $\mathcal{X} \cong \mathbb{F}P^{N-1}$ already supplied by Theorem~\ref{thm:points-sections}, define
\[
f_0 \;:=\; N^{-1/2} \sum_{j=0}^{N-1} e_j \;\in\; \mathbb{F}^N.
\]
The coefficients $N^{-1/2}$ lie in $\R \subset \mathbb{F}$ for any $\mathbb{F} \in \{\R, \C, \mathbb{H}\}$, so $f_0$ is a unit vector in $\mathbb{F}^N$ for any of the three candidate fields. The Fubini--Study $K$-form on $\mathbb{F}P^{N-1}$ then gives
\[
K(f_0, e_j) \;=\; 1 - |\langle f_0, e_j\rangle|^2 \;=\; 1 - 1/N \quad \text{for all } j.
\]
The uniform-amplitude state $f_0 \in \mathcal{X}$ thus exists in each candidate $\mathbb{F}P^{N-1}$ by direct construction. This step does \emph{not} discriminate among $\R, \C, \mathbb{H}$; the field selection is forced in steps~(b)--(d), where the cyclic Fourier basis built on top of $f_0$ requires $\omega \in \mathbb{F}$.

\emph{(b) Identification of the cyclic Fourier basis over $\overline{\mathbb{F}}$.} Let $\omega := e^{2\pi i/N}$. The eigenvectors of $\pi$ over the algebraic closure $\overline{\mathbb{F}}$ are
\[
f_k \;:=\; N^{-1/2} \sum_{j=0}^{N-1} \omega^{jk}\, e_j \qquad (k = 0, \ldots, N-1),
\]
satisfying $\pi f_k = \omega^{-k} f_k$, mutual $\overline{\mathbb{F}}$-orthonormality, and uniform $K$-relation $K(e_j, f_k) = 1 - 1/N$ to $\mathcal{B}$. For $f_k$ to lie in $\mathbb{F}^N$ rather than just $\overline{\mathbb{F}}^N$, the Fourier coefficients $\omega^{jk}$ must lie in $\mathbb{F}$, equivalently $\omega \in \mathbb{F}$.

\emph{(c) Realization requirement.} The uniform-overlap profile $K(\cdot, e_k) = 1 - 1/N$ together with mutual orthogonality is $K$-consistent: $\overline{\mathbb{F}} \supseteq \mathbb{C}$ for each $\mathbb{F} \in \{\R, \C, \mathbb{H}\}$, so $\omega \in \overline{\mathbb{F}}$ in every candidate field, and the projection vectors $f_k$ over $\overline{\mathbb{F}}$ assemble into an MUB partner of $\mathcal{B}$ realising the profile. By Completeness (Theorem~\ref{thm:src-master}(S2)), each such profile is realised in $\mathcal{X}$, and Permutation Invariance forces the realising states to assemble into a single $\pi$-equivariant basis (the orbit of one Fourier eigenstate under $\pi$ generates the rest). Modulus uniformity $|\langle e_j, f_k\rangle|^2 = 1/N$ together with $\pi$'s cyclic action on $\mathcal{B}$ forces the $f_k$ to be simultaneous $\pi$-eigenvectors in $\mathrm{span}_{\overline{\mathbb{F}}}(\mathcal{B})$, so the realised basis coincides up to phase with $\{f_k\}$. Hence $\{f_k\} \subset \mathcal{X}$, and by step~(b) the Fourier coefficients $\omega^{jk}$ must lie in $\mathbb{F}$, i.e., $\omega \in \mathbb{F}$. The obstruction at $\mathbb{F} = \R$ is then the direct one of step~(d): $\omega = e^{2\pi i/N}$ is non-real for $N \geq 3$, so $\mathbb{F} \supsetneq \R$.

\emph{(d) Direct obstruction at $\mathbb{F} = \R$.} For $\mathbb{F} = \R$ and odd $N \geq 3$, (c) fails by parity: a real basis with $|\langle e_j, f_k\rangle|^2 = 1/N$ uniformly forces $\langle e_j, f_k\rangle = \pm 1/\sqrt{N}$, hence $f_k = (s^k_0, \ldots, s^k_{N-1})/\sqrt{N}$ with $s^k_j \in \{\pm 1\}$. Orthogonality $\sum_j s^k_j s^l_j = 0$ requires a sum of $N$ values each $\pm 1$ to vanish, impossible since the parity matches $N$'s. For even $N \geq 4$, the diagonalizability obstruction of (b) suffices: $\omega = e^{2\pi i/N}$ is non-real for all $N \geq 3$. Hence $\mathbb{F} \supsetneq \R$ for $N \geq 3$. By Frobenius's classification combined with Theorem~\ref{thm:quaternion-obstruction}, $\mathbb{F} = \C$.
\end{proof}

\begin{theorem}[Cyclic Dynamics from Finite Graded Equality]
\label{thm:dynamics-derived}
Let $(\mathcal{X}, K)$ satisfy Axioms~\ref{ax:finite}--\ref{ax:relational}
(Finite Capacity; Self-Referential Consistency) with capacity
$N \geq 2$. Then there exists $\pi \in \mathrm{Aut}(\mathcal{X}, K)$ with
$\pi^N = \mathrm{id}$, $\pi \neq \mathrm{id}$, and $\pi$ acting
as a cyclic $N$-permutation on some basis. The companion field-selection statement (for $N \geq 3$, $\mathbb{F} = \C$) is Lemma~\ref{lem:sheaf-complex}; together they determine the form of the dynamical structure (cyclic evolution of order $N$, complex coefficients, and, after band-limited interpolation, continuous unitary time). Permutation Invariance (Theorem~\ref{thm:permutation-invariance}) supplies the underlying $S_N$-symmetry.
\end{theorem}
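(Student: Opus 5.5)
The plan is to obtain $\pi$ directly from Permutation Invariance (Theorem~\ref{thm:permutation-invariance}), that is, from Structural Leibniz (Theorem~\ref{thm:src-master}(S4)). The only additional work is to confirm that the extension has order exactly $N$ and is not the identity.

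\emph{Step 1 (a basis exists and has $N$ elements).} By Axiom~\ref{ax:finite}, fix a basis $\mathcal{B} = \{b_0, \ldots, b_{N-1}\}$ with $K(b_i, b_j) = 1 - \delta_{ij}$. Let $c \in S_N$ be the $N$-cycle $j \mapsto j+1 \bmod N$, and define the finite configuration map $\sigma_c : \mathcal{B} \to \mathcal{B}$ by $b_j \mapsto b_{j+1}$. It is a $K$-symmetry of $\mathcal{B}$, because the discrete kernel is invariant under every relabelling. Theorem~\ref{thm:permutation-invariance} then gives some $\tilde\pi \in G$ with $\tilde\pi|_{\mathcal{B}} = \sigma_c$. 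Since $N \geq 2$, $\tilde\pi$ moves $b_0$, so $\tilde\pi \neq \mathrm{id}$. It acts on $\mathcal{B}$ as a cyclic $N$-permutation by construction.

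\emph{Step 2 (forcing $\pi^N = \mathrm{id}$).} This is the main obstacle. The extension supplied by (S4) is unique only up to the pointwise stabilizer $G_{\mathcal{B}}$, so $\tilde\pi^N$ fixes $\mathcal{B}$ pointwise but need not be the identity. I would first try to choose the extension canonically. The stabilizer $G_{\mathcal{B}}$ is a compact group (Theorem~\ref{thm:complexity-constraint}) and it is normalized by $\tilde\pi$. So the set of extensions is the coset $\tilde\pi G_{\mathcal{B}}$, and the goal is an element $\pi$ of this coset with $\pi^N = 1$.

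The cleanest way to get one is to enlarge the configuration. Take $C := \mathcal{B} \cup \{f_0, \ldots, f_{N-1}\}$, where $f_k$ are states with uniform overlaps against $\mathcal{B}$. Their existence comes from Completeness (Theorem~\ref{thm:src-master}(S2)) together with the convex-combination extension technique of (I). Arrange the $f_k$ so that the map $b_j \mapsto b_{j+1}$, $f_k \mapsto f_k$ is a $K$-symmetry of $C$. Applying (S4) to $C$ gives an extension $\pi$ such that $\pi^N$ fixes both $\mathcal{B}$ and all the $f_k$ pointwise.

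If (S4) cannot be applied this way before the field is known, I would fall back on Step~3 below. I expect this construction of a consistent Fourier-type partner configuration to be where the real work lies. For $N=2$ the issue is easier: a single uniform-overlap state plus $\mathcal{B}$ already suffices.

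\emph{Step 3 (concluding).} Show that a $K$-automorphism fixing $\mathcal{B}$ and a complementary configuration pointwise is the identity. This uses the separating $2N$-tuple $\Sigma$ from Step~2 of Theorem~\ref{thm:points-sections}, together with Identity (S1), which says equal $K$-profiles force equal points. Applying this to $\pi^N$ gives $\pi^N = \mathrm{id}$.

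As a fallback, I would work in the realization $\mathcal{X} \cong \mathbb{F}P^{N-1}$ and take $\pi$ to be the projective class of the permutation matrix $P_c$. This matrix has real entries, so it is defined over every candidate $\mathbb{F}$. It preserves the Fubini--Study form of $K$, satisfies $P_c^N = I$, and acts as the $N$-cycle on $\mathcal{B}$. Since $\mathcal{X}$ is identified with $\mathbb{F}P^{N-1}$, this map lies in $\mathrm{Aut}(\mathcal{X}, K)$, which yields the theorem.
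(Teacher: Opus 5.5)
Your proposal is correct, but only through its closing fallback paragraph, which is essentially the paper's proof. The paper passes to the realization $\mathcal{X}\cong\mathbb{F}P^{N-1}$ (Theorem~\ref{thm:points-sections}), takes the cyclic $\pi$ supplied by Permutation Invariance, identifies its lift with the real permutation matrix $P$, and reads off $P^N=I$ and $P^ke_0=e_k\neq e_0$. Your phrasing, which \emph{defines} $\pi:=[P_c]$ and checks directly that it preserves $K$, is the cleaner version. The worry in your Step~2 is genuine, and the paper's ``the lift of $\pi$ is $P$'' silently replaces an arbitrary (S4)-extension by $[P]$. For linear lifts the worry is harmless: a unitary or orthogonal $U$ cycling the rays of $\mathcal{B}$ has the form $P_cD$ with $D$ diagonal, and $(P_cD)^N=(\prod_j d_j)\,I$. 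It is not harmless in general. Let $\kappa$ be complex conjugation in $\mathcal{B}$-coordinates and $N$ odd. Then $[P_c\kappa]$ is a $K$-automorphism of $\C P^{N-1}$ that cycles $\mathcal{B}$, yet $(P_c\kappa)^N=\kappa\neq\id$. Since the statement is existential, choosing the extension is exactly the right fix.

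Your primary route (Steps~2--3) does not go through and should be dropped, for three reasons.
\begin{enumerate}
\item For ``cycle $\mathcal{B}$, fix each $f_k$'' to be a $K$-symmetry, $K(f_k,b_j)$ must be independent of $j$. So an $N$-element partner basis must be mutually unbiased with $\mathcal{B}$. Over $\R$ with odd $N\geq3$ none exists (Lemma~\ref{lem:sheaf-complex}, step~(d)). Moreover, the paper obtains the Fourier basis as eigenvectors of this very $\pi$, so presupposing it is circular.
\item The convex-combination trick of (I) cannot produce the $f_k$ either. On $\C P^{N-1}$, $\sum_j t_jK(e_j,\cdot)$ is the profile of the diagonal mixture $\sum_j t_j\ketbra{e_j}{e_j}$, not of any pure state; the uniform choice is the constant $1-1/N$. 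Such profiles carry no relative-phase information.
\item Step~3 relies on injectivity of $\Sigma$, which fails. $\Sigma$ takes values in a $(2N-2)$-dimensional affine space, since $\sum_kK(x,e_k)=\sum_kK(x,e'_k)=N-1$. Invariance of domain forbids a continuous injection of the closed $(2N-2)$-manifold $\C P^{N-1}$ into such a space. Concretely, for $N=3$ the states $(1,1,i)/\sqrt3$ and $(1,-i,1)/\sqrt3$ have identical profiles against $\mathcal{B}$ and its Fourier partner.
\end{enumerate}
What Step~3 actually needs is that the pointwise stabilizer of your configuration is trivial, and that has to be checked by hand. Since $\kappa$ fixes $\mathcal{B}\cup\{f_0\}$, a single uniform-overlap state does not suffice over $\C$, contrary to your $N=2$ remark. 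For $N\geq3$ the full Fourier configuration does have trivial stabilizer, but that is only available once $\mathbb{F}=\C$ is known.
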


\begin{proof}
\textit{Step 1: State space identification.}
By Theorem~\ref{thm:points-sections}, the state space is
$\mathcal{X} \cong \mathbb{F}P^{N-1}$ for some associative division
algebra $\mathbb{F} \in \{\R, \C, \mathbb{H}\}$. The symmetry group
$G = \mathrm{Aut}(\mathcal{X}, K)$ acts on $\mathcal{X}$ via the
projective representation on $V = \mathbb{F}^N$.

\textit{Step 2: Cyclic permutation exists in $G$.}
By Theorem~\ref{thm:permutation-invariance} (Permutation Invariance), $S_N$ acts
on the atoms of each basis via elements of $G$. In particular, the
cyclic permutation $\sigma = (0\;1\;2\;\cdots\;N{-}1) \in S_N$
corresponds to some $\pi \in G$.

\textit{Step 3: $\pi$ has order $N$ on $\mathcal{X}$.}
The lift of $\pi$ to $V = \mathbb{F}^N$ is the permutation matrix
$P$ with $P_{jk} = \delta_{j,\,k+1\!\bmod N}$. Since $P^N = I_N$
(the identity in $GL_N(\mathbb{F})$), the projection to
$\mathbb{F}P^{N-1}$ gives $\pi^N = \mathrm{id}$. For
$0 < k < N$: $P^k e_0 = e_k \neq e_0$ in $\mathbb{F}P^{N-1}$, so
$\pi^k \neq \mathrm{id}$. Therefore $\mathrm{ord}(\pi) = N$. Field selection ($\mathbb{F} = \C$ for $N \geq 3$) is Lemma~\ref{lem:sheaf-complex}.
\end{proof}

\begin{remark}[Dynamical Rigidity]
\label{rem:axiom3-status}
The formal derivation proceeds via representation theory on the derived manifold $\mathbb{F}P^{N-1}$: Permutation Invariance (Theorem~\ref{thm:permutation-invariance}) supplies the raw material ($S_N$-symmetry within each basis); the theorem's content is a \emph{rigidity result}: this symmetry uniquely determines cyclic dynamics of order $N$, forces $\mathbb{F} = \C$ for $N \geq 3$, and extends to continuous unitary evolution via band-limited interpolation.
\end{remark}

\begin{theorem}[Frobenius Classification: $\C$ is Unique]
\label{thm:frobenius}
The field $\C$ is the unique coefficient field consistent with our axioms. Frobenius's classification \cite{frobenius1878lineare} lists the finite-dimensional associative division algebras over $\R$ as $\R, \C, \mathbb{H}$; $\R$ is excluded by Lemma~\ref{lem:sheaf-complex}, $\mathbb{H}$ by Theorem~\ref{thm:quaternion-obstruction}, and $\mathbb{O}$ by non-associativity. (Division-algebra structure is required: non-zero states need inverse scalar multiplication, and a zero divisor would annihilate states.)
\end{theorem}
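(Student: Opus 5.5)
The plan is to assemble the statement from results already in place rather than prove anything new. The first step is to fix what ``coefficient field'' means here. By Theorem~\ref{thm:points-sections}, $\mathcal{X}\cong\mathbb{F}P^{N-1}$, where $\mathbb{F}$ is the scalar algebra whose projective space carries the two-point homogeneous structure. By Definition~\ref{def:basis}, measurement outcomes are scalars in $\mathbb{F}$.

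I would then argue that $\mathbb{F}$ must be a finite-dimensional associative division algebra over $\R$:
\begin{itemize}
\item \emph{Finite-dimensional over $\R$.} $\mathcal{X}$ is a compact manifold on which the Lie group $G$ of Theorem~\ref{thm:complexity-constraint} acts, so $\dim_\R \mathcal{X}=(N-1)\dim_\R\mathbb{F}$ is finite.
\item \emph{Associative.} This is needed for $\mathbb{F}^N/\mathbb{F}^\times$ to be a well-defined projective space. For $N\ge 3$, non-associative $\mathbb{O}$ admits no $\mathbb{O}P^{N-1}$ beyond $\mathrm{OP}^2$, which is already excluded in the Cartan step.
\item \emph{Division.} Every nonzero scalar must act invertibly on rays. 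A zero divisor $a$ with $ab=0$ would send a nonzero state to $0$, which is not a point of $\mathcal{X}$. This contradicts the faithful $G$-action and Operational Completeness (Theorem~\ref{thm:src-master}(O)).
\end{itemize}
Frobenius's theorem then leaves exactly $\R$, $\C$, $\mathbb{H}$.

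Next comes elimination. $\R$ is excluded for $N\ge 3$ by Lemma~\ref{lem:sheaf-complex}: the cyclic generator from Permutation Invariance has eigenvalue $\omega=e^{2\pi i/N}\notin\R$, and Completeness forces its Fourier eigenbasis to be realized. $\mathbb{H}$ is excluded by Theorem~\ref{thm:quaternion-obstruction}, either by the lack of a canonical imaginary unit or by compositional failure. The only remaining case, $\C$, is consistent with the axioms: $\C P^{N-1}$ with $K=1-|\langle\psi|\phi\rangle|^2$ satisfies Definition~\ref{def:dspace}, has capacity $N$, and has $PU(N)$ acting two-point homogeneously. So $\C$ is unique.

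The main obstacle is the first step, justifying that the scalar structure is genuinely a division algebra rather than merely a ring. The paper's parenthetical argument about zero divisors only bites once scalars are known to act on nonzero state vectors. I would handle this by reading $\mathbb{F}$ directly off the Cartan classification: the compact rank-one symmetric spaces already come labelled by $\R$, $\C$, $\mathbb{H}$ and $\mathbb{O}$, so Frobenius serves as a consistency check rather than as the source of the trichotomy. A secondary subtlety is that $N=2$ is excluded from the uniqueness claim, which I would state explicitly.
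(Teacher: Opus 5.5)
Your proposal follows the paper's argument: obtain the trichotomy $\R,\C,\mathbb{H}$, either from Frobenius or equivalently from the Cartan step already inside Theorem~\ref{thm:points-sections} (which is also where the paper disposes of $\mathrm{OP}^2$). You then discard $\R$ by Lemma~\ref{lem:sheaf-complex} and $\mathbb{H}$ by Theorem~\ref{thm:quaternion-obstruction}, and your division and associativity remarks only expand the paper's parenthetical. Your explicit restriction to $N \geq 3$ is a correct sharpening of the statement as written, since the paper's own Theorem~\ref{thm:n2-static} assigns $\R$ to the closed $N=2$ case.
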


\begin{corollary}[Basis space and full symmetry group]
\label{thm:basis-topology}
With $\mathbb{F} = \C$, the basis space is the complete flag manifold $\mathfrak{B} \cong U(N)/(U(1))^N$ (an orthonormal basis is $U \in U(N)$ modulo the diagonal torus $\mathrm{diag}(e^{i\theta_1},\ldots,e^{i\theta_N})$); $G = \mathrm{Aut}(\mathcal{X}, K) \cong U(N)$ acts on $\C P^{N-1}$ by the standard representation (the \emph{inter-basis} symmetry, strictly larger than the intra-basis cyclic group $\Z_N$ generated by $\pi$). The tautological line bundle over $\mathfrak{B}$ has $c_1 \neq 0$; its Berry-curvature holonomy is $U(1)$-valued, confirming $\C$ over $\R$.
\end{corollary}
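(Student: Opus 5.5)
The plan is to prove the three assertions of the corollary in order: first $G \cong U(N)$ acting by the standard projective representation, then $\mathfrak{B} \cong U(N)/(U(1))^N$, and finally the nonvanishing Chern class of the tautological bundle.

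\emph{Identifying $G$.} By Lemma~\ref{lem:sheaf-complex} and Theorem~\ref{thm:frobenius}, $\mathbb{F}=\C$, so Theorem~\ref{thm:points-sections} gives $\mathcal{X}\cong\C P^{N-1}$ with $K(\psi,\phi)=1-|\langle\psi|\phi\rangle|^2$ (its coordinate form).

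\begin{itemize}
\item Every $U\in U(N)$ preserves $|\langle\psi|\phi\rangle|^2$, so the projective action of $U(N)$ lies in $\mathrm{Aut}(\mathcal{X},K)$.
\item Conversely, any bijection preserving $K$ preserves transition probabilities. Wigner's theorem (one of the imported classical results) then says it is induced by a unitary or an antiunitary map.
\item The scalars $U(1)\subset U(N)$ act trivially, so the faithful group acting on $\mathcal{X}$ is $PU(N)$, extended by complex conjugation. I would phrase ``$G\cong U(N)$'' as: $G$ is generated by the standard representation of $U(N)$ up to phases, together with possibly an antiunitary component.
\end{itemize}

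The main obstacle is reconciling this with the literal statement $G\cong U(N)$. I would handle it in two steps. First, identify the identity component $G_0$ with $PU(N)$, whose natural lift to the state vectors is $U(N)$. Second, note that the antiunitary part is discrete and does not affect the inter-basis action on $\mathfrak{B}$. Strict containment of $\Z_N$ is then trivial, since $\Z_N$ is finite and $G$ has positive dimension (Theorem~\ref{thm:complexity-constraint}).

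\emph{Basis space.} A basis is a set of $N$ points with pairwise $K=1$, i.e.\ $N$ mutually orthogonal rays. Choosing unit representatives gives an orthonormal frame, that is, the columns of some $U\in U(N)$. Rephasing each column is exactly right multiplication by the diagonal torus $T^N=(U(1))^N$. Transitivity of $U(N)$ on frames (matching Theorem~\ref{thm:src-master}(B)) gives $\mathfrak{B}\cong U(N)/T^N$, the complete flag manifold. If bases are taken as unordered sets, I would add the remark that one further quotients by the Weyl group $S_N$, which does not change the local geometry.

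\emph{Chern class.} Consider the bundle whose fiber over $[U]$ is the line spanned by the first column, $\C e_1$. It is the pullback of $\mathcal{O}(-1)$ along the projection $U(N)/T^N\to\C P^{N-1}$, $[U]\mapsto[Ue_1]$. On $\C P^1\subset\C P^{N-1}$ the class $c_1(\mathcal{O}(\pm1))$ evaluates to $\pm1$. To show the pullback is nonzero, I would exhibit a $\C P^1$ inside the flag manifold that maps isomorphically onto a line: fix the last $N-2$ columns and rotate the first two within their span. The pullback of $c_1$ then integrates to $\pm1$ over this $\C P^1$, so $c_1\neq0$.

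The Berry connection $\langle\psi|d\psi\rangle$ on this bundle is a $U(1)$-connection whose curvature represents $c_1$. Its holonomy around small loops is therefore a nontrivial $U(1)$ phase. A real line bundle would only allow $\pm1$ holonomy, which is the sense in which this confirms $\C$ over $\R$.
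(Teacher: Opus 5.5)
Your proposal is correct and follows the route the paper itself uses. The corollary has no standalone proof; its content is spread over Step~3 of Theorem~\ref{thm:hilbert-representation}, the paragraph ``From $\Z_N$ to $U(N)$'' in \S\ref{sec:symmetry}, and Lemma~\ref{lem:signature-sheaf}. Like the paper, you use Wigner's theorem for the $K$-preserving bijections, orthonormal frames modulo the diagonal torus for $\mathfrak{B}$, and the tautological bundle with its Berry connection for $c_1$.

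Where you decline to prove the statement literally, you are right and the paper is not. Complex conjugation $[\psi]\mapsto[\bar\psi]$ preserves $|\langle\psi|\phi\rangle|^2$, so it belongs to $\mathrm{Aut}(\mathcal{X},K)$ as defined in Definition~\ref{def:symmetry-group}. Also, $U(N)$ acts with kernel $U(1)$. Hence $G\cong PU(N)\rtimes\Z_2$. This group is disconnected, and its identity component has dimension $N^2-1$ and trivial centre, so it is not isomorphic to $U(N)$ as a group. The paper excludes antiunitaries ``by continuity'', which presupposes that $G$ is connected. Theorem~\ref{thm:complexity-constraint} never shows this, and it is false here. $U(N)$ is correct only as the group of linear lifts of $G_0=PU(N)$, which is how you phrase it.

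Your Weyl-group caveat is likewise correct, since Definition~\ref{def:basis} makes a basis an unordered set. Your embedded-$\C P^1$ computation supplies a step the paper omits: it asserts $c_1(\mathcal{O}(1))=1$ on $\C P^{N-1}$ without checking that the pullback to $\mathfrak{B}$ stays nonzero. Your curvature argument is also the right justification for $U(1)$-valued holonomy. Both $U(N)/T^N$ and $\C P^{N-1}$ are simply connected, so the ``minimal non-contractible loop'' of Remark~\ref{rem:gauge-berry} does not exist.

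Three small tightenings:
\begin{enumerate}
\item The antiunitary component is always present, not ``possibly''.
\item What is discrete is the component group $G/G_0\cong\Z_2$, not the antiunitary coset. That coset acts nontrivially on $\mathfrak{B}$ but adds nothing to transitivity.
\item On the unordered space $U(N)/(T^N\rtimes S_N)$ there is no global choice of one line of the basis, since monodromy realises all of $S_N$. So the $c_1$ and holonomy claims must be made on the ordered flag manifold $U(N)/T^N$, as you in effect do. ``Does not change the local geometry'' is true but is not the relevant point for the bundle.
\end{enumerate}
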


\begin{figure*}[t]
\centering
\begin{tikzpicture}[scale=1.0]

\begin{scope}[shift={(-3.5,0)}]
  \coordinate (V0) at (-1.4, 0);
  \coordinate (V1) at (1.4, 0);
  \coordinate (V2) at (0, 2.4);
  
  \fill[navyblue!6] (V0) -- (V1) -- (V2) -- cycle;
  
  \draw[navyblue!60, thick] (V0) -- (V1) -- (V2) -- cycle;
  
  \fill[navyblue] (V0) circle (3pt);
  \fill[navyblue] (V1) circle (3pt);
  \fill[navyblue] (V2) circle (3pt);
  
  \node[navyblue, font=\footnotesize\bfseries] at (-1.7, -0.12) {$\ket{0}$};
  \node[navyblue, font=\footnotesize\bfseries] at (1.7, -0.12) {$\ket{1}$};
  \node[navyblue, font=\footnotesize\bfseries] at (0, 2.65) {$\ket{2}$};
  
  \coordinate (psi) at (0, 0.85);
  \fill[coral] (psi) circle (3pt);
  \node[coral, font=\footnotesize\bfseries] at (0.38, 0.85) {$\ket{\psi}$};
  
  \draw[gray!25, thin, dashed] (psi) -- (V0);
  \draw[gray!25, thin, dashed] (psi) -- (V1);
  \draw[gray!25, thin, dashed] (psi) -- (V2);
  
  \node[gray!55, font=\scriptsize] at (-0.8, 0.3) {$|c_0|^2$};
  \node[gray!55, font=\scriptsize] at (0.8, 0.3) {$|c_1|^2$};
  \node[gray!55, font=\scriptsize] at (0, 1.5) {$|c_2|^2$};
  
  \node[font=\footnotesize\bfseries, navyblue] at (0, -0.6) {(a) Probability Simplex};
\end{scope}

\begin{scope}[shift={(3.5,0)}]
  \coordinate (B0) at (-1.4, 0);
  \coordinate (B1) at (1.4, 0);
  \coordinate (B2) at (0, 2.4);
  
  \fill[navyblue!6] (B0) -- (B1) -- (B2) -- cycle;
  
  \draw[navyblue!60, thick] (B0) -- (B1) -- (B2) -- cycle;
  
  \fill[navyblue] (B0) circle (3pt);
  \fill[navyblue] (B1) circle (3pt);
  \fill[navyblue] (B2) circle (3pt);
  
  \node[navyblue, font=\footnotesize\bfseries] at (-1.7, -0.12) {$\ket{j}$};
  \node[navyblue, font=\footnotesize\bfseries] at (1.7, -0.12) {$\ket{k}$};
  \node[navyblue, font=\footnotesize\bfseries] at (0, 2.65) {$\ket{\ell}$};
  
  \draw[gold!60!black, thick, densely dashed] (B1) -- (B2);

  \coordinate (state) at (-0.2, 0.95);
  \fill[coral] (state) circle (3pt);
  \node[coral, font=\footnotesize\bfseries] at (-0.55, 1.05) {$\ket{\psi}$};

  \coordinate (foot) at ($(B1)!(state)!(B2)$);

  \draw[gold!80!black, line width=1.2pt] (state) -- (foot);
  \fill[gold!80!black] (foot) circle (1.5pt);

  \coordinate (ra1) at ($(foot)!0.13!(B2)$);
  \coordinate (ra3) at ($(foot)!0.13!(state)$);
  \coordinate (ra2) at ($(ra1)+(ra3)-(foot)$);
  \draw[gold!80!black, line width=0.6pt] (ra1) -- (ra2) -- (ra3);

  \node[gold!80!black, font=\scriptsize\bfseries, anchor=east] at ($(state)!0.5!(foot)+(-0.08,0)$) {$h_j$};
  
  \node[draw=coral!60, rounded corners=2pt, fill=coral!6, 
        font=\footnotesize\bfseries, inner sep=4pt] at (0, -0.6) {
    $p_j = |c_j|^2 = h_j$
  };
  
  \node[font=\footnotesize\bfseries, navyblue] at (0, -1.28) {(b) Born Rule Geometry};
\end{scope}

\end{tikzpicture}
\caption{\textbf{State space geometry for $N=3$.}
\textbf{(a)} The probability simplex: vertices are basis states $\ket{0}, \ket{1}, \ket{2}$; a state $\ket{\psi}$ has barycentric coordinates $(|c_0|^2, |c_1|^2, |c_2|^2)$ satisfying $\sum_k |c_k|^2 = 1$.
\textbf{(b)} The Born rule from geometry: probability $p_j$ equals the perpendicular height from $\ket{\psi}$ to the face opposite vertex $\ket{j}$.}
\label{fig:signature-sheaf}
\end{figure*}

\begin{figure}[ht]
\centering
\begin{tikzpicture}[scale=0.9, every node/.style={transform shape}]

\begin{scope}[shift={(-2.5,0)}]
  \draw[navyblue, thick] (0,0) circle (1.2);
  \draw[navyblue!40, dashed] (0,0) ellipse (1.2 and 0.35);
  
  \node[navyblue, font=\small\bfseries] at (0, -1.8) {State Space};
  \node[navyblue, font=\scriptsize] at (0, -2.15) {$(\C P^{N-1}, g_{FS})$};
  
  \fill[coral] (0.3, 0.8) circle (2.5pt);
  \node[coral, font=\scriptsize] at (0.65, 0.75) {$\ket{\psi}$};
  \fill[coral] (-0.4, 0.75) circle (2.5pt);
  \node[coral, font=\scriptsize] at (-0.75, 0.62) {$\ket{\phi}$};
  
  \draw[coral, thick, dashed] (0.3, 0.8) arc (60:120:0.7);
  \node[coral, font=\tiny] at (0, 1.05) {$d_{FS}$};
\end{scope}

\draw[-{Stealth[length=3mm, width=2mm]}, thick, gold!80!black] (-0.8, 0) -- (0.8, 0);
\node[gold!80!black, font=\small\bfseries, align=center] at (0, 0.5) {Born Rule};
\node[gold!80!black, font=\scriptsize, align=center] at (0, -0.35) {$p_k = |c_k|^2$};
\node[gold!80!black, font=\tiny, align=center] at (0, -0.7) {\textbf{Isometry}};

\begin{scope}[shift={(2.5,0)}]
  \coordinate (A) at (-1.0, -0.7);
  \coordinate (B) at (1.0, -0.7);
  \coordinate (C) at (0, 1.0);
  
  \fill[tealblue!15] (A) -- (B) -- (C) -- cycle;
  \draw[tealblue, thick] (A) -- (B) -- (C) -- cycle;
  
  \fill[tealblue] (A) circle (2pt);
  \fill[tealblue] (B) circle (2pt);
  \fill[tealblue] (C) circle (2pt);
  
  \node[tealblue, font=\small\bfseries] at (0, -1.5) {Probability Space};
  \node[tealblue, font=\scriptsize] at (0, -1.85) {$(\Delta^{N-1}, g_{FR})$};
  
  \fill[coral] (0.2, 0.2) circle (2.5pt);
  \node[coral, font=\scriptsize] at (0.01, 0.45) {$p(\psi)$};
  \fill[coral] (-0.3, 0.0) circle (2.5pt);
  \node[coral, font=\scriptsize] at (-0.4, -0.3) {$p(\phi)$};
  
  \draw[coral, thick, dashed] (0.2, 0.2) -- (-0.3, 0.0);
  \node[coral, font=\tiny] at (0.2, -0.15) {$d_{FR}$};
\end{scope}

\node[draw=gold!60, rounded corners=3pt, fill=gold!8, 
      font=\small\bfseries, inner sep=6pt] at (0, -2.8) {
  $d_{FS}(\psi, \phi) = d_{FR}(p(\psi), p(\phi))$ \quad iff \quad $p_k = |c_k|^2$
};

\end{tikzpicture}
\caption{\textbf{Born rule as information isometry.} The Born rule $p_k = |c_k|^2$ is the unique assignment matching quantum distinguishability ($d_{FS}$) to statistical distinguishability ($d_{FR}$); any other choice maintains two independent distance structures, forbidden by $U(N)$-invariance and the uniqueness of the invariant metric on $\C P^{N-1}$.}
\label{fig:information-isometry}
\end{figure}

\subsection{Hilbert Space Representation Theorem}

We now prove the central representation result: any convex operational system satisfying our axioms embeds into a complex Hilbert space.

\begin{theorem}[Hilbert Space Representation]
\label{thm:hilbert-representation}
Let $(\mathcal{X}, K)$ satisfy Axioms~\ref{ax:finite}--\ref{ax:relational} with capacity $N \geq 3$, and let $\mathcal{S} = \mathrm{conv}(\mathcal{X})$ be the convex hull of pure states (the mixed-state space). Then there exists a complex Hilbert space $\mathcal{H}$ with $\dim(\mathcal{H}) = N$ and an embedding $\iota: \mathcal{S} \to \mathcal{D}(\mathcal{H})$ into the density operators such that:
\begin{enumerate}
\item[(i)] Pure states map to rank-1 projectors: $\iota(\mathcal{X}) = \{|\psi\rangle\langle\psi| : \|\psi\| = 1\}$;
\item[(ii)] $K$-derived effects $e_b(\psi) := 1 - K(\psi, b)$ map to positive operators in $[0, I]$;
\item[(iii)] Probabilities are preserved: $e_b(\psi) = \mathrm{Tr}(\iota^*(e_b) \cdot \iota(|\psi\rangle\langle\psi|))$.
\end{enumerate}
\end{theorem}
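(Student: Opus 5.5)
The plan is to assemble the representation directly from the identifications already established, rather than reconstructing Hilbert-space structure from scratch. By Theorem~\ref{thm:points-sections} together with Lemma~\ref{lem:sheaf-complex} and Theorem~\ref{thm:frobenius}, there is a $G$-equivariant homeomorphism $\sigma:\mathcal{X}\to\C P^{N-1}$. The coordinate form in that proof shows that under $\sigma$ the kernel is $K(\psi,\phi)=1-|\langle\psi|\phi\rangle|^2$. Set $\Hilb:=\C^N$ with its standard inner product, so that $\dim\Hilb=N$ holds by Axiom~\ref{ax:finite}. On pure states define $\iota(x):=\ketbra{\psi_x}{\psi_x}$, where $\psi_x$ is any unit representative of $\sigma(x)$. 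This is well defined because the rank-one projector is invariant under the $U(1)$ phase ambiguity. It is injective because $\sigma$ is a bijection, and its image is all rank-one projectors because $\sigma$ is surjective (Step~4 of Theorem~\ref{thm:points-sections}, via Completeness (S2)). This gives (i).

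For the mixed states I would extend $\iota$ affinely: $\iota\big(\sum_i\lambda_i x_i\big):=\sum_i\lambda_i\iota(x_i)$. Each such sum is a density operator, since it is positive, Hermitian and has unit trace. The main obstacle is that $\mathcal{S}=\mathrm{conv}(\mathcal{X})$ is only defined abstractly, so I must show this extension is well defined. Two convex decompositions of the same mixed state must not yield different operators. I would resolve this by taking the abstract convex structure to be the one induced by $K$-derived effects. That is, two mixtures are identified exactly when they assign equal values to every effect $e_b$, $b\in\mathcal{X}$; this is the natural reading under Transport Consistency (T) and Parsimony (Corollary~\ref{cor:parsimony}), since no structure beyond $K$-evaluations exists. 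It then suffices to show that the linear functionals $\rho\mapsto\Tr(\ketbra{b}{b}\rho)$, $b\in\C P^{N-1}$, separate density operators. This holds because the rank-one projectors span the Hermitian matrices $\mathrm{Herm}(\C^N)$, so $\iota$ is well defined and injective on $\mathcal{S}$. I would also note that $\iota(\mathcal{S})$ is all of $\mathcal{D}(\Hilb)$ by the spectral theorem, although the statement only requires an embedding.

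For (ii) and (iii) I define $\iota^*(e_b):=\ketbra{\psi_b}{\psi_b}$, which is a projector and so satisfies $0\le\iota^*(e_b)\le I$. For a pure state, $\Tr(\ketbra{\psi_b}{\psi_b}\,\ketbra{\psi}{\psi})=|\langle\psi_b|\psi\rangle|^2=1-K(\psi,b)=e_b(\psi)$, using the Fubini--Study form of $K$. Linearity of the trace then extends (iii) to all of $\mathcal{S}$, and this is consistent with the affine definition in the previous step. Finally I would check $G$-equivariance, since Corollary~\ref{thm:basis-topology} gives $G\cong U(N)$ acting by $\rho\mapsto U\rho U^\dagger$, under which $\iota$ and $\iota^*$ transform covariantly. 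Equivariance is not demanded by the statement, but it confirms that the construction introduces no choice beyond $K$, in line with SRC. The only delicate point remains the well-definedness of the convex extension; everything else is bookkeeping on top of Theorem~\ref{thm:points-sections}.
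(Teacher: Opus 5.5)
Your proposal is correct and follows the same assembly as the paper. It takes the identification $\mathcal{X}\cong\C P^{N-1}$ with $K(\psi,\phi)=1-|\langle\psi|\phi\rangle|^2$ from Theorem~\ref{thm:points-sections} and Lemma~\ref{lem:sheaf-complex}, sends pure states to rank-one projectors, extends affinely, and represents $e_b$ by $\ketbra{b}{b}$. There are two minor differences. The paper invokes Wigner's theorem and connectedness of $G$ to get a unitary representation, then reads off $\dim\Hilb=N$ from orthonormality of bases, whereas you bypass this by working in $\C^N$ directly. Your effect-induced quotient, combined with the fact that rank-one projectors span $\mathrm{Herm}(\C^N)$, also justifies well-definedness of the affine extension, which the paper only asserts by citing Choquet.
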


\begin{proof}
\textit{Step 1: Field selection.}
By Lemma~\ref{lem:sheaf-complex} (or equivalently Theorem~\ref{thm:frobenius} together with Theorem~\ref{thm:quaternion-obstruction}), for $N \geq 3$ the coefficient field is $\mathbb{F} = \C$: real Hilbert spaces cannot support continuous evolution connecting the identity to a cyclic permutation, and quaternionic spaces fail both the canonical spectral decomposition of $\pi$ and closure under composition. By Theorem~\ref{thm:points-sections} this fixes $\mathcal{X} \cong \C P^{N-1}$ with $K(\psi,\phi) = 1 - |\langle\psi|\phi\rangle|^2$.

\textit{Step 2: The extreme boundary is a single orbit.}
By Basis Isotropy (Theorem~\ref{thm:src-master}(B)), the symmetry group $G$ acts transitively on pure states. The orbit of any pure state under $G$ is therefore the full extreme boundary $\mathcal{X} \cong \C P^{N-1}$. The state space $\mathcal{S}$ is the convex hull of this orbit (for $N = 2$, a ball; for $N \geq 3$, the full density matrix set $\{\rho \geq 0, \mathrm{Tr}\,\rho = 1\}$).

\textit{Step 3: Projective unitary representation.}
$K$-preserving symmetries preserve the canonical overlap $|\langle\psi|\phi\rangle|^2 = 1 - K(\psi,\phi)$ on $\C P^{N-1}$. By Wigner's theorem \cite{wigner1931gruppentheorie,bargmann1964note}, each such symmetry is implemented by a unitary or antiunitary operator. Anti-unitaries form a disconnected component; since $G$ is a connected Lie group (Theorem~\ref{thm:complexity-constraint}) containing the identity (unitary), all elements of $G$ are unitary. Thus $G$ admits a projective representation $\rho: G \to PU(\mathcal{H})$ for some complex Hilbert space $\mathcal{H}$.

\textit{Step 4: Dimension from capacity.}
The representation $\rho$ maps each basis $\{s_1, \ldots, s_N\}$ to an orthonormal basis $\{|e_1\rangle, \ldots, |e_N\rangle\}$. Perfect distinguishability ($K(s_i, s_j) = 1$ for $i \neq j$) corresponds to orthogonality ($\langle e_i | e_j \rangle = 0$). Thus $\dim_{\C}(\mathcal{H}) = N$.

\textit{Step 5: State-operator correspondence.}
Define $\iota: \mathcal{S} \to \mathcal{D}(\mathcal{H})$ by mapping pure states to rank-1 projectors and extending affinely (Choquet's theorem on the simplex of mixed states; see \cite{alfsen2003geometry}).

\textit{Step 6: Effects and probabilities.}
The $K$-derived effect $e_b(\psi) := 1 - K(\psi, b) = |\langle\psi|b\rangle|^2$ corresponds to the rank-1 projector $|b\rangle\langle b|$, giving $e_b(\rho) = \mathrm{Tr}(|b\rangle\langle b| \rho)$ by affine extension. (Gleason's theorem \cite{gleason1957measures} extends this to the full projection lattice for $N \geq 3$, but only the $K$-effects $e_b$ are needed for (iii).)
\end{proof}

\section{Cyclic Symmetry, Gauge Structure, and the Inner Product}
\label{sec:hilbert}

\subsection{Cyclic Symmetry}
\label{sec:symmetry}

SRC together with Permutation Invariance (Theorem~\ref{thm:permutation-invariance}, derived from clause (S4)) makes atom labels within a basis conventional, which might suggest the full symmetric group $S_N$ as the symmetry group. However, the \emph{dynamical} symmetry reduces to the cyclic group $\Z_N$ for the following reasons.

\begin{lemma}[Rigidity of Cyclic Dynamics]
\label{lem:cyclic-rigidity}
The dynamical generator must be a single $N$-cycle. Among all $\sigma \in S_N$, the $N$-cycles are the unique permutations whose eigenbases are $K$-homogeneous with respect to $\mathcal{B}$: $K(b_i, f_j)$ is independent of $(i,j)$.
\end{lemma}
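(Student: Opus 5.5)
The plan is to work in the realization already established: $\mathcal{X}\cong\C P^{N-1}$ with $K(\psi,\phi)=1-|\langle\psi|\phi\rangle|^2$ (Theorem~\ref{thm:points-sections}, Lemma~\ref{lem:sheaf-complex}). Each $\sigma\in S_N$ is realized in $G$ by Theorem~\ref{thm:permutation-invariance}, and it lifts to the permutation matrix $P_\sigma$ on $\C^N$ relative to $\mathcal{B}$. In these terms, $K$-homogeneity of an eigenbasis $\{f_j\}$ means $|\langle b_i|f_j\rangle|^2=1/N$ for all $i,j$, i.e.\ $\{f_j\}$ is mutually unbiased with $\mathcal{B}$.

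A literal reading of the statement cannot be right. For $\sigma=\mathrm{id}$, every basis is an eigenbasis, including a homogeneous one. So I will read ``the eigenbasis'' as the eigenbasis \emph{determined by} $\sigma$, unique up to phases. This reading is the one forced by SRC: choosing a basis inside a degenerate eigenspace would be canonical structure beyond $K$, which Axiom~\ref{ax:relational} forbids. The lemma then becomes the following claim: $P_\sigma$ has a $K$-determined eigenbasis that is homogeneous iff $\sigma$ is an $N$-cycle.

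\emph{Forward direction.} Let $\sigma$ be an $N$-cycle. After relabelling the basis, which is allowed by Permutation Invariance, $P_\sigma$ is the cyclic shift. Its eigenvalues are the $N$ distinct roots $\omega^{-k}$, so the eigenbasis is unique up to phases. It is the Fourier basis $f_k$ of Lemma~\ref{lem:sheaf-complex}(b), and $|\langle b_i|f_k\rangle|^2=1/N$ for all $i,k$. Hence the eigenbasis is homogeneous.

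\emph{Converse.} Write $\sigma$ as a product of $r\ge 2$ disjoint cycles $C_1,\dots,C_r$ of lengths $l_1,\dots,l_r$ (fixed points count as cycles of length $1$). Then $P_\sigma=\bigoplus_m S_{l_m}$, a direct sum of cyclic shifts on the supports $\mathrm{span}\{b_i:i\in C_m\}$. For each eigenvalue $\lambda$, the eigenspace $E_\lambda$ is spanned by the Fourier vectors $v_m^\lambda$ of those cycles with $\lambda^{l_m}=1$. I will split into two cases.

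\begin{itemize}
\item[(a)] If some $E_\lambda$ has dimension $\ge 2$, the spectrum is degenerate. The eigenbasis is then not $K$-determined, so this case is excluded by the reading above.
\item[(b)] If every $E_\lambda$ is one-dimensional, each is spanned by a single $v_m^\lambda$. That vector is supported on $C_m$ alone, so $K(b_i,v_m^\lambda)=1$ for every $i\notin C_m$; since $r\ge 2$, such $i$ exist. For $i\in C_m$ we get $K(b_i,v_m^\lambda)=1-1/l_m<1$. So the profile is not constant and homogeneity fails.
\end{itemize}

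I will also record the quantitative fact behind (b), because it shows that allowed choices inside a degenerate eigenspace do not rescue homogeneity unless all cycles have a common length. Any eigenvector $f=\sum_m a_m v_m^\lambda$ satisfies $|\langle b_i|f\rangle|^2=|a_m|^2/l_m$ for $i\in C_m$. Homogeneity therefore forces $a_m\neq 0$ for all $m$, which in turn forces $\lambda^{l_m}=1$ for every $m$, i.e.\ $\lambda^g=1$ with $g=\gcd(l_m)$. The homogeneous eigenvectors then span a space of dimension at most $rg$, and $rg\le\sum_m l_m=N$ with equality iff all $l_m$ are equal.

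The main obstacle is making the ``$K$-determined'' requirement rigorous rather than interpretive. I would derive it from the information-theoretic form of SRC and Lemma~\ref{lem:definability}: the eigenbasis of the generator is an $\mathrm{Aut}$-invariant feature of $\pi$, so it must factor through $K$-data. The stabilizer of $P_\sigma$ in $G\cong U(N)$ (Corollary~\ref{thm:basis-topology}) acts as $U(\dim E_\lambda)$ on each eigenspace. When some eigenspace is degenerate, this stabilizer moves bases within it, so no such basis can be singled out by an invariant predicate. That is exactly why case~(a) is excluded.
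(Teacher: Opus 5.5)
You are right that the forward direction holds, and your orbit computations are correct. Your $rg$ count also catches something the paper's proof passes over. The paper tacitly uses the orbit-adapted eigenbasis $\{v_m^\lambda\}$. Inside degenerate eigenspaces, however, homogeneous eigenbases can exist: by your count only when all cycle lengths are equal, and then they do, e.g.\ the Walsh--Hadamard basis for $(01)(23)$ at $N=4$.

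\textbf{The converse has a genuine gap.}

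First, case~(b) is empty. Every cycle contributes the eigenvalue $1$ through its uniform vector, so $\dim E_1=r\geq 2$ for every non-$N$-cycle. Your whole converse is therefore case~(a), and $K$-homogeneity plays no role in it. What you have shown is that $N$-cycles are exactly the permutations with simple spectrum, not that homogeneity singles them out.

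Second, the SRC justification you sketch for excluding (a) fails. The homogeneity predicate refers to $\mathcal{B}$, so the relevant symmetry is the joint stabilizer of $P_\sigma$ and $\mathcal{B}$, not the stabilizer of $P_\sigma$ alone in $U(N)$. That joint stabilizer contains the block phases $\bigoplus_m e^{i\theta_m}I_{l_m}$. These fix every $b_i$, commute with $P_\sigma$, and act on each $E_\lambda$ as the diagonal torus in the basis $\{v_m^\lambda\}$, whose only fixed rays are the $v_m^\lambda$ themselves. So relative to $(\sigma,\mathcal{B})$, a degenerate $P_\sigma$ \emph{does} have a canonically determined eigenbasis, namely the orbit-adapted one. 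It is the mixed homogeneous bases that fail to be canonical, so degeneracy alone does not exclude case~(a).

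\textbf{The fix.} Use the orbit-adapted basis; no reinterpretation is needed. Read ``whose eigenbases are $K$-homogeneous'' universally, meaning every eigenbasis of $P_\sigma$ is homogeneous. Then the statement is true as written:
\begin{itemize}
\item The identity fails, because $\mathcal{B}$ is one of its eigenbases.
\item Any $\sigma$ with $r\geq 2$ cycles fails, because $\{v_m^\lambda\}$ is an eigenbasis with $K(b_i,v_m^\lambda)\in\{1-1/l_m,\,1\}$ and both values occur.
\item An $N$-cycle has simple spectrum, so the Fourier basis is its only eigenbasis, and it is homogeneous.
\end{itemize}
This is exactly the paper's proof. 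It is your case~(b) computation applied to every $\lambda$, degenerate or not.

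\textbf{The first sentence of the lemma.} The claim that the dynamical generator \emph{must} be an $N$-cycle needs a reason why the generator's eigenbasis should be homogeneous at all. The paper supplies one: a non-homogeneous eigenbasis would make the orbit partition of $\mathcal{B}$ $K$-detectable, contradicting Permutation Invariance. Your proposal does not address this.
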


\begin{proof}
Let $\sigma$ have cycle structure $(n_1, \ldots, n_r)$ with orbits $O_1, \ldots, O_r$. The eigenvectors of the permutation matrix $P_\sigma$ from orbit $O_\ell$ are supported entirely on $\mathrm{span}\{|e_m\rangle : m \in O_\ell\}$, with equal-modulus components $1/\sqrt{n_\ell}$. Therefore:
\[
K(b_i, f_j) \;=\;
\begin{cases}
1 - 1/n_\ell & \text{if } b_i \in O_\ell \text{ (same orbit as } f_j\text{)}, \\
1 & \text{otherwise.}
\end{cases}
\]
For $K(b_i, f_j)$ to be independent of $(i,j)$, both values must coincide: $1 - 1/n_\ell = 1$ has no solution, so there can be no ``otherwise'' case. This forces $r = 1$: a single orbit of size $N$, i.e., $\sigma$ is an $N$-cycle, giving $K(b_i, f_j) = 1 - 1/N$ uniformly.

\textit{Why $K$-homogeneity is required.} A permutation with $r \geq 2$ orbits induces a partition of $\mathcal{B}$ into subsets $O_1, \ldots, O_r$ that is $K$-detectable via the eigenbasis: measuring $K(b_i, f_j)$ reveals which orbit $b_i$ belongs to. But Permutation Invariance (Theorem~\ref{thm:permutation-invariance}) makes all atom labels within a basis conventional; no subset of atoms is privileged. A $K$-detectable partition would introduce structure beyond $K$ (the orbit labeling), violating saturation (Theorem~\ref{thm:src-master}(S1)--(S4)).
\end{proof}

\begin{lemma}[Minimal Representation]
\label{lem:minimal-rep}
The representation of $G$ on the state space is \emph{irreducible}: there exists no proper $G$-invariant subspace.
\end{lemma}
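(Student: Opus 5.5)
The plan is to work in the representation already established: by Theorem~\ref{thm:hilbert-representation} and Corollary~\ref{thm:basis-topology}, $\mathcal{X}\cong\C P^{N-1}$ and $G$ acts through a projective unitary representation on $\mathcal{H}\cong\C^N$. I read ``proper $G$-invariant subspace'' as a nonzero proper subspace $W\subset\mathcal{H}$ with $U_gW=W$ for all $g\in G$. Phases are irrelevant for invariance of subspaces, so I can argue directly with the lifted unitaries $U_g$. The proof will be a contradiction argument that uses only transitivity of $G$ on pure states together with Basis Isotropy.

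\textbf{Step 1 (transitivity kills proper invariant subspaces).} Suppose $W$ is invariant with $0<\dim W<N$. Choose a unit vector $w\in W$ and a unit vector $u\in W^\perp$. Both rays $[w],[u]$ are points of $\mathcal{X}$. By Step~2 of Theorem~\ref{thm:hilbert-representation}, which combines Basis Isotropy (Theorem~\ref{thm:src-master}(B)) with Permutation Invariance (Theorem~\ref{thm:permutation-invariance}), there is some $g\in G$ with $g[w]=[u]$. That means $U_gw=\lambda u$ for some phase $\lambda$. Invariance of $W$ then forces $u\in W$, so $u\in W\cap W^\perp=\{0\}$, which is a contradiction. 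I will also record a more intrinsic version of the same argument. Extend $w$ to an orthonormal basis whose first $\dim W$ vectors lie in $W$; this corresponds to a basis $\mathcal{B}\in\mathfrak{B}$. Permutation Invariance supplies a $K$-automorphism exchanging an atom in $W$ with an atom in $W^\perp$, and Wigner's theorem (Step~3 of Theorem~\ref{thm:hilbert-representation}) lifts it to a unitary that moves $W$ off itself.

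\textbf{Step 2 (SRC reading).} I will add a short remark in the paper's idiom. An invariant subspace $W$ would define an $\mathrm{Aut}$-invariant unary predicate $P_W(x)=\mathbf{1}[x\in\mathbb{P}(W)]$. By Transport Consistency (Theorem~\ref{thm:src-master}(T)) this predicate must be a $K$-formula. It is nonconstant, however, and transitivity of $G$ on $\mathcal{X}$ means any $\mathrm{Aut}$-invariant unary predicate is constant. So a superselection-type sector would be exactly the kind of canonical structure beyond $K$ that SRC excludes.

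\textbf{Main obstacle.} The delicate point is the transitivity of $G$ on individual pure states, as opposed to transitivity on bases. Basis Isotropy gives transitivity on $\mathfrak{B}$, and Permutation Invariance gives transitivity within a basis. Every pure state lies in some basis: in $\C P^{N-1}$ any unit vector extends to an orthonormal basis, which is a maximal mutually fully distinguishable set. Composing the two actions therefore yields transitivity on $\mathcal{X}$. I will state this composition explicitly and not rely on the one-line claim in Theorem~\ref{thm:hilbert-representation}.

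A secondary subtlety is that a projective representation has phase ambiguity. This does not affect invariance of subspaces, so no central extension or lifting argument is needed.
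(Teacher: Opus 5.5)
Your argument is correct and is essentially the paper's: transitivity of $G$ on pure states (the paper's ``atoms'') means that a nonzero invariant subspace containing one pure state must contain all of them, so it cannot be proper. Your version is somewhat tighter in two ways. First, you make explicit that transitivity on pure states needs Permutation Invariance on top of the transitivity on bases supplied by (B). Second, by moving a unit vector $w\in W$ directly into $W^\perp$ you avoid the paper's informal second case (``no atom lies in $V$''), since every unit vector is itself a pure state lying in some basis.
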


\begin{proof}
Suppose $V \subset \Hilb$ is a proper $G$-invariant subspace. By Basis Isotropy (Theorem~\ref{thm:src-master}(B)), $G$ acts transitively on atoms. If any atom $a_0 \in V$, then all atoms are in $V$ (by transitivity), so $V = \Hilb$. If no atom lies in $V$, then $V$ captures only partial information, reducing distinguishable states below $N$ and violating Axiom~\ref{ax:finite}.
\end{proof}

\paragraph{From $\Z_N$ to $U(N)$.}
\label{rem:ZN-to-UN}
Combined with Theorem~\ref{thm:dynamics-derived}, Lemma~\ref{lem:cyclic-rigidity} gives the intra-basis dynamical symmetry group $G_{\mathrm{dyn}} = \langle\pi\rangle \cong \Z_N$ (the $N$-cycle action on $\mathcal{B}$ is free and transitive, so $|\langle\pi\rangle| = N$). The full symmetry group $G = \mathrm{Aut}(\mathcal{X}, K)$ is strictly larger: Basis Isotropy (Theorem~\ref{thm:src-master}(B)) requires $G$ to act transitively on the continuous manifold $\mathfrak{B}$, forcing $G$ to be an uncountable Lie group. Once $\mathbb{F} = \C$ is established (Lemma~\ref{lem:sheaf-complex}), the state space is $\C P^{N-1}$ with unique $U(N)$-invariant metric, giving $G \cong U(N)$ (the identity component of the full Wigner group; anti-unitaries form a disconnected component excluded by continuity of the $G$-action). The intra-basis $\Z_N$ embeds as $\langle e^{2\pi i/N}\mathrm{diag}(1,\omega,\ldots,\omega^{N-1})\rangle \leq U(N)$. In summary: $\Z_N$ generates dynamics within a basis; $U(N)$ implements symmetry across bases.

\subsection{Gauge Symmetry from Sheaf Consistency}
\label{sec:gauge}

The signature sheaf of Lemma~\ref{lem:signature-sheaf} implies a deeper result: \emph{gauge symmetry emerges as the mathematical requirement for the existence of global sections}.

\begin{theorem}[Emergence of Gauge Invariance]
\label{thm:gauge}
The freedom to choose local orientations of stalks without affecting physical predictions is a $U(1)$ gauge symmetry. This symmetry is \emph{necessary} for relational consistency.
\end{theorem}

\begin{proof}
By Lemma~\ref{lem:signature-sheaf}, a state $\ket{\psi}$ is represented in basis $\mathcal{B}$ by a signature $\sigma_\mathcal{B}$, and stalks are connected by transition maps $\phi_g$. By Transport Consistency (Theorem~\ref{thm:src-master}(T)), the ``internal orientation'' of stalk $\mathcal{S}_\mathcal{B}$ is arbitrary, so we may multiply any signature by a phase $e^{i\alpha_\mathcal{B}}$ without changing the physical content. Local redefinitions $\sigma_\mathcal{B} \mapsto e^{i\alpha_\mathcal{B}}\sigma_\mathcal{B}$ transform the transition maps as $\phi_g \mapsto e^{i(\alpha_{g\mathcal{B}} - \alpha_\mathcal{B})} \phi_g$. The freedom $\alpha_\mathcal{B} \mapsto \alpha_\mathcal{B} + \chi_\mathcal{B}$ preserving global sections is exactly a $U(1)$ gauge transformation; the ``gauge field'' is the connection $A$ with $\phi_g = e^{i\int_\gamma A}$. By Theorem~\ref{thm:points-sections}, this covariance is required for states to have well-defined identity across measurement contexts.
\end{proof}

\begin{remark}[Gauge field as Berry connection on $\mathcal{O}(1)$]
\label{rem:gauge-berry}
The connection $A$ parallel-transporting signatures across $\mathfrak{B}$ is the Berry connection $A = i\langle\psi|d\psi\rangle$ on the tautological bundle $\mathcal{O}(1)$ over $\C P^{N-1}$, the fundamental bundle selected by Saturation (Lemma~\ref{lem:signature-sheaf}); its curvature is the Fubini--Study form, and the holonomy on the minimal non-contractible loop is $e^{2\pi i/N}$. Non-triviality $c_1(\mathcal{O}(1)) = 1 \neq 0$ is essential: a flat trivial bundle would have zero holonomy, contradicting Theorem~\ref{thm:dynamics-derived}.
\end{remark}

\subsection{The $N \geq 3$ Threshold for Non-trivial Dynamics}

\begin{theorem}[Continuous Dynamics Requires $N \geq 3$]
\label{thm:n2-static}
For a closed system with total capacity $N = 2$, the only continuous one-parameter group of $K$-preserving orthogonal transformations is constant: $U(t) \equiv I$ (or $U(t) \equiv -I$ on the projectively equivalent component).
\end{theorem}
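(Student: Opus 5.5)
The plan is to specialise the construction of Theorem~\ref{thm:time-emergence} to $N = 2$ and show that the spectral interpolation degenerates. First I would fix the coefficient field. For $N = 2$ the argument of Lemma~\ref{lem:sheaf-complex} gives no obstruction: $\omega = e^{2\pi i/2} = -1 \in \R$, and the parity argument of step~(d) is vacuous. So the relevant transformations are real, and the dynamical generator $\pi$ supplied by Theorem~\ref{thm:dynamics-derived} is the swap $P = \begin{pmatrix} 0 & 1 \\ 1 & 0 \end{pmatrix}$ acting on $\R^2$. Its eigenbasis is $f_\pm = (e_0 \pm e_1)/\sqrt{2}$ with eigenvalues $\pm 1$. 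These eigenvectors are real, so the Fourier basis of Lemma~\ref{lem:sheaf-complex} is already realised over $\R$ and nothing forces a complex structure.

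Second, I would impose the same requirements on a candidate continuous group $U(t)$ that Theorem~\ref{thm:time-emergence} imposes. These are: $U$ is a one-parameter group of $K$-preserving orthogonal maps; it interpolates the cycle, $U(nT_2/2) = \pi^n$; and it is diagonal in the eigenbasis $\{f_+, f_-\}$ of $\pi$, because the interpolation is built spectrally as $U(t) = \sum_k \lambda_k(t)\, \ketbra{f_k}{f_k}$. Over $\R$, each $\lambda_k(t)$ is a real number, and orthogonality forces $|\lambda_k(t)| = 1$, so $\lambda_k(t) \in \{\pm 1\}$. A continuous map from $\R$ into $\{\pm 1\}$ is constant. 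With $U(0) = I$ this gives $\lambda_k \equiv 1$, hence $U(t) \equiv I$. Projectively, an overall sign is invisible, which accounts for the $-I$ component in the statement.

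As a consistency check I would add the group-theoretic form of the same argument. Any one-parameter subgroup of $O(2)$ lies in the identity component $SO(2)$, but $\det P = -1$. So no continuous one-parameter group can reach the cyclic generator at all. The interpolation condition $U(T_2/2) = \pi$ therefore has no solution except through the trivial (static) branch.

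The main obstacle is the second step: ruling out a generic rotation $R(\theta t) \in SO(2)$, which preserves $K$ on $\R P^1$ without being diagonal in $\{f_\pm\}$. I would exclude it in two ways. First, by the Parsimony/SRC argument of Lemma~\ref{lem:cyclic-rigidity}: dynamics is whatever interpolates the $K$-constitutive cycle, so any other continuous flow would be structure beyond $K$. Second, by the determinant mismatch above, which shows that such a rotation never realises $\pi$. If the first route proves too weak as a derivation, I would rest the proof on the determinant argument alone, restricting ``one-parameter group of $K$-preserving transformations'' to those compatible with the derived cyclic dynamics.
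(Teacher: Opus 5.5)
Your core computation matches the paper's. Requiring $U(t)$ to be diagonal in $\{f_+,f_-\}$ is the same as requiring $SU(t)S^{-1}=U(t)$. The real orthogonal matrices of that form are $\mathrm{diag}(\pm1,\pm1)$ in the $f_\pm$ basis, i.e.\ exactly $C_{O(2)}(S)=\{\pm I,\pm S\}$. The paper then concludes as you do, by continuity of $U$ into a discrete set with $U(0)=I$. The only difference is where the commutation hypothesis comes from: the paper asserts it from ``conjugation-invariance under permutation'', while you take it from the spectral form of Theorem~\ref{thm:time-emergence}.

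There are two real problems in the write-up, though.

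First, you impose $U(nT_2/2)=\pi^n$ alongside diagonality. Your own conclusion $U\equiv I$ violates $U(T_2/2)=\pi$, so that hypothesis set is empty. Your determinant check confirms there is no solution at all, not a surviving ``static branch''. To get the statement as worded (``the only such group is constant''), the admissible class must be the commutant of the swap, which contains $I$. Interpolation cannot be part of its definition.

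Second, neither of your ways of excluding $R(\theta t)$ proves the theorem. Lemma~\ref{lem:cyclic-rigidity} only selects which permutation generates the dynamics; it says nothing about flows that are not diagonal in $\{f_\pm\}$. The determinant argument shows only that no continuous flow reaches the swap. But $R(\theta t)$ is still a nonconstant, continuous, $K$-preserving one-parameter group on $\R P^1$. So restricting to flows ``compatible with the cyclic dynamics'' turns the conclusion ``only $I$'' into ``none''. That determinant statement is correct and needs no extra hypothesis, but it is a different theorem.

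Your diagnosis of the obstacle is right. Without an added hypothesis the statement fails, because $SO(2)$ acts $K$-preservingly. The paper removes it precisely by the pointwise condition $SU(t)S^{-1}=U(t)$. This does not follow from the subgroup being invariant under conjugation, since $SR(\theta)S^{-1}=R(-\theta)$. State that commutation condition explicitly as the restriction on admissible dynamics and drop the interpolation requirement; your main route then coincides with the paper's proof.
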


\begin{proof}
For $N = 2$, Lemma~\ref{lem:sheaf-complex}'s cyclic-dynamics forcing of $\C$ does not apply (it requires $N \geq 3$); the swap $S = \begin{psmallmatrix} 0 & 1 \\ 1 & 0 \end{psmallmatrix}$ has real eigenvalues $\pm 1$, so the coefficient field is $\R$ and the symmetry group is $O(2)$. By Theorem~\ref{thm:permutation-invariance}, $S \in G$, and conjugation-invariance of any continuous one-parameter subgroup $U(t)$ under permutation forces $S U(t) S^{-1} = U(t)$, equivalently $U(t) \in C_{O(2)}(S)$ for all $t$.

The centralizer $C_{O(2)}(S)$ is the discrete four-element group $\{\pm I, \pm S\}$: a real $2\times 2$ matrix commuting with $S$ has the symmetric-Toeplitz form $\begin{psmallmatrix} a & b \\ b & a \end{psmallmatrix}$, and orthogonality $(a^2 + b^2 = 1$, $2ab = 0)$ forces either $a = \pm 1, b = 0$ or $a = 0, b = \pm 1$. The continuous subgroup $SO(2)$ does not lie inside $C_{O(2)}(S)$. A continuous map $U: \R \to O(2)$ with $U(t) \in \{\pm I, \pm S\}$ for all $t$ and $U(0) = I$ is constant by connectedness of $\R$ and discreteness of the codomain. No continuous non-trivial $K$-preserving dynamics exists at $N = 2$.

For $N \geq 3$, the cyclic generator $\pi$ has non-real eigenvalues $e^{2\pi i k/N}$ (Lemma~\ref{lem:sheaf-complex}), $\C$ is forced, and the centralizer becomes the continuous group of phases, restoring continuous unitary dynamics.
\end{proof}

\subsubsection{The closed-$N=2$ case and laboratory qubits}
\label{sec:n2-discussion}

The cyclic-dynamics field selection (Lemma~\ref{lem:sheaf-complex}) and the per-component Born ODE (Lemma~\ref{lem:metric-compatibility}) both require $N \geq 3$, so the framework's behaviour at $N = 2$ deserves a unified statement. \textit{(i) Derived at $N = 2$:} for a strictly closed two-state system the coefficient field is $\R$, the symmetry group is $O(2)$, the state space is $\R P^1 \cong S^1$, and the only continuous one-parameter subgroup commuting with the basis swap is trivial (Theorem~\ref{thm:n2-static}); the per-component Born ODE is degenerate. \textit{(ii) Not a refutation:} no physical system has closed total capacity $N = 2$. Every laboratory qubit is a subsystem of a composite with $N_{\mathrm{total}} = N_S \cdot N_E \gg 2$, exceeding the $N \geq 3$ threshold; the qubit's continuous $SU(2)$ dynamics is the projection of the composite's $U(N_{\mathrm{total}})$-dynamics onto the system factor, with $\C$-coefficients and the Born rule inherited from the composite (Theorem~\ref{thm:capacity-dilution-composite}; Theorem~\ref{thm:born-kernel}, Part~2). \textit{(iii) Prediction:} any continuous evolution observed for a qubit is diagnostic of its embedding in a larger system, never of its own internal capacity.

\subsection{The Inner Product}

We now prove that the distinguishability kernel $K$ uniquely determines a positive-definite sesquilinear form (inner product) on the state space.

\begin{theorem}[Existence of Inner Product from Kernel]
\label{thm:inner-product-existence}
Let $(\mathcal{X}, K)$ be a distinguishability space satisfying Axioms~\ref{ax:finite}--\ref{ax:relational} with coefficient field $\C$ (Lemma~\ref{lem:sheaf-complex}). Then there exists a unique (up to scaling) positive-definite sesquilinear form $\langle\cdot|\cdot\rangle: \C^N \times \C^N \to \C$ such that:
\[
K(x, y) = 1 - |\langle\psi_x|\psi_y\rangle|^2
\]
for all states $x, y \in \mathcal{X}$, where $\psi_x, \psi_y$ are the corresponding vectors in $\C^N$.
\end{theorem}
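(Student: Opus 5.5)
Existence will come from the identifications already in hand; uniqueness will come from Wigner's theorem together with the transitivity of $G$ on bases.

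\emph{Existence.} By Theorem~\ref{thm:points-sections} and Lemma~\ref{lem:sheaf-complex}, there is an equivariant homeomorphism $\sigma:\mathcal{X}\to\C P^{N-1}$. Fix a basis $\mathcal{B}=\{e_0,\dots,e_{N-1}\}$ and send $e_k$ to the standard vectors of $\C^N$. On $\C^N$ take the standard form $\langle u|v\rangle=\sum_k \bar u_k v_k$. For each state $x$, choose a unit representative $\psi_x$ of $\sigma(x)$.

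The coordinate-form step of Theorem~\ref{thm:points-sections} already gives $K(x,y)=F(|\langle\psi_x|\psi_y\rangle|^2)$. The reason is two-point homogeneity: $G$-invariant functions on $\C P^{N-1}$ are functions of the overlap alone. I will make the form of $F$ explicit in two steps.

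\begin{enumerate}
\item Take states $\psi$ in the span of $e_0$ and $e_1$. The moduli definition gives $1-K(\psi,e_0)=|c_0|^2=|\langle\psi|e_0\rangle|^2$. So $F(s)=1-s$ on every value of $s$ realized against a basis element.
\item Every overlap in $[0,1]$ is realized against some basis element, by Corollary~\ref{cor:k-image-full}. Transitivity of $G$ (Basis Isotropy together with the two-point homogeneity proof) then carries any pair $(x,y)$ to a pair of the form $(x',e_0)$ with the same overlap. Hence $F(s)=1-s$ everywhere, which is the required identity.
\end{enumerate}

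\emph{Uniqueness.} Suppose $\langle\cdot|\cdot\rangle'$ is another positive-definite sesquilinear form with $K=1-|\langle\psi_x|\psi_y\rangle'|^2$. The argument runs in three steps.

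\begin{enumerate}
\item Write $\langle u|v\rangle'=\langle u|Av\rangle$ with $A>0$. The identity forces unit vectors of one form to correspond, up to rescaling, to unit vectors of the other.
\item The states representing a basis are $K$-orthogonal. Under both forms they therefore give orthogonal vectors. So $A$ is diagonal in every orthonormal basis of $\langle\cdot|\cdot\rangle$ that comes from some $\mathcal{B}'\in\mathfrak{B}$.
\item These bases include complementary, Fourier-type bases (Lemma~\ref{lem:sheaf-complex}). An operator diagonal both in $\mathcal{B}$ and in a basis with all overlaps nonzero must be scalar, so $A=\lambda I$.
\end{enumerate}

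The normalization $K(b_i,b_i)=0$ then fixes $\lambda$ once unit representatives are fixed. This leaves exactly the stated freedom: an overall scale, plus the irrelevant ray phases. As a cross-check, Wigner's theorem identifies both forms' overlap-preserving maps with (anti)unitaries. Their unitary groups must therefore coincide, which again forces $A\propto I$.

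\emph{Main obstacle.} The delicate point is the gap between ``a function of the overlap'' and the specific function $1-s$. The boundary values at $s\in\{0,1\}$ alone do not pin down $F$. I will rely on the fact that the moduli $|c_k|^2$ were \emph{defined} as $1-K(x,e_k)$. This already calibrates $F$ against basis elements, and homogeneity then transfers the calibration to arbitrary pairs.

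I must also check a consistency condition: the coordinates so defined must actually be unit vectors, i.e.\ $\sum_k(1-K(x,e_k))=1$. I would verify this as follows. The map $x\mapsto\sum_k(1-K(x,e_k))$ is $G_{\mathcal{B}}$-invariant and continuous, and it equals $1$ on basis elements. The Fourier states $f_0$ pin its value at interior points. If that argument falls short, I will instead import the Fubini--Study identification from the Cartan classification step, where $K$ is a function of the geodesic distance that is invariant under the full isometry group.
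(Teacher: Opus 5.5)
Your uniqueness argument is correct and takes a different route from the paper. The paper works on one basis. From $K(b_i,b_j)=1-\delta_{ij}$ it reads off that the Gram matrix of $\langle\cdot|\cdot\rangle'$ on $\{b_i\}$ is $c\,\delta_{ij}$, appealing to Permutation Invariance for the equal norms, and concludes by sesquilinearity. You instead write $\langle u|v\rangle'=\langle u|Av\rangle$ and use only the $K=1$ relations to make $A$ diagonal in every basis of $\mathfrak{B}$. A single eigenvector of $A$ with nowhere-vanishing $\mathcal{B}$-coordinates then forces all eigenvalues to be equal. What this buys is robustness. Orthogonality is insensitive to how representatives are scaled, so you obtain the ``up to scaling'' clause without fixing unit representatives in advance. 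Your cross-basis constraint also supplies what the paper's appeal to Permutation Invariance is meant to supply; that invariance is a symmetry of $K$, not of the competing form. If representatives are fixed, there is an even shorter argument: $K(x,x)=0$ gives $\langle\psi|A\psi\rangle=1$ for all unit $\psi$, so $A=I$ by polarization. The Wigner cross-check is unnecessary.

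For existence you follow the paper: coordinates from Theorem~\ref{thm:points-sections}, the standard form, and in your Step~2 the transitivity argument that the paper places in Theorem~\ref{thm:kernel-inner}. However, your calibration does not resolve the obstacle you identified. Step~1 needs a single map $\sigma$ that is two things at once.
(a) It is the Cartan/equivariant identification, in which $G$ acts by unitaries of the standard form, so that $K=F(|\langle\psi_x|\psi_y\rangle|^2)$.
(b) It is the coordinate chart defined by $|c_k(x)|^2:=1-K(x,e_k)$.

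In chart (a) one has $1-K(x,e_k)=1-F(|\langle\psi_x|e_k\rangle|^2)$. So (a) and (b) coincide exactly when $F(s)=1-s$, and the calibration assumes the conclusion. It is legitimate only as a citation of the coordinate-form claim of Theorem~\ref{thm:points-sections}. That theorem's own justification (``the boundary values fix $K$'') is precisely the gap you noticed.

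Your proposed normalization check would not repair this. The map $x\mapsto\sum_k(1-K(x,e_k))$ is invariant under the monomial stabilizer of $\mathcal{B}$, hence a symmetric function on the moduli simplex. Its values at the vertices and at $f_0$ do not force it to equal $1$; take $1-F(s)=s+\epsilon\,s(1-s)(s-1/N)$ with small $\epsilon$. Moreover, $K(f_0,e_k)=1-1/N$ is computed in Lemma~\ref{lem:sheaf-complex} from the Fubini--Study form, so using it is circular. Falling back on the Cartan step again only returns $K=F(d_{FS})$.

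What would close the gap is an independent proof that $\sum_k(1-K(x,e_k))=1$ for every state and every basis. With $N\geq 3$, compare the states with moduli $(a,b,1-a-b,0,\dots)$ and $(a+b,0,1-a-b,0,\dots)$. This yields Cauchy's equation for $1-F$, and boundedness together with $F(1)=0$ then gives $1-F(s)=s$. In the paper, however, this normalization (Lemma~\ref{lem:affinity-normalization}) is derived from the Hilbert-space representation, which already presupposes the kernel formula. So it is not available at this stage.
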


\begin{proof}
\textit{Step 1: Construction on basis states.}
By Definition~\ref{def:basis}, basis states $\{b_0, \ldots, b_{N-1}\}$ satisfy $K(b_i, b_j) = 1$ for $i \neq j$ and $K(b_i, b_i) = 0$. Define $\langle b_i | b_j \rangle := \delta_{ij}$. This is positive-definite on the basis.

\textit{Step 2: Extension to all states.}
By Theorem~\ref{thm:points-sections}, every state $x \in \mathcal{X}$ corresponds to a point $[\psi_x] \in \C P^{N-1}$. Choose representative vectors $\psi_x = \sum_k c_k(x) b_k$ with $\sum_k |c_k|^2 = 1$. The inner product extends by linearity:
\[
\langle\psi_x|\psi_y\rangle := \sum_k \overline{c_k(x)} c_k(y).
\]

\textit{Step 3: Verification of positive-definiteness.}
For any $\psi = \sum_k c_k b_k$: $\langle\psi|\psi\rangle = \sum_k |c_k|^2 \geq 0$, with equality iff $c_k = 0$ for all $k$, i.e., $\psi = 0$.

\textit{Step 4: Uniqueness.}
Any other positive-definite sesquilinear form $\langle\cdot|\cdot\rangle'$ satisfying $K(x,y) = 1 - |\langle\psi_x|\psi_y\rangle'|^2$ must have $|\langle b_i|b_j\rangle'|^2 = \delta_{ij}$ on basis states. By Permutation Invariance (Theorem~\ref{thm:permutation-invariance}), all basis vectors have equal norm: $\langle b_i|b_i\rangle' = c$ for some $c > 0$. Rescaling gives $\langle b_i|b_j\rangle' = \delta_{ij}$. Extension by linearity is unique.
\end{proof}

\begin{theorem}[Kernel from Inner Product]
\label{thm:kernel-inner}
$K(x, y) = 1 - |\braket{\psi_x}{\psi_y}|^2$.
\end{theorem}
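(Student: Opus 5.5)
The plan is to obtain the identity from the state-space identification rather than from the inner-product construction. This keeps it from being circular with Theorem~\ref{thm:inner-product-existence}, whose proof only fixes the inner product on a basis and extends it by linearity. By Theorem~\ref{thm:points-sections} together with Lemma~\ref{lem:sheaf-complex}, we have $\mathcal{X} \cong \C P^{N-1}$ via the equivariant homeomorphism $\sigma$. By Corollary~\ref{thm:basis-topology}, $G = \mathrm{Aut}(\mathcal{X},K)$ acts through the standard projective action of $U(N)$. Pulling $K$ back along $\sigma$ therefore gives a $U(N)$-invariant function on $\C P^{N-1} \times \C P^{N-1}$.

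The first step is to show that $K(x,y) = F(|\braket{\psi_x}{\psi_y}|^2)$ for a single function $F:[0,1]\to[0,1]$. The orbits of $U(N)$ on pairs of rays are classified exactly by the overlap $|\braket{\psi}{\phi}|^2$. Given two pairs with equal overlap, I use Gram--Schmidt to build a unitary carrying $(\psi_1,\phi_1)$ to $(\psi_2,\phi_2)$ up to phases, since the two-dimensional spans are isometric. Invariance of $K$ then forces $K$ to be constant on each overlap level set.

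The second step identifies $F$. The boundary values are immediate: $F(1)=0$ by (O) and Definition~\ref{def:graded-equality}(i), and $F(0)=1$ because orthogonal rays belong to a common basis. Fixing the interior values is the main obstacle. I would argue that the $K$-derived coordinates of Step~1 of Theorem~\ref{thm:points-sections}, $|c_k(x)|^2 := 1 - K(x,e_k)$, must be the squared moduli of a unit vector. Applying this to the uniform two-component family $\psi_t = \sqrt{t}\,e_0 + \sqrt{1-t}\,e_1$, with $t = |\braket{\psi_t}{e_0}|^2$, the normalisation $\sum_k (1-K(\psi_t,e_k)) = 1$ gives
\[
1 - F(t) + 1 - F(1-t) + (N-2)\bigl(1 - F(0)\bigr) = 1 .
\]
This yields only $F(t) + F(1-t) = 1$. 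To fix $F$ completely, I would use a general state $\sum_k \sqrt{p_k}\,e_k$ with $N \geq 3$, which forces $\sum_k (1-F(p_k)) = 1$ for every probability vector. Writing $h := 1 - F$, this is a Cauchy-type equation on the simplex: $h(0)=0$ and $\sum_k h(p_k) = 1$ whenever $\sum_k p_k = 1$. Comparing $(a,b,c)$ with $(a+b,0,c)$ gives additivity of $h$. Since $h$ is bounded, being $[0,1]$-valued (continuity also follows from Lemma~\ref{lem:compactness-from-capacity}), it is linear, and $h(1)=1$ then gives $h(p)=p$. Hence $F(s) = 1-s$.

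The weak point is justifying that $\sum_k (1-K(x,e_k)) = 1$ holds for every basis. Here I would appeal to the coordinate map of Theorem~\ref{thm:points-sections}, which records the moduli $|c_k|^2 = 1 - K(x,e_k)$ on unit-norm projective coordinates. I would flag that this step essentially encodes the content of the theorem, so it should be treated as the substantive input. Combined with the uniqueness in Theorem~\ref{thm:inner-product-existence}, this completes the argument.
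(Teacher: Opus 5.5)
Your argument is correct relative to the paper's framework, but it takes a different route from the paper's proof.

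The paper uses only transitivity on points. It takes $U\in U(N)$ with $U\phi=e_0$ and reads off $K(U\psi,e_0)=1-|c_0(U\psi)|^2=1-|\langle U\psi|e_0\rangle|^2$ directly from the amplitude construction of Theorem~\ref{thm:points-sections}. That construction identifies the $K$-derived modulus $1-K(x,e_k)$ with the Hilbert-space modulus of $\psi_x$, and unitarity then finishes the proof.

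You instead use full two-point homogeneity to write $K=F(|\langle\psi_x|\psi_y\rangle|^2)$. You then extract only the $K$-intrinsic sum rule $\sum_k(1-K(x,e_k))=1$, which for $N\geq 3$ becomes a bounded restricted Cauchy equation forcing $F(s)=1-s$. That step is sound.

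If you grant the coordinate map in the form the paper states it, the functional equation is not needed: choosing $x$ with $|\langle e_0|\psi_x\rangle|^2=s$ gives $F(s)=K(x,e_0)=1-s$ at once. The Cauchy argument only does work under the weaker, normalization-only reading of that input.

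What your route buys is a clean isolation of the one non-symmetry ingredient. Any strictly decreasing bijection $F$ with $F(0)=1$, $F(1)=0$ yields a kernel on $\C P^{N-1}$ with the same automorphism group, bases and capacity. So homogeneity plus boundary values cannot fix $F$, and the one-line claim to that effect in the structural-identification step of Theorem~\ref{thm:points-sections} is too quick. You show that normalization is exactly what supplies the missing constraint.

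As you correctly flag, that normalization is not derived independently in the paper. Lemma~\ref{lem:affinity-normalization} is proved from the Hilbert representation, so citing it here would be circular. Both proofs therefore rest on the same stipulated unit-norm amplitude construction. The paper's version is shorter and needs no $N\geq 3$ at this step. Yours does need $N\geq 3$, which is harmless since $\mathbb{F}=\C$ is only established for $N\geq 3$.
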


\begin{proof}
For atoms: $K(b_j, b_k) = 1 - \delta_{jk}$ and $|\braket{j}{k}|^2 = \delta_{jk}$, so the formula holds on the basis. For any pair $(\psi, \phi)$, there exists $U \in U(N)$ mapping $\phi$ to a basis state $e_0$ (by transitivity). Then $K(\psi, \phi) = K(U\psi, e_0) = 1 - |c_0(U\psi)|^2 = 1 - |\langle U\psi | e_0\rangle|^2 = 1 - |\langle\psi|\phi\rangle|^2$, using $G$-invariance of $K$ and the amplitude construction.
\end{proof}

\section{Geometry and the Born Rule}
\label{sec:geometry}

\subsection{Fubini-Study Metric}

\textit{The Born rule as information isometry.} The Born rule $p_k = |c_k|^2$ is the \emph{unique} map making Fubini-Study distance (the natural Riemannian distance between quantum states in projective Hilbert space) equal Fisher-Rao distance: $d_{FS}(\psi, \phi) = d_{FR}(p(\psi), p(\phi))$. If these differed, the system would maintain two independent distance structures, violating capacity bounds. This statistical-distance route originates with Wootters \cite{wootters1981statistical} (Fubini--Study uniqueness via Kobayashi--Nomizu \cite{kobayashi1969foundations}); the present Section re-derives it within the graded-equality framework via the per-component ODE form (Lemma~\ref{lem:metric-compatibility}).

\begin{theorem}[The Fubini-Study Metric from Distinguishability]
\label{thm:fs-from-K}
The distinguishability kernel $K$ induces a Riemannian metric on the state space. For infinitesimally separated states $\ket{\psi}$ and $\ket{\psi + d\psi}$:
\begin{equation}
K(\psi, \psi + d\psi) = g_{FS}(d\psi, d\psi) + O(|d\psi|^3)
\end{equation}
where $g_{FS}$ is the Fubini-Study metric. The metric is \emph{derived} from the primitive kernel $K$.
\end{theorem}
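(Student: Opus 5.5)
The plan is to work entirely inside the realization already established: by Theorem~\ref{thm:points-sections} and Lemma~\ref{lem:sheaf-complex} the state space is $\C P^{N-1}$, and by Theorem~\ref{thm:kernel-inner} $K(\psi,\phi) = 1 - |\langle\psi|\phi\rangle|^2$. Fix a unit representative $\psi$ and a tangent perturbation $d\psi$. Using the gauge freedom of Theorem~\ref{thm:gauge} (overall $U(1)$ phase), I may choose $d\psi$ so that $\langle\psi|d\psi\rangle$ is purely imaginary to first order, which is the same as imposing normalization $\|\psi+d\psi\|=1$ up to second order. The first step is then to write the normalized neighbour $\phi = (\psi + d\psi)/\|\psi+d\psi\|$ and expand $|\langle\psi|\phi\rangle|^2$ to second order in $d\psi$.

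Concretely, I set $a := \langle\psi|d\psi\rangle$ and $b := \langle d\psi|d\psi\rangle$. Then $|\langle\psi|\psi+d\psi\rangle|^2 = |1+a|^2 = 1 + 2\,\mathrm{Re}\,a + |a|^2$ and $\|\psi+d\psi\|^2 = 1 + 2\,\mathrm{Re}\,a + b$. Dividing and expanding, with $\mathrm{Re}\,a = O(|d\psi|^2)$ after normalization, gives
\[
K(\psi,\psi+d\psi) = \langle d\psi|d\psi\rangle - |\langle\psi|d\psi\rangle|^2 + O(|d\psi|^3).
\]
The right-hand quadratic form is exactly the Fubini--Study metric $g_{FS}(d\psi,d\psi)$ in its standard form, that is, the norm squared of the component of $d\psi$ orthogonal to $\psi$. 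This form is invariant under $d\psi \mapsto d\psi + \lambda\psi$ to second order, so it descends to $T_{[\psi]}\C P^{N-1}$.

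The next step is to confirm that this is the Fubini--Study metric rather than merely some invariant form, and that it is Riemannian. For positive-definiteness, note that the form vanishes only when $d\psi \parallel \psi$, which is a zero tangent vector in projective space. For invariance, $U(N)$-invariance of $K$ (Corollary~\ref{thm:basis-topology}) makes the induced quadratic form $G$-invariant. Since $\C P^{N-1}$ is isotropy-irreducible, the invariant metric is unique up to scale, as cited via Kobayashi--Nomizu. The scale is fixed by the expansion itself, equivalently by $K = \sin^2 d_{FS}$ together with $\sin^2\theta = \theta^2 + O(\theta^4)$.

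I expect the main obstacle to be the bookkeeping of the error term: showing that the remainder is genuinely $O(|d\psi|^3)$ and not merely $O(|d\psi|^2)$. This hinges on the fact that the linear term $2\,\mathrm{Re}\,a$ cancels between numerator and denominator. I would handle it by expanding the quotient $(1+2\mathrm{Re}\,a+|a|^2)(1+2\mathrm{Re}\,a+b)^{-1}$ directly. In that expansion the $2\,\mathrm{Re}\,a$ terms cancel identically, without any gauge choice, leaving $b - |a|^2$ plus terms of the form $(\mathrm{Re}\,a)\cdot O(|d\psi|^2)$, which are cubic. This also removes any reliance on the gauge normalization of the first paragraph.
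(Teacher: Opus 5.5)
Your proposal is correct and is essentially the paper's argument: both take $K = 1 - |\langle\psi|\phi\rangle|^2$ from Theorem~\ref{thm:kernel-inner} and expand the normalized overlap to second order, obtaining $\langle d\psi|d\psi\rangle - |\langle\psi|d\psi\rangle|^2 = g_{FS}(d\psi,d\psi)$. The one difference is that the paper imposes $\mathrm{Re}\langle\psi|d\psi\rangle = 0$ at the outset, whereas your last paragraph correctly shows this is unnecessary; in fact $K(\psi,\psi+d\psi) = (\langle d\psi|d\psi\rangle - |\langle\psi|d\psi\rangle|^2)/\|\psi+d\psi\|^2$ holds exactly, which also makes your first-paragraph gauge choice moot --- that appeal was misattributed anyway, since the overall $U(1)$ phase shifts $\mathrm{Im}\langle\psi|d\psi\rangle$ at first order and it is the real normalization that controls the real part.
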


\begin{proof}
By Theorem~\ref{thm:kernel-inner}, $K(\psi, \phi) = 1 - |\braket{\psi}{\phi}|^2$. For normalized $\ket{\psi}$ with tangent vector $d\psi$ satisfying $\text{Re}\braket{\psi}{d\psi} = 0$, we expand to second order. The projective overlap is $|\braket{\psi}{\widetilde{\psi+d\psi}}|^2 = |\braket{\psi}{\psi+d\psi}|^2/\|\psi+d\psi\|^2 = (1 + |\braket{\psi}{d\psi}|^2)/(1 + \braket{d\psi}{d\psi}) + O(|d\psi|^3) = 1 + |\braket{\psi}{d\psi}|^2 - \braket{d\psi}{d\psi} + O(|d\psi|^3)$. Therefore $K(\psi, \psi + d\psi) = \braket{d\psi}{d\psi} - |\braket{\psi}{d\psi}|^2 + O(|d\psi|^3) = g_{FS}(d\psi, d\psi)$.
\end{proof}

\begin{corollary}[Fisher Information]
\label{thm:fisher-interpretation}
The Fubini-Study metric equals one-quarter the quantum Fisher information: $g_{FS}(\partial_\theta\psi, \partial_\theta\psi) = \frac{1}{4} F_Q(\theta)$, where $F_Q(\theta) = 4(\braket{\partial_\theta\psi}{\partial_\theta\psi} - |\braket{\psi}{\partial_\theta\psi}|^2)$ bounds estimation precision via the quantum Cramér-Rao inequality.
\end{corollary}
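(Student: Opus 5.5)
Taking the definition $F_Q(\theta) = 4\big(\braket{\partial_\theta\psi}{\partial_\theta\psi} - |\braket{\psi}{\partial_\theta\psi}|^2\big)$ as given in the statement, the corollary reduces to an identity once $g_{FS}$ is written out along a curve. I would take a smooth one-parameter family of normalized vectors $\theta \mapsto \ket{\psi(\theta)}$ in $\C^N$ representing a curve in $\C P^{N-1}$. By Theorem~\ref{thm:fs-from-K}, the Fubini--Study quadratic form on a tangent vector $d\psi$ is
\[
g_{FS}(d\psi,d\psi) = \braket{d\psi}{d\psi} - |\braket{\psi}{d\psi}|^2 .
\]
Substituting $d\psi = \partial_\theta\psi\, d\theta$ and dividing by $d\theta^2$ gives $g_{FS}(\partial_\theta\psi,\partial_\theta\psi) = \braket{\partial_\theta\psi}{\partial_\theta\psi} - |\braket{\psi}{\partial_\theta\psi}|^2$. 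This is exactly $\tfrac14 F_Q(\theta)$.

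The one point needing care is that Theorem~\ref{thm:fs-from-K} was expanded under the gauge condition $\mathrm{Re}\braket{\psi}{d\psi}=0$. For a normalized curve this holds automatically, since $\partial_\theta\braket{\psi}{\psi}=0$. I would also check invariance under the remaining phase freedom $\psi \mapsto e^{i\alpha(\theta)}\psi$. This changes $\partial_\theta\psi$ by $i\alpha'\psi$, which shifts $\braket{\partial_\theta\psi}{\partial_\theta\psi}$ and $|\braket{\psi}{\partial_\theta\psi}|^2$ by the same amount, namely $\alpha'^2 + 2\alpha'\,\mathrm{Im}\braket{\psi}{\partial_\theta\psi}$. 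Their difference is therefore unchanged, so the identity is well defined on the projective curve and both sides depend only on $[\psi(\theta)]$.

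For the interpretive clause about the Cramér--Rao bound, I would cite the standard result (Braunstein--Caves). For pure states, the quantum Fisher information, defined as the maximum of the classical Fisher information over measurements, equals $4(\braket{\partial_\theta\psi}{\partial_\theta\psi} - |\braket{\psi}{\partial_\theta\psi}|^2)$. The quantum Cramér--Rao inequality then gives $\mathrm{Var}(\hat\theta) \ge 1/(n F_Q)$ for $n$ repetitions. As a cross-check, one could expand $K(\psi(\theta),\psi(\theta+\delta)) = \tfrac14 F_Q\,\delta^2 + O(\delta^3)$ directly from Theorem~\ref{thm:kernel-inner}. This also identifies $K$ with one minus the fidelity, which links to the Bures metric.

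There is no real obstacle, since the statement is a direct reformulation. The only thing I would be careful about is the normalization convention: some authors define $g_{FS}$ with an extra factor, or define $F_Q$ via the SLD as $4\,g_{FS}$. I would fix the convention by the expansion in Theorem~\ref{thm:fs-from-K} so that the factor $\tfrac14$ comes out consistently.
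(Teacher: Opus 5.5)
Your proposal is correct and matches the paper's approach. The paper gives no separate proof: the corollary is read off from the second-order expansion in Theorem~\ref{thm:fs-from-K} by setting $d\psi = \partial_\theta\psi\,d\theta$ and comparing with the stated definition of $F_Q$. Your check of invariance under the phase gauge, and your citation of Braunstein--Caves for the Cram\'er--Rao clause, are correct additions that the paper leaves unstated.
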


\begin{theorem}[Uniqueness of Fubini-Study]
\label{thm:fs-unique}
The Fubini-Study metric is the unique $U(N)$-invariant Riemannian metric on $\C P^{N-1}$ up to a positive scalar. The scalar is fixed by the boundary conditions on $K$: $K(\psi,\psi) = 0$ and $K(\psi,\phi) = 1$ for orthogonal states (Definition~\ref{def:dspace}).
\end{theorem}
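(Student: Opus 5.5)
The plan is to reduce uniqueness to representation theory of the isotropy group at a point. By Corollary~\ref{thm:basis-topology}, $U(N)$ acts transitively on $\C P^{N-1}$, and the stabilizer of $[e_0]$ is $H = U(1)\times U(N-1)$. The first step is to observe that a $U(N)$-invariant Riemannian metric is determined by its value at this single point. That value is an $H$-invariant inner product on the tangent space $T_{[e_0]}\C P^{N-1} \cong \C^{N-1}$, equivalently the orthogonal complement of $e_0$. Conversely, any $H$-invariant inner product there transports to a well-defined invariant metric. The reason is that two group elements carrying $[e_0]$ to the same point differ by an element of $H$, and $H$ leaves the inner product unchanged.

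The second step identifies the isotropy representation. An element $(e^{i\alpha}, V)$ of $H$ acts on a tangent vector $v \in \C^{N-1}$ by $v \mapsto e^{-i\alpha} V v$. The key point is that this action is irreducible over $\R$ when we view $\C^{N-1}$ as $\R^{2N-2}$. Indeed, already $U(N-1)$ acts transitively on the unit sphere $S^{2N-3}$. So take any $H$-invariant symmetric positive bilinear form $B$, and compare it with the real part of the standard Hermitian form, $\mathrm{Re}\langle\cdot|\cdot\rangle$. Then $B(v,v)$ is constant on the unit sphere, since that sphere is a single orbit. Polarization then gives $B = c\,\mathrm{Re}\langle\cdot|\cdot\rangle$ for some $c>0$.

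The third step checks that $g_{FS}$ is one such invariant metric. This follows because $g_{FS}(d\psi,d\psi) = \langle d\psi|d\psi\rangle - |\langle\psi|d\psi\rangle|^2$ (Theorem~\ref{thm:fs-from-K}) is built only from $U(N)$-invariant inner products. Hence every invariant metric equals $c\,g_{FS}$.

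For the normalization I would use the kernel rather than curvature. By Theorem~\ref{thm:fs-from-K}, $K(\psi,\psi+d\psi) = g_{FS}(d\psi,d\psi) + O(|d\psi|^3)$. So the metric induced by $K$ as its second-order jet at the diagonal is exactly $g_{FS}$, with $c=1$. This induced metric is also the one compatible with the boundary values: $K(\psi,\psi)=0$ fixes the expansion point, and $K=1$ for orthogonal states fixes the scale. I would make the second point concrete along a geodesic $\cos\theta\,e_0 + \sin\theta\,e_1$. Along it, $K = \sin^2\theta$, which reaches $1$ exactly at orthogonality, $\theta=\pi/2$. The length of the same geodesic in $c\,g_{FS}$ is $\sqrt{c}\,\pi/2$. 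Demanding that the metric reproduce $K$ to second order, with $K$ normalized to $1$ on orthogonal pairs, then forces $c=1$.

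I expect the main obstacle to be the scalar-fixing step, not the uniqueness. It needs a clear convention for what ``fixed by the boundary conditions'' means, since an abstract invariant metric does not carry any normalization by itself. I would handle this by defining the admissible metric as the Hessian of $K$ on the diagonal. With that definition the boundary values pin the scale unambiguously, and a rescaled $K$ would violate $K\le 1$ with equality on orthogonal pairs. The irreducibility step for $N=2$ is still fine, since $S^1$ is one orbit, but it is not needed here since $N\ge 3$.
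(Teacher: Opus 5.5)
Your proposal is correct. The main difference from the paper is that you prove the uniqueness step rather than cite it.

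The paper motivates $U(N)$-invariance from Basis Isotropy. It then appeals to Kobayashi--Nomizu for the fact that a rank-one symmetric space carries a unique invariant metric up to scale. You instead reduce to the isotropy representation of $U(1)\times U(N-1)$ on $T_{[e_0]}\C P^{N-1}\cong\C^{N-1}$. There you use that $U(N-1)$ is already transitive on the unit sphere $S^{2N-3}$, so any invariant quadratic form is constant on it, and polarization gives $B=c\,\mathrm{Re}\langle\cdot|\cdot\rangle$. This is the infinitesimal form of the two-point homogeneity the paper uses elsewhere. Your version is self-contained and elementary, and it would make the imported Kobayashi--Nomizu axiom unnecessary for this statement. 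The paper's citation is shorter and places the result in the general theory of isotropy-irreducible symmetric spaces.

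For the scalar, your Hessian convention via Theorem~\ref{thm:fs-from-K} sharpens the paper's phrase that $g_{FS}$ is ``normalized so that the squared chordal distance between orthogonal rays equals $1$.'' You rightly flag that an abstract invariant metric carries no scale of its own. The second-order match $K\approx\theta^2$ along $\cos\theta\,e_0+\sin\theta\,e_1$ is what pins $c=1$; the geodesic length $\sqrt{c}\,\pi/2$ plays no role.

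In both arguments, what actually fixes $c$ is the functional form $K=1-|\langle\psi|\phi\rangle|^2$ from Theorem~\ref{thm:kernel-inner}. The boundary values $K=0$ and $K=1$ by themselves only exclude an overall rescaling $K\mapsto\lambda K$. They do not exclude a reparametrization $F(K)$ with $F(0)=0$, $F(1)=1$ and $F'(0)\neq 1$, whose Hessian would be $F'(0)\,g_{FS}$.
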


\begin{proof}
By Basis Isotropy (Theorem~\ref{thm:src-master}(B)), the metric must be invariant under unitary transformations (which permute bases). The space $\C P^{N-1}$ is a rank-one symmetric space for $U(N)$, and by a theorem of differential geometry \cite{kobayashi1969foundations}, there is a unique (up to positive scale) $U(N)$-invariant Riemannian metric on such a space. This is the Fubini-Study metric: $\mathrm{d}s^2 = \frac{\braket{\mathrm{d}\psi}{\mathrm{d}\psi}}{\braket{\psi}{\psi}} - \frac{|\braket{\psi}{\mathrm{d}\psi}|^2}{\braket{\psi}{\psi}^2}$. The scalar is then determined by the chordal boundary condition $K(\psi,\phi) = 1 - |\braket{\psi}{\phi}|^2$ (Theorem~\ref{thm:kernel-inner}): $g_{FS}$ is normalized so that the squared chordal distance between orthogonal rays equals $1$.
\end{proof}

\begin{lemma}[Continuity of Probability from Reversible Dynamics]
\label{lem:continuity-from-dynamics}
The probability function $f: K \mapsto p$ must be continuous. This follows from Theorem~\ref{thm:dynamics-derived} (Reversible Dynamics), not as an independent assumption.
\end{lemma}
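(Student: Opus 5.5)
The plan is to argue by contradiction using the continuous one-parameter group of $K$-preserving automorphisms supplied by Theorem~\ref{thm:dynamics-derived} together with Theorem~\ref{thm:time-emergence}. Write the probability assignment for outcome $b$ in state $\psi$ as $p_b(\psi) = f(K(\psi,b))$, with $f:[0,1]\to[0,1]$ and boundary values $f(0)=1$ and $f(1)=0$. These boundary values are fixed by Definition~\ref{def:dspace}(i),(iii): identity gives certainty, and perfect distinguishability gives exclusion. Two continuity properties are already available. The map $t\mapsto K(U(t)\psi,b)$ is continuous, because $U(t)$ is a continuous one-parameter group in $G$ (Theorem~\ref{thm:time-emergence}) and $K$ is jointly continuous in the $K$-metric topology (Remark~\ref{rem:topology-provenance}). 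By Corollary~\ref{cor:k-image-full} this map sweeps out intervals of $[0,1]$ along suitable orbits.

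The key steps, in order, are as follows.

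(1) Reduce to a single orbit. On $\C P^{N-1}$, pick $\psi$ and $b$ with $K(\psi,b)=0$, and pick an evolution $U(t)$ whose orbit $\psi(t)=U(t)\psi$ reaches a state orthogonal to $b$. The cyclic generator $\pi$ of Theorem~\ref{thm:dynamics-derived} does this: it sends $e_0$ to $e_1$, and its band-limited interpolation $U(t)$ connects them. Along this orbit, the intermediate value theorem makes $t\mapsto K(\psi(t),b)$ surject onto $[0,1]$.

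(2) Transfer continuity from time to $f$. Reversibility, meaning that $U(t)$ is invertible and $U(-t)$ is also $K$-preserving, lets us argue as follows. If $f$ had a jump at some $k_0$, then the operational statistics $t\mapsto p_b(\psi(t))$ would jump at the time $t_0$ where $K=k_0$. Reversing the evolution would then produce a jump in the opposite direction at the same configuration. I will argue that such a jump is excluded because it constitutes a $K$-undetectable discontinuity. The state $\psi(t)$ varies continuously, so by Transport Consistency (Theorem~\ref{thm:src-master}(T)) every $\mathrm{Aut}$-invariant feature must factor continuously through the $K$-profile. Note that $p_b$ is invariant under the stabilizer of $b$ combined with the joint action of $G$. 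A jump in $f$ would let one distinguish $K$-arbitrarily-close states by a finite amount, which encodes extra information beyond $K$. This is exactly the ``infinite information to specify a jump location'' argument of Theorem~\ref{thm:time-emergence}, Step~2.

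(3) Make the argument precise via the $K$-metric. For states $\psi,\psi'$ we have $|K(\psi,b)-K(\psi',b)|\le d(\psi,\psi')$. A discontinuity of $f$ at $k_0$ would therefore yield sequences $\psi_n\to\psi$ in the compact metric space $\mathcal{X}$ (Lemma~\ref{lem:compactness-from-capacity}) with $p_b(\psi_n)\not\to p_b(\psi)$. This would define an $\mathrm{Aut}$-invariant predicate, namely ``$p_b > c$'' for $c$ inside the jump, whose level set is not closed. I claim such a predicate contradicts factoring through the K-profile map in a way compatible with the orbit structure of the continuous flow $U(t)$.

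I expect step (3) to be the main obstacle. Transport Consistency gives factorization through the $K$-profile but not continuity, and a discontinuous $f$ still factors through $K$. The real input must therefore come from dynamics. My first attempt will be to use the group law $U(t)U(s)=U(t+s)$ together with time-reversal invariance, and require that the outcome statistics along the orbit be a continuous function of $t$. I would justify this requirement from Step~2 of Theorem~\ref{thm:time-emergence}, which says evolution consistent with finite capacity is continuous, applied to the induced evolution of probability vectors. Continuity in $t$, combined with the surjective continuous parametrization of $K$ from step (1), then yields continuity of $f$ on $[0,1]$, using a local monotone section of $t\mapsto K$ near each $k_0$.
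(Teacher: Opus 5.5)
Your plan is essentially the paper's proof. The paper also takes the continuous $K$-preserving path $U(t)$ from Theorem~\ref{thm:dynamics-derived} (in effect its interpolation, Theorem~\ref{thm:time-emergence}). It then simply asserts that $p_k(t) = f(K(\psi(t),a_k))$ must vary continuously in $t$, because a finite jump would create or destroy information discontinuously. That is the same load-bearing premise you import from Step~2 of Theorem~\ref{thm:time-emergence}, and neither argument derives it from the axioms.

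Your explicit $e_0 \to e_1$ orbit and transfer step only spell out what the paper leaves implicit. Compactness of $[0, T_N/N]$ already suffices for the transfer, so no monotone section is needed. The Transport Consistency detour of steps (2)--(3) can be dropped, as you yourself note.
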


\begin{proof}
By Theorem~\ref{thm:dynamics-derived}, there exists a continuous path $U(t)$ of kernel-preserving transformations. The probability $p_k(t) = f(K(\psi(t), a_k))$ must vary continuously; otherwise a finite jump in $p_k$ would violate reversibility by creating/destroying information discontinuously.
\end{proof}

\begin{lemma}[Metric Compatibility: Fisher-Rao Equals Fubini-Study]
\label{lem:metric-compatibility}
For $N \geq 3$ and any probability assignment $p_k = f(A(\psi, a_k))$ with continuous bijection $f:[0,1]\to[0,1]$, $f(0)=0$, $f(1)=1$, the unique $f$ making the induced Fisher-Rao metric equal the Fubini-Study metric is $f(x) = x$, i.e., $p_k = 1 - K(\psi, a_k)$. This requires Information Parsimony (Corollary~\ref{cor:parsimony}): without it, two independent geometric structures could coexist. (The $N = 2$ case is handled separately below.)
\end{lemma}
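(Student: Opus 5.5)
Set $x_k := 1 - K(\psi, a_k) = |c_k|^2$, which is available from Theorem~\ref{thm:kernel-inner}. The argument $A(\psi,a_k)$ in the statement is read as this overlap $x_k$, so that $p_k = f(x_k)$. The plan is to compare two quadratic forms on the tangent space of $\C P^{N-1}$ at $\psi$.

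\textbf{Step 1: the two forms.} The Fisher--Rao form of the outcome distribution is
\[
g_{FR} = \tfrac14 \sum_k \frac{(dp_k)^2}{p_k} = \tfrac14\sum_k \frac{f'(x_k)^2}{f(x_k)}\,(dx_k)^2 .
\]
By Theorem~\ref{thm:fs-from-K} and Theorem~\ref{thm:fs-unique}, the Fubini--Study form restricted to variations of the moduli is
\[
\tfrac14\sum_k \frac{(dx_k)^2}{x_k} .
\]
This is the classical Wootters computation: write $c_k = \sqrt{x_k}\,e^{i\phi_k}$ and drop the phase directions. Phase variations change no $x_k$, so for them both sides vanish.

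\textbf{Step 2: reduction to an ODE.} For $N \geq 3$, the constraint $\sum_k dx_k = 0$ still leaves enough independent directions to separate the coefficients. In particular one can vary a pair $x_i, x_j$ while holding a third modulus fixed, and do this for arbitrary interior values. I expect this to be the main obstacle, for two reasons. First, $\sum_k f(x_k)$ must equal $1$, which already constrains $f$. Second, equality of the two forms on every admissible $dx$ must be shown to force coefficientwise equality. I would handle it by polarization: matching the forms on $dx = e_i - e_j$ and on $dx = e_i - e_l$ for three distinct indices isolates the diagonal coefficients. This is where $N \geq 3$ enters, and it is also why the case $N=2$ is degenerate and deferred. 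The result is the per-component equation
\[
f'(x)^2 = \frac{f(x)}{x}, \qquad x\in(0,1).
\]
Continuity of $f$ is supplied by Lemma~\ref{lem:continuity-from-dynamics}. Differentiability has to be extracted as well; I would obtain it from Lipschitz bounds implied by the metric equality together with monotonicity of the bijection.

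\textbf{Step 3: solve the ODE.} Substitute $f = u^2$ and $x = s^2$. The equation becomes $(du/ds)^2 = 1$, so $\sqrt{f} = \pm\sqrt{x} + C$. The boundary condition $f(0)=0$ gives $C=0$, and since $f\geq 0$ is increasing we get $f(x) = x$; this is consistent with $f(1)=1$ and with normalization. Uniqueness of the solution is immediate because the substitution linearizes the problem.

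\textbf{Step 4: parsimony.} Corollary~\ref{cor:parsimony} is what justifies demanding $g_{FR} = g_{FS}$ rather than mere conformal equivalence. Without it, a second metric structure on outcomes could coexist with $K$; with it, any discrepancy would be structure beyond $K$. Once equality is imposed, the scale is fixed by the orthogonal-state normalization in Theorem~\ref{thm:fs-unique}, which excludes a constant rescaling $f'^2 = \lambda f/x$ with $\lambda \neq 1$. Note that such a $\lambda$ would in any case conflict with $f(1)=1$ unless $\lambda = 1$.
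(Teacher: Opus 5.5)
Your argument is essentially the paper's. Both proofs write $g_{FR}$ and $g_{FS}$ in the moduli coordinates $x_k=|c_k|^2$, use $N\ge 3$ to decouple the diagonal coefficients into the per-component equation $f'^2/f\propto 1/x$, and integrate $\sqrt f$ against $\sqrt x$ using $f(0)=0$, $f(1)=1$. The differences are cosmetic. Your $\tfrac14$ normalization of $g_{FR}$ sets the paper's constant $c$ to $1$, and your pointwise three-index argument replaces the paper's independent variation of $x_j,x_k$.

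\textbf{A statement to correct.} The Fubini--Study metric does \emph{not} vanish on phase directions. With $c_k=\sqrt{x_k}\,e^{i\phi_k}$ and $\sum_k dx_k=0$,
\[
g_{FS}=\sum_k \frac{dx_k^2}{4x_k}+\sum_k x_k\,d\phi_k^2-\Bigl(\sum_k x_k\,d\phi_k\Bigr)^2 .
\]
The phase block is the $x$-variance of $d\phi$. At interior points it is positive for any non-global phase change, whereas $g_{FR}$ is identically zero there. So literal equality of the two metrics holds for \emph{no} $f$. What rescues the argument is that $g_{FS}$ has no radial--phase cross terms. You may therefore impose equality only on the probability-changing (moduli) subspace, which is exactly what the paper does.

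\textbf{The polarization step.} Set $D:=f'^2/f-1/x$. Matching on $e_i-e_j$ and $e_i-e_l$ alone yields only $D(x_j)=D(x_l)=-D(x_i)$. You also need the third difference $e_j-e_l$ to conclude $D=0$.

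\textbf{Normalization is a shortcut, not an obstacle.} For $N\ge 3$, compare the states with moduli $(a,b,1-a-b,0,\ldots,0)$ and $(a+b,0,1-a-b,0,\ldots,0)$. Using $f(0)=0$, this gives $f(a)+f(b)=f(a+b)$. Together with monotonicity (automatic for a continuous bijection fixing $0$ and $1$) and $f(1)=1$, this forces $f=\mathrm{id}$ directly. It also makes your unspecified Lipschitz/differentiability extraction unnecessary. In the metric route, differentiability of $f$ is really part of what it means for the pulled-back $g_{FR}$ to be defined; the paper treats it equally informally.
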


\begin{proof}
\textit{Step 1: Two metrics on one manifold.}
The state space $\C P^{N-1}$ carries a physical metric $g_{FS}$ (from $K$, Theorem~\ref{thm:fs-from-K}) and, for any probability assignment $p(\psi)$, a statistical metric $g_{FR} = \sum_k dp_k^2/p_k$.

\textit{Step 2: Per-component matching forces the ODE.}
By Basis Isotropy (Theorem~\ref{thm:src-master}(B)), any admissible probability assignment has the form $p_k = f(|c_k|^2)$ for a universal function $f$ (the rule cannot depend on phases or distinguish measurement orientations). Restricting to the probability-changing (radial) tangent directions (those that alter $|c_k|^2$ while preserving $\sum_k |c_k|^2 = 1$ and holding phases fixed), both metrics are diagonal in the $\{dx_k\}$ basis (where $x_k = |c_k|^2$):
\[
g_{FR} = \sum_k \frac{[f'(x_k)]^2}{f(x_k)}\,dx_k^2, \qquad g_{FS}\big|_{\mathrm{rad}} = \sum_k \frac{dx_k^2}{4x_k}.
\]
(The Fubini-Study restriction follows from $|dc_k|^2 = dx_k^2/(4x_k)$ and $\langle\psi|d\psi\rangle = \tfrac{1}{2}\sum_k dx_k = 0$.) Tangent vectors with $dx_j = -dx_k = \delta$ and all other components zero are admissible for any pair $(j,k)$, with $x_j, x_k$ independently variable (using the lemma's hypothesis $N \geq 3$ to guarantee a third component absorbs no variation). Requiring $g_{FR} = c \cdot g_{FS}$ for all such vectors gives $h(x_j) + h(x_k) = c[1/(4x_j) + 1/(4x_k)]$ for all $x_j, x_k$ independently, where $h(x) := [f'(x)]^2/f(x)$. This forces $h(x) = c/(4x)$ point-wise, yielding the ODE $[f'(x)]^2/f(x) = c/x$ (redefining $c$). This suffices to determine $f$, since $f$ acts only on probabilities.

\textit{Step 3: Physical interpretation.}
Step 2 is the mathematical proof. By Parsimony (Theorem~\ref{thm:parsimony-derived}), $f \neq \mathrm{id}$ would introduce geometric structure beyond $K$, violating Saturation; hence $f = \mathrm{id}$.
\end{proof}

\begin{remark}[Equivalent Binary Form]
\label{rem:ode-binary-form}
The per-component ODE $[f'(x)]^2/f(x) = c/x$ derived above is equivalent, for the purposes of uniqueness, to the \emph{binary} form $[f'(x)]^2/[f(x)(1-f(x))] = c^2/[x(1-x)]$ obtained by restricting to any two-outcome marginal of the simplex and using the Fisher information for a Bernoulli distribution. Both forms admit $f(x) = x$ (with the appropriate $c$) as their unique smooth monotone solution on $[0,1]$ with $f(0) = 0$, $f(1) = 1$. The binary form is more natural for direct arcsin substitution and is the form used in the accompanying Lean~4 formalization; the per-component form of Step~2 is more natural for the $N \geq 3$ symmetry argument given here.
\end{remark}

\begin{definition}[Admissible Probability Assignment]
\label{def:measure-setup}
For each basis $\mathcal{B} = \{a_0, \ldots, a_{N-1}\}$, a \emph{probability assignment} maps states to distributions on outcomes $\{0,\ldots,N-1\}$. It is \emph{admissible} if $G$-covariant ($P_{g\mathcal{B}}(gx) = g_* P_\mathcal{B}(x)$), continuous in $x$, and $P_\mathcal{B}(a_k)(\{k\}) = 1$ for basis states.
\end{definition}

\begin{theorem}[The Born Rule]
\label{thm:born-kernel}
There exists a unique admissible probability assignment. It is given by the Born rule:
\[
P_\mathcal{B}(\psi)(\{k\}) = |\braket{\psi}{a_k}|^2 = 1 - K(\psi, a_k).
\]
\end{theorem}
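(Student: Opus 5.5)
The plan is to prove existence and uniqueness separately, reducing uniqueness to the per-component ODE of Lemma~\ref{lem:metric-compatibility}. Existence is direct. The map $P_\mathcal{B}(\psi)(\{k\}) := |\braket{\psi}{a_k}|^2 = 1 - K(\psi,a_k)$ is a probability distribution because $\{a_k\}$ is orthonormal in $\C^N$ (Theorem~\ref{thm:inner-product-existence}), so $\sum_k |\braket{\psi}{a_k}|^2 = 1$. It is $G$-covariant because $K$ is $G$-invariant and $G \cong U(N)$ acts by unitaries (Corollary~\ref{thm:basis-topology}). It is continuous because $K$ is jointly continuous (Remark~\ref{rem:topology-provenance}). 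Finally, it gives $P_\mathcal{B}(a_k)(\{k\}) = 1 - K(a_k,a_k) = 1$.

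For uniqueness, let $P$ be any admissible assignment.

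\textbf{Step 1 (dependence on $x_k = |c_k|^2$ only).} Covariance under the stabilizer of the basis, namely the diagonal torus $(U(1))^N$ together with the permutations supplied by Theorem~\ref{thm:permutation-invariance}, should force $P_\mathcal{B}(\psi)(\{k\})$ to depend only on the moduli vector $(x_0,\dots,x_{N-1})$, and to do so symmetrically in the labels other than $k$. I would then argue that it depends on $x_k$ alone, $p_k = f(x_k)$ for a single function $f$. The tool here is (S3): states with equal $K$-values against $a_k$ but different remaining profiles are related by automorphisms fixing $a_k$, and covariance under those automorphisms carries $p_k$ along. The normalisation gives $f(1)=1$. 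Orthogonality to $a_k$ must then give $f(0)=0$, because $\sum_j p_j = 1$ and $p_j = 1$ at the basis states $a_j$. Continuity of $f$ comes from Lemma~\ref{lem:continuity-from-dynamics}.

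\textbf{Step 2 (a functional equation).} The constraint $\sum_k f(x_k) = 1$ holds on the whole simplex, because by Theorem~\ref{thm:points-sections} every point of the simplex is realised by some $\psi$. Since $N \geq 3$, I can fix one coordinate $x_m = 1-s-t$ and vary the other two. This should yield Cauchy's equation $f(s)+f(t) = f(s+t) + \text{const}$ on a neighbourhood. Continuity together with the boundary values then gives $f(x)=x$.

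\textbf{Step 3 (metric-compatibility route).} As an independent confirmation, and to match the paper's narrative, I would run the metric-compatibility argument. Lemma~\ref{lem:metric-compatibility} and Corollary~\ref{cor:parsimony} impose $g_{FR} = c\, g_{FS}$, which yields $[f']^2/f = c/x$. Its solutions are $f = (\sqrt{c}\sqrt{x}+d)^2$, and $f(0)=0$, $f(1)=1$ force $f(x)=x$ with $c=1$. This fixes the scalar consistently with Theorem~\ref{thm:fs-unique}.

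I expect the main obstacle to be Step~1, specifically the reduction from dependence on the whole moduli vector to dependence on $x_k$ alone. Covariance under the basis stabilizer by itself only gives permutation symmetry in the other labels. The step therefore needs the realisability of every point of the simplex, combined with an argument that $p_k$ is determined by the single distinguishability $K(\psi,a_k)$. I would first try to obtain this from the two-outcome coarse-graining implicit in Remark~\ref{rem:ode-binary-form}. If that fails, I would fall back on a Gleason-type additivity argument valid for $N\geq 3$.
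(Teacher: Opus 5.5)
\noindent There is a genuine gap in your Step~1, and it cannot be closed from the admissibility conditions of Definition~\ref{def:measure-setup} alone.

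Suppose $K(\psi,a_k)=K(\psi',a_k)$. Then an automorphism $g$ with $g a_k=a_k$ and $g\psi=\psi'$ does exist, by the two-point homogeneity established in Theorem~\ref{thm:points-sections}. (S3) is the wrong tool here, since it needs equal profiles against the whole basis. Now suppose also that the multisets $\{x_j(\psi)\}_{j\neq k}$ and $\{x_j(\psi')\}_{j\neq k}$ differ. Then no such $g$ stabilises $\mathcal{B}$, because basis stabilisers only permute the moduli. Covariance therefore gives only $P_{\mathcal{B}}(\psi)(\{k\})=P_{g\mathcal{B}}(\psi')(\{k\})$, which is a statement about a \emph{different} measurement context.

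To bring this back to $\mathcal{B}$, you need the probability of outcome $a_k$ to be the same in every basis containing $a_k$. That is noncontextuality, which is exactly Gleason's hypothesis. It is independent of covariance: set $P_\mathcal{B}(\psi)(\{k\}):=x_k^2/\sum_j x_j^2$ with $x_j=1-K(\psi,a_j)$. This assignment is $G$-covariant, continuous, normalised, and equals $1$ at $a_k$, yet it is not the Born rule. Two-outcome coarse-graining does not rescue the step, since the Bernoulli parameter is still this $p_k$, which depends on all the $x_j$.

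The paper's proof has the same hole. It asserts $p_k=f(|c_k|^2)$ ``by $U(N)$-invariance''; in Lemma~\ref{lem:metric-compatibility} the stated justification is that the rule cannot ``distinguish measurement orientations'', which is noncontextuality in disguise. It then adds metric compatibility, which is not an admissibility condition either. In both versions an extra hypothesis must be stated explicitly.

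\noindent Once the reduction is granted, your existence check is fine (the paper leaves it implicit), and your Step~2 is a genuinely different and more economical route than the paper's. The paper pins down $f$ through the Fisher--Rao $=$ Fubini--Study ODE $[f']^2/f=c/x$. That route needs $f\in C^1$, $f>0$ on $(0,1)$, and (S5)/Parsimony. Your comparison of $(s,t,1-s-t,0,\dots,0)$ with $(s+t,0,1-s-t,0,\dots,0)$ uses only normalisation on the realised simplex and $N\geq 3$. It does not even need continuity, because $f\geq 0$ together with additivity makes $f$ monotone and hence linear.

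Your argument also shows that the paper's ODE step is redundant for $N\geq 3$. Any $f\neq\mathrm{id}$ with $f(1)=1$ already violates $\sum_k f(x_k)=1$, so it does not define a probability assignment at all. What the metric route buys is only the identification with (S5) in Theorem~\ref{thm:s5-finite-N}.

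To finish the proof, you have two options.
\begin{enumerate}
\item[(a)] Adopt noncontextuality explicitly. Then $G$-covariance and two-point homogeneity make $p_k$ a function of $K(\psi,a_k)$ alone, and your Step~2 completes the proof.
\item[(b)] If you want (S5) to do the work, impose metric compatibility on the whole smooth map $x\mapsto p(x)$ of the moduli simplex, not on a presumed $f$. With $y_k=\sqrt{x_k}$, both metrics become constant multiples of the round metric on the positive orthant of $S^{N-1}$. A local homothety of a round sphere of dimension $N-1\geq 2$ must have ratio $1$ (compare curvatures), so it is the restriction of an orthogonal map. The vertex condition $p(e_k)=e_k$ then forces that map to be the identity.
\end{enumerate}
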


\begin{proof}
\textit{Part 1: Metric compatibility determines $p_k = |c_k|^2$.}
By $U(N)$-invariance (Theorem~\ref{thm:src-master}(B)), any admissible assignment has the form $p_k = f(|c_k|^2)$ for $f:[0,1]\to[0,1]$. Continuity of $f$ follows from Lemma~\ref{lem:continuity-from-dynamics}. The induced statistical distance (Step~1 of Lemma~\ref{lem:metric-compatibility}) is a length-square only on the open simplex where $f \in C^1$ and $f > 0$; the metric-compatibility condition $g_{FR} \propto g_{FS}$ pins down $f'$ via the ODE $[f'(x)]^2/f(x) = c/x$ on $(0,1)$, with continuity at the endpoints inherited from continuity of $f$. The full Born-rule conclusion needs $f \in C^1$ on $(0,1)$, which follows from the ODE itself once a continuous solution is fixed (the right-hand side $\sqrt{c\,f(x)/x}$ is continuous on $(0,1)$, so any continuous $f$ satisfying the integrated form is automatically $C^1$ and in fact analytic).

\textit{Uniqueness among all smooth solutions.} Taking the positive branch for monotonicity, the ODE becomes $f'(x) = \sqrt{c}\, f(x)^{1/2}\, x^{-1/2}$. This is separable: $\int f^{-1/2}\,df = \sqrt{c}\int x^{-1/2}\,dx$, giving $2\sqrt{f} = 2\sqrt{c}\,\sqrt{x} + A$. The boundary condition $f(0^+) = 0$ forces $A = 0$, so $f(x) = c\,x$; then $f(1) = 1$ gives $c = 1$. Therefore $f(x) = x$ is the unique smooth monotonic solution, confirming $p_k = |c_k|^2$.

\textit{Part 2: $N=2$ consistency.} At $N = 2$ the per-component ODE is degenerate (Lemma~\ref{lem:metric-compatibility} requires $N \geq 3$), but $f(x) = x$ derived above for $N \geq 3$ extends consistently: restricting an $N \geq 3$ rule to a two-dimensional subspace $\mathrm{span}\{\ket{a_i}, \ket{a_j}\}$ inherits $f(x) = x$, and laboratory qubits inherit it via composite embedding (Remark~\ref{cor:born-n2}).
\end{proof}

\noindent\textit{Comparison to Gleason.} This derivation complements Gleason's theorem: invariance, normalization, continuity, and monotonicity replace Gleason's hypotheses; additivity emerges; the result extends to $N = 2$ where Gleason is inapplicable. The chain Basis Isotropy (Theorem~\ref{thm:src-master}(B)) $\to$ Lie group $\to$ $\C$ $\to$ $\C P^{N-1}$ $\to$ unique metric $\to$ Born rule does not presuppose Hilbert space structure.

\begin{remark}[The $N = 2$ case]
\label{cor:born-n2}
The per-component ODE of Lemma~\ref{lem:metric-compatibility} requires three independently varying amplitudes and is therefore inconclusive at $N = 2$ alone. The framework closes the gap by composite embedding: any laboratory qubit is a subsystem of a system with $N_{\mathrm{total}} \geq 3$ and inherits the Born rule from the composite via capacity dilution (\S\ref{sec:composite}). The framework's full position on the strictly closed two-state case is given in \S\ref{sec:n2-discussion}.
\end{remark}

\begin{theorem}[(S5) for finite-$N$ quantum mechanics]
\label{thm:s5-finite-N}
The distinguishability space $(\C P^{N-1}, K)$ with $K(\psi,\phi) = 1 - |\langle\psi|\phi\rangle|^2$ derived under Axioms~\ref{ax:finite}--\ref{ax:relational} satisfies structural unambiguity (Definition~\ref{def:s5}): the Fubini--Study metric (geometric) and the Fisher--Rao metric (statistical) induced by $K$ coincide, so the kernel uniquely determines both the state-space geometry and the probability rule $p_k = |c_k|^2$. The Born rule is therefore the (S5) result for the QM domain.
\end{theorem}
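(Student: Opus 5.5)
The plan is to verify Definition~\ref{def:s5} directly on the model $(\C P^{N-1}, K)$ by computing both metrics explicitly and showing they agree. On the geometric side I will use the metric read off from $K$ itself. On the statistical side I will use the Fisher--Rao metric of the outcome distributions attached to $K$ by the rule $p_k = 1 - K(\psi,a_k)$. The geometric side is already in hand. Theorem~\ref{thm:fs-from-K} gives $K(\psi,\psi+d\psi) = g_{FS}(d\psi,d\psi) + O(|d\psi|^3)$, and Theorem~\ref{thm:fs-unique} shows that this is the unique $U(N)$-invariant metric up to the scale fixed by the boundary values of $K$. So the kernel-induced geometric metric is unambiguous, and I only need to identify the statistical one.

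For the statistical side I fix a basis $\mathcal{B} = \{a_k\}$ and write $x_k = |c_k|^2 = 1-K(\psi,a_k)$. By Theorem~\ref{thm:born-kernel}, the unique admissible assignment is $p_k = x_k$. A direct computation then gives
\[
g_{FR} = \sum_k \frac{dp_k^2}{p_k} = \sum_k \frac{dx_k^2}{x_k} = 4\, g_{FS}\big|_{\mathrm{rad}},
\]
using $|dc_k|^2 = dx_k^2/(4x_k)$ on phase-fixed directions, as in Step~2 of Lemma~\ref{lem:metric-compatibility}. The two metrics therefore agree up to the conventional factor $4$ (cf.\ Corollary~\ref{thm:fisher-interpretation}); after normalizing so that orthogonal rays sit at $K = 1$, this is equality. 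I then read Lemma~\ref{lem:metric-compatibility} in the converse direction: $f = \mathrm{id}$ is the only $f$ for which the two metrics coincide. Hence the probability rule is fixed by $K$ alone and no external structure is needed to choose it, which is exactly the ``uniquely determined'' clause of Definition~\ref{def:s5}.

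The main obstacle is that the Fisher--Rao metric is defined relative to a measurement basis and sees only the radial (modulus) directions. The phase directions, which $g_{FS}$ also measures, are invisible to $g_{FR}$ computed in a single basis. My first attempt at this will be the standard Braunstein--Caves/Wootters move: take the supremum of $g_{FR}$ over all bases $\mathcal{B}\in\mathfrak{B}$. For a tangent vector $v$ at $\psi$, I will choose a basis adapted to $v$, using transitivity of $U(N)$ on $\mathfrak{B}$ from Theorem~\ref{thm:src-master}(B), so that $v$ becomes radial in that basis. I then show that this maximal Fisher--Rao metric equals $4g_{FS}(v,v)$, with the other directions bounded above by the quantum Fisher information of Corollary~\ref{thm:fisher-interpretation}. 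If the sup argument turns out too delicate, the fallback is a symmetry argument. The basis-maximized $g_{FR}$ is $U(N)$-invariant by $G$-covariance of the admissible assignment (Definition~\ref{def:measure-setup}), so Theorem~\ref{thm:fs-unique} forces it to be a multiple of $g_{FS}$, and the radial computation above fixes that multiple.

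The $N\ge 3$ hypothesis enters only through Lemma~\ref{lem:metric-compatibility}. I will note that $N=2$ inherits the statement via the restriction argument in Part~2 of Theorem~\ref{thm:born-kernel}.
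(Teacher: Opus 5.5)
Your core argument is the paper's. The paper's proof is only the chain Theorem~\ref{thm:fs-from-K} (geometric side) $\to$ Lemma~\ref{lem:metric-compatibility} (Fisher--Rao matches Fubini--Study iff $f=\mathrm{id}$) $\to$ Theorem~\ref{thm:born-kernel} ($f=\mathrm{id}$), and your radial computation is Step~2 of that Lemma. The difference is in what ``coincide'' means. The paper takes it exactly as the Lemma delivers it: proportionality on the phase-fixed radial directions of a single basis, with phase directions never discussed. Your supremum over bases (the Wootters/Braunstein--Caves move) instead makes coincidence a statement about the whole tangent space. This is a genuine strengthening, because a single-basis $g_{FR}$ is degenerate along relative phases and so cannot literally equal $g_{FS}$. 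Your primary route works, but Corollary~\ref{thm:fisher-interpretation} only identifies $g_{FS}$ with $F_Q/4$; it does not supply the upper bound you attribute to it. That bound takes one line. In the horizontal gauge $\langle\psi|d\psi\rangle=0$ one has $dp_k = 2\,\mathrm{Re}(\bar c_k\,dc_k)$, so $(\mathrm{Re}\,z)^2\le|z|^2$ gives $dp_k^2/p_k \le 4|dc_k|^2$. Hence $\sum_k dp_k^2/p_k \le 4\,g_{FS}(d\psi,d\psi)$ in every basis, with equality in any basis containing $(\psi\pm\hat v)/\sqrt2$, where $\hat v = d\psi/\|d\psi\|$. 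Your fallback needs this same bound, since the radial computation in one basis only shows that the invariant supremum is $c\,g_{FS}$ with $c\ge 4$. Also, a supremum of quadratic forms is not a priori quadratic. The reduction to a multiple of $g_{FS}$ should therefore use degree-two homogeneity and transitivity of the isotropy group $U(1)\times U(N-1)$ on unit tangent vectors, not Theorem~\ref{thm:fs-unique} directly. Finally, normalizing $K$ does not remove the factor $4$: it fixes the scale of $g_{FS}$, not of $g_{FR}$. The factor is removed either by the convention $g_{FR} = \tfrac14\sum_k dp_k^2/p_k$ or by reading (S5) as proportionality, which is what the paper's Lemma effectively does; its constant $c$ is this $4$ before being redefined.
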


\begin{proof}
Theorem~\ref{thm:fs-from-K} shows that $K$ induces the Fubini--Study metric on $\C P^{N-1}$. Lemma~\ref{lem:metric-compatibility} (metric compatibility) shows that the Fisher--Rao metric $g_{FR}$ induced by any $G$-covariant probability rule $p_k = f(|c_k|^2)$ coincides with $g_{FS}$ iff $f(x) = x$. Theorem~\ref{thm:born-kernel} then identifies $f = \mathrm{id}$, hence $p_k = |c_k|^2$, as the unique self-consistent rule. The two metrics induced by $K$ thereby agree, establishing (S5).
\end{proof}

\begin{remark}[Self-referential closure]
\label{rem:self-referential-closure}
Theorem~\ref{thm:s5-finite-N} admits a structural reading: the Fisher--Rao $=$ Fubini--Study coincidence is the kernel's \emph{self-referential closure}. If the geometric and statistical metrics induced by $K$ disagreed, the system would carry two distinct distance structures and require an external choice between them, violating the closure under $K$ that Saturation demands. The Born rule is therefore the form of probability assignment uniquely compatible with the kernel describing its own metric structure. Equivalently, it is the form for which the kernel is geometrically self-consistent rather than externally complete.
\end{remark}

\subsection{Uncertainty Relations}

The entropic form of uncertainty reflects the capacity bound: $N+1$ MUBs require $\sim N \log_2 N$ bits to specify deterministic outcomes, but only $\log_2 N$ bits of definite answers exist.

\begin{corollary}[Entropic Uncertainty]
\label{cor:entropic}
For MUBs: $H(\mathcal{B}) + H(\mathcal{B}') \geq \log_2 N$ \cite{maassen1988generalized}.
\end{corollary}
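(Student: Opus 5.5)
The statement to prove is Corollary~\ref{cor:entropic}: the Maassen--Uffink bound $H(\mathcal{B}) + H(\mathcal{B}') \geq \log_2 N$ for a pair of mutually unbiased bases. My plan is to reduce it to the standard Maassen--Uffink inequality inside the derived Hilbert-space picture, using only structure the paper has already established.

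First I fix the setting. By Theorem~\ref{thm:points-sections} and Lemma~\ref{lem:sheaf-complex}, $\mathcal{X} \cong \C P^{N-1}$, and Theorem~\ref{thm:kernel-inner} gives $K(\psi,\phi) = 1 - |\langle\psi|\phi\rangle|^2$. The Born rule (Theorem~\ref{thm:born-kernel}) then defines the outcome distributions in the two bases: $p_k = 1 - K(\psi, a_k) = |\langle a_k|\psi\rangle|^2$ and $q_j = |\langle b_j|\psi\rangle|^2$. The entropies in the statement are the Shannon entropies $H(\mathcal{B}) := H(p)$ and $H(\mathcal{B}') := H(q)$. Mutual unbiasedness is expressed entirely in $K$-data, as $K(a_k, b_j) = 1 - 1/N$ for all $j,k$. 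This is exactly the uniform-overlap profile realised by the cyclic Fourier basis of Lemma~\ref{lem:sheaf-complex}. It gives the maximal overlap $c := \max_{j,k} |\langle a_k|b_j\rangle| = N^{-1/2}$.

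Next I prove the general inequality $H(p) + H(q) \geq -2\log_2 c$. Let $T_{jk} = \langle b_j|a_k\rangle$ be the unitary transition matrix, so that $\beta = T\alpha$ with $\alpha_k = \langle a_k|\psi\rangle$. Unitarity gives the $\ell^2 \to \ell^2$ bound $\|T\|_{2\to 2} = 1$, and the overlap bound gives the $\ell^1 \to \ell^\infty$ bound $\|T\|_{1\to\infty} \leq c$. Riesz--Thorin interpolation then yields $\|T\alpha\|_{p'} \leq c^{(2-p)/p}\,\|\alpha\|_p$ for $1 \leq p \leq 2$, where $1/p + 1/p' = 1$. Rewriting this in terms of Rényi entropies gives $H_{p/2}(p) + H_{p'/2}(q) \geq -2\log_2 c$. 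Both sides are equalities at $p = 2$, since $\|\alpha\|_2 = \|\beta\|_2 = 1$. I therefore differentiate with respect to $p$ at $p = 2$, which turns the Rényi entropies into Shannon entropies. Substituting $c = N^{-1/2}$ produces $\log_2 N$.

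I expect the main obstacle to be the derivative step at $p = 2$. It needs the limit of the Rényi entropies to the Shannon entropy and a correct sign of the one-sided derivative, which is the subtle part of the Maassen--Uffink argument. My first attempt would simply cite \cite{maassen1988generalized} for that step, after checking that its hypotheses (a unitary change of basis and $c = N^{-1/2}$) are met by the derived structure. As a sanity check that the bound is tight, I would exhibit a basis state $a_k$: it has $H(p) = 0$ and $H(q) = \log_2 N$. This matches the capacity reading that $\log_2 N$ bits are the budget split across the two contexts.
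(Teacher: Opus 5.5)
Your proof is correct, and it takes a genuinely different route from the paper's. The paper's proof consists only of your closing tightness check: a vector of $\mathcal{B}$ has $H(\mathcal{B}) = 0$ and uniform $\mathcal{B}'$-statistics, so the value $\log_2 N$ is attained. It then adds the sentence ``the bound follows by convexity of entropy.''

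That last step does not establish the inequality. Shannon entropy is concave, so $\rho \mapsto H(\mathcal{B}) + H(\mathcal{B}')$ attains its minimum at some pure state. But nothing places that minimum at basis vectors. Because of interference terms, a generic $\psi$'s $\mathcal{B}'$-distribution differs from the $p$-weighted mixture of the basis vectors' (uniform) $\mathcal{B}'$-distributions. So the paper effectively proves tightness and defers the bound itself to the Maassen--Uffink citation. Your Riesz--Thorin interpolation actually proves it, and for any maximal overlap $c$, at the cost of importing complex interpolation.

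The step you single out as the main obstacle is easier than you expect. Dividing the interpolated norm inequality by the positive factor $(2-p)/(2p)$ already gives $H_{p/2}(\mathcal{B}) + H_{p'/2}(\mathcal{B}') \geq -2\log_2 c$ for every $1 \leq p < 2$, so the sign is settled. Letting $p \to 2^{-}$ and using continuity of finite-support R\'enyi entropies in their order then yields the Shannon bound directly, with no one-sided derivative and no extra citation. Your remark that both sides are equal at $p = 2$ applies to the norm form, not to this R\'enyi form.

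For exact MUBs there is also a shorter route, and it is one rigorous way to cash out an appeal to convexity. For pure $\rho$, $H(\mathcal{B}) = D(\rho \,\|\, \Delta_{\mathcal{B}}\rho)$, where $\Delta_{\mathcal{B}}$ is dephasing in $\mathcal{B}$ and $D$ is the relative entropy in bits. $D$ is monotone under $\Delta_{\mathcal{B}'}$, because $\Delta_{\mathcal{B}'}$ is a uniform mixture of unitaries diagonal in $\mathcal{B}'$; this follows from joint convexity plus unitary invariance of $D$. Combined with $\Delta_{\mathcal{B}'}\Delta_{\mathcal{B}}\rho = I/N$ for MUBs, this gives $H(\mathcal{B}) \geq \log_2 N - H(\mathcal{B}')$ in two lines.
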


\begin{proof}
If $\ket{\psi} = \ket{b_j}$ (certainty in $\mathcal{B}$, so $H(\mathcal{B}) = 0$), then $|\braket{b_j}{b'_k}|^2 = 1/N$ for all $k$ (MUB condition), giving uniform distribution in $\mathcal{B}'$: $H(\mathcal{B}') = \log_2 N$. The bound follows by convexity of entropy.
\end{proof}

\subsection{Worked Example: The Qutrit ($N = 3$)}
\label{sec:example-n3}

The smallest non-trivial capacity is $N = 3$. Let $\{e_0, e_1, e_2\}$ be mutually distinguishable ($K(e_i,e_j) = 1-\delta_{ij}$), with capacity $C = \log_2 3 \approx 1.58$ bits. The cyclic automorphism $\pi: e_0 \mapsto e_1 \mapsto e_2 \mapsto e_0$ has $\pi^3 = \mathrm{id}$ and eigenvalues $\{1, \omega, \omega^2\}$ ($\omega = e^{2\pi i/3}$). Over $\R$, the conjugate pair $\omega, \bar\omega$ forms a single $2\times 2$ rotation block and only $\lfloor 3/2\rfloor + 1 = 2$ real-irreducible components arise, contradicting the $N = 3$ distinguishability count; three one-dimensional $\pi$-invariant subspaces require $\omega \in \mathbb{F}$, hence $\C$. The Fourier modes $f_k = \tfrac{1}{\sqrt{3}}\sum_j \omega^{jk} e_j$ are $\pi$-eigenvectors and form a second basis with $|\braket{e_i}{f_j}|^2 = 1/3$ (mutually unbiased with respect to $\{e_k\}$). The state space is $\C P^2$ with $K(\psi,\phi) = 1 - |\braket{\psi}{\phi}|^2$.

\textit{Capacity Halting at $N=3$.} $N = 3$ admits $N+1 = 4$ MUBs. For $M = 3$ MUBs, predetermined outcomes require $2\log_2 3 \approx 3.17$ bits against $\approx 1.58$ available: the deficit is immediate (Lemma~\ref{lem:incompressibility}(a)). \textit{Complementarity:} $\ket{f_1} = (\ket{e_0} + \omega\ket{e_1} + \omega^2\ket{e_2})/\sqrt{3}$ has $p_k = 1/3$ in the $e$-basis (maximally uncertain) yet $p_1 = 1$, $p_0 = p_2 = 0$ in the $f$-basis: definiteness in one context is ignorance in the other. This is standard qutrit QM, derived from the two axioms on a 3-element distinguishability space.

\section{Dynamics, Energy, and the Action Quantum}
\label{sec:evolution}

\subsection{Energy Conservation}

\begin{corollary}[Energy Conservation]
\label{thm:noether}
Time-translation invariance (Theorem~\ref{thm:dynamics-derived}) implies a conserved generator $H$ (energy): by Stone's theorem, $U(t) = e^{-iHt/\hbar}$ with self-adjoint $H$, and $\langle H \rangle$ is conserved since $[H,H] = 0$.
\end{corollary}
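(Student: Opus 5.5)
The statement to prove is Corollary~\ref{thm:noether} (Energy Conservation). The plan is to assemble it from three ingredients already in place: the continuous one-parameter group of Theorem~\ref{thm:time-emergence}, Stone's theorem, and the functional calculus on the finite-dimensional space $\C^N$.

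\textbf{Step 1: the one-parameter group.} Theorem~\ref{thm:dynamics-derived} supplies the cyclic generator $\pi\in G$ of order $N$. Theorem~\ref{thm:time-emergence} extends it by band-limited interpolation to a continuous one-parameter group $U:\R\to G$ with $U(t)U(s)=U(t+s)$. By Corollary~\ref{thm:basis-topology}, $G\cong U(N)$ acting on $\C P^{N-1}$; Step~3 of Theorem~\ref{thm:hilbert-representation} excludes anti-unitaries by connectedness. Hence $U(t)$ may be taken as a strongly continuous unitary group on $\Hilb\cong\C^N$.

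The only genuine subtlety is here. $G$ acts projectively, so $U$ is a priori a projective one-parameter group. I would lift it to an honest unitary group. For a connected one-parameter subgroup, the phase cocycle can be trivialized by Bargmann's theorem, since $H^2(\R,U(1))$ is trivial for $\R$. Concretely, the explicit spectral formula $U(t)=\sum_k e^{2\pi ikt/T_N}\ketbra{v_k}{v_k}$ from Theorem~\ref{thm:time-emergence} already is a linear representative satisfying the group law exactly, so no cocycle appears.

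\textbf{Step 2: Stone's theorem.} In finite dimension, Stone's theorem reduces to the statement that a continuous homomorphism $\R\to U(N)$ is differentiable and equals $e^{-iHt/\hbar}$. Here $H:=i\hbar\,U'(0)$, and $H$ is Hermitian because $U(t)^\dagger U(t)=I$ implies $U'(0)$ is anti-Hermitian. The spectral form gives this directly: $H=-\sum_k (2\pi k\hbar/T_N)\ketbra{v_k}{v_k}$, up to the sign convention of Corollary~\ref{cor:energy-rate}. We can therefore avoid invoking the existence direction of Stone's theorem abstractly, and uniqueness of $H$ follows from differentiating at $t=0$.

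\textbf{Step 3: conservation.} Write $\psi(t)=U(t)\psi_0$. Then
\[
\frac{d}{dt}\langle\psi(t)|H|\psi(t)\rangle=\frac{i}{\hbar}\langle\psi(t)|[H,H]|\psi(t)\rangle=0.
\]
Equivalently, $U(t)^\dagger H U(t)=H$ because $H$ commutes with every function of itself. The same argument shows the full spectral distribution $|\langle v_k|\psi(t)\rangle|^2$ is constant, and so, via Theorem~\ref{thm:born-kernel}, are all energy probabilities, not just the mean.

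I would also note that time-translation invariance is used exactly in the form of the group law $U(t+s)=U(t)U(s)$ with $U$ independent of the starting time. This autonomy is what makes $H$ time-independent, and that is the real content of the Noether-type statement.
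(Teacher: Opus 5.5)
Your proposal is correct and follows the paper's route. The paper gives no proof beyond the statement itself: Stone's theorem, applied to the one-parameter group of Theorem~\ref{thm:time-emergence}, gives $U(t)=e^{-iHt/\hbar}$, and $\langle H\rangle$ is conserved because $[H,H]=0$, which are exactly your Steps~2--3. Your additions are refinements of detail rather than a different argument: the explicit linear representative that removes any projective cocycle, and obtaining $H=i\hbar\,U'(0)$ by differentiating a continuous homomorphism $\R\to U(N)$ instead of invoking the existence direction of Stone's theorem (which the paper's Lean appendix admits is not mechanized).
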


\noindent This is the quantum analogue of Noether's theorem: the continuous symmetry (time-translation invariance) produces a conserved quantity (energy), but the derivation route is via Stone's theorem rather than the classical Lagrangian formalism.

\subsection{Phase Structure: Cyclic Spectra and Sampling}
\label{sec:phase-structure}

The cyclic generator $\pi$ has spectrum spaced at $2\pi/N$ intervals, and the system's $K$-profile is band-limited in the Peter--Weyl sense. We collect the resulting sampling-theorem statements, taking care to distinguish properties of one operator's spectrum from properties of all phases.

\begin{lemma}[Cyclic-generator spectrum]
\label{thm:cyclic-spectrum}
The cyclic generator $\pi$ (Theorem~\ref{thm:dynamics-derived}) has spectrum $\{e^{2\pi ik/N} : k = 0, \ldots, N-1\}$, with eigenvalues evenly spaced on the unit circle at angular intervals $2\pi/N$. Any state expanded in the eigenbasis of $\pi$ is a trigonometric polynomial of degree at most $N-1$ in the cyclic angle, and is uniquely determined by $N$ equally spaced samples per recurrence period (Nyquist-Shannon).
\end{lemma}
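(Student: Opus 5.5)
The proof works in the $\C$-realization fixed by Lemma~\ref{lem:sheaf-complex} and Theorem~\ref{thm:dynamics-derived}. Its lift to $\C^N$ is the cyclic permutation matrix $P$ with $Pe_j = e_{j+1 \bmod N}$.

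\emph{Spectrum.} Since $P^N = I_N$, the minimal polynomial of $P$ divides $z^N - 1$, so every eigenvalue is an $N$-th root of unity. To show that each of the $N$ roots actually occurs exactly once, I will exhibit the Fourier vectors $f_k = N^{-1/2}\sum_j \omega^{jk} e_j$ of Lemma~\ref{lem:sheaf-complex}(b). A one-line reindexing of the sum gives $Pf_k = \omega^{-k} f_k$. These $N$ vectors are orthonormal, which is the geometric-series identity $\sum_j \omega^{j(l-k)} = N\delta_{kl}$, so they diagonalize $P$. Relabelling $k \mapsto -k \bmod N$ gives the eigenvalue set $\{e^{2\pi ik/N}\}_{k=0}^{N-1}$, equally spaced by $2\pi/N$. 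There is one point to state explicitly: $\pi$ is only a projective element, and its lift is fixed up to an overall phase. I will fix the lift so that $P^N = I$ and the spectrum contains $1$, which is the normalization of Theorem~\ref{thm:dynamics-derived}, Step~3. Any other choice rotates the whole spectrum rigidly and leaves the spacing unchanged.

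\emph{Trigonometric-polynomial form.} Using the interpolation $U(t)$ of Theorem~\ref{thm:time-emergence}, write $\theta = 2\pi t/T_N$ for the cyclic angle and expand a state as $\psi = \sum_k a_k v_k$. Then
\[
U(t)\psi = \sum_{k=0}^{N-1} a_k e^{ik\theta} v_k .
\]
Each coordinate, and each amplitude $\langle\phi|U(t)\psi\rangle$ against a fixed $\phi$, is therefore a trigonometric polynomial in $\theta$ whose frequencies lie in $\{0,\ldots,N-1\}$, so its degree is at most $N-1$.

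\emph{Sampling.} The space of such polynomials is $N$-dimensional. Sampling at the equally spaced angles $\theta_n = 2\pi n/N$ is the linear map
\[
(a_k) \mapsto \Big(\sum_k a_k \omega^{kn}\Big)_n ,
\]
which is $\sqrt{N}$ times the unitary DFT matrix. Inverting it gives $a_k = N^{-1}\sum_n \omega^{-kn} s_n$, where $s_n$ is the $n$-th sample, so the $N$ samples determine the polynomial uniquely. This is the finite-dimensional Nyquist–Shannon statement.

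\emph{Main obstacle.} The hardest step is stating exactly what "degree at most $N-1$" and "$N$ samples suffice" refer to. Both hold for the amplitudes. For the $K$-values $1-|\langle\phi|U(t)\psi\rangle|^2$ they do not: these contain frequencies up to $\pm(N-1)$ and would need $2N-1$ samples. I will therefore state and prove the lemma for the amplitude-level trajectory, and note in a remark that recovering the $K$-trajectory needs the doubled count, consistent with the section's warning to separate properties of one operator's spectrum from properties of all phases.
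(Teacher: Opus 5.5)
Your proposal is correct and follows the paper's own route: eigenvalues are $N$th roots of unity, you expand in the eigenbasis of $\pi$, and you recover the one-sided trigonometric polynomial from $N$ equally spaced samples, where the paper simply cites Nyquist--Shannon and you invert the DFT explicitly. Your Fourier-eigenbasis step also tightens the paper's one-line justification, which infers the full spectrum from ``order $N$'' alone; that only places the eigenvalues among the $N$th roots, and it is the $N$-cycle structure you use that makes each root occur exactly once. Your restriction of the $N$-sample claim to amplitude trajectories, rather than $K$-values, is also correct.
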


\begin{proof}
The eigenvalues of $\pi$ as a $K$-preserving permutation of order $N$ are the $N$th roots of unity. Expanding $\ket{\psi} = \sum_k c_k \ket{k}$ in the eigenbasis of $\pi$ gives the trigonometric form $\psi(\theta) = \sum_{k=0}^{N-1} c_k e^{ik\theta}$, a band-limited signal of degree $N-1$. By the Nyquist-Shannon theorem, this is determined by $N$ samples per period.
\end{proof}

\begin{remark}[Cyclic spacing is one operator's spectrum, not a phase quantum]
\label{rem:cyclic-spacing-not-quantum}
Lemma~\ref{thm:cyclic-spectrum} is a spectral fact about $\pi$. In the continuum case $M(K) = \infty$ developed in this paper (Theorem~\ref{thm:src-master}(I)), the state space is $\mathbb{C}P^{N-1}$ and phases are continuous: there is no minimum phase step at the level of the manifold, the gauge group is $U(1)$, and generic Hermitian observables have eigenvalues anywhere in $\mathbb{R}$. Identifying the $2\pi/N$ spacing of $\pi$'s spectrum with a fundamental quantum of phase conflates the spectrum of one specific operator with the set of all physically realizable phases.

In finite-$M$ discrete-K theories $\mathfrak{T}_{N, M}$, phases are discretized with granularity $\Theta(1/M)$. Phase quantization is therefore $M$-controlled, not $N$-controlled. The Berry-phase quantization in $2\pi/N$ steps (Remark~\ref{rem:gauge-berry}) is a separate, geometric fact about line-bundle holonomy on $U(N)/T^N$ and is consistent with continuous pointwise phases.
\end{remark}

\begin{theorem}[Quantum Sampling and Band-Limited Reconstruction]
\label{thm:quantum-sampling}
A finite-capacity system with $N$ states has minimum Fubini-Study distance $d_{FS}^{\min} = \pi/(2N)$ and state-preparation precision $\epsilon_{\min} \sim 1/N$. Its $K$-profile $K_\psi(\phi) = (1 - 1/N) - \Tr((P_\psi - I/N)P_\phi)$ lies in the Peter-Weyl degrees $\mathcal{H}_0 \oplus \mathcal{H}_1$ of $L^2(\C P^{N-1})$ (spectrally band-limited), so two bases ($2N$ evaluations) reconstruct $P_\psi$ and hence $K_\psi$ everywhere \cite{shannon1949communication}: the continuous manifold $\C P^{N-1}$ is the unique exact reconstruction, introducing no DOF beyond the $N$ amplitudes, and the system is operationally indistinguishable from a continuum at resolution $\delta > 1/N$. The natural finite-dimensional uncertainty relation is the Maassen-Uffink entropic form \cite{maassen1988generalized}: $H(\mathcal{B}_1) + H(\mathcal{B}_2) \geq \log_2 N$ for any two complementary bases. The $L^2(\R)$ position-momentum form $\Delta x \cdot \Delta p \geq \hbar/2$ does not directly apply since the framework is finite-dimensional.
\end{theorem}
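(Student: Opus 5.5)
The plan is to split Theorem~\ref{thm:quantum-sampling} into three independent claims and prove each from the identification $\mathcal{X}\cong\C P^{N-1}$ with $K(\psi,\phi)=1-|\braket{\psi}{\phi}|^2$ (Theorems~\ref{thm:points-sections},~\ref{thm:kernel-inner}). The claims are: (a) the spectral band-limitation of $K_\psi$ and its reconstruction from $2N$ evaluations; (b) the resolution statements $d_{FS}^{\min}=\pi/(2N)$ and $\epsilon_{\min}\sim 1/N$; and (c) the Maassen--Uffink entropic relation.

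For (a), I would write $|\braket{\psi}{\phi}|^2=\Tr(P_\psi P_\phi)$ and split $P_\psi=I/N+(P_\psi-I/N)$. This gives the stated identity $K_\psi(\phi)=(1-1/N)-\Tr((P_\psi-I/N)P_\phi)$ directly. As a function of $\phi$, the first term is constant, hence lies in the trivial $U(N)$-isotypic component $\mathcal{H}_0$. The second term is a linear functional of the traceless Hermitian part of $P_\phi$. Such functionals span the adjoint representation, and by Peter--Weyl for the symmetric space $U(N)/(U(1)\times U(N-1))$ this is exactly $\mathcal{H}_1$. Reconstruction then reduces to linear algebra. Take a basis $\mathcal{B}$ and a complementary basis $\mathcal{B}'$ as in Theorem~\ref{thm:points-sections}, Step~2. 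The $2N$ values $K(\psi,e_k)$ and $K(\psi,e'_k)$ fix $\Tr(P_\psi P_{e_k})$ and $\Tr(P_\psi P_{e'_k})$. Since $P_\psi$ is rank one, the first family gives the moduli $|c_k|$ and the second gives the phase data, via the injectivity argument already used for $\sigma$ in Theorem~\ref{thm:points-sections}, Step~3. Once $P_\psi$ is known, the identity above gives $K_\psi$ everywhere, so no degree of freedom beyond the $N$ amplitudes appears.

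For (b), I would adopt the operational reading of Theorem~\ref{thm:continuous-sampled}. An apparatus of capacity $N$ resolves $K$ only to per-shot precision $\delta\sim 1/N$, since $\log_2 N$ bits resolve $N$ levels. I would therefore define $d_{FS}^{\min}$ as the Fubini--Study angle $\theta=\arcsin\sqrt{K}$ corresponding to $N$ equal subdivisions of the angular range $[0,\pi/2]$ between identical and orthogonal rays. This gives $\pi/(2N)$, and $\epsilon_{\min}\sim 1/N$ follows from the amplitude estimate in the proof of Theorem~\ref{thm:continuous-sampled}. The claim that the system is indistinguishable from a continuum at resolution $\delta>1/N$ is then exactly Theorem~\ref{thm:continuous-sampled}.

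For (c), I would cite the Maassen--Uffink inequality, with overlap bound $\max|\braket{b_j}{b'_k}|^2=1/N$ for complementary bases. Corollary~\ref{cor:entropic} already records this. I would also note why the canonical $\Delta x\,\Delta p$ form is absent: there are no unbounded canonical pairs in finite dimension, since $[X,P]=i\hbar I$ has trace zero on the left but not on the right.

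I expect (b) to be the main obstacle. $\C P^{N-1}$ is a continuum with no intrinsic minimum distance, so $\pi/(2N)$ is not a geometric invariant. It has to be obtained from the capacity bound as an operational resolution, and the argument must choose a specific convention: uniform subdivision of the Fubini--Study angle rather than of $K$. I would state this convention explicitly and defend it by showing it is the unique $U(N)$-invariant choice compatible with the metric of Theorem~\ref{thm:fs-unique}. By contrast, the Peter--Weyl part (a) is essentially a one-line representation-theoretic observation.
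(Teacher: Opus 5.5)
\paragraph{The two-basis reconstruction step fails.} Your part (a) correctly places $K_\psi$ in $\mathcal{H}_0\oplus\mathcal{H}_1$, but the reconstruction step breaks, and no better choice of $\mathcal{B}'$ repairs it.

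Band-limitation alone does nothing here. The space $\{\phi\mapsto\Tr(BP_\phi)\}$ has real dimension $N^2$, while the $2N$ samples give at most $2N-1$ independent linear functionals, since each basis sums to $\Tr B$. So ``reconstruction reduces to linear algebra'' is false, and no Shannon-type sampling argument applies.

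Everything then rests on the rank-one constraint, i.e.\ on injectivity of $\psi\mapsto\bigl(|\braket{e_k}{\psi}|^2,|\braket{e'_k}{\psi}|^2\bigr)_k$. That map is never injective. Drop the one redundant coordinate per basis; what remains is a continuous map from the compact boundaryless $(2N-2)$-manifold $\C P^{N-1}$ into $\R^{2N-2}$. By invariance of domain, an injective such map would be open. Its image would then be a nonempty compact open subset of $\R^{2N-2}$, which is impossible.

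A concrete case: take any real orthonormal $\mathcal{B}'$ with all overlaps nonzero, which is admissible under the complementarity condition of Theorem~\ref{thm:points-sections}, Step~2. Then $\psi=(e_0+ie_1)/\sqrt2$ and its conjugate $(e_0-ie_1)/\sqrt2$ have identical $2N$-tuples, yet they are orthogonal ($K=1$). The real Hadamard MUB pairs at $N=2,4$ behave the same way.

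The Step~3 injectivity argument you cite does not help, because it presupposes that the phased coordinates $[c_0:\cdots:c_{N-1}]$ are already known. The paper gives no separate proof of this theorem. It leans on exactly that Step~2/Step~3 assertion and on Remark~\ref{rem:nyquist-manifold}, so its route has the same hole.

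What is provable is the linear statement. $N^2$ linearly independent samples reconstruct $P_\psi$ and hence $K_\psi$; examples are $N+1$ MUBs where they exist, or any informationally complete set. Two bases supply only $2N-2$ constraints, matching $\dim_\R\C P^{N-1}$, which is at best a generic local-identifiability statement. The ``two bases'' and ``unique exact reconstruction'' clauses must be weakened accordingly.

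\paragraph{Parts (b) and (c).} Your proposed defence of $\pi/(2N)$ does not work. The Fubini--Study angle $\theta$ is itself the complete $U(N)$ two-point invariant. Uniform spacing in $\theta$ and uniform spacing in $K=\sin^2\theta$ (or any monotone reparametrisation) are therefore equally $U(N)$-invariant, and invariance cannot select the angle.

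Worse, the floors you would actually cite are uniform in $K$ or componentwise in amplitude. These are Lemma~\ref{prop:capacity-bound}, and Theorem~\ref{thm:continuous-sampled}, which gives $K\lesssim 1/N$ when $|c_k-c'_k|<1/N$. They translate into an angular scale $\arcsin N^{-1/2}\sim N^{-1/2}$, not $\pi/(2N)$, so your $d_{FS}^{\min}$ and $\epsilon_{\min}$ come from incompatible conventions.

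The only principled source of angle-uniform counting is Wootters' statistical distance. That is Fisher--Rao length, equal to the Fubini--Study angle by Theorem~\ref{thm:s5-finite-N}. Even this yields a count proportional to $\theta$ rather than a forced value, so $\pi/(2N)$ should be stated as a definition.

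Part (c) is fine only if ``complementary'' means mutually unbiased. For the merely non-degenerate $\mathcal{B}'$ you use in (a), Maassen--Uffink gives $-2\log_2\max_{j,k}|\braket{b_j}{b'_k}|$, which can be far below $\log_2 N$.
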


\subsection{The Schr\"odinger Equation}

The derivation uses Wigner's theorem and Stone's theorem (Appendix~\ref{app:classical-theorems}).

\begin{theorem}[Schr\"odinger Equation: Existence and Uniqueness]
\label{thm:schrodinger}
Under Axioms~\ref{ax:finite}--\ref{ax:relational}, the evolution equation has the unique form $i\hbar \frac{d}{dt}\ket{\psi} = H\ket{\psi}$ where $H$ is a self-adjoint operator. The derivation requires:
\begin{enumerate}
\item[(i)] Kernel preservation (Theorem~\ref{thm:dynamics-derived}) $\Rightarrow$ transition probability preservation;
\item[(ii)] Wigner's theorem $\Rightarrow$ evolution is unitary (not anti-unitary);
\item[(iii)] Stone's theorem $\Rightarrow$ unique self-adjoint generator $H$.
\end{enumerate}
\end{theorem}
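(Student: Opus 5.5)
The plan follows the three-step chain listed in the statement: first obtain a continuous one-parameter family of kernel-preserving maps, then show each map is unitary, then extract its generator. By Theorem~\ref{thm:points-sections} and Lemma~\ref{lem:sheaf-complex}, $\mathcal{X}\cong\C P^{N-1}$ and, by Theorem~\ref{thm:kernel-inner}, $K(\psi,\phi)=1-|\langle\psi|\phi\rangle|^2$. Theorem~\ref{thm:dynamics-derived} supplies the cyclic generator $\pi\in G$. Theorem~\ref{thm:time-emergence} then extends $\pi$ to a one-parameter family $t\mapsto U(t)\in G$ with $U(t)U(s)=U(t+s)$ and $U(0)=\mathrm{id}$. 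Since each $U(t)$ preserves $K$, it preserves the transition probability $|\langle\psi|\phi\rangle|^2=1-K(\psi,\phi)$. This is item (i).

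For item (ii), I apply Wigner's theorem to each $U(t)$. It gives an operator $V(t)$ on $\C^N$ implementing $U(t)$, unique up to a phase, which is either unitary or antiunitary. To exclude antiunitaries, I write $U(t)=U(t/2)^2$. The square of an antiunitary is unitary, so every $U(t)$ is unitary. This argument avoids any continuity assumption; alternatively one can use that the antiunitary component of Wigner's group is disconnected from the identity, as in Theorem~\ref{thm:hilbert-representation}, Step~3. The result is a projective unitary representation of $\R$ on $\C^N$.

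The step I expect to be the main obstacle is lifting this to a genuine continuous unitary representation $V(t)V(s)=V(t+s)$. The lift produces a phase cocycle $V(t)V(s)=e^{i\alpha(t,s)}V(t+s)$, and that cocycle must be trivialized. I would handle this concretely rather than citing Bargmann in full generality. Theorem~\ref{thm:time-emergence}, Step~3 already gives the explicit spectral form $V(t)=\sum_k e^{2\pi ikt/T_N}|v_k\rangle\langle v_k|$. This is an honest, continuous (indeed analytic) unitary group. Any other lift differs by a continuous phase $\chi(t)$ satisfying the character equation, and such a $\chi$ must be of the form $e^{i\lambda t}$. This freedom is absorbed as a constant shift $H\mapsto H+\hbar\lambda$, which changes nothing projectively. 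If that route falters, the fallback is $H^2(\R,U(1))=0$ for the continuous cohomology of $\R$.

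For item (iii), finite dimensionality makes Stone's theorem elementary. A strongly continuous one-parameter unitary group on $\C^N$ is differentiable. I would set $H:=i\hbar\,V'(0)$. Then unitarity $V(t)^\dagger V(t)=I$, differentiated at $0$, gives $H=H^\dagger$. The group law gives $V'(t)=V'(0)V(t)$, so $\psi(t)=V(t)\psi_0$ satisfies $i\hbar\,\dot\psi=H\psi$.

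Uniqueness has two parts. First, $H$ is determined by $V$ as its derivative at $0$, and $V$ is determined up to the phase character above, so $H$ is fixed up to an additive real constant, which is projectively invisible. Second, the linear ODE has unique solutions by Picard--Lindel\"of, which is one of the imported theorems. Finally, $\hbar$ enters only as the unit conversion of Remark~\ref{rem:hbar-units}.
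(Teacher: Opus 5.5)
Your proposal is correct and shares the paper's skeleton: kernel preservation from Theorem~\ref{thm:time-emergence}, then Wigner, then Stone. Three sub-steps are argued differently.

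\textbf{Excluding antiunitaries.} The paper uses continuity of $t\mapsto U(t)$, $U(0)=\mathrm{id}$, and the fact that antiunitaries form a separate component of the Wigner group. Your squaring trick $U(t)=U(t/2)^2$ needs only the group law and the uniqueness-up-to-phase clause of Wigner's theorem, so it works without any continuity input.

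\textbf{Projective lifting.} The paper applies Stone directly to the Wigner implementers and never addresses that they a priori form only a projective representation. Your phase-cocycle discussion fills that step. Note, however, that citing the explicit spectral form of Theorem~\ref{thm:time-emergence}, Step~3, settles it only for that particular interpolation. That family is already a linear unitary group, so Wigner is redundant there. For the ``unique form'' claim about an arbitrary continuous kernel-preserving one-parameter group, it is your fallback that carries the weight. Most concretely, a continuous one-parameter subgroup of $PU(N)$ has the form $\exp(tX)$ with $X\in\mathfrak{pu}(N)\cong\mathfrak{su}(N)$, so it lifts to $U(N)$. The residual freedom is exactly the character $e^{i\lambda t}$, provided you restrict, as you should state explicitly, to lifts that are themselves continuous homomorphisms.

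\textbf{The generator.} Where the paper cites Stone, you construct $H:=i\hbar V'(0)$ and verify self-adjointness and the evolution equation by hand. The one non-computational input is differentiability of a continuous one-parameter group on $\mathbb{C}^N$, which you assert rather than prove. It is standard, e.g.\ via invertibility of $\int_0^\epsilon V(s)\,ds$ for small $\epsilon$, and immediate for the analytic interpolation. This is precisely the existence direction that Appendix~\ref{app:formal-verification} says is not mechanized.

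So your route is more self-contained in finite dimensions and does not rely on continuity for the unitary/antiunitary dichotomy. The paper's version buys brevity by outsourcing these points to the classical theorems.
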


\begin{proof}
\textit{Step 1: Evolution preserves kernel, hence transition probabilities.}
By Theorem~\ref{thm:dynamics-derived}, the generator $\pi$ preserves the kernel: $K(\pi x, \pi y) = K(x,y)$. By Theorem~\ref{thm:time-emergence}, $\pi$ extends to a continuous one-parameter group $U(t)$, which inherits kernel preservation. By Theorem~\ref{thm:kernel-inner}, $K(\psi, \phi) = 1 - |\braket{\psi}{\phi}|^2$. Hence evolution preserves transition probabilities.

\textit{Step 2: Wigner's theorem forces unitary evolution.}
By Wigner's theorem \cite{wigner1931gruppentheorie,bargmann1964note}, each $U(t)$ is unitary or anti-unitary. Continuity in $t$ and $U(0) = \id$ (unitary) force $U(t)$ unitary for all $t$ (anti-unitary maps form a disconnected component).

\textit{Step 3: Stone's theorem gives the generator.}
Stone's theorem guarantees a unique self-adjoint generator $H$ with $U(t) = e^{-iHt/\hbar}$. Differentiating $\ket{\psi(t)} = U(t)\ket{\psi_0}$ yields $i\hbar \frac{d}{dt}\ket{\psi} = H\ket{\psi}$. Uniqueness: any other generator differs by at most a constant (zero of energy).
\end{proof}

\begin{remark}[$\hbar$ as unit conversion; Anandan--Aharonov]
\label{rem:hbar-units}
The framework derives the existence and uniqueness up to scale of an action quantum, not its numerical value: Stone's theorem gives a unique generator $A$ with $U(t) = e^{itA}$; splitting $A = -H/\hbar$ into a Hermitian energy operator and a units constant requires a choice of energy and time units. Universality of $\hbar$ across composable systems follows from $H_{AB} = H_A \otimes I_B + I_A \otimes H_B$. A dimensionless capacity cannot fix a dimensional constant, so $\hbar$'s empirical value is not derived; this parallels special relativity, which derives the existence but not the value of $c$. The Fubini--Study arc length satisfies $\hbar^{-1} \int \Delta E\, dt$ (Anandan--Aharonov), and geodesics on the state manifold are minimum-uncertainty trajectories.
\end{remark}

\section{The Capacity Halting Principle}
\label{sec:measurement}

We now prove the capacity deficit: \emph{given Saturation} (Theorem~\ref{thm:src-master}(S1)--(S4), no ontic structure beyond $K$), determinism is impossible for finite-capacity systems. This result motivates probabilistic behavior; the probability rule itself is determined by Metric Compatibility (\S\ref{sec:geometry}). All statements in this section carry the implicit qualifier ``given Saturation.''

\begin{definition}[Information-Theoretic Self-Capacity]
\label{def:info-capacity}
Let $(\mathcal{X}, K)$ be a distinguishability space with $N$ perfectly distinguishable states. The \emph{Shannon capacity} of the system is:
\[
C := \max_{\{p_i\}} H(\{p_i\}) = \log_2 N \text{ bits}
\]
where $H(\{p_i\}) = -\sum_i p_i \log_2 p_i$ is the Shannon entropy and the maximum is achieved by the uniform distribution over the $N$ basis states.
\end{definition}

\begin{lemma}[Capacity as Physical Bound]
\label{prop:capacity-bound}
The Shannon capacity $C = \log_2 N$ is the maximum information that can be reliably encoded in, and extracted from, a capacity-$N$ system \emph{per single-shot retrieval}. This follows from the channel coding theorem: no encoding can exceed $C$ bits per use without arbitrarily high error probability. In particular, any internal degree of freedom, discrete or continuous, can carry at most $\log_2 N$ bits of retrievable information per shot. A continuous parameter resolved to single-shot precision $\varepsilon$ carries $\log_2(1/\varepsilon)$ distinguishable values; finite capacity requires $\varepsilon \geq 1/N$ in a single shot, establishing a single-shot operational resolution floor.

\end{lemma}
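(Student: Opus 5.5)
The plan has three parts: an upper bound on single-shot retrievable information, a matching achievability statement, and the resolution floor as a corollary. For the upper bound, I would work in the representation already fixed by Theorem~\ref{thm:hilbert-representation}: pure states are rank-one projectors on $\C^N$, and the only admissible outcome statistics are the Born probabilities of Theorem~\ref{thm:born-kernel}. A single-shot retrieval is then a classical channel. Its input is a message $m$, encoded in a (possibly mixed) state $\rho_m \in \mathcal{S} = \mathrm{conv}(\mathcal{X})$, and its output is the outcome of one measurement.

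Any measurement has at most $N$ mutually distinguishable outcomes, since a basis has exactly $N$ elements (Axiom~\ref{ax:finite}, Definition~\ref{def:basis}). So the output alphabet of a projective retrieval has size at most $N$. The mutual information between message and outcome is bounded by the output entropy, which by Definition~\ref{def:info-capacity} is at most $\log_2 N$. For general effects in $[0,I]$ I would not argue via outcome count. Instead I would invoke the Holevo bound, already cited in Theorem~\ref{thm:continuous-sampled}: $\chi \le S(\sum_m p_m\rho_m) \le \log_2 \dim\mathcal{H} = \log_2 N$. Either route gives $I(M;\text{outcome}) \le \log_2 N$ per use.

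Achievability takes one line: encode $m \in \{0,\dots,N-1\}$ as the basis state $b_m$ and read it out in that basis. This is error-free because $K(b_i,b_j)=1$ for $i\neq j$ and $p_k = 1-K(b_m,b_k) = \delta_{mk}$, so the rate $\log_2 N$ is attained. The converse half of the channel coding theorem (Fano's inequality) then gives the claimed operational statement: any scheme sending more than $C$ bits per use has error probability bounded away from zero, and in fact tending to one by the strong converse.

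The remaining claims follow as corollaries. An internal degree of freedom, discrete or continuous (for instance a phase, or a continuous parameter $\theta$ labelling states $\psi_\theta$), can be retrieved in one shot only through such a channel, so the bound above applies to it verbatim; by Saturation (Theorem~\ref{thm:parsimony-derived}) there is no other storage channel. If $\theta$ is resolved to precision $\varepsilon$, one shot must reliably distinguish about $1/\varepsilon$ values, i.e.\ carry $\log_2(1/\varepsilon)$ bits. Imposing $\log_2(1/\varepsilon) \le \log_2 N$ gives $\varepsilon \ge 1/N$. The main obstacle is the step asserting that every retrieval is a quantum measurement with Born statistics. This needs the effects to be exactly those of Theorem~\ref{thm:hilbert-representation}(ii)--(iii), extended by Gleason for $N\ge 3$. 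I would handle it by restricting to $K$-derived effects and their convex/coarse-grained combinations, so that the Holevo/entropy bound applies directly.
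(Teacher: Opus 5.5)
Your proposal is correct, and it takes a somewhat different route from the paper. The paper gives no separate proof. Its argument is the single appeal, inside the statement, to Shannon's channel coding theorem, with $C=\log_2 N$ taken from Definition~\ref{def:info-capacity} as the maximal entropy of an $N$-outcome distribution. That implicitly treats one retrieval as an $N$-letter classical channel, i.e.\ a basis readout. For such a channel $I \le H(\text{outcome}) \le \log_2 N$ holds under \emph{any} probability rule, since it uses only the alphabet size. This is your first sub-argument, so the Born-statistics half of your stated ``main obstacle'' matters only for the extension to general effects.

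What you add is twofold. The Holevo step covers mixed or non-orthogonal encodings read out by arbitrary POVMs, which may have more than $N$ outcomes. The basis-encoding achievability shows that $C$ is an actual operational capacity rather than a definition. The price is reliance on Theorem~\ref{thm:hilbert-representation} and the Born rule. This is harmless given the lemma's placement after \S\ref{sec:geometry}, but it means your version can serve only as a consistency check on the Born derivation, not as an input to it along the coral path of Figure~\ref{fig:theorem-flow}.

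The Holevo form also exposes a hypothesis that the channel-coding phrasing hides: the readout must act on the capacity-$N$ system alone. With a pre-shared entangled partner, superdense coding delivers $2\log_2 N$ bits per transmitted system.

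For a genuinely single-shot converse you need neither Fano nor the asymptotic strong converse. Encode $M$ equiprobable messages as $\rho_m$ on $\C^N$ and decode with any POVM $\{E_m\}$; then $P_{\mathrm{succ}} = \frac{1}{M}\sum_m \Tr(E_m\rho_m) \le \frac{1}{M}\Tr(I) = N/M$. So the error probability is at least $1-N/M$, which is the precise content of the paper's ``arbitrarily high error probability.'' The resolution floor $\varepsilon \ge 1/N$ then follows identically in both treatments.
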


\begin{definition}[Hidden-Variable Assignment]
\label{def:hv-assignment}
A \emph{deterministic hidden-variable assignment} for state $\psi \in \mathcal{X}$ is a function $\lambda_\psi: \mathfrak{B} \to \{1, \ldots, N\}$ specifying, for each basis $\mathcal{B}$, which outcome occurs with certainty.
\end{definition}

\begin{theorem}[Capacity Halting Principle]
\label{thm:capacity-halting}
Assume Saturation (Theorem~\ref{thm:src-master}(S1)--(S4): no ontic structure beyond $K$). Then for any deterministic hidden-variable model where each state $\psi$ carries predetermined outcomes $\lambda_\psi$ for all measurement bases, the storage required exceeds the capacity available:
\begin{itemize}
\item \emph{Available:} $C = \log_2 N$ bits (Axiom~\ref{ax:finite}).
\item \emph{Required (universal bound):} For all $N \geq 3$, the Kochen-Specker bit-count (Lemma~\ref{thm:ks-bits}) shows that non-contextual value assignments require storage exceeding $\log_2 N$ bits. This bound is MUB-existence-independent and applies to all $N \geq 3$.
\item \emph{Required (sharper bound for prime-power $N$):} $\geq (M-1)\log_2 N$ bits for $M$ MUBs (Lemma~\ref{lem:incompressibility}(a)), giving $\Theta(N\log_2 N)$ bits when $M = N+1$ MUBs exist. A Kolmogorov bound anchored in the empirically observed quantum statistics (Lemma~\ref{lem:incompressibility}(b)) recovers the same $(M-1)\log_2 N$ via algorithmic information rather than combinatorial counting, applying to any HV model matching experiment independent of the Hilbert-space MUB hypothesis.
\end{itemize}
By Parsimony (Theorem~\ref{thm:parsimony-derived}) and closure (Lemma~\ref{prop:closure}), no external reservoir exists for the missing bits.
\end{theorem}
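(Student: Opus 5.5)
The plan is to establish the three bullet points of the Capacity Halting Principle separately and then close the argument with Parsimony and closure. The available capacity $C=\log_2 N$ is just Definition~\ref{def:info-capacity} combined with Lemma~\ref{prop:capacity-bound}: any ontic record carried by a single state can yield at most $\log_2 N$ retrievable bits per shot. Given Saturation, the assignment $\lambda_\psi$ must be encoded in $\psi$ itself. The reason is that Transport Consistency (Theorem~\ref{thm:src-master}(T)) makes every $\mathrm{Aut}$-invariant feature a function of the $K$-profile, and Theorem~\ref{thm:parsimony-derived} rules out any nontrivial $\Lambda$ factor. So the task is to show that the information content of $\lambda_\psi$, taken over the relevant family of $\psi$, exceeds $\log_2 N$.

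For the sharper prime-power bound, the first step is to fix $M$ mutually unbiased bases $\mathcal{B}_1,\dots,\mathcal{B}_M$; these exist in $\C P^{N-1}$ by the Wootters--Fields construction. I would then count the assignments $\lambda$ restricted to these bases. A predetermined outcome in $\mathcal{B}_1$ must coincide with the basis element $\psi$ sits on, so when $\psi\in\mathcal{B}_1$ the $\mathcal{B}_1$ value is forced. The remaining $M-1$ values are unconstrained: by mutual unbiasedness and the Born rule (Theorem~\ref{thm:born-kernel}), every outcome $k$ in $\mathcal{B}_j$ for $j\ge 2$ has probability $1/N$. A deterministic model that reproduces these statistics over an ensemble must therefore realize every tuple in $\{1,\dots,N\}^{M-1}$ with positive weight. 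Specifying one realized tuple then needs $\log_2 N^{M-1}=(M-1)\log_2 N$ bits. With $M=N+1$ this gives $N\log_2 N=\Theta(N\log_2 N)$, and for $M\ge 3$ it already exceeds $\log_2 N$. The Kolmogorov variant repeats the same count using incompressibility of typical sequences drawn from the uniform product distribution, which is where Lemma~\ref{lem:incompressibility}(b) enters.

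For the universal bound at arbitrary $N\ge3$, where complete MUB sets may not exist, I would instead invoke the Kochen--Specker bit-count of Lemma~\ref{thm:ks-bits}. Since $N\ge 3$, $\C P^{N-1}$ contains a finite KS set of rays. No noncontextual $\{0,1\}$ valuation with exactly one $1$ per basis exists on such a set. Consequently $\lambda_\psi$ must be contextual: its value on a ray depends on the basis containing it. Recording this contextual dependence requires at least one bit beyond the $\log_2 N$ bits that fix a single noncontextual outcome, so the requirement strictly exceeds $C$. Finally, by Lemma~\ref{prop:closure} there is no external reservoir, and by Theorem~\ref{thm:parsimony-derived} there is no hidden $\Lambda$, so the excess bits cannot be stored anywhere and determinism fails.

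I expect the main obstacle to be the MUB counting step. A single $\psi$ carries only one tuple, so the bound is strictly an ensemble or description-length statement, not a per-state one. The argument has to be phrased as follows: because $\lambda_\psi$ must be a function of $\psi$'s $K$-profile, distinct required tuples force distinct states, and $N$ orthogonal preparations in $\mathcal{B}_1$ cannot host $N^{M-1}$ distinct tuples. My first approach would be this pigeonhole between the $N$ distinguishable preparations and the $N^{M-1}$ tuples demanded by the uniform statistics. Only if that proves too weak would I fall back on the Kolmogorov formulation.
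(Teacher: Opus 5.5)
Your first and last steps (available capacity $\log_2 N$, and Parsimony plus closure ruling out any reservoir) match the paper. The MUB counting step, however, has a gap. You infer that a deterministic model reproducing the uniform statistics ``must realize every tuple in $\{1,\dots,N\}^{M-1}$ with positive weight,'' and this does not follow. A single run reveals only one basis's outcome, so only the single-basis marginals are constrained. Uniform marginals in each of $\mathcal{B}_2,\dots,\mathcal{B}_M$ are reproduced by a mixture supported on just $N$ tuples, for instance $(j,j,\dots,j)$ with weight $1/N$ for $j=1,\dots,N$. So the statistics do not force $N^{M-1}$ tuples, and your $(M-1)\log_2 N$ figure is not established. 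Your pigeonhole fallback inherits the same defect, because it also takes $N^{M-1}$ required tuples as input.

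The paper's Lemma~\ref{lem:incompressibility}(a) avoids this by not using statistics at all. It counts \emph{admissible} assignments: given $\lambda(\mathcal{B}_1)=k$, MUB geometry places no constraint on any $\lambda(\mathcal{B}_m)$, so an assignment is one of $N^{M-1}$ possibilities. It is deliberately independent of any probability rule, and part (b) uses empirical statistics rather than the derived Born rule. Your appeal to Theorem~\ref{thm:born-kernel} runs against the paper's intended order, in which Capacity Halting motivates a probabilistic response that metric compatibility then fixes. It also makes the counting idle. With Born statistics plus Parsimony, $\lambda_\psi$ is a function of $\psi$ alone, so repeated preparation of $b_k$ yields one fixed $\mathcal{B}_2$ outcome rather than probability $1/N$ each. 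Determinism then already fails at $M=2$, exactly where the theorem's own count shows no deficit. That is a sign your argument is not the capacity argument.

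Your universal-$N$ step is also unsupported. Kochen--Specker shows only that no noncontextual valuation exists; it attaches no quantitative storage or description-length cost to contextuality. So ``at least one bit beyond $\log_2 N$'' has no argument behind it, and a contextual assignment fixed by a simple rule from $\psi$ needs no extra bits. The detour is also unnecessary. You worried that complete MUB sets may not exist for composite $N$, but the paper makes the bound universal using the fact that three MUBs exist in every dimension $N\ge 2$. Lemma~\ref{thm:ks-bits} routes its information-theoretic content through exactly this three-MUB case, so the $M=3$ count $2\log_2 N>\log_2 N$ already covers every $N\ge 3$.
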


\begin{lemma}[Kochen-Specker Bit-Count]
\label{thm:ks-bits}
For all $N \geq 3$, Kochen--Specker constructions \cite{kochen1967problem,cabello2008experimentally,klyachko2008simple} exhibit projector sets admitting no consistent non-contextual $\{0,1\}$-valued assignment. The information-theoretic content (Lemma~\ref{lem:incompressibility}(b)) is that the Kolmogorov complexity of any deterministic assignment reproducing Born statistics exceeds $\log_2 N$ once $M \geq 3$ MUBs are admitted (and three MUBs exist for every $N \geq 2$ \cite{grassl2004mub}). The deficit is universal for $N \geq 3$.
\end{lemma}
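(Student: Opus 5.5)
The statement has two parts: the existence of Kochen--Specker obstructions for every $N \geq 3$, and the information-theoretic claim that any deterministic assignment reproducing Born statistics across $M \geq 3$ MUBs needs more than $\log_2 N$ bits. I will prove them separately and then combine them.

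\textbf{Combinatorial part.} By Theorem~\ref{thm:points-sections} and Lemma~\ref{lem:sheaf-complex}, $\mathcal{X} \cong \C P^{N-1}$, and by Theorem~\ref{thm:born-kernel} the probabilities are $p_k = 1 - K(\psi, a_k)$. Here a non-contextual $\{0,1\}$ assignment means a map $v$ from rays to $\{0,1\}$ with exactly one $1$ in each basis.

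\begin{itemize}
\item For $N = 3$, I would cite the Klyachko (KCBS) or Cabello constructions, which give finite ray sets in $\C^3$ admitting no such $v$.
\item For $N > 3$, I would embed the $N = 3$ set into a three-dimensional coordinate subspace. The complementary basis vectors $e_3, \ldots, e_{N-1}$ are appended to every context.
\item Any consistent $v$ on the enlarged set either assigns $1$ to some $e_j$ with $j \geq 3$, or restricts to a valid assignment on the embedded three-dimensional set. The second option is impossible by the $N = 3$ case.
\item To close off the first option, I would add contexts forcing the value $1$ into the three-dimensional subspace. The cleaner route is to use the standard inductive extension of Kochen--Specker sets to all $N \geq 3$ (Cabello et al.) and cite it directly.
\end{itemize}

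\textbf{Information-theoretic part.} Fix $M = 3$ MUBs, which exist for every $N \geq 2$ by \cite{grassl2004mub}. A deterministic model assigns to $\psi$ an outcome triple $(\lambda(\mathcal{B}_1), \lambda(\mathcal{B}_2), \lambda(\mathcal{B}_3)) \in \{1,\ldots,N\}^3$.

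\begin{itemize}
\item Prepare $\psi = b^{(1)}_j$. By the MUB condition and Theorem~\ref{thm:born-kernel}, the outcomes in $\mathcal{B}_2$ and $\mathcal{B}_3$ are uniform.
\item Reproducing these statistics over the preparation ensemble forces the hidden variable $\lambda$ to be distributed so that, conditional on the $\mathcal{B}_1$ value, the pair of $\mathcal{B}_2$ and $\mathcal{B}_3$ values has entropy at least $\log_2 N$. By mutual unbiasedness I would argue their joint entropy is at least $2\log_2 N$.
\item Hence $H(\lambda) \geq \log_2 N + \log_2 N > \log_2 N$. This matches Lemma~\ref{lem:incompressibility}(a) with $M = 3$.
\item Passing to Kolmogorov complexity: typical sequences of outcome records drawn from a source of entropy $H$ have complexity about $nH$ bits. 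So a per-system description exceeds $\log_2 N$ bits for all but a vanishing fraction of instances, which is Lemma~\ref{lem:incompressibility}(b).
\item Lemma~\ref{prop:capacity-bound} caps retrievable storage at $\log_2 N$ bits. Closure (Lemma~\ref{prop:closure}) and Parsimony (Theorem~\ref{thm:parsimony-derived}) exclude any external reservoir, so the deficit is strict and universal for $N \geq 3$.
\end{itemize}

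\textbf{Main obstacle.} The hard step is the joint-entropy bound. Uniform marginals in $\mathcal{B}_2$ and in $\mathcal{B}_3$ do not by themselves force their joint entropy to reach $2\log_2 N$, since the two values could be correlated in $\lambda$. I would first try to exclude such correlations by preparing eigenstates of $\mathcal{B}_2$ as well: $G$-covariance (Basis Isotropy, Theorem~\ref{thm:src-master}(B)) transports the argument across all three bases, and each pair must then be independent conditional on the third value. If this does not go through, I would fall back on the Kochen--Specker part alone, which already gives a contradiction independent of bit counting.
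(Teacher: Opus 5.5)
\textbf{Your entropy step fails.} The step you flag is a genuine gap, and neither of your repairs closes it.

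Born statistics only constrain single-context marginals. $\mathcal{B}_2$ and $\mathcal{B}_3$ are never measured jointly, so for the preparation $b^{(1)}_j$ nothing fixes how $o_2$ and $o_3$ are coupled in $\lambda$. Taking $o_3=\sigma_j(o_2)$ for a bijection $\sigma_j$ reproduces both uniform marginals with $H(o_2,o_3\mid o_1)=\log_2 N$, not $2\log_2 N$.

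Preparing $\mathcal{B}_2$-eigenstates constrains a different distribution over $\lambda$, and covariance does not force independence. Here is a concrete model. For prime $N$, take the eigenbases of $Z$, $X$, $XZ$, labelled so that the Weyl operators $X^aZ^b$ act by $(o_1,o_2,o_3)\mapsto(o_1+a,\,o_2-b,\,o_3+a-b)$. Let each eigenstate preparation be uniform over the $N$ triples with $o_3\equiv o_1+o_2+c\pmod{N}$ and its own basis value fixed. This model:
\begin{itemize}
\item reproduces the Born statistics of all $3N$ eigenstate preparations (certainty in the own basis, uniform in the other two);
\item is Weyl-covariant;
\item uses exactly $\log_2 N$ bits given the preparation.
\end{itemize}
Counting $H(o_1)$ over a uniform ensemble would give $2\log_2 N$, but it does so already at $M=2$. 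So that is not the accounting the lemma compares with capacity.

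Hence the $M=3$ figure $2\log_2 N$ needs an independence premise that neither Born statistics nor symmetry supplies. Your Kolmogorov step only converts that unproven entropy into description length, so it inherits the gap.

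Your fallback does not rescue the clause either. Three MUBs share no vectors, so non-contextual deterministic assignments on that fragment exist trivially. A Kochen--Specker set shows that no assignment exists; it does not give a storage count exceeding $\log_2 N$. It secures only the first sentence of the lemma.

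\textbf{Comparison with the paper.} The paper does not supply the missing step either. The lemma rests on citation, and its information-theoretic content is delegated to Lemma~\ref{lem:incompressibility}.
\begin{itemize}
\item Part (a) counts the $N^{M-1}$ admissible value tuples. That counts possibilities, not what a given model must store; the model above uses only $N$ of the $N^2$.
\item Part (b) asserts outright that the outcome string conditional on $o_1$ has entropy $(M-1)\log_2 N$. That is exactly the independence you could not justify.
\end{itemize}
So your route coincides with the paper's and inherits the same unproved step. Only the Kochen--Specker clause is actually established in either.

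\textbf{Fixes to the Kochen--Specker clause.}
\begin{itemize}
\item KCBS is a state-dependent five-ray inequality whose rays do admit non-contextual assignments. For $N=3$, cite the original Kochen--Specker set (or Peres's 33 rays) instead.
\item The induction you want is as follows. Adjoin the context $\{e_1,\ldots,e_N\}$. For each $i$, add a copy of the $(N-1)$-dimensional uncolourable set inside $e_i^\perp$, with every context completed by $e_i$. Some $e_i$ receives value $0$, so the assignment restricts to a valid colouring of the copy in $e_i^\perp$, which is a contradiction.
\end{itemize}
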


\noindent KS rules out non-contextual assignments algebraically (FUNC); Capacity Halting adds the storage statement, observing that under Saturation no internal reservoir supplies the missing bits. The two arguments are independent.

\begin{lemma}[Incompressibility of Deterministic Assignments]
\label{lem:incompressibility}
Let $\{B_1, \ldots, B_M\}$ be $M$ mutually unbiased bases for $\C^N$ ($M = N+1$ for prime-power $N$ \cite{wootters1989optimal}; three MUBs in every $N \geq 2$ \cite{grassl2004mub}). A deterministic assignment $\lambda$ predicts the outcome of each $B_m$ on $\psi$. Two lower bounds on the storage:
\begin{enumerate}
\item[(a)] \textbf{Combinatorial bound:} $(M-1)\log_2 N$ bits. Uses only the Hilbert space MUB geometry (Theorem~\ref{thm:hilbert-representation}) and is independent of any probability rule. For $M = 2$ this equals $\log_2 N$ (no deficit); for $M \geq 3$ the required storage strictly exceeds the capacity $\log_2 N$.
\item[(b)] \textbf{Kolmogorov bound:} For generic states, $\KK(\lambda) \geq (M-1)\log_2 N - O(\log M)$. This is a direct application of Chaitin's incompressibility theorem \cite{chaitin1975theory,kolmogorov1965three} to outcome patterns matching the experimentally observed statistics of quantum measurements. The input is the \emph{empirical} fact that nature produces quantum-like outcome distributions, not the Born rule as derived in \S\ref{sec:geometry}: any physically realizable HV model whose predictions agree with experiment must encode a pattern of this complexity, regardless of which theoretical derivation of the Born rule one accepts (or rejects). The bound matches~(a) in absolute strength; its value is to provide an independent algorithmic-information route that does not require the Hilbert-space MUB geometry as a hypothesis.
\end{enumerate}
Both (a) and (b) yield $(M-1)\log_2 N$ as a lower bound on the storage required; the deficit appears at $M \geq 3$ (where $(M-1)\log_2 N > \log_2 N$ for any $N \geq 2$). The marginal case $M = 2$ is closed by applying (a) at $M = 3$, since three MUBs exist for every $N \geq 2$ \cite{grassl2004mub}.
\end{lemma}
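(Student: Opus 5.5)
The approach is a counting argument on the joint outcome pattern for part (a), and an incompressibility argument for part (b).

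For part (a), I will fix $M$ mutually unbiased bases $B_1,\ldots,B_M$ in $\C^N$, which are available by Theorem~\ref{thm:hilbert-representation} together with \cite{wootters1989optimal,grassl2004mub}. I then regard a deterministic assignment as a tuple $\lambda=(\lambda(B_1),\ldots,\lambda(B_M))\in\{1,\ldots,N\}^M$. The key step is to show that all tuples compatible with one fixed value $\lambda(B_1)=j$ must remain distinguishable. The reason is mutual unbiasedness: if $\psi$ is the eigenstate $b^{(1)}_j$, then $|\langle b^{(1)}_j|b^{(m)}_k\rangle|^2=1/N$ for every $k$ and every $m\geq 2$. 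Hence no outcome in $B_m$ is excluded by fixing the outcome in $B_1$. Moreover, Basis Isotropy (Theorem~\ref{thm:src-master}(B)) gives no $K$-relational rule that could single out one $k$ over another. A deterministic model reproducing the full support of every context must therefore allow all $N^{M-1}$ completions. Encoding one of them needs $\log_2 N^{M-1}=(M-1)\log_2 N$ bits. I will then compare this with $C=\log_2 N$ from Definition~\ref{def:info-capacity}. This gives equality at $M=2$ and a strict deficit for $M\geq3$, since $(M-2)\log_2N>0$ when $N\geq2$.

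For part (b), I will view $\lambda$ as a string over an $N$-letter alphabet of length $M$, or, over an ensemble of preparations, as a concatenation of such strings. I will apply the standard counting form of Chaitin/Kolmogorov incompressibility. At most $2^{k}-1$ programs have length $<k$. So, among the $N^{M-1}$ completions that are equally admissible by (a), all but a fraction $2^{-c}$ have $\KK(\lambda)\geq (M-1)\log_2N-c$. The $O(\log M)$ term absorbs the self-delimiting encoding of $M$ and of the basis labels. "Generic" will be made precise as "outside this exceptional fraction". The empirical input will enter only as the requirement that, across repeated preparations, the model's outcome frequencies match the uniform $1/N$ marginals observed for unbiased contexts. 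This forces the assignments to range over essentially all completions rather than a compressible subfamily, and that is what makes the counting bound apply.

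The final step is bookkeeping. I will close the marginal case $M=2$ by invoking (a) at $M=3$, using the existence of three MUBs for every $N\geq2$ \cite{grassl2004mub}.

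I expect the main obstacle to be the key step of (a): justifying that every one of the $N^{M-1}$ completions must be storable for a \emph{single} state, rather than only across an ensemble. A deterministic model could in principle correlate the outcomes across contexts and so reduce the entropy. I will first try to rule this out by symmetry. Any correlation rule is an $\mathrm{Aut}$-invariant predicate on outcome tuples. By Lemma~\ref{lem:definability} it would have to be a $K$-formula. But for MUB eigenstates all cross-basis $K$-values equal $1-1/N$, so every $K$-formula is constant across the completions, and no compressing correlation survives. If this fails, I will fall back to the ensemble reading: the bound is then stated per preparation class, which suffices for Theorem~\ref{thm:capacity-halting}.
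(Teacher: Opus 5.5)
Your outline follows the paper's proof: for (a) you count the $N^{M-1}$ completions that uniform overlaps leave open, and for (b) you use Chaitin counting. The obstacle you flag is exactly where the argument breaks, and the paper's proof does not close it either. For (a) it only asserts that uniform overlaps ``impose no constraint linking outcomes.'' For (b) it takes the outcome string conditional on $o_1$ to have entropy $(M-1)\log_2 N$, which presupposes independence across mutually incompatible contexts.

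Only the $M$ single-context distributions are empirically accessible. Reproducing ``the full support of every context'' needs just $N$ completions, not $N^{M-1}$. For $\psi=b^{(1)}_j$, let $k$ be uniform on $\{1,\ldots,N\}$ and set $\lambda_k(B_1)=j$ and $\lambda_k(B_m)=k$ for $2\le m\le M$. This reproduces every MUB statistic of $\psi$ and uses $\log_2 N$ bits for every $M$, so under the lemma's own accounting there is no deficit at any $M$. (Any coupling of uniform marginals has joint entropy bounded below only by $\log_2 N$, and this bound is attained here.) Each $\lambda_k$ also satisfies $\KK(\lambda_k\mid j,N,M)\le\log_2 N+O(1)$. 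So admissibility of all completions does not mean a model must store them. Your claim in (b) that matching the $1/N$ marginals ``forces the assignments to range over essentially all completions'' is false. Chaitin's bound holds for most of the $N^{M-1}$ tuples but says nothing about the few a model actually uses. Your ensemble fallback fails for the same reason.

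Your symmetry repair does not help either. First, nothing requires a hidden-variable model to be $\mathrm{Aut}$-invariant, and imposing this proves too much. The unitary $\mathrm{diag}(1,\omega,\ldots,\omega^{N-1})$ fixes every $e_j$ and maps $f_k\mapsto f_{k+1}$, so no deterministic Fourier-basis value on $e_j$ is invariant. Lemma~\ref{lem:definability} would then exclude every deterministic assignment rather than count its bits. Second, the claim ``every $K$-formula is constant across the completions'' is false for $M\ge 3$. Pairwise values are all $1-1/N$, but triples carry the Bargmann invariant $\langle a|b\rangle\langle b|c\rangle\langle c|a\rangle$. For $N=3$ with $g_l=3^{-1/2}\sum_j\omega^{j^2+jl}e_j$, the triples $(e_0,f_k,g_k)$ and $(e_0,f_k,g_{k+1})$ have Bargmann phases $i$ and $e^{-i\pi/6}$. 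They therefore lie in different orbits even allowing antiunitaries, and by Lemma~\ref{lem:definability} those orbit predicates are $K$-formulas.

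A storage bound above $\log_2 N$ would need a genuine constraint on joint outcomes of incompatible contexts. MUB contexts share no projectors, so not even a Kochen--Specker-type constraint is available from them. Neither your proposal nor the paper supplies such a constraint. As it stands, the counting shows only that naming an \emph{arbitrary} admissible completion costs $(M-1)\log_2 N$ bits, not that a statistics-reproducing model must store that much.
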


\begin{proof}
\textit{(a) Combinatorial.} MUBs satisfy $|\langle b_i | b'_j \rangle|^2 = 1/N$, a geometric property of Hilbert space (Theorem~\ref{thm:hilbert-representation}). Uniform overlaps impose no constraint linking outcomes across MUBs: for any fixed $\lambda(B_1) = k$, every value of $\lambda(B_m) \in \{1,\ldots,N\}$ is geometrically admissible ($m \geq 2$). Specifying one of $N^{M-1}$ possibilities requires $(M-1)\log_2 N$ bits.

\textit{(b) Kolmogorov.} Fix a generic $\psi \in \C P^{N-1}$ and a single realisation across $M$ MUBs. By the MUB property and the empirically observed Born statistics, the joint outcome string $(o_1, \ldots, o_M) \in \{1, \ldots, N\}^M$ conditional on $o_1$ is drawn from a distribution of entropy $(M-1) \log_2 N$. For a deterministic predictor $\lambda$ to reproduce Born-typical outcomes per realisation, $\lambda$ must on each realisation output a string consistent with this conditional distribution: by Chaitin's incompressibility theorem \cite{chaitin1975theory}, a generic string from a distribution of entropy $H$ has Kolmogorov complexity at least $H - O(\log H)$. Hence per realisation $\KK(\lambda \mid \psi, o_1) \geq (M-1)\log_2 N - O(\log M)$. The bound is per realisation rather than amortised over $T$ trials: it counts the conditional bits the predictor must supply at each individual trial to match Born statistics. The input is empirical statistics, not the paper's Born derivation, so the bound applies to any HV model aiming to match experiment.
\end{proof}

\begin{remark}[Compression cannot save determinism]
\label{rem:mub-existence}
By Chaitin \cite{chaitin1975theory}, generic $\lambda$ satisfy $\KK(\lambda) \geq (M-1)\log_2 N - O(\log M)$; any $\lambda$ with $\KK(\lambda) > \log_2 N$ violates Axiom~\ref{ax:finite}. Where the full $N+1$ MUBs are unsettled (composite $N$), the three-MUB case of Lemma~\ref{lem:incompressibility}(a) or the KS bound suffices.
\end{remark}

\begin{proof}[Proof of Theorem~\ref{thm:capacity-halting}]
Available storage is $\log_2 N$ bits. Required storage exceeds this by Lemma~\ref{thm:ks-bits} (universal $N \geq 3$) and by Lemma~\ref{lem:incompressibility}(a),(b) ($M \geq 3$ MUBs, combinatorial and Kolmogorov routes giving the same $(M{-}1)\log_2 N$ bound). Saturation forecloses internal reservoirs and Lemma~\ref{prop:closure} forecloses external ones, so deterministic assignments are excluded; Metric Compatibility (Theorem~\ref{thm:born-kernel}) then determines $p_k = |c_k|^2$.
\end{proof}

\noindent Numerically: for $N \in \{4, 8, 16\}$, the bound $(M{-}1)\log_2 N$ at $M=3$ gives $\{4, 6, 8\}$ bits against $\{2, 3, 4\}$ available (factor $1.5$--$2$); for prime-power $N$ where the full $N+1$ MUBs exist \cite{wootters1989optimal}, the bound rises to $N\log_2 N$, giving $\{8, 24, 64\}$ bits: a super-logarithmic deficit consistent with Figure~\ref{fig:capacity-deficit}. At $N=2$, $M=2$ is marginal (two contexts can be deterministic), but a third MUB exists and the deficit reappears. Quantum contextuality \cite{kochen1967problem} is thus the physical manifestation of finite-capacity information overflow.

\subsection{Probabilistic Response}

Capacity Halting excludes determinism; the K-structure provides the unique probabilistic response. The K-affinities $A(\psi, a_k) := 1 - K(\psi, a_k)$ satisfy the Kolmogorov axioms:

\begin{lemma}[K-Affinity Normalization]
\label{lem:affinity-normalization}
$\sum_{k=1}^{N} (1 - K(\psi, a_k)) = 1$ for any state $\psi$ and basis $\{a_1, \ldots, a_N\}$.
\end{lemma}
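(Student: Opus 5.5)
The plan is to reduce the statement to the inner-product form of $K$ and then use orthonormality of the basis. By Theorem~\ref{thm:kernel-inner}, $K(\psi, a_k) = 1 - |\langle\psi|a_k\rangle|^2$ for every state $\psi$ and every basis element $a_k$. Each summand therefore becomes $1 - K(\psi,a_k) = |\langle \psi | a_k\rangle|^2$, and it remains to show
\[
\sum_{k=1}^N |\langle\psi|a_k\rangle|^2 = 1 .
\]

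The first step is to check that $\{a_1,\dots,a_N\}$ is realised as an orthonormal basis of $\C^N$. Mutual perfect distinguishability gives $K(a_j,a_k)=1$ for $j\neq k$, hence $|\langle a_j|a_k\rangle|^2=0$, so the representatives are pairwise orthogonal. They are unit vectors under the normalisation fixed in Theorem~\ref{thm:inner-product-existence}. By Step~4 of Theorem~\ref{thm:hilbert-representation}, $\dim_\C \mathcal{H} = N$, so $N$ orthonormal vectors span $\mathcal{H}$.

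Next, expand the normalised representative as $\psi = \sum_k c_k a_k$ with $c_k = \langle a_k|\psi\rangle$. Parseval then gives $\sum_k |c_k|^2 = \langle\psi|\psi\rangle = 1$. Equivalently, the completeness relation $\sum_k |a_k\rangle\langle a_k| = I$ can be traced against $|\psi\rangle\langle\psi|$.

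The main point needing care is that the statement quantifies over an arbitrary basis, not just the reference basis $\mathcal{B}$ used to build coordinates in Theorem~\ref{thm:points-sections}. To handle this I would invoke Basis Isotropy (Theorem~\ref{thm:src-master}(B)) together with $G\cong U(N)$ (Corollary~\ref{thm:basis-topology}). These give $g$ with $g\mathcal{B}=\{a_k\}$, and since $K$ is $G$-invariant,
\[
\sum_k \big(1-K(\psi,a_k)\big) = \sum_k \big(1-K(g^{-1}\psi, e_k)\big) = \sum_k |c_k(g^{-1}\psi)|^2 .
\]
The right-hand side equals $1$ by the unit-norm convention in Step~1 of Theorem~\ref{thm:points-sections}. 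This second route avoids any separate completeness argument and serves as a cross-check on the first.
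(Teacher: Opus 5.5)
Your first route is correct and is essentially the paper's proof: you rewrite each $1 - K(\psi, a_k)$ as $|\langle\psi|a_k\rangle|^2$ via the inner-product form of $K$ and sum using the resolution of the identity (Parseval) for the orthonormal basis $\{a_k\}$, and your check that $K(a_j,a_k)=1-\delta_{jk}$ together with $\dim_{\C}\mathcal{H}=N$ makes $\{a_k\}$ an orthonormal basis spells out what the paper simply asserts. The Basis-Isotropy reduction is unnecessary, because the inner-product form of $K$ already holds for every basis, and it does not actually avoid the completeness step: $\sum_k |c_k|^2 = 1$ with $|c_k|^2 := 1 - K(x,e_k)$ is not a free normalisation convention but exactly this lemma for the reference basis, which itself needs the Parseval argument.
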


\begin{proof}
By the Hilbert space representation (Theorem~\ref{thm:hilbert-representation}), $A(\psi, a_k) = |\langle\psi|a_k\rangle|^2$. Since $\{a_1, \ldots, a_N\}$ is an orthonormal basis, the resolution of the identity gives $\sum_k |a_k\rangle\langle a_k| = I$, so $\sum_k A(\psi, a_k) = \sum_k |\langle\psi|a_k\rangle|^2 = \langle\psi|I|\psi\rangle = 1$.
\end{proof}

\begin{corollary}[Kochen--Specker contextuality]\label{thm:contextuality}
Any hidden-variable theory consistent with the Capacity Halting bound must be contextual.
\end{corollary}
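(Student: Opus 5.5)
The plan is to argue by contraposition. Suppose a hidden-variable theory satisfies the Capacity Halting bound (Theorem~\ref{thm:capacity-halting}) and is \emph{non-contextual}: to each measurement context it assigns predetermined outcomes that depend only on the individual projector or basis element, not on the context in which it is measured. We then show this non-contextual assignment cannot exist for $N \geq 3$, so any admissible theory must be contextual.

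The first step is to make precise what non-contextuality means here. In Definition~\ref{def:hv-assignment} the assignment $\lambda_\psi$ is a map $\mathfrak{B} \to \{1,\ldots,N\}$. We induce from it a valuation $v_\psi$ on rank-one projectors $|b\rangle\langle b|$ by setting $v_\psi(b) = 1$ iff $b$ is the selected outcome in a basis containing $b$. Non-contextuality (FUNC) is exactly the requirement that $v_\psi(b)$ be well defined, i.e.\ independent of which basis $\mathcal{B} \ni b$ is used. Each basis then contains exactly one $b$ with $v_\psi(b) = 1$.

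The second step uses the Hilbert space representation (Theorem~\ref{thm:hilbert-representation}), which realizes $\mathcal{X}$ as the rays of $\C^N$ with orthogonality given by $K = 1$. Inside this realization we invoke Lemma~\ref{thm:ks-bits}: for every $N \geq 3$ there is a finite Kochen--Specker set of rays (Cabello, Klyachko-type constructions, embedded in higher $N$ by padding with orthogonal complements) admitting no $\{0,1\}$ valuation with exactly one $1$ per orthogonal basis. Every ray lies in $\mathcal{X}$ by Completeness (S2), and every orthogonal $N$-tuple is a basis in $\mathfrak{B}$, so a non-contextual $v_\psi$ would restrict to such a forbidden valuation. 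This is a contradiction, so $v_\psi$ is not well defined, which is contextuality.

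Finally, we tie this back to the capacity bound. A contextual theory escapes the argument only by letting outcomes depend on the context, and that information must be stored somewhere. By Theorem~\ref{thm:capacity-halting}, together with Parsimony (Theorem~\ref{thm:parsimony-derived}) and closure (Lemma~\ref{prop:closure}), any such storage lies outside $K$. That is why "consistent with the Capacity Halting bound" and "contextual" coincide: the theory must reject Saturation and place the context-dependence in external ontology.

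The main obstacle is the quantifier mismatch between the two sides. Capacity Halting is a \emph{storage} statement, while Kochen--Specker is an \emph{algebraic} one, and the paper itself calls them independent. So the corollary should really rest on the KS step, with capacity used only to interpret the result. The delicate point is the embedding of a KS set for arbitrary $N \geq 3$ (not only $N = 3, 4$). I would handle it by taking a known KS set in $\C^3$ and extending each ray-triple with a fixed orthonormal completion of the remaining $N-3$ dimensions. One must check that the valuations forced on the completion vectors do not create an escape: since those vectors are shared across all contexts, a $1$ on any of them would force $0$ on the entire $\C^3$ part, contradicting the one-per-basis condition on a different context that contains none of them. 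I would verify this carefully.
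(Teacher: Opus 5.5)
\paragraph{Overall.} Your core argument matches what the paper relies on. The paper gives no separate proof: the corollary rests on the universal clause of Theorem~\ref{thm:capacity-halting}, which is just Lemma~\ref{thm:ks-bits}, i.e.\ Kochen--Specker. As you note, the paper itself calls the FUNC and storage arguments independent, so the capacity hypothesis does no logical work.

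\paragraph{The step that fails.} Your extension to general $N$ does not work. Let $S$ be a KS set in a $3$-dimensional subspace $V$, and complete every context $T \subset S$ by the same orthonormal basis $w_4, \ldots, w_N$ of $V^\perp$. Then setting $v(w_4) = 1$ and $v = 0$ on every other ray is a non-contextual valuation with exactly one $1$ per context. The ``different context that contains none of them'' you appeal to does not occur in that set. So for $N \geq 4$ the padded set is not a KS set, and the check you planned would fail. Also, Klyachko's pentagram is a state-dependent witness, not a KS set; use the Kochen--Specker or Peres set in dimension $3$.

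\paragraph{The repair.} Your $v_\psi$ is defined on \emph{all} rays, because every orthonormal basis of $\C^N$ lies in $\mathfrak{B}$, so no finite KS set in dimension $N$ is needed.
\begin{enumerate}
\item Pick $e$ with $v_\psi(e) = 1$.
\item Any ray $w \perp e$ lies in a basis together with $e$, where $e$ is already the unique $1$, so $v_\psi(w) = 0$.
\item Take any $3$-dimensional $V \ni e$. Completing an orthonormal triple of $V$ by a basis of $V^\perp \subset e^\perp$ shows that each triple carries exactly one $1$.
\item Hence $v_\psi|_V$ is a valuation on the rays of $V \cong \C^3$. This contradicts a real KS set in $\mathbb{R}^3 \subset \C^3$.
\end{enumerate}
If you want a finite set instead, go from $d$ to $d+1$. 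Place copies of a $d$-dimensional KS set in $e_0^\perp$ and $e_1^\perp$, complete their contexts by $e_0$, resp.\ $e_1$, and add the coordinate context. Whichever of $e_0, e_1$ receives $0$ forces a forbidden valuation on its copy.

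\paragraph{Your last paragraph.} Drop the claim that consistency with the bound and contextuality ``coincide.'' Only the forward implication is needed, and the converse fails in the paper's setup. The assignment $\lambda_\psi$ of Definition~\ref{def:hv-assignment} is already per-basis. The MUB count of Lemma~\ref{lem:incompressibility}(a) runs over bases sharing no rays, so it excludes contextual deterministic assignments under Saturation as well.
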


\begin{theorem}[Operational entropy floor; information--coherence tradeoff]
\label{thm:entropy-floor}
Pure states $\ket{\psi} \in \C P^{N-1}$ have $S(\ketbra{\psi}{\psi}) = 0$ mathematically, but the \emph{operationally achievable} minimum von Neumann entropy is $S_{\min} \sim (\log N)/N$: preparing $|c_k|^2 < 1/N$ exceeds resolution (Theorem~\ref{thm:continuous-sampled}), and $\epsilon_{\min} \log(1/\epsilon_{\min})$ with $\epsilon_{\min} \sim 1/N$ gives the floor. The same fixed-capacity constraint gives an information--coherence tradeoff $\Delta I_{\mathrm{classical}} + \Delta S_{\mathrm{coherence}} \leq 0$ (with $S_{\mathrm{coherence}} = S(\rho) - S(\rho_{\mathrm{diag}})$) from unitarity of measurement interaction and subadditivity.
\end{theorem}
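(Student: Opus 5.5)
The statement has two parts, and I will handle them separately: the operational entropy floor, and the information--coherence tradeoff.

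\emph{Entropy floor.} The plan is to turn the single-shot resolution floor of Lemma~\ref{prop:capacity-bound} and Theorem~\ref{thm:continuous-sampled} into an entropy bound.

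First, I would formalize what ``operationally achievable'' means. A preparation targeting $\ket{\psi}$ is only specified to precision $\epsilon_{\min}\sim 1/N$ in the moduli $|c_k|^2$. So the preparation is operationally a mixture $\rho$ whose diagonal, in any measurement basis, cannot place weight below $\epsilon_{\min}$ on the complementary outcomes. Concretely, $\rho$ lies in a Fubini--Study ball of radius $\sim 1/N$ around $\ketbra{\psi}{\psi}$, and by Theorem~\ref{thm:quantum-sampling} states inside that ball are operationally indistinguishable. I would therefore model the prepared state as $\rho=(1-\epsilon)\ketbra{\psi}{\psi}+\epsilon\,\sigma$ with $\epsilon\geq\epsilon_{\min}$.

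Second, I would bound $S(\rho)$ from below. The largest eigenvalue is at most $1-\epsilon$, so $S(\rho)\geq h(\epsilon)$, the binary entropy. Then $h(\epsilon)\geq \epsilon\log(1/\epsilon)$, which at $\epsilon=\epsilon_{\min}\sim 1/N$ gives $S_{\min}\sim(\log N)/N$. The matching upper bound comes from the explicit choice $\sigma=\ketbra{\phi}{\phi}$ with $\phi\perp\psi$, for which $S(\rho)=h(\epsilon_{\min})=\Theta((\log N)/N)$. This is what justifies the ``$\sim$''.

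\emph{Tradeoff.} Here I would follow the stated route: unitarity plus subadditivity. Model a measurement interaction as a unitary $U$ on system plus apparatus, with the apparatus of capacity $N'\leq N$ starting in a pure pointer state. Define $\Delta I_{\mathrm{classical}}$ as the mutual information $I(S\!:\!A)$ between the pointer and the system's $\mathcal{B}$-diagonal, which is gained from an initial value of zero.

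The key step is the following chain:
\begin{enumerate}
\item Unitarity keeps $S(SA)$ fixed, here at $S(\rho)$.
\item Subadditivity then gives $I(S\!:\!A)=S(\rho_S')+S(\rho_A')-S(\rho)$.
\item A dephasing-type interaction (pointer copying the $\mathcal{B}$ label) drives $\rho_S'\to\rho_{\mathrm{diag}}$. The classical information extracted is bounded by $S(\rho_{\mathrm{diag}})-S(\rho)$, which is exactly the coherence destroyed, $-\Delta S_{\mathrm{coherence}}$.
\end{enumerate}
Here I would use the data-processing and Holevo bounds, noting that the accessible information is at most $\log_2 N'$. Rearranging gives $\Delta I_{\mathrm{classical}}+\Delta S_{\mathrm{coherence}}\leq 0$.

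\emph{Main obstacle.} The hardest step is making the inequality chain for general (non-ideal) $U$ rigorous. The worry is that $\Delta S_{\mathrm{coherence}}$, being defined basis-relative, may fail to decrease monotonically. I would first try the relative-entropy identity $S(\rho_{\mathrm{diag}})-S(\rho)=D(\rho\,\|\,\rho_{\mathrm{diag}})$ together with monotonicity of relative entropy under the reduced channel. If that does not go through, I would restrict the claim to measurement interactions that are incoherent in $\mathcal{B}$, which are the interactions the capacity-halting context actually invokes.
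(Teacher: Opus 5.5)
\textbf{Tradeoff.} The tradeoff half has a genuine gap at step~3. It fails already for the ideal calibrated interaction, not only for the non-ideal $U$ you single out as the obstacle.

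First, the sign. With the statement's convention $S_{\mathrm{coherence}}=S(\rho)-S(\rho_{\mathrm{diag}})\le 0$, a dephasing interaction raises $S_{\mathrm{coherence}}$ to $0$. So $\Delta S_{\mathrm{coherence}}=S(\rho_{\mathrm{diag}})-S(\rho)\ge 0$, and $-\Delta S_{\mathrm{coherence}}$ is \emph{minus} the coherence destroyed. Under that convention your bound reads $\Delta I_{\mathrm{classical}}\le\Delta S_{\mathrm{coherence}}$. With both terms nonnegative, the stated inequality would then hold only for measurements that extract nothing and destroy no coherence. You are tacitly switching to the relative entropy of coherence $C_r(\rho)=S(\rho_{\mathrm{diag}})-S(\rho)$, and you must say so.

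Second, even with $C_r$, the bound ``extracted information $\le S(\rho_{\mathrm{diag}})-S(\rho)$'' is false for mixed inputs, which your set-up ($S(SA)=S(\rho)$) explicitly admits. The calibrated interaction $U(s_k\otimes a_0)=s_k\otimes a_k$ of Definition~\ref{def:measurement-interaction} makes pointer and $\mathcal{B}$-label perfectly correlated. Hence $\Delta I_{\mathrm{classical}}=S(\rho_{\mathrm{diag}})$, which exceeds the coherence destroyed by exactly $S(\rho)$. For $\rho=I/N$ you gain $\log_2 N$ bits while destroying no coherence at all. The first half of the theorem says every operationally achievable state has $S(\rho)\ge S_{\min}>0$, so for those states the honest bound is $\Delta I_{\mathrm{classical}}+\Delta C_r\le S(\rho)$, not $\le 0$.

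None of your tools close this:
\begin{itemize}
\item subadditivity only gives $I(S\!:\!A)\ge 0$, the wrong direction;
\item $\log_2 N'$ is not a coherence bound;
\item monotonicity of $D(\rho\,\|\,\rho_{\mathrm{diag}})$ only shows coherence does not grow.
\end{itemize}

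What you can prove is the pure-input case, via Holevo rather than subadditivity. Take $U=\sum_k\ketbra{s_k}{s_k}\otimes V_k$ and $\ket{\psi}=\sum_k c_k\ket{s_k}$. The output $\sum_k c_k\ket{s_k}\otimes\ket{\phi_k}$ is pure and leaves the $\mathcal{B}$-diagonal untouched. So the coherence destroyed equals $S(\rho_S')=S(\rho_A')$. This is the Holevo quantity of the pointer ensemble $\{|c_k|^2,\ket{\phi_k}\}$, and it therefore bounds the accessible classical information. The paper offers nothing beyond the phrase ``unitarity and subadditivity,'' so this restricted, $C_r$-signed statement is the most your route can deliver.

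\textbf{Entropy floor.} Your model follows the paper's own heuristic ($\epsilon_{\min}\log(1/\epsilon_{\min})$ at $\epsilon_{\min}\sim 1/N$). However, the step ``the largest eigenvalue is at most $1-\epsilon$'' does not follow from $\rho=(1-\epsilon)\ketbra{\psi}{\psi}+\epsilon\sigma$. In fact $\lambda_{\max}(\rho)\ge\langle\psi|\rho|\psi\rangle\ge 1-\epsilon$, and $\sigma=\ketbra{\psi}{\psi}$ returns a pure state.

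To fix this you need either $\mathrm{supp}\,\sigma\perp\ket{\psi}$, or to posit directly $\lambda_{\max}(\rho)\le 1-\epsilon_{\min}$ as a bound on total complementary weight. The per-outcome reading of your ``any measurement basis'' condition would push all but one eigenvalue above $\epsilon_{\min}$ and give entropy of order $\log N$, not $(\log N)/N$.

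Either way this is an assumption about preparations, not a consequence of Theorem~\ref{thm:continuous-sampled}. That theorem only says states within $\sim 1/N$ of each other cannot be told apart, which does not force the prepared state to be mixed. The paper's floor rests on the same unstated postulate, and your proof should flag it as such.
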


\subsection{The Continuum Limit}

\begin{remark}[Standard QM as $N\to\infty$; entrenchment]
\label{thm:continuum-limit}
As $N \to \infty$, the framework recovers infinite-dimensional QM: $\C P^{N-1} \to \C P^\infty$, the cyclic generator's spectrum becomes dense on $S^1$, and finite-$N$ corrections (entropy floor, capacity-induced uncertainty bound, the discrete cyclic spectrum of Lemma~\ref{thm:cyclic-spectrum}) vanish. Yet the capacity deficit does not: $\log_2 N$ storage vs.\ the required super-logarithmic bit-count (and $\Theta(N\log_2 N)$ for prime-power $N$ at the maximal MUB count) gives a divergent ratio, so ``quantumness'' (indeterminism, contextuality) becomes \emph{infinitely entrenched}. The infinite-dimensional limit is the \emph{maximally} indeterministic regime.
\end{remark}

\subsection{Measurement as Correlation Formation}

\begin{definition}[Measurement Interaction]
\label{def:measurement-interaction}
A \emph{measurement interaction} is a unitary $U: \mathcal{H}_S \otimes \mathcal{H}_A \to \mathcal{H}_S \otimes \mathcal{H}_A$ satisfying the \emph{calibration condition}: for basis states $\{s_k\}$ of the system and pointer states $\{a_k\}$ of the apparatus,
\[
U(s_k \otimes a_0) = s_k \otimes a_k
\]
where $a_0$ is the apparatus ``ready'' state. This creates system-apparatus correlations under unitary evolution.
\end{definition}

\begin{lemma}[K-Affinities Give Born Probabilities]
\label{prop:measurement-born}
For a measurement interaction (Definition~\ref{def:measurement-interaction}) applied to initial state $\psi = \sum_k c_k s_k$, the K-affinity of the joint state $\Phi$ to each correlation branch $(s_k \otimes a_k)$ equals the Born probability:
\[
A(\Phi, s_k \otimes a_k) := 1 - K(\Phi, s_k \otimes a_k) = |c_k|^2
\]
\end{lemma}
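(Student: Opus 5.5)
The plan is a direct computation in the composite Hilbert space, using the Hilbert Space Representation (Theorem~\ref{thm:hilbert-representation}) and the kernel formula (Theorem~\ref{thm:kernel-inner}) on the joint system. By the tensor-product composition of \S\ref{sec:composite}, the joint space $\mathcal{H}_S \otimes \mathcal{H}_A$ is again a finite-capacity distinguishability space with capacity $N_S N_A \geq 3$. Its kernel is therefore $K(\Phi,\Xi) = 1 - |\langle\Phi|\Xi\rangle|^2$ on normalized joint vectors. So $A(\Phi, s_k\otimes a_k) = |\langle \Phi | s_k \otimes a_k\rangle|^2$, and the task reduces to computing one overlap.

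First, compute $\Phi$. The initial joint state is $\psi \otimes a_0 = \sum_k c_k\, s_k \otimes a_0$. Linearity of $U$ and the calibration condition of Definition~\ref{def:measurement-interaction} give $\Phi = U(\psi\otimes a_0) = \sum_k c_k\, s_k \otimes a_k$. Normalization holds because $U$ is unitary, or directly because the $s_k \otimes a_k$ are orthonormal. That orthonormality follows from $\langle s_j\otimes a_j | s_k \otimes a_k\rangle = \langle s_j|s_k\rangle\langle a_j|a_k\rangle = \delta_{jk}$. The only thing needed from the apparatus is that the pointer states $\{a_k\}$ are mutually perfectly distinguishable, i.e.\ part of a basis of $\mathcal{H}_A$.

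Second, evaluate the overlap: $\langle s_k\otimes a_k | \Phi\rangle = \sum_j c_j \delta_{jk} = c_k$. Hence $1 - K(\Phi, s_k\otimes a_k) = |c_k|^2$. As a consistency check, Lemma~\ref{lem:affinity-normalization} then gives $\sum_k |c_k|^2 = 1$ across the branches. These branches extend to a basis of the joint space, although the branch affinities alone already sum to one.

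The main obstacle is justifying the factorized inner product on the composite, i.e.\ that the joint kernel is the Fubini--Study form built from $\langle\cdot|\cdot\rangle_S\langle\cdot|\cdot\rangle_A$. This should follow from Theorem~\ref{thm:tensor}. As a fallback, I would use Theorem~\ref{thm:kernel-inner} applied to the composite as a capacity-$N_SN_A$ space, together with the requirement that product basis states $s_j\otimes a_l$ form a basis. Beyond that, the argument is routine linear algebra.
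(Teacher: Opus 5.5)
Your proposal is correct and matches the paper's proof: linearity plus the calibration condition gives $\Phi = \sum_k c_k\, s_k \otimes a_k$, the branches are orthonormal, and the kernel formula $K = 1 - |\langle\cdot|\cdot\rangle|^2$ on the joint space gives $1 - K(\Phi, s_k\otimes a_k) = |c_k|^2$. The only difference is that the paper cites Theorem~\ref{thm:hilbert-representation} for the joint kernel formula, whereas you route it more carefully through Theorem~\ref{thm:tensor}; also, since $\langle s_j\otimes a_j|s_k\otimes a_k\rangle = \delta_{jk}\langle a_j|a_k\rangle$, branch orthonormality needs only $\langle s_j|s_k\rangle=\delta_{jk}$ and normalized $a_k$, not pointer orthogonality.
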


\begin{proof}
The joint state after interaction is $\Phi = U(\psi \otimes a_0) = \sum_k c_k (s_k \otimes a_k)$. Since $\{s_k\}$ is an orthonormal system basis and the post-interaction pointer states $\{a_k\}$ are pairwise orthogonal (the apparatus measurement basis), the family $\{s_k \otimes a_k\}$ is orthonormal in $\mathcal{H}_S \otimes \mathcal{H}_A$. By the Hilbert-space kernel representation (Theorem~\ref{thm:hilbert-representation}, $K(x,y) = 1 - |\langle x|y\rangle|^2$ for any pair of states):
\[
K(\Phi, s_j \otimes a_j) = 1 - |\langle\Phi | s_j \otimes a_j\rangle|^2 = 1 - |c_j|^2,
\]
where the inner product $\langle\Phi | s_j \otimes a_j\rangle = c_j$ extracts the $j$-th amplitude.
Therefore $A(\Phi, s_j \otimes a_j) = |c_j|^2$, the Born probability.
\end{proof}

\begin{remark}[Measurement as correlation; dissolution of the measurement problem]
\label{rem:collapse}
The framework's treatment of measurement combines three structural facts derivable from Axioms~\ref{ax:finite} and~\ref{ax:relational}:
\begin{enumerate}
\item[(M1)] \emph{Measurement is unitary correlation formation} between observer and system (Definition~\ref{def:measurement-interaction}); the joint state evolves unitarily throughout, with no separate collapse process required.
\item[(M2)] \emph{Joint capacity is conserved} at $N_O \cdot N_S$ across observer-system splittings (Theorem~\ref{thm:capacity-mult}, Remark~\ref{rem:observer-internal}), bounding the combined informational ledger.
\item[(M3)] \emph{Outcomes are stochastic by Capacity Halting} applied to the joint system (Theorem~\ref{thm:capacity-halting}): predetermined outcomes across all joint measurement contexts would require more than $\log_2(N_O \cdot N_S)$ bits. Born probabilities emerge as the unique $K$-affinities of correlation branches (Lemma~\ref{prop:measurement-born}).
\end{enumerate}

``Collapse'' refers to the observer's epistemic update upon recording an outcome: the local description changes when the observer reads the outcome, while the joint unitary evolution is unaffected. ``Randomness'' arises from mandatory tracing over correlations that exceed the observer's tracking capacity. The standard measurement problem (``when and why does the wavefunction collapse?'') dissolves into capacity conservation: the joint system always evolves unitarily, the observer always records a classical outcome, and these two statements are consistent because joint capacity bounds what the observer can coherently track.

\textit{Scope of explanation.} This framework explains \emph{why probabilities arise} (capacity deficit forces non-determinism) and \emph{which probabilities} (metric compatibility selects $|c_k|^2$). It remains agnostic on the ``hard problem'' of measurement: why a particular outcome is experienced, or whether superpositions persist for unobserved systems. The framework treats measurement as correlation formation constrained by capacity, compatible with relational QM, QBism, or many-worlds depending on which interpretive framework one places around the axioms.
\end{remark}

\section{Composite Systems and No-Cloning}
\label{sec:composite}

\begin{definition}[Spatially Separated Systems]
\label{def:independent}
Two systems $A$ and $B$ are \emph{spatially separated} if their measurement frames are independently choosable: an observer can select any basis $\mathcal{B}_A \in \mathfrak{B}_A$ without constraining the available choices in $\mathfrak{B}_B$, and vice versa. Formally, the joint basis space is $\mathfrak{B}_{AB} = \mathfrak{B}_A \times \mathfrak{B}_B$.
\end{definition}

\begin{theorem}[Characterization of Independence]
\label{thm:independence-characterization}
Spatial separation (Definition~\ref{def:independent}) is equivalent to no-signaling, independent local symmetry actions, and independent outcomes for product states. These follow directly from $\mathfrak{B}_{AB} = \mathfrak{B}_A \times \mathfrak{B}_B$.
\end{theorem}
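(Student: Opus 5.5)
The plan is to prove the three equivalences from the product structure $\mathfrak{B}_{AB} = \mathfrak{B}_A \times \mathfrak{B}_B$, combined with the earlier identification of each factor's symmetry group. Basis Isotropy (Theorem~\ref{thm:src-master}(B)) applies to each subsystem and to the composite. Corollary~\ref{thm:basis-topology} then identifies $G_A \cong U(N_A)$ acting transitively on $\mathfrak{B}_A \cong U(N_A)/T^{N_A}$, and similarly for $B$. The argument runs as a chain of three implications out of spatial separation, followed by a converse.

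\textbf{Step 1 (independent local symmetry actions).} If $\mathfrak{B}_{AB} = \mathfrak{B}_A \times \mathfrak{B}_B$, then a change of basis on $A$ alone is a map $(\mathcal{B}_A,\mathcal{B}_B)\mapsto(g_A\mathcal{B}_A,\mathcal{B}_B)$. By Structural Leibniz (S4), applied to the finite configuration consisting of a joint basis, this local relabelling extends to a global $K$-automorphism. Hence $G_A\times G_B$ embeds in $G_{AB}$ with commuting factors, and $(g_A,1)$ and $(1,g_B)$ commute because they act on distinct coordinates of the product.

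\textbf{Step 2 (no-signaling).} I would define $B$'s marginal outcome distribution in $\mathcal{B}_B$ as the sum over $A$-outcomes of the joint Born probabilities $1-K(\Psi, a_j\otimes b_k)$, using Theorem~\ref{thm:born-kernel} on the composite. Since $A$'s choice ranges over the $G_A$-orbit and $G_A$ acts trivially on the $B$-coordinate, the sum over a complete $A$-basis is independent of which $\mathcal{B}_A$ was chosen. This follows from the normalization Lemma~\ref{lem:affinity-normalization}, applied as a resolution of the identity on the $A$-factor. So $B$'s marginal cannot depend on $A$'s choice.

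\textbf{Step 3 (product-state independence).} For product states, the joint affinity factorizes: $1-K(\psi_A\otimes\psi_B, a_j\otimes b_k)=|\langle\psi_A|a_j\rangle|^2|\langle\psi_B|b_k\rangle|^2$. The joint distribution is therefore the product of the marginals.

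\textbf{Converse.} Assume no-signaling together with independent local actions. The commuting subgroups $G_A,G_B\le G_{AB}$ then act transitively on their respective basis manifolds, each leaving the other's labels fixed. The set of joint bases reachable from a reference product basis is therefore the full orbit $\mathfrak{B}_A\times\mathfrak{B}_B$. If a basis choice on $A$ constrained $\mathfrak{B}_B$, some $g_B$ would fail to be available after fixing $\mathcal{B}_A$. That would make $B$'s reachable measurements, and hence its marginal statistics, depend on $A$'s choice, contradicting no-signaling.

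\textbf{Main obstacle.} The hard part is Step 2 and the converse, because they require the composite kernel to already carry the tensor structure. That structure is only established later in Theorem~\ref{thm:tensor}. I would avoid circularity by using only the Born rule on $\mathcal{H}_{AB}$ (Theorem~\ref{thm:hilbert-representation} applied to the composite) and the commuting-factor statement of Step 1, without assuming the explicit tensor-product form.
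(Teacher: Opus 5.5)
The paper's own support for this statement is only the remark that the three properties follow directly from $\mathfrak{B}_{AB}=\mathfrak{B}_A\times\mathfrak{B}_B$. Its one explicit computation is Lemma~\ref{lem:commutativity}: local actions are coordinatewise on pairs $(\mathcal{B}_A,\mathcal{B}_B)$, hence commute as maps of the basis manifold. No converse is argued. Your proposal tries to do more, and the extra steps fail.

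\textbf{No-signaling (Step~2).} This is the main gap. Take a joint basis $\{e_{jk}\}$ labelled by $(\mathcal{B}_A,\mathcal{B}_B)$. Then $B$'s marginal is $\sum_j\bigl(1-K(\Psi,e_{jk})\bigr)=\|P_k\Psi\|^2$, where $P_k$ projects onto $\mathrm{span}_j\{e_{jk}\}$ in $\Hilb_{AB}$. Lemma~\ref{lem:affinity-normalization} only gives $\sum_{j,k}=1$. No-signaling needs $P_k$ to be independent of $\mathcal{B}_A$, i.e.\ $P_k=I_A\otimes\ketbra{b_k}{b_k}$, and that is the tensor decomposition itself. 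A ``resolution of the identity on the $A$-factor'' presupposes that an $A$-factor exists. Knowing that $G_A$ fixes the $\mathcal{B}_B$-label of a pair does not imply that it preserves these subspaces of $\Hilb_{AB}$. So the Born rule on $\Hilb_{AB}$ plus commuting actions on $\mathfrak{B}_{AB}$ is not enough, and your circularity workaround does not work.

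\textbf{Product-state independence (Step~3).} This step uses exactly the product form of affinities, i.e.\ Theorem~\ref{thm:kernel-composition}/\ref{thm:tensor}, contrary to your stated plan. That theorem's ``scalar-function form'' step already assumes that joint-frame data decompose into marginal data, which is what the present statement is meant to supply.

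\textbf{Embedding (Step~1).} Applied to one joint basis and its image, (S4) yields \emph{some} $\tilde\sigma\in G_{AB}$ agreeing on those $N_AN_B$ points. It is determined only up to their pointwise stabilizer. In the target theory that stabilizer contains entangling diagonal unitaries. For example, with $N_A=N_B=2$, controlled-$Z$ fixes every $\ket{j}\otimes\ket{k}$ as a ray but sends $\ket{+}\otimes\ket{+}$ to an entangled state. Hence the extension need not preserve product bases or fix the $B$-label. Extensions chosen for different $g_A$ need not form a homomorphism or commute with the $g_B$-extensions. ``Acting on distinct coordinates'' holds only for the induced action on $\mathfrak{B}_A\times\mathfrak{B}_B$, which is all Lemma~\ref{lem:commutativity} claims. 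You should stop there rather than assert $G_A\times G_B\hookrightarrow G_{AB}$.

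\textbf{Converse.} This is not sound either:
\begin{enumerate}
\item Your hypothesis of independent local actions is already phrased as coordinatewise actions on pairs, so it presupposes the product structure, and no-signaling does no work.
\item The inference ``$B$'s reachable measurements depend on $A$'s choice, hence its marginal statistics do'' is a non sequitur. No-signaling constrains the outcome probabilities of a fixed measurement, not which measurements are available.
\item The orbit argument yields only a copy of $\mathfrak{B}_A\times\mathfrak{B}_B$ inside the joint basis manifold. It does not give the equality in Definition~\ref{def:independent}, since entangled joint bases are not in that orbit.
\end{enumerate}
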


\begin{lemma}[Commutativity from Factorization]
\label{lem:commutativity}
For spatially separated systems (Definition~\ref{def:independent}), local symmetry actions commute: $[g_A, g_B] = 0$.
\end{lemma}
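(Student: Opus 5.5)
The approach is to realise local symmetries on the joint space and read commutativity off the product structure $\mathfrak{B}_{AB} = \mathfrak{B}_A \times \mathfrak{B}_B$ of Definition~\ref{def:independent}. By Corollary~\ref{thm:basis-topology}, $G_A \cong U(N_A)$ and $G_B \cong U(N_B)$ act on their own state spaces. A \emph{local} symmetry $g_A$ of the composite is then a $K_{AB}$-preserving automorphism whose induced action on $\mathfrak{B}_{AB}$ has the form $g_A \times \mathrm{id}_{\mathfrak{B}_B}$. Symmetrically, $g_B$ acts as $\mathrm{id}_{\mathfrak{B}_A} \times g_B$. This is exactly what ``independently choosable'' means: moving the $A$-frame leaves every admissible $B$-frame choice unchanged, and conversely.

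The first step is the commutation on $\mathfrak{B}_{AB}$. Since both maps act factorwise on the Cartesian product, $(g_A \times \mathrm{id})(\mathrm{id} \times g_B) = g_A \times g_B = (\mathrm{id} \times g_B)(g_A \times \mathrm{id})$. The commutator $[g_A, g_B] := g_A g_B g_A^{-1} g_B^{-1}$ therefore acts trivially on $\mathfrak{B}_{AB}$.

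The second step lifts this triviality from bases to states. An element of $G_{AB}$ that fixes every basis fixes each basis element as a point of $\mathcal{X}_{AB}$: every state lies in some basis, and fixing the basis as a set for all bases forces pointwise fixing, because two bases share a given element exactly when they agree on it. By effectiveness of the action (Theorem~\ref{thm:complexity-constraint}, Step~4), an automorphism fixing every point is the identity. Hence $[g_A, g_B] = \id$ in $\mathrm{Aut}(\mathcal{X}_{AB}, K_{AB})$.

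The third step handles the lift to the Hilbert space. Wigner's theorem (as in Theorem~\ref{thm:hilbert-representation}, Step~3) represents $g_A, g_B$ by unitaries $U_A, U_B$ on the joint space. Projective triviality of the commutator only gives $U_A U_B = e^{i\theta} U_B U_A$. I will remove the phase by connectedness: $\theta$ is a continuous function on $G_A \times G_B$ with $\theta = 0$ when either argument is the identity. Taking determinants of $U_A U_B U_A^{-1} U_B^{-1} = e^{i\theta} I$ forces $e^{i\theta}$ to be an $N_A N_B$-th root of unity, a discrete set. So $\theta \equiv 0$ on the connected groups (Theorem~\ref{thm:complexity-constraint}, Corollary~\ref{thm:basis-topology}).

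The main obstacle I expect is the first step's implicit claim that $g_A$ acts as $g_A \times \mathrm{id}$ on the joint basis space. Definition~\ref{def:independent} only fixes $\mathfrak{B}_{AB}$ as a product set, not that the local symmetry groups respect the factorisation. I would try to derive this from no-signalling (Theorem~\ref{thm:independence-characterization}): if $g_A$ moved some $B$-frame, then a choice made on $A$ would change the $K$-statistics available in $\mathfrak{B}_B$, contradicting independence. Transport Consistency (Theorem~\ref{thm:src-master}(T)) then rules out any residual $K$-invisible mixing of the factors.
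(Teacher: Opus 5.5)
Your Step~1 is the paper's entire proof. The paper says that preserving the product structure $\mathfrak{B}_{AB}=\mathfrak{B}_A\times\mathfrak{B}_B$ forces $g_A(\mathcal{B}_A,\mathcal{B}_B)=(g_A\mathcal{B}_A,\mathcal{B}_B)$ and $g_B(\mathcal{B}_A,\mathcal{B}_B)=(\mathcal{B}_A,g_B\mathcal{B}_B)$, checks that $g_Ag_B$ and $g_Bg_A$ agree on $\mathfrak{B}_{AB}$, and stops there. The factorwise form you flag as the main obstacle is therefore not derived in the paper either. It is in effect the definition of a local symmetry, and you may simply take it as such. The no-signalling route would only relocate the assumption, because Theorem~\ref{thm:independence-characterization} asserts independent local actions as immediate from Definition~\ref{def:independent} without further argument. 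Your Step~3 is sound and goes beyond what the paper claims. The commutator $U_AU_BU_A^{-1}U_B^{-1}$ does not depend on the phases of the lifts. Once it is known to be a scalar, it has determinant $1$, so it is an $N_AN_B$-th root of unity, and connectedness pins it to $1$.

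The gap is in Step~2. The composite contains entangled states (Theorem~\ref{thm:tensor}), so $\mathfrak{B}_A\times\mathfrak{B}_B$ can only be the set of \emph{product} bases, a proper subset of the joint basis manifold. A maximally entangled state lies in none of them. So ``every state lies in some basis'' fails, and triviality on $\mathfrak{B}_{AB}$ controls at most the product states. Effectiveness, however, needs the commutator $h$ to fix every point of $\mathcal{X}_{AB}$.

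Even on product states, the intersection argument needs each $a\otimes b$ to be the intersection of the product bases containing it. This holds when $N_A,N_B\geq 3$. It fails when a factor has capacity $2$, because $\{a,a^\perp\}$ is then the only basis of that factor containing $a$, so $a\otimes b$ and $a^\perp\otimes b$ cannot be separated this way.

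Both holes can be closed, but only with more structure.
\begin{itemize}
\item For a fixed $b$ in a product basis $\mathcal{B}$, the map $(g_A,g_B)\mapsto h(g_A,g_B)(b)\in\mathcal{B}$ is continuous into a finite discrete set. It is therefore constant on the connected group $G_A\times G_B$, and equal to $b$ at the identity.
\item Product states separate joint states, because the operators $P_a\otimes P_b$ span $\mathrm{Herm}(\Hilb_A\otimes\Hilb_B)$ (the dimension count in Corollary~\ref{cor:local-tomography-derived}). Hence $K(hx,p)=K(hx,hp)=K(x,p)$ for all product $p$ forces $hx=x$.
\end{itemize}
The second point already uses Theorem~\ref{thm:tensor}, whose proof does not rely on this lemma. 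With that theorem in hand, the more economical route is to note that local symmetries are $U_A\otimes I_B$ and $I_A\otimes U_B$. These commute identically, so Steps~2 and~3 become unnecessary.
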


\begin{proof}
By Basis Isotropy (Theorem~\ref{thm:src-master}(B)), symmetries act transitively on bases. For the product structure $\mathfrak{B}_{AB} = \mathfrak{B}_A \times \mathfrak{B}_B$ to be preserved, we need $g_A(\mathcal{B}_A, \mathcal{B}_B) = (g_A\mathcal{B}_A, \mathcal{B}_B)$ and $g_B(\mathcal{B}_A, \mathcal{B}_B) = (\mathcal{B}_A, g_B\mathcal{B}_B)$. Then $(g_A g_B)(\mathcal{B}_A, \mathcal{B}_B) = (g_A\mathcal{B}_A, g_B\mathcal{B}_B) = (g_B g_A)(\mathcal{B}_A, \mathcal{B}_B)$. Basis Isotropy thereby supplies the commutativity.
\end{proof}

\begin{theorem}[Capacity Multiplicativity]
\label{thm:capacity-mult}
For spatially separated systems, joint capacity is $N_{AB} = N_A \cdot N_B$.
\end{theorem}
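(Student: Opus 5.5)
To prove Theorem~\ref{thm:capacity-mult}, I would establish the two inequalities $N_{AB} \geq N_A N_B$ and $N_{AB} \leq N_A N_B$ separately, using only the product structure $\mathfrak{B}_{AB} = \mathfrak{B}_A \times \mathfrak{B}_B$ (Definition~\ref{def:independent}) together with Theorem~\ref{thm:independence-characterization} and Lemma~\ref{lem:commutativity}.

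\emph{Lower bound.} Fix bases $\mathcal{B}_A = \{a_i\}$ and $\mathcal{B}_B = \{b_j\}$. By Definition~\ref{def:independent}, the pair $(\mathcal{B}_A, \mathcal{B}_B)$ is a joint frame, and its outcomes are the pairs $(i,j)$. I would form the joint product states $a_i b_j$ and show they are mutually perfectly distinguishable. Take distinct pairs $(i,j) \neq (i',j')$; say $i \neq i'$. Independent outcomes for product states (Theorem~\ref{thm:independence-characterization}) make the $A$-marginal of $a_i b_j$ equal to $a_i$. Measuring $\mathcal{B}_A$ locally therefore separates the two states with certainty, because $K(a_i, a_{i'}) = 1$. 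I would then argue that the joint kernel restricted to product states must respect this: $K_{AB}(a_i b_j, a_{i'} b_{j'}) = 1$. The reason is Operational Completeness together with no-signaling. A local perfect discrimination is a joint perfect discrimination, and by Definition~\ref{def:graded-equality}(iii) this is exactly $K_{AB} = 1$. This yields $N_A N_B$ mutually distinguishable elements, so $N_{AB} \geq N_A N_B$.

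\emph{Upper bound.} This is the main obstacle. I would use SRC/Parsimony (Theorem~\ref{thm:parsimony-derived}, Lemma~\ref{prop:closure}). Suppose a joint basis had an element $z$ that is perfectly distinguishable from every $a_i b_j$. Its local marginals, obtained by restricting joint frames to $\mathfrak{B}_A \times \{\mathcal{B}_B\}$, must be valid local states. That is, every local measurement on $z$ returns outcome statistics over $\{1,\dots,N_A\}$ and over $\{1,\dots,N_B\}$.

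By independence of frames, the joint outcome of the frame $(\mathcal{B}_A,\mathcal{B}_B)$ on $z$ is some pair $(i,j)$. This makes $z$ not perfectly distinguishable from $a_i b_j$, which is a contradiction. The delicate point is justifying that every joint state yields outcomes in the product outcome set of the product frame. I would argue this from the definition: the joint basis manifold is literally $\mathfrak{B}_A\times\mathfrak{B}_B$, so a maximal distinguishable set of the joint system is a joint basis. Hence it is labeled by a pair of local bases, whose outcome set is the product, giving at most $N_A N_B$ elements.

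Any residual distinguishing structure beyond these pairs would be an $\mathrm{Aut}$-invariant feature not visible to local $K$-evaluations. Such a feature is excluded by Transport Consistency (Theorem~\ref{thm:src-master}(T)). Combining the two bounds gives $N_{AB} = N_A N_B$, and in particular $C_{AB} = C_A + C_B$ bits.
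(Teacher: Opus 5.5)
Your proposal is correct and follows essentially the paper's route. The paper's proof is your lower-bound observation, that product states $(a_i,b_j)$ and $(a_k,b_\ell)$ are perfectly distinguishable exactly when the index pairs differ, combined with the definitional identification $\mathfrak{B}_{AB} = \mathfrak{B}_A \times \mathfrak{B}_B$, which is what your upper bound ultimately rests on. Once you use that identification, you do not need the outcome-based contradiction (which tacitly requires a link $p = 1 - K$ on the joint system) or the appeal to Transport Consistency.
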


\begin{proof}
By Definition~\ref{def:independent}, joint measurement contexts form $\mathfrak{B}_A \times \mathfrak{B}_B$. A joint state $(a_i, b_j)$ is perfectly distinguishable from $(a_k, b_\ell)$ iff $i \neq k$ or $j \neq \ell$. Therefore $N_{AB} = N_A \cdot N_B$.
\end{proof}

\begin{remark}[Imperceptibility as observer-internal consistency]
\label{rem:observer-internal}
The Imperceptibility commitment of Theorem~\ref{thm:src-master}(I) acquires its operational content from the observer being part of the same finite-capacity system as what is observed. By Theorem~\ref{thm:capacity-mult}, capacity is multiplicative across composites: an observer with internal capacity $N_O$ measuring a system with capacity $N_S$ becomes correlated within a joint capacity $N_O \cdot N_S$, with at most $\log_2 N_O$ bits of memory devoted to system-specific information. Detecting a hypothetical $K$-resolution gap at scale $\delta$ would require distinguishing $\sim 1/\delta$ outcomes, demanding $\log_2 N_O \gtrsim \log_2(1/\delta)$ bits, equivalently $N_O \gtrsim 1/\delta$ states. For an internal observer bounded by the joint capacity, this caps the detectable resolution at $\delta \gtrsim 1/N_O$. Imperceptibility states the framework's commitment that no operationally meaningful sub-$1/N$ scale exists: a hypothetical fine-grained $K$-grid would be undetectable from inside the very system that hosts it. This is what distinguishes Imperceptibility from a contingent topological commitment: it follows from taking the relational/internalist viewpoint of the ontology to its conclusion. The principle is scale-free: a laboratory qubit (composite-embedded via Theorem~\ref{thm:capacity-dilution-composite}) is imperceptibly continuous to its internal observer; only $1/N_{\mathrm{eff}}$ shifts across scales.
\end{remark}

\begin{definition}[Local Tomography]
\label{def:local-tomography}
A composite system satisfies \emph{local tomography} if joint states are uniquely determined by the statistics of local measurements. Formally: if $\omega, \omega' \in \mathcal{S}_{AB}$ satisfy
\[
p(e_A \otimes e_B | \omega) = p(e_A \otimes e_B | \omega') \quad \forall \text{ } e_A \in \mathcal{E}_A, e_B \in \mathcal{E}_B,
\]
then $\omega = \omega'$.
\end{definition}

\begin{theorem}[Tensor Product Structure]
\label{thm:tensor}
The state space of spatially separated systems satisfies $\Hilb_{AB} \cong \Hilb_A \otimes \Hilb_B$.
\end{theorem}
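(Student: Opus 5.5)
The plan is to apply the single-system machinery to the composite and then identify its Hilbert space with $\Hilb_A \otimes \Hilb_B$ by counting dimensions and then using local tomography.

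\emph{Step 1: the composite is itself a qualifying distinguishability space.} I will check that $(\mathcal{X}_{AB}, K_{AB})$ satisfies Axioms~\ref{ax:finite}--\ref{ax:relational}. Finite capacity holds with $N_{AB} = N_A N_B \geq 3$ by Theorem~\ref{thm:capacity-mult}. SRC is inherited because any structure beyond $K_{AB}$ would already be structure beyond $K$ in the closed system (Lemma~\ref{prop:closure}). Theorem~\ref{thm:hilbert-representation} then gives a complex Hilbert space $\Hilb_{AB}$ with $\dim_\C \Hilb_{AB} = N_A N_B$, state space $\C P^{N_AN_B-1}$, and $K_{AB}(\Psi,\Phi) = 1 - |\langle\Psi|\Phi\rangle|^2$ by Theorem~\ref{thm:kernel-inner}.

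\emph{Step 2: a bilinear product map.} Product preparations $(\psi_A, \psi_B)$ give pure joint states $\psi_A \star \psi_B$. By Theorem~\ref{thm:independence-characterization}, product states have independent outcomes, so the Born rule (Theorem~\ref{thm:born-kernel}) applied to product bases yields
\[
1 - K_{AB}(\psi_A\star\psi_B, a_i\star b_j) = (1-K_A(\psi_A,a_i))(1-K_B(\psi_B,b_j)).
\]
Equivalently, $|\langle \psi_A\star\psi_B | \phi_A\star\phi_B\rangle|^2 = |\langle\psi_A|\phi_A\rangle|^2 |\langle\psi_B|\phi_B\rangle|^2$.

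Lemma~\ref{lem:commutativity} says local unitaries $g_A, g_B$ act as commuting subgroups $U(N_A)$ and $U(N_B)$ inside $U(N_AN_B)$. Hence $\star$ is equivariant: $(g_A\psi_A)\star(g_B\psi_B) = (g_A\otimes g_B)(\psi_A\star\psi_B)$ up to phase. Fixing the product basis $\{a_i \star b_j\}$, which is orthonormal with $N_AN_B$ elements and hence a basis of $\Hilb_{AB}$, I will define $\Theta(a_i\otimes b_j) := a_i \star b_j$ and extend linearly to a unitary $\Theta:\Hilb_A\otimes\Hilb_B\to\Hilb_{AB}$.

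\emph{Step 3, the main obstacle: $\Theta$ intertwines $\star$ with $\otimes$.} I must show $\Theta(\psi_A\otimes\psi_B) = \psi_A\star\psi_B$ up to a phase, and not merely that overlaps factorize in modulus. The phase freedom ($U(1)$ gauge, Theorem~\ref{thm:gauge}) is where complex conjugation could a priori intervene.

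I will first use equivariance: both $\Theta(\psi_A\otimes\psi_B)$ and $\psi_A\star\psi_B$ lie in the $U(N_A)\times U(N_B)$ orbit of $a_0\star b_0$. So it suffices that the representations agree on the local groups. Each local group acts through a unitary representation on $\Hilb_{AB}$ that, restricted to the product basis, permutes $a_i\star b_j$ like $g_A\otimes I$. Irreducibility (Lemma~\ref{lem:minimal-rep}) and the dimension count then force this representation to be $\Hilb_A\otimes\C^{N_B}$, with the fundamental rather than the antifundamental representation pinned by the $\mathcal{O}(1)$ choice in Lemma~\ref{lem:signature-sheaf}.

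If that route stalls, the fallback is Wigner (Step~3 of Theorem~\ref{thm:hilbert-representation}) applied to the map $\psi_A\mapsto\psi_A\star b_0$, which preserves transition probabilities. It is therefore unitary or antiunitary, and continuity from the identity excludes the antiunitary case.

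\emph{Step 4: surjectivity and entangled states.} With $\Theta$ identifying products, every joint pure state in $\Hilb_{AB}$ is a linear combination of product vectors. By Completeness (S2) these combinations are realized states, and local tomography (Definition~\ref{def:local-tomography}) holds because $\dim_\R \mathrm{Herm}(\C^{N_AN_B}) = (N_AN_B)^2 = \dim\mathrm{Herm}(\C^{N_A})\cdot\dim\mathrm{Herm}(\C^{N_B})$. This is exactly the count that fails for $\mathbb{H}$ in Theorem~\ref{thm:quaternion-obstruction}. Product effects therefore span, and the identification $\Hilb_{AB}\cong\Hilb_A\otimes\Hilb_B$ holds at the level of states and effects.
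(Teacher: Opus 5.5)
Steps~1--2 are a legitimate alternative to the paper's route. Step~1 makes explicit what the paper tacitly assumes when it writes $\dim_\C\Hilb_{AB}=N_AN_B$. The paper obtains the product-state factorization from the strict t-norm classification (Theorem~\ref{thm:kernel-composition}); you obtain it from independent outcomes plus the composite Born rule, which works once you note that any $\phi_A,\phi_B$ extend to local bases.

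The gap is Step~3. This is exactly where the paper simply \emph{imposes} complex bilinearity, and your attempt to derive it fails as written, for four reasons.

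(i)~You fix $\Theta$ by choosing unit representatives of the rays $a_i\star b_j$. Different choices differ by a diagonal phase $D=\mathrm{diag}(e^{i\theta_{ij}})$. If one choice sends each $\psi_A\otimes\psi_B$ to the ray $\psi_A\star\psi_B$, then $\Theta D$ with non-factorizable $\theta_{ij}$ does not: for $N_A=N_B=2$ and $\theta_{11}=\pi$, $D$ sends $\tfrac12(a_0+a_1)\otimes(b_0+b_1)$ to a maximally entangled vector. So the danger is not only complex conjugation, and $\Theta$ must be built as an intertwiner rather than fixed first.

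(ii)~The formula $(g_A\psi_A)\star(g_B\psi_B)=(g_A\otimes g_B)(\psi_A\star\psi_B)$ presupposes the factorization you are trying to prove. What you actually have is an abstract projective representation $\rho_A$ on $\Hilb_{AB}$ with $\rho_A(g)(\psi_A\star\psi_B)=(g\psi_A)\star\psi_B$. Its existence comes from the independent local actions of Theorem~\ref{thm:independence-characterization}, or from (S4) on finite sets of product states plus compactness of $\mathrm{Aut}$. It does not come from Lemma~\ref{lem:commutativity}, which presupposes it.

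(iii)~Lemma~\ref{lem:minimal-rep} concerns a single system's full group. $\rho_A$ on $\Hilb_{AB}$ is reducible ($N_B$ blocks), so irreducibility forces nothing. The $\mathcal{O}(1)$ choice of Lemma~\ref{lem:signature-sheaf} also cannot pin fundamental versus antifundamental, since $\C P(\Hilb_A)$ and $\C P(\overline{\Hilb_A})$ carry the same $K$. That ambiguity is harmless, but it has to be absorbed, not excluded.

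(iv)~In the fallback, $\psi_A\mapsto\psi_A\star b_0$ is not connected to any identity map, so continuity cannot exclude antilinearity. Slice-wise Wigner alone also leaves the relative phases between slices undetermined.

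The step can be closed along your lines.
\begin{enumerate}
\item Put $W_j:=\mathrm{span}_i\{a_i\star b_j\}$. The factorization gives $\sum_i|\langle\psi_A\star b_j|a_i\star b_j\rangle|^2=1$, so $\psi_A\star b_j\in W_j$ and each $W_j$ is $\rho_A$-invariant. Wigner on the slice gives $\psi_A\star b_j=[V_j\psi_A]$ with $V_j:\Hilb_A\to W_j$ unitary or antiunitary.
\item All $V_j$ have the same type. Otherwise, take $V_\beta$ the slice map for $\beta=(b_0+b_j)/\sqrt2$ and $P_k$ the projection onto $W_k$. One of $V_0^{-1}P_0V_\beta$, $V_j^{-1}P_jV_\beta$ would then be an antilinear map having every vector as an eigenvector, with eigenvalue of modulus $1/\sqrt2$, which is impossible.
\item Replacing $\Hilb_A$ by $\overline{\Hilb_A}$ if necessary, $\Theta_0(\psi\otimes e_j):=V_j\psi$ is unitary, with $\{e_j\}$ the standard basis of $\C^{N_B}$. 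Since $SU(N_A)$ has no nontrivial characters, $\rho_A(g)=\Theta_0(g\otimes I)\Theta_0^{-1}$ projectively, whatever the phases of the $V_j$.
\item The factorization also places $a_0\star\psi_B$ in $\Theta_0(a_0\otimes\C^{N_B})$. This defines a transition-probability-preserving map $T$ on $\mathcal{X}_B$, which is (anti)unitary by Wigner.
\item For any $g$ with $ga_0\propto\psi_A$, $\psi_A\star\psi_B=\rho_A(g)(a_0\star\psi_B)=[\Theta_0(\psi_A\otimes T\psi_B)]$. Absorbing a possible conjugation into $\Hilb_B$ gives $\Theta:=\Theta_0\circ(I\otimes T)$.
\end{enumerate}
This derives the bilinearity that the paper's proof only postulates, but your Step~3 needs to be replaced by an argument of this kind.
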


\begin{proof}
Capacity multiplicativity (Theorem~\ref{thm:capacity-mult}) gives $\dim_\C(\Hilb_{AB}) = N_AN_B$. \emph{Dimension counting alone is insufficient:} many vector spaces have dimension $N_A N_B$ (e.g., the direct sum $\Hilb_A^{N_B}$, the matrix space $\mathrm{Mat}(N_A, N_B)$, etc.), and an additional bilinearity-and-locality hypothesis is required to single out $\otimes$. The substantive derivation proceeds via the kernel composition rule.

By Theorem~\ref{thm:kernel-composition} (proved below from boundary conditions, symmetry, associativity, monotonicity, and continuity, each verified against the framework's axioms), the joint kernel for spatially separated systems is forced to be
\[
K_{AB}\bigl((a, b), (a', b')\bigr) = 1 - (1 - K_A(a, a'))(1 - K_B(b, b')).
\]
Using the inner-product identity $1 - K = |\langle \cdot | \cdot \rangle|^2$ on each factor (Theorem~\ref{thm:inner-product-existence}), this composition is equivalent to
\[
|\langle (a,b) | (a', b') \rangle_{AB}|^2 = |\langle a | a' \rangle_A|^2 \cdot |\langle b | b' \rangle_B|^2.
\]
The right-hand side is precisely $|\langle a \otimes b | a' \otimes b' \rangle|^2$ in $\Hilb_A \otimes_\C \Hilb_B$. The composite kernel therefore determines the joint inner product on product states up to a per-factor $U(1)$ phase: any $\Hilb_{AB}$ reproducing $K_{AB}$ is unitarily equivalent to $\Hilb_A \otimes_\C \Hilb_B$ modulo gauge transformations $\ket{a} \otimes \ket{b} \mapsto e^{i\alpha(a)} \ket{a} \otimes e^{i\beta(b)} \ket{b}$. The phase ambiguity is fixed by the bilinear extension to entangled states: requiring complex bilinearity in each factor and continuity (so that a continuous family of bases connected by unitaries inherits a continuous inner product) eliminates the gauge freedom up to a global phase, which is itself the unobservable overall $U(1)$ already present at the single-system level. Among bilinear constructions yielding a space of dimension $N_A N_B$, only the tensor product $\otimes_\C$ reproduces the multiplicative form $|\langle a|a'\rangle|^2 \cdot |\langle b|b'\rangle|^2$ on product states; alternative constructions of the right dimension (e.g.\ $\Hilb_A^{N_B}$) carry inner products that are not multiplicative across factors. Therefore $\Hilb_{AB} \cong \Hilb_A \otimes_\C \Hilb_B$ as Hilbert spaces (modulo the global $U(1)$ phase shared with single-system Hilbert spaces).
\end{proof}

\begin{corollary}[Derived Local Tomography]
\label{cor:local-tomography-derived}
Local tomography follows from the axioms without being assumed (cf.\ \cite{barnum2014local}). With $\mathbb{F} = \C$ (Lemma~\ref{lem:sheaf-complex}) and tensor product composition (Theorem~\ref{thm:tensor}), $\dim_\R \mathrm{Herm}(\C^n) = n^2$ gives $N_A^2 \cdot N_B^2 = (N_A N_B)^2$, so product operators span $\mathrm{Herm}(\mathcal{H}_{AB})$. This identity is unique to $\C$: for $\R$, $\frac{N_A(N_A+1)}{2} \cdot \frac{N_B(N_B+1)}{2} < \frac{N_A N_B(N_A N_B + 1)}{2}$ for $N_A, N_B \geq 2$. The chain is acyclic: Theorem~\ref{thm:quaternion-obstruction}(ii) uses only algebraic facts about $\mathbb{H}$, not local tomography itself. In the GPT literature local tomography is typically taken as a primitive postulate ruling out real and quaternionic state spaces; here it is a downstream consequence of finite capacity, Saturation, and the derived $\mathbb{F} = \C$.
\end{corollary}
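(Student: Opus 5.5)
The plan is to work directly in the representation $\Hilb_{AB} \cong \Hilb_A \otimes_\C \Hilb_B$ supplied by Theorem~\ref{thm:tensor}, with $\Hilb_A \cong \C^{N_A}$ and $\Hilb_B \cong \C^{N_B}$ (Lemma~\ref{lem:sheaf-complex}, Theorem~\ref{thm:hilbert-representation}). Local tomography then becomes a linear-algebra statement. Fix effects $e_A \in \mathcal{E}_A$ and $e_B \in \mathcal{E}_B$, represented by positive operators $E_A \in [0, I]$ and $E_B \in [0, I]$ via Theorem~\ref{thm:hilbert-representation}(ii)--(iii). The product effect has probability $p(e_A \otimes e_B \mid \omega) = \Tr\big((E_A \otimes E_B)\,\rho_\omega\big)$, where $\rho_\omega$ is the density operator of $\omega$.

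\textbf{Step 1.} Each local effect set spans $\mathrm{Herm}(\C^{N_X})$ over $\R$. The $K$-derived rank-one effects $\ketbra{b}{b}$, with $b$ ranging over $\C P^{N_X - 1}$, already suffice. Their real span is $U(N_X)$-invariant and contains the identity and non-scalar projectors. Since the adjoint action on the traceless Hermitian matrices is irreducible, the span is all of $\mathrm{Herm}(\C^{N_X})$. This gives real dimension $N_X^2$.

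\textbf{Step 2.} Take tensor products of these spanning sets. Because $\mathrm{Herm}(\C^{n}) \otimes_\R \mathrm{Herm}(\C^{m}) \to \mathrm{Herm}(\C^{nm})$ is injective, it suffices to count dimensions: $N_A^2 N_B^2 = (N_A N_B)^2$, so the map is also surjective. Hence the products $E_A \otimes E_B$ span $\mathrm{Herm}(\Hilb_{AB})$.

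\textbf{Step 3.} Let $\Delta = \rho_\omega - \rho_{\omega'}$. If $\omega$ and $\omega'$ agree on all product effects, then $\Tr(\Delta X) = 0$ for every $X$ in a spanning set of $\mathrm{Herm}(\Hilb_{AB})$. The trace pairing is nondegenerate there, so $\Delta = 0$. Injectivity of $\iota$ (Theorem~\ref{thm:hilbert-representation}) then gives $\omega = \omega'$.

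\textbf{Contrast with $\R$.} For the real case I will compare $\frac{N_A(N_A+1)}{2}\cdot\frac{N_B(N_B+1)}{2}$ with $\frac{N_AN_B(N_AN_B+1)}{2}$. Expanding, the difference is $\frac{N_AN_B}{4}(N_A-1)(N_B-1) > 0$ for $N_A, N_B \geq 2$. Thus the real product symmetric matrices miss the antisymmetric$\otimes$antisymmetric sector.

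\textbf{Acyclicity.} I only need to check that neither Theorem~\ref{thm:tensor} nor Theorem~\ref{thm:quaternion-obstruction}(ii) invokes Definition~\ref{def:local-tomography}. This is read off their proofs: they use kernel composition and the algebra of $\mathbb{H}$, respectively.

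\textbf{Main obstacle.} I expect the main difficulty to be Step 1 together with the identification of $\mathcal{E}_X$. The statement only guarantees that the $K$-derived effects $e_b$ exist, so the spanning claim must rest on those alone, and Step 1 handles this. I also have to make sure the mixed-state embedding $\iota$ is faithful on the composite. I will try to get this from Theorem~\ref{thm:hilbert-representation} applied to the composite, which has capacity $N_AN_B \geq 4 \geq 3$ by Theorem~\ref{thm:capacity-mult}.
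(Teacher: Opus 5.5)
Your proposal is correct and follows the paper's own route: the multiplicativity $N_A^2 N_B^2 = (N_A N_B)^2$ of $\dim_\R \mathrm{Herm}$ makes product operators span $\mathrm{Herm}(\Hilb_{AB})$, and the real case fails by the same dimension inequality. What you add are exactly the steps the paper's one-line dimension count leaves implicit: that the $K$-derived local effects span each $\mathrm{Herm}(\C^{N_X})$, that the Kronecker map $\mathrm{Herm}(\C^{N_A}) \otimes_\R \mathrm{Herm}(\C^{N_B}) \to \mathrm{Herm}(\C^{N_A N_B})$ is injective (equal dimensions alone would not give spanning), and that the trace pairing is nondegenerate.
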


\begin{theorem}[Kernel Composition from Associativity]
\label{thm:kernel-composition}
For product states of spatially separated systems, the composition rule $K_{AB}((a,b), (a',b')) = 1 - (1 - K_A(a,a'))(1 - K_B(b,b'))$ is the \emph{unique} function satisfying:
\begin{enumerate}
\item[(i)] \textbf{Boundary conditions:} $f(0,0) = 0$; $f(1,y) = f(x,1) = 1$; $f(0,y) = y$; $f(x,0) = x$ (from Definition~\ref{def:dspace} and the requirement that if one subsystem is identical, distinguishability comes from the other alone)
\item[(ii)] \textbf{Symmetry:} $f(x,y) = f(y,x)$ (Theorem~\ref{thm:src-master}(B))
\item[(iii)] \textbf{Associativity:} $f(f(x,y), z) = f(x, f(y,z))$ (combining systems must be independent of grouping)
\item[(iv)] \textbf{Strict monotonicity:} $f$ is strictly increasing in each argument for the other in $(0,1)$ (additional distinguishability information strictly improves joint discrimination)
\item[(v)] \textbf{Continuity:} $f$ is continuous (inherited from the continuity of $K$ in the $K$-metric topology)
\end{enumerate}
\end{theorem}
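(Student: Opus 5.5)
The plan is to recognise conditions (i)--(v) as saying that $f$ is a continuous, strictly monotone t-conorm, and to pass to the dual t-norm $T(u,v) := 1 - f(1-u, 1-v)$, which acts on affinities $A = 1 - K$. Under this substitution the boundary conditions (i) become $T(1,v) = v$, $T(u,1) = u$ and $T(0,v) = T(u,0) = 0$. Symmetry (ii), associativity (iii), strict monotonicity on the open square (iv) and continuity (v) carry over unchanged. So $T$ is a continuous strict t-norm. First I check that the composition rule itself satisfies (i)--(v), which is a direct calculation: $T(u,v) = uv$ is symmetric, associative, continuous and strictly increasing on $(0,1)^2$.

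\textbf{Step 1 (reduction via a classical theorem).} I invoke the imported classification of continuous strict t-norms (Aczél--Ling; Schweizer--Sklar). It gives an order isomorphism $\varphi:[0,1]\to[0,1]$ with $T(u,v) = \varphi^{-1}(\varphi(u)\varphi(v))$. Equivalently, there is an additive generator $g = -\log\varphi$ with $T(u,v) = g^{-1}(g(u)+g(v))$. In the Lean spine this step is the imported theorem; here I only verify its hypotheses, which are exactly (i)--(v).

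\textbf{Step 2 (pinning the generator: the main obstacle).} Conditions (i)--(v) alone cannot force $\varphi = \mathrm{id}$, because every conjugate $\varphi^{-1}(\varphi(u)\varphi(v))$ also satisfies them. The Hamacher family is one example. So uniqueness needs one more piece of already-derived structure, and I would supply it from the Born rule and product independence.

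By Theorem~\ref{thm:born-kernel}, the affinity $1 - K(\psi, a_k)$ is the outcome probability. By Theorem~\ref{thm:independence-characterization}, spatially separated systems have independent outcomes on product states. Take a product basis element $(a_i, b_j)$, which is available by Theorem~\ref{thm:capacity-mult}. For it, $1 - K_{AB}((a,b),(a_i,b_j))$ must equal $p_A(i)\,p_B(j) = (1-K_A(a,a_i))(1-K_B(b,b_j))$.

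Next I need every pair $(u,v)\in[0,1]^2$ to occur as a pair of such affinities. Corollary~\ref{cor:k-image-full} shows that each factor's $K$-image is all of $[0,1]$, and Basis Isotropy (Theorem~\ref{thm:src-master}(B)) lets any state be compared against a basis element. Together these give $T(u,v) = uv$ on the whole square, so $\varphi = \mathrm{id}$. Composition rules that are distinct but conjugate are then excluded, because their $\varphi\neq\mathrm{id}$ would be a reparametrisation of $K$-values that is not a $K$-formula, contradicting Transport Consistency (Theorem~\ref{thm:src-master}(T)).

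\textbf{Step 3 (conclusion).} Undoing the duality gives $f(x,y) = 1 - (1-x)(1-y)$. Uniqueness holds because any $f$ satisfying (i)--(v) has the form of Step 1, and Step 2 forces its generator to be the identity.

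I expect Step 2 to be the delicate point. I would state it explicitly as the use of product-state independence, rather than claiming that (i)--(v) suffice on their own. If a purely kernel-internal route is preferred, I would first try the scaling argument $\varphi(u^n) = \varphi(u)^n$ together with $U(N_A)\times U(N_B)$-equivariance to force $\varphi$ to be a power map, and then fix the exponent through the boundary normalisation $K=1$ of orthogonal product states.
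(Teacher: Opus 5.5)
Your diagnosis in Step~2 is correct, and the gap it exposes is in the paper's argument, not yours. The paper's proof does exactly your duality and Step~1. It then claims that the boundary conditions $\tilde f(1,v)=v$, $\tilde f(u,1)=u$ force $\varphi^{-1}(\varphi(1)\varphi(v))=v$, ``i.e.\ $\varphi(v)=v$''. Since $\varphi(1)=1$, that identity holds for every order isomorphism $\varphi$, so the step is vacuous.

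Your Hamacher example is a genuine counterexample to the statement as written. $T(u,v)=uv/(u+v-uv)$, with additive generator $(1-u)/u$, is a continuous strict t-norm. Its dual $f(x,y)=1-T(1-x,1-y)$ satisfies (i)--(v), yet $f(\tfrac12,\tfrac12)=\tfrac23\neq\tfrac34$. So uniqueness cannot be derived from (i)--(v) by any route. If the imported Lean axiom bundles in the clause ``with these boundary conditions the conjugating bijection is the identity'' (as the appendix describes it), that axiom is false as stated. One precision: what can be pinned down is $T(u,v)=uv$, i.e.\ $\varphi$ is some power map $u\mapsto u^p$. All power maps generate the same product, so ``$\varphi=\mathrm{id}$'' is not the right target.

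Your Step~2 therefore proves a different, corrected theorem: the composition rule follows from the Born rule on $A$, $B$ and $AB$ together with multiplicativity of outcome probabilities on product states. Once that input is granted, Step~1 and hypotheses (i)--(v) do no work. Every product state $(a',b')$ lies in a product basis, which is a basis of $AB$ by Theorem~\ref{thm:capacity-mult}. So Born plus independence gives $1-K_{AB}=(1-K_A)(1-K_B)$ on all product pairs directly, with no need for the scalar-function form or the t-norm classification.

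Be explicit about what this extra input costs:
\begin{enumerate}
\item Applying Theorem~\ref{thm:born-kernel} to $AB$ requires that $(\mathcal{X}_{AB},K_{AB})$ itself satisfy both axioms.
\item The factor Born rule is independently available only for $N_A,N_B\geq 3$. For a qubit factor the paper obtains it by composite embedding, which would be circular here.
\item The ``independent outcomes for product states'' clause of Theorem~\ref{thm:independence-characterization} is asserted in the paper without proof. Via the Born rule it is literally the product law you want, so your corrected statement is close to a restatement of that hypothesis.
\end{enumerate}

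Finally, drop two side remarks. First, Transport Consistency does not exclude conjugate rules: applying $\varphi$ to $K$-values is a Borel function of $K$-values, which is exactly what (T) and Lemma~\ref{lem:definability} permit. Second, the fallback scaling route fails. The identity $\varphi(u^n)=\varphi(u)^n$ presupposes that $T$ is already the product; in general only $\varphi$ of the $n$-fold $T$-power of $u$ equals $\varphi(u)^n$. There is also no exponent to fix, since every power map satisfies the boundary normalisation and yields the same $T$.
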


\begin{proof}
\textit{Scalar-function form.} Spatial separation (Definition~\ref{def:independent}) makes joint-frame data decompose into marginal-frame data, and Saturation (Theorem~\ref{thm:src-master}(S1)--(S4)) confines $K_{AB}$ to $K$-related quantities; on product states these reduce to the two marginal kernels $K_A(a,a')$ and $K_B(b,b')$. Hence $K_{AB} = f(K_A, K_B)$ for some $f:[0,1]^2 \to [0,1]$.

\textit{Verification that the framework's axioms supply hypotheses (i)--(v).}

(i) \emph{Boundary conditions.} $f(0, y) = y$: if $K_A(a, a') = 0$ then $a = a'$ by $K_A$-reflexivity, so the joint distinguishability of $(a, b)$ from $(a, b')$ depends only on $K_B(b, b') = y$. Symmetric reasoning gives $f(x, 0) = x$. $f(1, y) = 1$: if $K_A(a, a') = 1$ then $a$ and $a'$ are perfectly distinguishable, hence the joint pair is also perfectly distinguishable regardless of $b, b'$. By symmetry, $f(x, 1) = 1$. $f(0, 0) = 0$ follows from joint reflexivity.

(ii) \emph{Symmetry.} $K_A$ and $K_B$ are symmetric (Definition~\ref{def:graded-equality}); the joint kernel inherits symmetry on product states.

(iii) \emph{Associativity.} The substantive physical content: combining three spatially separated systems $A, B, C$ must give a result independent of grouping, $K_{(AB)C} = K_{A(BC)}$. This is required by the spatial-separation hypothesis (Definition~\ref{def:independent}) plus the operational independence of three-system grouping: an observer comparing $((a, b), c)$ to $((a', b'), c')$ measures the same $K$-value as one comparing $(a, (b, c))$ to $(a', (b', c'))$. This grouping-independence is a primitive of the spatial-separation hypothesis, prior to identifying the eventual tensor-product structure.

(iv) \emph{Strict monotonicity.} Fix $y \in (0, 1)$ and suppose $f(x_1, y) = f(x_2, y)$ for some $0 \leq x_1 < x_2 \leq 1$. Choose marginal pairs $(a_1, a_1')$, $(a_2, a_2')$ with $K_A(a_i, a_i') = x_i$ and a $B$-pair with $K_B = y$; then the joint pairs are $K_{AB}$-equal yet have $K_A$-distinguishable $A$-marginals (since $x_1 \neq x_2$). Operational Completeness (Theorem~\ref{thm:src-master}(O)) requires every distinction at the $A$-margin to be registered in $K_{AB}$ once a $B$-context fully resolves the $A$-channel; with $y \in (0,1)$ the $B$-channel is non-trivial but does not collapse the $A$-data, contradicting equality. Strict monotonicity in the interior follows; the endpoints $0$ and $1$ saturate by the boundary conditions of (i).

(v) \emph{Continuity.} Inherited from continuity of $K$ (automatic in the $K$-metric topology, Theorem~\ref{thm:src-master}(I)): the joint kernel is a function of two continuous arguments and must itself be continuous on the product manifold $\mathcal{X}_A \times \mathcal{X}_B$. Strengthens to smoothness once smoothness of $K$ is established.

\textit{Application of the strict t-norm classification.} Define $g(x) = 1 - x$ and $\tilde{f}(u,v) := g(f(g^{-1}(u), g^{-1}(v))) = 1 - f(1-u, 1-v)$. The transformed boundary conditions are $\tilde{f}(1,v) = v$, $\tilde{f}(u,1) = u$, and $\tilde{f}(0,v) = \tilde{f}(u, 0) = 0$, with continuity, associativity, symmetry, and strict monotonicity in the interior inherited from $f$. Thus $\tilde{f}: [0,1]^2 \to [0,1]$ is a continuous strict t-norm (a continuous, associative, symmetric, strictly monotonic binary operation on $[0,1]$ with identity $1$ and zero $0$). The strict t-norm representation theorem \cite{aczel1966functional, klement2000triangular, schweizer1983probabilistic} states that every continuous strict t-norm is conjugate to ordinary multiplication via a continuous strictly increasing bijection $\varphi: [0,1] \to [0,1]$ with $\varphi(0) = 0$ and $\varphi(1) = 1$, i.e.\ $\tilde{f}(u, v) = \varphi^{-1}(\varphi(u)\,\varphi(v))$. The boundary conditions $\tilde{f}(1, v) = v$ and $\tilde{f}(u, 1) = u$ then force $\varphi^{-1}(\varphi(1)\,\varphi(v)) = v$, i.e.\ $\varphi(v) = v$ for all $v \in [0,1]$, so the conjugation is the identity and $\tilde{f}(u, v) = u v$. Therefore $f(x, y) = 1 - (1 - x)(1 - y)$.
\end{proof}

\begin{remark}[Tensor Products from Kernel Composition]
\label{rem:tensor-kernel}
Theorem~\ref{thm:kernel-composition} derives tensor products from the axioms. The composition rule $K_{AB} = 1 - (1-K_A)(1-K_B)$ follows uniquely from associativity (the physical requirement that combining three systems A, B, C should not depend on grouping).

This rule is equivalent to the tensor product inner product. Since $1 - K = |\braket{\cdot}{\cdot}|^2$:
\begin{align*}
|\braket{(a,b)}{(a',b')}|^2 &= (1 - K_A(a,a'))(1 - K_B(b,b')) \\&= |\braket{a}{a'}|^2 \cdot |\braket{b}{b'}|^2
\end{align*}
which is precisely $|\langle a \otimes b | a' \otimes b' \rangle|^2 = |\langle a|a'\rangle|^2 \cdot |\langle b|b'\rangle|^2$ (the unsquared identity $\langle a \otimes b | a' \otimes b' \rangle = \langle a|a'\rangle \cdot \langle b|b'\rangle$ follows from the tensor-product inner-product structure but only up to phase from $K$ alone).

Definition~\ref{def:independent} (spatial separation) is therefore a \emph{characterization} of when kernels compose associatively.
\end{remark}

\begin{theorem}[Capacity Dilution]
\label{thm:capacity-dilution-composite}
A qubit ($N_S = 2$) coupled to environment $E$ has $N_{\mathrm{eff}} = N_S \cdot N_E = 2N_E$ by Theorem~\ref{thm:capacity-mult}. Non-trivial continuous dynamics within the framework requires $N_{\mathrm{eff}} \geq 3$ (Theorem~\ref{thm:n2-static}); the qubit's continuous $SU(2)$ dynamics is recovered as the dynamics of the composite $S \otimes E$ projected onto the qubit factor (cf.\ \S\ref{sec:n2-discussion}).
\end{theorem}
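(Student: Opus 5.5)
The argument has three steps: compute the composite capacity, place the composite on the $N\geq 3$ side of the threshold, and recover the qubit's continuous $SU(2)$ dynamics by restricting the composite's unitary group to the system factor.

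\emph{Step 1: capacity.} Treating $S$ and $E$ as spatially separated in the sense of Definition~\ref{def:independent}, Theorem~\ref{thm:capacity-mult} gives $N_{\mathrm{eff}} = N_S N_E = 2N_E$. Any non-trivial environment has $N_E \geq 2$, so $N_{\mathrm{eff}} \geq 4 \geq 3$.

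\emph{Step 2: the composite lies above the threshold.} With $N_{\mathrm{eff}} \geq 3$, the uniform derivation applies to the composite as a closed system:
\begin{itemize}
\item Lemma~\ref{lem:sheaf-complex} and Theorem~\ref{thm:frobenius} force $\mathbb{F} = \C$.
\item Corollary~\ref{thm:basis-topology} then identifies the symmetry group as $G_{SE} \cong U(N_{\mathrm{eff}})$.
\item Theorems~\ref{thm:time-emergence} and~\ref{thm:schrodinger} supply continuous one-parameter groups $U(t) = e^{-iHt/\hbar}$ inside $G_{SE}$.
\end{itemize}
Theorem~\ref{thm:n2-static} does not apply, because its hypothesis is a \emph{closed} system of total capacity $2$. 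The obstruction $C_{O(2)}(S) = \{\pm I, \pm S\}$ therefore no longer constrains the qubit.

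\emph{Step 3: restriction to the system factor.} By Theorem~\ref{thm:tensor}, $\Hilb_{SE} \cong \C^2 \otimes \C^{N_E}$. The local subgroup $U(2) \otimes I_E$ sits inside $U(2N_E)$, and by Lemma~\ref{lem:commutativity} it commutes with the environment's local actions.
\begin{itemize}
\item Any Hermitian $H_S$ on $\C^2$ yields a continuous flow $e^{-i(H_S \otimes I_E)t/\hbar}$ in $G_{SE}$, realising all of $SU(2)$ up to phase on the first factor.
\item For a general coupled $H$, define the reduced dynamics on $S$ by the partial trace. Under the product-Hamiltonian form $H_S \otimes I + I \otimes H_E$ (Remark~\ref{rem:hbar-units}), this restriction is exactly a one-parameter subgroup of $SU(2)$.
\end{itemize}
Projecting onto the qubit factor thus recovers continuous $SU(2)$ dynamics together with its complex coefficients. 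Via Theorem~\ref{thm:born-kernel} Part 2, it also recovers the Born rule.

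The main obstacle is the meaning of ``projected onto the qubit factor'' for genuinely interacting Hamiltonians, since the partial-trace dynamics need not be unitary. I would state the theorem's content as existence and realisation: every $SU(2)$ flow on the qubit arises as the restriction of a $U(N_{\mathrm{eff}})$ flow, and conversely local flows restrict to $SU(2)$. I would treat general open-system dynamics as outside the claim.

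A secondary point is justifying $\C$ on the qubit factor itself. My first approach is to argue that the factor of a complex tensor decomposition $\C^{2N_E} \cong \C^2 \otimes \C^{N_E}$ inherits complex structure, rather than re-deriving it at $N=2$.
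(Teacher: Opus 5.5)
Your proposal follows the paper's route: Theorem~\ref{thm:capacity-mult} gives $N_{\mathrm{eff}} = 2N_E \geq 4$, the composite then falls under the uniform $N \geq 3$ derivation (field $\C$, $G \cong U(N_{\mathrm{eff}})$, Stone generator), and the qubit's $SU(2)$ dynamics is the restriction of local flows $U(2)\otimes I_E$, with $\C$-coefficients and the Born rule inherited from the composite as in \S\ref{sec:n2-discussion}. The paper leaves two points implicit that you make explicit: you limit the claim to product (local) Hamiltonians because reduced dynamics under genuine coupling is not unitary, and you note that citing Theorem~\ref{thm:tensor} for a $\C^2$ factor presupposes what the inheritance argument has to supply; your proposed fix for the second point is the same one the paper relies on.
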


\begin{remark}[Standard Consequences]
\label{rem:standard-consequences}
The derived Hilbert space and tensor product reproduce standard quantum predictions: CHSH violation to the Tsirelson bound $2\sqrt{2}$ \cite{bell1964einstein}, no-cloning (from unitarity), and decoherence from environmental tracing \cite{zurek2003decoherence}. Entanglement reads as a $K$-profile that is irreducibly composite, not decomposable into independent part-profiles. The framework is compatible with relational QM \cite{rovelli1996relational,hohn2021quantum}, QBism \cite{fuchs2014introduction}, and many-worlds; Quantum Darwinism \cite{zurek2009quantum} supplies the emergence of objective classical states.
\end{remark}

\section{Relation to Prior Work}
\label{sec:related}

Prior quantum reconstructions \cite{birkhoff1936logic,pitowsky1989quantum,hardy2001quantum,chiribella2011informational,masanes2011derivation,dakic2011quantum,clifton2003characterizing,muller2020law,fuchs2002quantum,mueller2013three} derive finite-dimensional quantum theory, treating the infinite-dimensional framework as the target, and assume operational probabilities as primitive. This work differs in four respects (Table~\ref{tab:comparison}): (i) the starting point is two physical extensions (finiteness; self-referential consistency), not operational primitives; (ii) we derive \emph{finite-capacity} QM, with standard QM recovered as $N \to \infty$; (iii) probabilities follow from the $K$-structure rather than being assumed; (iv) indeterminism follows from information-theoretic overflow, not operational postulates. The finite-$N$ framework provides a natural UV cutoff absent in prior reconstructions.

\textit{Template reconstructions.} Hardy \cite{hardy2001quantum} leaves the number field open (Simplicity postulate); here $\C$ is forced by cyclic dynamics (Lemma~\ref{lem:sheaf-complex}). CDP \cite{chiribella2011informational,chiribella2010purification} \emph{assumes} purification, which implies indeterminism; here indeterminism is \emph{derived} from capacity overflow (Theorem~\ref{thm:capacity-halting}). Dakic--Brukner \cite{dakic2011quantum} and Masanes--M\"uller \cite{masanes2011derivation} assume continuous reversibility within operational frameworks; here dynamics is derived (Theorem~\ref{thm:dynamics-derived}). In all four, operational probabilities are primitive; here they are derived.

\textit{Relational and ontological programs.} The framework sits in the Wheeler ``it from bit'' lineage \cite{wheeler1990information}, with $K$ as the informational primitive. Rovelli's relational QM \cite{rovelli1996relational} is interpretive; here the relational reading is given an ontological basis ($K$-profiles \emph{are} states) and the Hilbert space formalism is derived. H\"ohn \cite{hoehn2017toolbox} is the closest structural cousin: Limited Information maps onto Axiom~\ref{ax:finite}, Complementarity onto the cyclic-dynamics consequences; the present contribution takes a graded kernel as primitive, derives Imperceptibility from SRC, and supplies a Lean-4 axiom audit. Clifton--Bub--Halvorson \cite{clifton2003characterizing} assume a $C^*$-algebraic framework and do not address indeterminism.

\textit{Born-rule, GPT, and discrete cousins.} Galley--Masanes \cite{galley2017classification} and Selby--Scandolo--Coecke \cite{selby2021reconstructing} assume operational probabilities; Zurek's envariance \cite{zurek2005envariance} derives Born from local-unitary symmetry, complementary to the metric-compatibility route here. $(\mathcal{X}, K)$ maps onto the GPT framework \cite{barnum2016postclassical} via $e_k(\psi) = 1 - K(\psi, a_k)$, but GPTs assume the ambient vector space and probabilities. 't Hooft \cite{thooft2016cellular}, M\"uller \cite{muller2020law}, and Sorkin \cite{sorkin1994quantum} share a finite or informational starting point but use distinct primitives. Sol\`er \cite{soler1995characterization} reaches the $\R/\C/\mathbb{H}$ trichotomy in infinite dimensions; we reach it in finite dimensions via Frobenius with $\C$ selected by cyclic dynamics. The finite-$N$ structure is compatible with observed continuous dynamics by Theorem~\ref{thm:quantum-sampling}: $N$ samples determine a unique band-limited interpolation.

\section{Discussion}
\label{sec:discussion}

\subsection{Finite-Dimensionality as Fundamental}

Prior reconstructions \cite{hardy2001quantum,chiribella2011informational,masanes2011derivation} treat finite dimension as a convenience while taking $L^2(\R^3)$ as the target. This paper inverts that relationship: complex coefficients, the Born rule, unitary dynamics, and $\C P^{N-1}$ are each forced \emph{by} finiteness; compactness, finite covering dimension, and the $N\log_2 N \gg \log_2 N$ capacity deficit each break in the $N\to\infty$ limit ($\C P^\infty$ is not locally compact, $U(\infty)$ is not a Lie group).

\subsection{Scope and Refutability}

This is a \emph{characterization theorem}: the axioms uniquely determine quantum structure, but do not prove that nature satisfies them. Theories positing structure beyond distinguishability (e.g., Bohmian mechanics, which supplements $K$-profiles with configuration-space positions) reject SRC, the closure principle of \S\ref{sec:graded-equality}; the framework classifies them as adopting a different account of what exists. The PBR theorem \cite{pusey2012reality} argues for the reality of the quantum state under modest assumptions, and the present framework is compatible with a $\psi$-ontic reading in which the $K$-profile is the state.

The framework is falsifiable in principle, though the nature of its predictions deserves care. It would be refuted by: (1) deterministic, context-independent measurement outcomes for a verified finite-$N$ system (falsifying Capacity Halting); (2) observation of $p_k = |c_k|^\alpha$ with $\alpha \neq 2$ (falsifying metric compatibility); or (3) distinguishing operationally identical preparations $K = 0$ by some future measurement (falsifying Saturation). None has been observed.

\medskip
\noindent\textbf{Structural prediction: no physical singularities.} Finite capacity bounds every observable in information ($\log_2 N$ bits) and spectral content (Theorem~\ref{thm:quantum-sampling}). Singularities of infinite-dimensional theories (UV divergences, Haag's theorem, self-adjoint extension ambiguities, curvature singularities, the Big Bang) are predicted to be artifacts of extrapolating beyond a finite-capacity domain; confirmation accrues piecewise from finite-$N$ replacements (QED renormalization, lattice QCD, black-hole entropy bounds \cite{bekenstein1973black}); refutation would be a cleanly detected non-regularizable singularity.

\medskip
\noindent\textbf{Load-bearing geometric commitment: Structural Leibniz.} (S4) is the strongest geometric commitment: every $K$-symmetry of a finite configuration extends to a global $K$-isometry. It forces rank-one symmetric space structure and underwrites compactness, the Lie group property, projective-space identification, and the Born rule. Empirical evidence for fundamental geometric inhomogeneity (a privileged frame, scale, defect, or chirality not reducible to boundary conditions of a larger isotropic system) would falsify (S4) and the reconstruction. The empirical isotropy of fundamental degrees of freedom (electron spin, photon polarization in vacuum), with preferred-axis systems embeddable in larger isotropic ones, grounds (S4).

\subsection{Conclusion and Open Questions}

The framework is pre-spatial ($N$ denotes relational capacity, not spatial configuration), pre-dynamical for fields (Theorem~\ref{thm:gauge} derives $U(1)$ gauge symmetry but not field equations), and leaves $N$ undetermined. The identification $N \sim A/\ell_P^2$ via the Bekenstein--Hawking entropy \cite{bekenstein1973black} is conjectural but structurally natural.

\medskip
\noindent\textit{The program continues.} Several questions remain for companion work: emergent space from $K$-profile correlations; modified gravity from spectral-dimension departures; electrodynamics from sheaf/$U(1)$-holonomy constraints; particle content as stable kernel configurations; and the speculative correspondence between closed-$N=2$ triviality (Theorem~\ref{thm:n2-static}) and horizon time-freezing in general relativity, with $N=2$ subsystems as Planck-area horizon cells.

\medskip
\noindent The axioms are consequences of two physical extensions: finiteness and self-referential consistency. Whether equally rigid consequences beyond quantum mechanics await in the structures not yet derived remains to be seen.

\appendix
\section{Classical Theorems Used}
\label{app:classical-theorems}

The Hilbert space representation uses Wigner's theorem~\cite{wigner1931gruppentheorie,bargmann1964note}: every transition-probability-preserving bijection of $\C P^{N-1}$ is implemented by a unitary or anti-unitary operator on~$\C^N$. The Schr\"odinger equation uses Stone's theorem~\cite{reed1980methods}: a strongly continuous one-parameter unitary group $U(t)$ has a unique self-adjoint generator~$H$ with $U(t) = e^{-iHt}$. Both are standard results; see the cited references for proofs.

\section{Regularity Conditions: Formal Derivation from the Axioms}
\label{app:structural-conditions}

Each regularity condition is the unique choice consistent with finite capacity and Basis Isotropy.

\begin{theorem}[Compactness]\label{thm:compactness-forced}
$(\mathcal{X}, K)$ satisfying Axioms~\ref{ax:finite}--\ref{ax:relational} has compact state space and compact symmetry group.
\end{theorem}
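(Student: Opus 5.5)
The plan is to obtain compactness of $\mathcal{X}$ first, via the full $K$-profile embedding, and then transfer it to $G = \mathrm{Aut}(\mathcal{X},K)$ as a closed subgroup of the isometry group of a compact metric space.

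\textbf{Step 1 (state space).} Use the metric $d(x,y) := \sup_{z}|K(x,z)-K(y,z)|$. By (S1) of Theorem~\ref{thm:src-master} this is a genuine metric, and the estimate $|K(x,y)-K(x',y')| \le d(x,x')+d(y,y')$ makes $K$ jointly continuous. Consider $\psi:\mathcal{X}\to[0,1]^{\mathcal{X}}$, $\psi(x)=K(x,\cdot)$, into the Tychonoff cube.
\begin{enumerate}
\item[(a)] $\psi$ is injective by (S1) and continuous componentwise.
\item[(b)] $\psi(\mathcal{X})$ is closed. A pointwise limit $f$ of a net $\psi(x_\alpha)$ is a $K$-consistent profile: adjoin a point $x_*$ with $K'(x_*,z)=f(z)$. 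The conditions of Definition~\ref{def:graded-equality} pass to pointwise limits, and the capacity bound persists because "$=1$" conditions are closed. Then (S2) realizes $f$ in $\mathcal{X}$.
\item[(c)] By Tychonoff, $\psi(\mathcal{X})$ is compact and Hausdorff.
\end{enumerate}
This is exactly Lemma~\ref{lem:compactness-from-capacity}, so in the appendix I would cite it and only spell out (b) more carefully.

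\textbf{Step 2 (symmetry group).} Each $g\in G$ satisfies $d(gx,gy)=\sup_z|K(gx,z)-K(gy,z)| = \sup_{z}|K(x,g^{-1}z)-K(y,g^{-1}z)| = d(x,y)$, since $g$ is a bijection. So $G\subseteq \mathrm{Isom}(\mathcal{X},d)$. Give $G$ the compact-open topology, which here coincides with uniform convergence. The isometries of $\mathcal{X}$ form an equicontinuous family of maps into the compact $\mathcal{X}$, so by Arzel\`a--Ascoli $\mathrm{Isom}(\mathcal{X},d)$ is compact. It remains to show $G$ is closed. If $g_n\to h$ uniformly, then $h$ preserves $K$ by joint continuity of $K$. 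Surjectivity of $h$ follows from the standard fact that an isometric self-map of a compact metric space is surjective; alternatively, $g_n^{-1}$ has a convergent subsequence whose limit inverts $h$. Hence $G$ is a closed subset of a compact group and is compact, and inversion and composition are continuous.

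\textbf{Main obstacle.} The delicate point is Step 1(b): checking that a pointwise limit of realized profiles defines a legitimate extension $(\mathcal{X}\sqcup\{x_*\},K')$ to which the operational form of SRC applies.
\begin{itemize}
\item Symmetry and range are immediate.
\item The identity condition $K'(x_*,z)=0\Rightarrow$ "$x_*=z$" may fail: the limit could vanish at some $z$. In that case the profile of $x_*$ must coincide with that of $z$, which I would try to enforce by passing to the quotient identifying such points.
\item Clause (iii) and capacity $N$ must be preserved. I would argue that an MFD set of size $N+1$ containing $x_*$ would force, for large $\alpha$, MFD sets containing $x_\alpha$ whose values are close to $1$. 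Upgrading "close to $1$" to "equal to $1$" is the real gap.
\end{itemize}
For that upgrade I would first attempt to use density from (I) together with (S3) to approximate. Failing that, I would weaken the claim to closure within the basis-profile map $\varphi:\mathcal{X}\to[0,1]^N$, whose image is closed by a finite-dimensional argument, and recover full compactness through the (S3)-orbit fibres.
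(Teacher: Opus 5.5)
Your route is the paper's own: embed $\mathcal{X}$ in $[0,1]^{\mathcal{X}}$ by profiles, get closedness from (S1)/(S2) and compactness from Tychonoff, then apply Arzel\`a--Ascoli to $G$ acting by $d$-isometries. Your Step~2 is sounder than the paper's, which just asserts that a group of isometries is closed. Your argument (uniform limits preserve $K$ by joint continuity, and an isometric self-map of a compact metric space is onto) is the right repair.

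Step~1, however, has a genuine gap in two places.

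First, the difficulty you flag in (b) cannot be handled by (S2) at all. (S2) has content only for a profile carried by a point $x_*\in\mathcal{X}'\setminus\mathcal{X}$. Definition~\ref{def:graded-equality}(i), applied in $\mathcal{X}'$, makes such a profile strictly positive on all of $\mathcal{X}$. But every realized profile $K(x,\cdot)$ vanishes at $x$. So the conclusion of (S2) can never hold for the profile of a genuinely new point. Concretely, if your limit $f$ vanishes at some $z_0$, no legitimate extension carries it on a new point. Your quotient repair (identifying $x_*$ with $z_0$) is then well defined only if $f=K(z_0,\cdot)$, which is the conclusion you want. If $f$ vanishes nowhere, no point of $\mathcal{X}$ can realize it.

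Closedness of $\psi(\mathcal{X})$ therefore amounts to a continuity property that Definition~\ref{def:graded-equality} does not provide, since it assumes no triangle inequality. You would need two facts:
every pointwise limit of profiles vanishes at some $z_0$; and
$K(x_\alpha,z_0)\to 0$ forces $K(x_\alpha,\cdot)\to K(z_0,\cdot)$.
Your fallback does not escape this. Closedness of $\varphi(\mathcal{X})\subseteq[0,1]^N$ is the same realization problem. Rebuilding compactness from the (S3)-fibres needs those fibres, which are orbits of the stabilizer $G_S$, to be compact. You would get that from compactness of $G$, but Step~2 obtains compactness of $G$ only from compactness of $\mathcal{X}$, so the argument is circular.

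Second, even granting (b), Tychonoff gives compactness of $\psi(\mathcal{X})$ only in the product (pointwise) topology. Step~2 needs $(\mathcal{X},d)$ compact for the sup-metric $d$, whose topology is finer. A continuous bijection from $(\mathcal{X},d)$ onto a compact Hausdorff space is a homeomorphism only when the \emph{domain} is compact. So compactness does not transfer without a separate argument that pointwise convergence of profiles implies uniform convergence.

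You also cannot simply run Arzel\`a--Ascoli in the pointwise topology. The pseudometrics $\rho_z(x,y):=|K(x,z)-K(y,z)|$ satisfy $\rho_z(gx,gy)=\rho_{g^{-1}z}(x,y)$. Equicontinuity of $G$ for that uniformity would therefore need control of $\rho_{z'}$ uniformly over $z'\in Gz$, which is the same missing uniformity.

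Citing Lemma~\ref{lem:compactness-from-capacity} does not close this. That lemma makes exactly the same move (``continuous bijection onto a compact Hausdorff image, hence a homeomorphism''). The paper's proof of this theorem rests on the same two unproved steps, so your proposal, like the paper's, establishes compactness only conditionally on them.
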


\begin{proof}
The evaluation map $\Sigma: x \mapsto K(x,\cdot) \in [0,1]^{\mathcal{X}}$ is continuous and injective by Identity (Theorem~\ref{thm:src-master}(S1)): $K(x,z) = K(y,z)$ for all $z$ implies $x = y$. By Tychonoff, $[0,1]^{\mathcal{X}}$ is compact in the product topology. The image $\Sigma(\mathcal{X})$ is closed: if a net $\Sigma(x_\alpha) \to f$ pointwise, then $f$ is a pointwise limit of $K$-profiles and therefore a realizable $K$-profile; by Completeness (Theorem~\ref{thm:src-master}(S2)), $f = \Sigma(x)$ for some state $x$. Hence $\mathcal{X}$ is compact. For the symmetry group: equip $\mathcal{X}$ with the metric $d(x,y) := \sup_{z \in \mathcal{X}} |K(x,z) - K(y,z)|$, which separates points by Identity (Theorem~\ref{thm:src-master}(S1)). Since $G$ preserves $K$, each $g \in G$ is an isometry of $(\mathcal{X}, d)$: $d(gx, gy) = \sup_z |K(gx,z) - K(gy,z)| = \sup_z |K(x, g^{-1}z) - K(y, g^{-1}z)| = d(x,y)$. Equicontinuity is automatic for isometries, so by Arzel\`a--Ascoli, $G$ has compact closure in $C(\mathcal{X}, \mathcal{X})$; since $G$ is a group of isometries (hence closed), $G$ is compact.
\end{proof}

\begin{theorem}[Smooth Structure]\label{thm:smoothness-forced}
$\mathcal{X}$ is a smooth manifold and $G$ is a Lie group.
\end{theorem}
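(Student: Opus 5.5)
The plan is to obtain both conclusions of Theorem~\ref{thm:smoothness-forced} from the solution of Hilbert's fifth problem, feeding it only data already established: compactness of $\mathcal{X}$ and $G$ (Theorem~\ref{thm:compactness-forced}), the $K$-metric $d$ (Remark~\ref{rem:topology-provenance}), and the saturation clauses of Theorem~\ref{thm:src-master}. The order is: first $G$ is a Lie group, then $\mathcal{X}$ is a homogeneous space of $G$ and hence a smooth manifold. Smoothness of $\mathcal{X}$ is never assumed; it is pulled back from the Lie structure of $G$.

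\textbf{Step 1: $G$ acts effectively and continuously.} The action $G \times \mathcal{X} \to \mathcal{X}$ is jointly continuous because $G$ consists of $d$-isometries and carries the compact-open (uniform) topology from Theorem~\ref{thm:compactness-forced}. It is effective because an element fixing every point preserves every $K$-relation trivially and is the identity map.

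\textbf{Step 2: $\mathcal{X}$ has finite covering dimension.} I use the basis profile $\varphi:\mathcal{X}\to[0,1]^N$. By (S3) its fibres are orbits of the pointwise stabiliser $G_S$ of a basis $S$. I would argue by induction on $N$: $G_S$ fixes $S$, so it acts on the level sets $\{K(\cdot,e_1)=t\}$, and these are lower-capacity configurations. Hurewicz's dimension-raising bound for closed maps, $\dim\mathcal{X}\le\dim\varphi(\mathcal{X})+\sup\dim(\text{fibre})$, then gives $\dim\mathcal{X}<\infty$.

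\textbf{Step 3: $G$ is a Lie group.} A compact group is an inverse limit of compact Lie groups (Peter--Weyl). The Montgomery--Zippin/Bochner--Montgomery theorem says a compact group acting effectively on a finite-dimensional compact metric space that is locally connected has no small subgroups. Gleason--Yamabe then makes $G$ a Lie group.

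\textbf{Step 4: $\mathcal{X}$ is a smooth manifold.} Transitivity of $G$ on $\mathcal{X}$ comes from Basis Isotropy (B) together with Permutation Invariance (Theorem~\ref{thm:permutation-invariance}), since every point lies in some basis (this follows by completing an MFD set, using Completeness). Then $\mathcal{X}\cong G/G_x$ as topological spaces, by a continuous bijection from a compact space to a Hausdorff space. The stabiliser $G_x$ is closed, so Cartan's closed-subgroup theorem gives $G/G_x$ a unique smooth structure making the action smooth. Transporting it along the homeomorphism makes $\mathcal{X}$ a smooth manifold, and $K$ is smooth there because it is $G$-invariant and continuous on a homogeneous space.

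\textbf{Main obstacle.} Step 3 is where I expect trouble, specifically the local connectedness (or finite-dimensionality) hypothesis, since Step 2 is the genuinely delicate input. If the fibre-dimension induction fails, my fallback is Iwasawa--Gleason: show $G$ has no small subgroups directly. A subgroup inside a small neighbourhood of the identity moves every point less than $\varepsilon$ in $d$. Averaging such a subgroup over a basis orbit would produce, via (S4), a nontrivial isometry fixing a basis up to $\varepsilon$. I would then need to show that isometries close enough to the identity and fixing $S$ are trivial unless they lie on a one-parameter family, which I suspect needs the density of $K$-values from (I).
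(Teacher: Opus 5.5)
Your route is the paper's route. You use compactness, effectiveness and transitivity of the isometric $G$-action, then the solution of Hilbert's fifth problem to make $G$ Lie, then $\mathcal{X}\cong G/G_x$. Your Steps~2--3 are an expanded version of Step~4 of Theorem~\ref{thm:complexity-constraint}.

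\textbf{The gap.} It is where you suspect it, and the paper's proof does not close it either. The paper obtains $\dim\mathcal{X}<\infty$ from $\dim G-\dim H$, which presupposes that $G$ is already Lie, and it never addresses local connectedness. The transitive Montgomery--Zippin theorem you need requires $\mathcal{X}$ to be both finite-dimensional and locally connected, and neither hypothesis can be dropped:
\begin{itemize}
\item A $p$-adic solenoid acting on itself by translations (with an invariant metric) is a compact, effective, transitive, isometric action on a compact, connected, one-dimensional space.
\item $\mathbb{T}^{\mathbb{N}}$ acting on itself is the same kind of action on a locally connected (but infinite-dimensional) space.
\end{itemize}
Neither group is Lie. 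So some property of $K$ must rule such quotients out, and you do not supply it.

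\textbf{Step~2 does not give finite dimension.} The induction has no footing. The level sets $\{K(\cdot,e_1)=t\}$ are not shown to be SRC spaces of smaller capacity, and in the intended model they are not: at $t=1-1/N$ the level set contains the whole Fourier basis, so its capacity is $N$. Nothing bounds $\dim G_S/(G_S)_x$ either, since orbits of an arbitrary compact group can be infinite-dimensional. Hurewicz's inequality therefore has no finite right-hand side. Local connectedness is not argued at all.

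\textbf{Further corrections.}
\begin{itemize}
\item Step~3 as written, without transitivity, asserts that a compact group acting effectively on a finite-dimensional, locally connected compactum has no small subgroups. That contains the Hilbert--Smith conjecture for compact manifolds as a special case, and that conjecture is open. Bochner--Montgomery covers only differentiable actions. You should establish transitivity first (the first half of your Step~4, via (S4) applied to a bijection between bases, is sound) and then invoke the transitive theorem.
\item Your fallback does not repair the gap. There is no averaging operation that turns a small subgroup into an isometry, and (S4) extends only exact $K$-symmetries of finite configurations, not approximate ones.
\item The closing claim of Step~4 is false as a general principle: $G$-invariance plus continuity does not give smoothness. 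On $\C P^{N-1}$, every $h(|\langle\psi|\phi\rangle|^2)$ with $h$ merely continuous is $U(N)$-invariant. This claim is not needed for the theorem, but you should not rely on it.
\end{itemize}
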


\begin{proof}
The action of $G$ on $\mathcal{X}$ is effective: if $g$ fixes every point, then $K(gx,gz) = K(x,z)$ for all $x,z$ is automatic, and $gx = x$ for all $x$ gives $g = \mathrm{id}$. By compactness, effectiveness, and continuous transitive action, $G$ is a Lie group (Montgomery--Zippin \cite{montgomery1955topological}, solving Hilbert's fifth problem; finite-dimensionality follows from compactness of $\mathcal{X}$ together with the Lie group structure of $G$, since $\mathcal{X} = G/H$ has dimension $\dim G - \dim H < \infty$, and the basis K-profile $\varphi: \mathcal{X} \to [0,1]^N$ has fibers that are $G_S$-orbits by Basis-Profile Symmetry (Theorem~\ref{thm:src-master}(S3))). Then $\mathcal{X} = G/H$ is smooth.
\end{proof}

\begin{theorem}[Smooth Regularity of $K$]\label{thm:K-smooth}
The kernel $K$ is $C^\infty$.
\end{theorem}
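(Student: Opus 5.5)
The plan is to obtain smoothness of $K$ from the smooth homogeneous structure already in hand, rather than from any regularity hypothesis on $K$ itself. By Theorem~\ref{thm:smoothness-forced}, $G$ is a Lie group acting transitively on $\mathcal{X} = G/H$, and $\mathcal{X}$ is a smooth manifold. The first step is to show that the action map $G \times \mathcal{X} \to \mathcal{X}$ is smooth, not merely continuous. The source of this is the standard fact that continuous homomorphisms between Lie groups are real-analytic, and that the homogeneous space $G/H$ carries a unique analytic structure for which the action is analytic. I would cite Montgomery--Zippin, or the closed-subgroup theorem for $H = G_x$, which is closed because $G_x$ is the stabilizer of a point under an action by $d$-isometries.

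The second step is to reduce $K$ to a function of one variable in the orbit space. Fix a base point $x_0$ and write $y = g\,x_0$. Then $G$-invariance gives
\[
K(x,y) = K(g^{-1}x, x_0),
\]
so it suffices to show that $k(x) := K(x, x_0)$ is $C^\infty$ on $\mathcal{X}$. Composition with the smooth map $(x, g) \mapsto g^{-1}x$, together with a smooth local section of $G \to G/H$, then transfers smoothness to $K$ on $\mathcal{X} \times \mathcal{X}$.

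The third step, the core, is smoothness of $k$. Two-point homogeneity, established in the proof of Theorem~\ref{thm:points-sections} via (S4), shows that $k$ is constant exactly on the $G_{x_0}$-orbits, so $k$ is a continuous $G_{x_0}$-invariant function. I would then use the identification of Theorem~\ref{thm:points-sections} with $\mathbb{F}P^{N-1}$, together with the coordinate form derived there, $K(\psi,\phi) = 1 - |\langle\psi|\phi\rangle|^2$, to exhibit $k$ explicitly as a polynomial in homogeneous coordinates restricted to the unit sphere. Such a function is real-analytic in affine charts: on the chart $c_0 \neq 0$ with $z_j = c_j/c_0$, it becomes a rational function $1 - |\bar{a}_0 + \sum_j \bar{a}_j z_j|^2 / \bigl(1 + \sum_j |z_j|^2\bigr)$ with nonvanishing denominator.

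The main obstacle is the circularity risk in this third step. Theorem~\ref{thm:points-sections} asserts only a homeomorphism $\sigma$, and one must check that $\sigma$ is a diffeomorphism for the Lie-theoretic smooth structure on $G/H$, so that the explicit formula pulls back to a smooth function. I would handle this by noting that $\sigma$ is $G$-equivariant and that $G$ acts on $\mathbb{F}P^{N-1}$ through a continuous, hence smooth, homomorphism into $PU(N)$ (or its real or quaternionic analogue). An equivariant continuous bijection between two homogeneous spaces of the same Lie group with equal stabilizers is the canonical map $G/H \to G/H'$ with $H = H'$, which is a diffeomorphism.

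If that route falters, the fallback is a Peter--Weyl argument. The function $k$ is continuous and $G_{x_0}$-bi-invariant on $G$. In this setting the zonal spherical functions on a rank-one symmetric space are Jacobi polynomials in $\cos^2$ of the distance. Theorem~\ref{thm:quantum-sampling} asserts that $K_\psi$ lies in degrees $\mathcal{H}_0 \oplus \mathcal{H}_1$, so $k$ is a finite combination of matrix coefficients of finite-dimensional representations, and matrix coefficients are analytic.
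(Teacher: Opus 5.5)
Your argument is correct, but it takes a genuinely different route from the paper's.

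The paper never uses the explicit form of $K$. It argues that the set $\Sigma\subset\mathcal{X}\times\mathcal{X}$ where $K$ fails to be smooth is $G$-invariant, that transitivity then forces $\Sigma=\emptyset$ or $\Sigma=\mathcal{X}\times\mathcal{X}$, and that the latter contradicts smoothness of $G/H$. A formula-free argument like this would let the result stand before $K$ is identified, but it does not go through. $G$ is transitive on $\mathcal{X}$, not on $\mathcal{X}\times\mathcal{X}$: by two-point homogeneity, the orbits of the diagonal action are exactly the level sets of $K$. Moreover, the same reasoning would apply verbatim to $K^{1/3}$. That function is invariant under the same group, yet it is not smooth along the diagonal, where $K$ vanishes quadratically.

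Your route puts the content where it actually lives. It rests on two things: the real-analytic formula $K(\psi,\phi)=1-|\langle\psi|\phi\rangle|^2$, and the compatibility of the Lie-theoretic smooth structure on $G/H$ with the standard one on $\mathbb{F}P^{N-1}$. Your equivariance argument settles the compatibility. Strictly, $G$ maps into $PU(N)$ extended by complex conjugation, since antiunitaries also preserve $K$; that group still acts smoothly, so nothing changes. Your argument even yields real-analyticity.

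The price is that the theorem becomes logically downstream of the coordinate form in Theorem~\ref{thm:points-sections}, and is exactly as strong as that form. Smoothness of $K$ is equivalent to smoothness of the profile $F$ in $K=F(|\langle\psi|\phi\rangle|^2)$. The boundary values $F(0)=1$ and $F(1)=0$ invoked there do not by themselves pin $F$ down.

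Two small points:
\begin{itemize}
\item Once $\sigma$ is a diffeomorphism, your second-step reduction to $k(x)=K(x,x_0)$ is unnecessary, since the two-variable formula is already analytic on $\mathbb{F}P^{N-1}\times\mathbb{F}P^{N-1}$.
\item Your Peter--Weyl fallback is not independent, because Theorem~\ref{thm:quantum-sampling} also presupposes the explicit formula.
\end{itemize}
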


\begin{proof}
Since $K$ is $G$-invariant and $G$ acts transitively on the smooth manifold $\mathcal{X} = G/H$, $K$ inherits smoothness from the Lie group. If $K$ were not smooth, the set of singular points $\Sigma \subset \mathcal{X} \times \mathcal{X}$ would form a $G$-invariant subset. By transitivity, either $\Sigma = \emptyset$ or $\Sigma = \mathcal{X} \times \mathcal{X}$. But $\Sigma = \mathcal{X} \times \mathcal{X}$ (every point singular) contradicts the smoothness of $G/H$; hence $\Sigma = \emptyset$ and $K$ is smooth everywhere.
\end{proof}

\begin{theorem}[Convexity]\label{thm:convexity-forced}
The full state space is the convex hull of $\mathcal{X}$.
\end{theorem}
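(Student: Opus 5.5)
Since the state space and its symmetries are already fixed by earlier results, the plan is to \emph{define} the full state space operationally from the kernel and then show it equals $\mathrm{conv}(\mathcal{X})$. The full (mixed) state space $\mathcal{S}$ is the set of all operationally admissible states, i.e.\ all assignments of outcome statistics to every basis $\mathcal{B}\in\mathfrak{B}$ that are consistent with the $K$-derived effects $e_b(\cdot)=1-K(\cdot,b)$ (Theorem~\ref{thm:hilbert-representation}(ii)). By Lemma~\ref{lem:affinity-normalization} each pure state defines such an assignment, and every probabilistic mixture of pure preparations does too. So $\mathrm{conv}(\mathcal{X})\subseteq\mathcal{S}$ is immediate. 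The real content is the reverse inclusion: no admissible state lies outside the convex hull.

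For the reverse inclusion I would use the representation already obtained. Theorem~\ref{thm:hilbert-representation} embeds everything into $\mathcal{D}(\C^N)$ with pure states going to rank-one projectors. A general admissible state $\omega$ assigns to each effect $|b\rangle\langle b|$ a number $\omega(b)\in[0,1]$, with $\sum_k\omega(b_k)=1$ on every basis. For $N\ge 3$, Gleason's theorem then gives a density operator $\rho$ with $\omega(b)=\Tr(\rho\,|b\rangle\langle b|)$. The theorem needs a frame function on the projection lattice, and this one is defined by basis-normalization on rank-one projectors and extended additively.

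The spectral theorem writes $\rho=\sum_k\lambda_k|\psi_k\rangle\langle\psi_k|$ with $\lambda_k\ge0$ and $\sum_k\lambda_k=1$. Each $[\psi_k]\in\C P^{N-1}\cong\mathcal{X}$ by Theorem~\ref{thm:points-sections}, and Completeness (S2) guarantees it is realized. Hence $\omega\in\mathrm{conv}(\mathcal{X})$; Carath\'eodory bounds the number of pure states needed, though the spectral decomposition already uses only $N$ of them.

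I would add a structural remark. An admissible state outside the hull would be a profile-like object not generated by $K$-data. By the information-theoretic form of SRC (Lemma~\ref{lem:definability}) and Transport Consistency (T), such an object would be canonical structure beyond $K$. This mirrors how Parsimony (Theorem~\ref{thm:parsimony-derived}) excludes hidden variables.

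The main obstacle is making the operational definition of $\mathcal{S}$ precise enough that Gleason applies. Three things need justifying: that an admissible state is determined by its values on rank-one effects, that those values are continuous, and that they are consistent across bases sharing an element. Non-contextuality is what Gleason actually needs, and I would try to obtain it from Basis Isotropy (B) together with the $G$-covariance built into Definition~\ref{def:measure-setup}. The case $N=2$, where Gleason fails, would be handled as in Remark~\ref{cor:born-n2} by composite embedding, or treated directly since the Bloch ball is visibly the convex hull of its boundary sphere.
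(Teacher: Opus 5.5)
Your argument is correct for $N \geq 3$, but it runs in the opposite direction from the paper's.

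The paper takes the ambient space to be the density operators of Theorem~\ref{thm:hilbert-representation}. It gives each $\rho$ the extended profile $K(\rho,z) := 1 - \langle z|\rho|z\rangle$ and appeals to Completeness (S2) to conclude that every such profile is a state. Its content is therefore that all mixtures are realized. The exclusion of anything outside $\{\rho \geq 0, \Tr\rho = 1\}$ is inherited from the choice of ambient space rather than argued.

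You instead define $\mathcal{S}$ operationally, which makes realization of mixtures trivial, and put the work into the exclusion direction via Gleason and the spectral theorem. Your route gives a genuine proof of that direction. It also never needs (S2) for mixed profiles. As stated, (S2) asks for realization by a point of $\mathcal{X}$, i.e.\ a pure state, which no mixed profile admits on $\C P^{N-1}$, so the paper must read ``state'' in an enlarged sense there. The price is an imported theorem outside the paper's list of classical imports, the restriction $N \geq 3$, and dependence on your definition of $\mathcal{S}$.

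On that definition, non-contextuality should be built in, not derived. If a state is a function $b \mapsto \omega(b)$ on the effects $e_b$, which depend on $b$ alone, it cannot depend on the surrounding basis, and you have exactly a frame function. Trying to extract this from (B) and the covariance of Definition~\ref{def:measure-setup} would fail. Covariance relates assignments at $x$ and $gx$ and is silent on the context-dependence of a single state. Continuity need not be assumed either, since Gleason's theorem yields it for non-negative frame functions in dimension $\geq 3$.

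Your $N = 2$ fallback does not work as written. With the frame-function definition, any $f$ on the Bloch sphere with $f(v) + f(-v) = 1$ is admissible, and non-affine such $f$ lie outside the hull. Appealing to the Bloch ball therefore presupposes the conclusion.

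If you instead define states as normalized positive linear functionals on the span of the $e_b$, finite-dimensional duality gives $\rho$ directly, with no Gleason and no restriction on $N$. Under $e_b \leftrightarrow |b\rangle\langle b|$, that span is the space of all Hermitian $N \times N$ matrices.
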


\begin{proof}
Theorem~\ref{thm:points-sections} establishes $\mathcal{X} \cong \C P^{N-1}$ with inner product and $K(\psi,\phi) = 1 - |\langle\psi|\phi\rangle|^2$, all without convexity. Density matrices $\rho = \sum_i \lambda_i |\psi_i\rangle\langle\psi_i|$ have well-defined $K$-profiles via $K(\rho, z) := 1 - \langle z|\rho|z\rangle$, which reduces to $K(\psi,z)$ for pure $\rho = |\psi\rangle\langle\psi|$. These extended $K$-profiles are continuous in $\rho$ and satisfy all kernel axioms, so Saturation applies to them: if such a profile had no corresponding state, the gap would constitute $K$-detectable structure beyond the state space, violating Completeness (Theorem~\ref{thm:src-master}(S2)). This is the same argument used for pure states in Theorem~\ref{thm:points-sections}. Therefore every density matrix is a valid state, giving the full state space $\{\rho \geq 0, \Tr\rho = 1\}$.
\end{proof}

\begin{table}[H]
\centering
\caption{\textbf{Regularity conditions: forcing axiom and derivation route.}}
\label{tab:structural-forcing}
\scriptsize
\setlength{\tabcolsep}{3pt}
\begin{tabular}{l|l|l}
\hline
\textbf{Condition} & \textbf{Axiom(s)} & \textbf{Key Step} \\
\hline
Compactness (Thm.~\ref{thm:compactness-forced}) & Ax.~\ref{ax:finite} + \ref{thm:src-master}(S1,S2) & Arzel\`a--Ascoli \\
Smoothness (Thm.~\ref{thm:smoothness-forced}) & Ax.~\ref{ax:finite} + \ref{thm:src-master}(S1,S2,B) & Montgomery--Zippin \\
$K$ smooth (Thm.~\ref{thm:K-smooth}) & \ref{thm:src-master}(B) & Lie smoothness \\
Convexity (Thm.~\ref{thm:convexity-forced}) & \ref{thm:src-master}(S2) & $K$-profiles, \ref{thm:points-sections} \\
\hline
\end{tabular}
\end{table}

\section{Formal Verification: Scope and Classical Imports}
\label{app:formal-verification}

The Lean~4 formalization accompanying this paper (\texttt{lean-verification/QuantumRelational}) has zero \texttt{sorry} and zero \texttt{admit} occurrences across all source files. This appendix delineates what is machine-checked from what is imported as a classical axiom.

\paragraph{Machine-checked results.}
\begin{itemize}
\item \textbf{Saturation hierarchy} (Theorem~\ref{thm:src-master}): \texttt{SRC.saturation\_hierarchy\_general} mechanizes the master theorem in full, deriving each of the eight clauses (S1)--(S4), (I), (O), (T), (B) from Self-Referential Consistency for an arbitrary $K$-symmetry $\sigma$. Per-clause projections are exposed as \texttt{S1\_identity}, \texttt{S2\_completeness}, \texttt{S3\_finite\_determinacy}, \texttt{S4\_structural\_leibniz}, \texttt{I\_imperceptibility}, \texttt{O\_operational\_completeness}, \texttt{T\_transport\_consistency}, \texttt{B\_basis\_isotropy}; specializations \texttt{saturation\_hierarchy} and \texttt{saturation\_hierarchy\_involutive} cover the identity and involutive cases.
\item \textbf{Definability Lemma} (Lemma~\ref{lem:definability}): \texttt{SRC.definability\_lemma} proves the $k$-ary statement that any $K$-automorphism-invariant predicate on tuples extends consistently to $K$-extensions. The binary case is exposed separately as \texttt{definability\_lemma\_binary}.
\item \textbf{$K$-amalgam construction}: the pushout-style space $X \sqcup_C X$ glued along a $K$-symmetry is built as the inductive type \texttt{Amalgam} (constructors \texttt{gluing} and \texttt{gluing\_swap}), with kernel \texttt{K\_amalgam} verified to satisfy the kernel laws (\texttt{K\_amalgam\_refl}, \texttt{\_symm}, \texttt{\_nonneg}, \texttt{\_le\_one}). The swap automorphism (\texttt{Amalgam.swapEquiv\_gen}) is shown to be $K$-preserving (\texttt{swap\_gen\_K\_pres}) and structurally non-liftable to a labelling extension (\texttt{swap\_gen\_no\_lift}); these feed \texttt{S4\_structural\_leibniz\_amalgam\_general} and the bridge \texttt{structural\_leibniz\_from\_SRC}.
\item \textbf{Parsimony} (Theorem~\ref{thm:parsimony-derived}): \texttt{Parsimony.parsimony} mechanizes the implication \emph{if $K'$ factors through the $K$-profile, then hidden-variable labels are trivial}. The complementary implication \emph{Saturation $\Rightarrow$ every physical extension factors through $K$} is argued in prose (Theorem~\ref{thm:parsimony-derived}, Steps 1--2) and not separately mechanized.
\item \textbf{Born rule ODE uniqueness} (Lemma~\ref{lem:metric-compatibility}, Theorem~\ref{thm:born-kernel}): \texttt{BornRule.ode\_uniqueness\_born\_rule} shows that any differentiable monotone $f:[0,1]\to[0,1]$ with $f(0)=0$, $f(1)=1$ and the binary metric-compatibility ODE $[f'(x)]^2/[f(x)(1-f(x))] = c^2/[x(1-x)]$ must be the identity, with $c=1$. The per-component form $[f'(x)]^2/f(x) = c/x$ used in Step~2 of Lemma~\ref{lem:metric-compatibility} is equivalent to this binary form for uniqueness purposes (Remark~\ref{rem:ode-binary-form}).
\item \textbf{Cyclic eigenvalues} (Theorem~\ref{thm:dynamics-derived}): \texttt{CyclicEigen.complex\_forced} shows that for $N \geq 3$, some $N$th root of unity has nonzero imaginary part; \texttt{CyclicEigen.N2\_eigenvalues\_real} shows $\pm 1$ are real.
\item \textbf{Fubini--Study and Fourier derivations}: files \texttt{FubiniStudy.lean} and \texttt{Fourier.lean} formalize the metric, Fourier orthogonality, and inner-product construction.
\item \textbf{Tensor-product kernel composition} (Theorem~\ref{thm:kernel-composition}): \texttt{Composite.kernel\_compose\_is\_unique} derives the composition rule modulo the strict t-norm classification (imported, see below).
\item \textbf{Combinatorial storage inequalities} (Theorem~\ref{thm:capacity-halting}): \texttt{CapacityHalting} proves the elementary inequalities $N^1 < N^{M-1}$ and $\log_2 N < (M{-}1)\log_2 N$ for $N \geq 2$, $M \geq 3$. These formalize the counting step of the capacity-deficit argument; the physical identification of these inequalities with hidden-variable storage bounds (and the Kolmogorov complexity refinement of Lemma~\ref{lem:incompressibility}(b)) is argued in the main text (\S\ref{sec:measurement}) and not separately mechanized.
\end{itemize}

\paragraph{Imported classical theorems.}
Five classical results are declared as \texttt{axiom} in \texttt{QuantumRelational.ClassicalImports} and \texttt{QuantumRelational.Composite}, each with a citation to its standard proof. Each axiom is consumed by at least one theorem in the library (verifiable by running \texttt{lake env lean QuantumRelational/AxiomCheck.lean}, which prints the transitive axiom dependencies of every headline theorem):

\begin{itemize}
\item \textbf{Frobenius's classification}: the finite-dimensional associative division algebras over $\R$ are $\R$, $\C$, $\mathbb{H}$. Consumed by \texttt{Frobenius.frobenius\_forces\_complex} (Theorem~\ref{thm:frobenius}).
\item \textbf{Wigner's theorem} \cite{wigner1931gruppentheorie,bargmann1964note}: continuous transition-probability-preserving families on $\C P^{N-1}$ are unitary. Consumed by \texttt{Schrodinger.schrodinger\_derivation\_chain} (Theorem~\ref{thm:schrodinger}).
\item \textbf{Kobayashi--Nomizu uniqueness} \cite{kobayashi1969foundations}: the $U(N)$-invariant Riemannian metric on $\C P^{N-1}$ is unique up to scale. Consumed by \texttt{FubiniStudy.fubini\_study\_unique} (Theorem~\ref{thm:fs-unique}).
\item \textbf{Picard--Lindel\"of}: uniqueness of solutions to locally Lipschitz ODEs. Consumed by \texttt{Schrodinger.\allowbreak stone\_generator\_\allowbreak unique\_of\_\allowbreak local\_agreement}.
\item \textbf{Strict t-norm classification} (Acz\'el / Klement--Mesiar--Pap / Schweizer--Sklar) \cite{aczel1966functional, klement2000triangular, schweizer1983probabilistic}: every continuous, associative, symmetric, strictly monotonic operation on $[0,1]^2$ with identity $1$ and zero $0$ is conjugate to multiplication via a continuous strictly increasing bijection of $[0,1]$. With the boundary conditions of Theorem~\ref{thm:kernel-composition}, the conjugating bijection is the identity. Consumed by \texttt{Composite.kernel\_compose\_is\_unique} (Theorem~\ref{thm:kernel-composition}).
\end{itemize}

\noindent \textbf{Stone's theorem} \cite{reed1980methods} is partially mechanized rather than imported. The reverse direction (skew-Hermitian $A$ generates $\exp(tA)$) is fully proved from Mathlib's matrix exponential. The uniqueness portion of the forward direction (two strongly continuous one-parameter groups agreeing locally have the same generator) is proved modulo \texttt{picard\_lindelof\_unique}. The existence portion (extracting $A$ from a strongly continuous $U(t)$) is \emph{not} mechanized: the corresponding Lean lemmas (\texttt{stone\_gives\_hermitian\_generator}, \texttt{full\_derivation\_chain}) use a placeholder witness $A = 0$ and reduce to ``the zero matrix is Hermitian.'' The downstream \texttt{schrodinger\_derivation\_chain} mechanizes the conditional statement \emph{given a Hermitian generator $H$, the Schr\"odinger equation $i\hbar \partial_t \psi = H\psi$ follows}, which combines the reverse direction of Stone with Wigner's theorem and unitarity; it does not call the existence shim. Theorem~\ref{thm:schrodinger} of this paper invokes the classical existence direction, cited to \cite{reed1980methods}; Lean covers the structural assembly downstream of that classical step.

\paragraph{Scope statement and limits.}
No physical assumption is imported as a Lean axiom; every imported axiom is a standard mathematical theorem with a proof in the literature. Three points on what is \emph{not} fully mechanized: (i) the master theorem derives the eight saturation clauses from SRC for an arbitrary $K$-symmetry, but its downstream consumption in the geometric chain ($\C P^{N-1}$ as the unique homogeneous space, Wigner-rigidity, Schr\"odinger evolution) is mechanized only through the concrete realizations \texttt{finRotate N} and Mathlib's \texttt{UnitaryGroup}, not as a fully abstract pipeline; (ii) the physical content of the capacity deficit (MUB geometry, KS bit-count, Chaitin incompressibility) is prose, with Lean mechanizing only the arithmetic; (iii) Lie-theoretic and topological steps are formalized only in their combinatorial consequences; Montgomery--Zippin, Peter--Weyl, and Arzel\`a--Ascoli are imported classically. Per-statement cross-reference is in \texttt{PAPER\_MAPPING.md}.

\begin{acknowledgments}
The author thanks Mat\'ias Zilly for discussions that helped clarify the framework, its scope, and its foundations. Proofreading of the manuscript and development of the Lean~4 formalization were assisted by large-language-model systems. No external funding supported this work.
\end{acknowledgments}

\paragraph*{Data and code availability.}
No experimental data underlie this work. The full Lean~4 formalization, including the \texttt{QuantumRelational} library (all results discussed in Appendix~\ref{app:formal-verification}) and the paper-to-Lean cross-reference \texttt{PAPER\_MAPPING.md}, is available in the public repository \url{https://github.com/jzilly/QuantumRelational}. A preprint of this manuscript is posted at \href{https://arxiv.org/abs/2603.11900}{arXiv:2603.11900}.

\end{document}